\pdfoutput=1
\documentclass[11pt,a4paper]{article}
\usepackage[utf8]{inputenc}
\usepackage[left=2.5cm,right=2.5cm,top=2.8cm,bottom=2.8cm]{geometry}
\usepackage{graphicx}
\usepackage{amsmath,amssymb,amsthm,amscd,mathtools}
\usepackage{cite}
\usepackage{xcolor}
\usepackage{booktabs}
\usepackage{multirow}
\usepackage{graphbox}
\usepackage{jheppub}
\usepackage[all]{xy}
\usepackage{colortbl}
\usepackage{bbold}
\usepackage{url}
\usepackage{empheq}
\usepackage{algpseudocode}
\usepackage{fancyvrb}
\usepackage{here}

\usepackage{lmodern}
\usepackage[T1]{fontenc}
\usepackage{microtype}
\usepackage{bbm}

\hfuzz=500pt

\renewcommand{\mathbf}{\mathbold}



\definecolor{green1}{HTML}{244819}
\definecolor{cyan1}{HTML}{37cdaa}
\definecolor{blue1}{HTML}{5d7ac4}
\definecolor{red1}{HTML}{921818}
\definecolor{purple1}{HTML}{53047A}
\definecolor{orange1}{HTML}{e07229}
\definecolor{yellow1}{HTML}{edcb52}
\definecolor{gr}{gray}{0.5}
\definecolor{gr1}{gray}{0.7}
\newcommand{\gr}[1]{{\color{gr}#1}}
\newcommand{\mtop}[1]{{\color{red1}#1}}
\newcommand{\mbot}[1]{{\color{purple1}#1}}
\newcommand{\yel}[1]{{\color{yellow1}#1}}

\usepackage[font=small,labelfont=bf,width=0.85\textwidth]{caption}

\providecommand{\hypersetup}[1]{}

\providecommand{\texorpdfstring}[2]{#1}
\providecommand{\pdfbookmark}[3][]{}

\hypersetup{plainpages=false}
\hypersetup{pdfpagemode=UseOutlines}
\hypersetup{bookmarksnumbered=true}
\hypersetup{bookmarksopen=true}
\hypersetup{pdfstartview=FitH}
\hypersetup{citecolor=[rgb]{0 .4 0}}
\hypersetup{urlcolor=[rgb]{.4 0 0}}
\hypersetup{linkcolor=[rgb]{0 0 .4}}
\hypersetup{citebordercolor={.6 .9 .6}}
\hypersetup{urlbordercolor={.7 .8 1}}
\hypersetup{linkbordercolor={1 .7 .7}}
\hypersetup{pdfborder={0 0 .5}}

\makeatletter
\newlength{\apb@width}
\newcommand{\autoparbox}[2][c]{\settowidth{\apb@width}{#2}\parbox[#1]{\apb@width}{#2}}
\newcommand{\includegraphicsbox}[2][]{\autoparbox{\includegraphics[#1]{#2}}}
\makeatother

\newcommand{\namedref}[2]{\hyperref[#2]{#1~\ref*{#2}}}
\newcommand{\secref}[1]{\namedref{Section}{#1}}

\newcommand{\appref}[1]{\namedref{Appendix}{#1}}

\renewcommand{\algref}[1]{\namedref{Algorithm}{#1}}

\newcommand{\figref}[1]{\namedref{Figure}{#1}}

\newcommand{\thmref}[1]{\namedref{Theorem}{#1}}

\newcommand{\lemref}[1]{\namedref{Lemma}{#1}}

\newcommand{\propref}[1]{\namedref{Proposition}{#1}}

\newcommand{\exref}[1]{\namedref{Example}{#1}}

\newcommand{\defref}[1]{\namedref{Definition}{#1}}

\newcommand{\remref}[1]{\namedref{Remark}{#1}}

\makeatletter
\def\mr@ignsp#1 {\ifx\:#1\@empty\else #1\expandafter\mr@ignsp\fi}%
\newcommand{\multiref}[1]{\begingroup
\xdef\mr@no@sparg{\expandafter\mr@ignsp#1 \: }%
\def\mr@comma{}%
\@for\mr@refs:=\mr@no@sparg\do{\mr@comma\def\mr@comma{,\,}\ref{\mr@refs}}%
\endgroup}
\makeatother
\renewcommand{\eqref}[1]{(\multiref{#1})}

\makeatletter\newcommand*{\etal}{%
    \@ifnextchar{.}%
        {et\penalty50\ al}%
        {et\penalty50\ al.\@\xspace}%
}\makeatother
\usepackage{xspace}

\makeatletter\newcommand*{\etc}{%
    \@ifnextchar{.}%
        {etc}%
        {etc.\@\xspace}%
}\makeatother

\newcommand{\nn}{\nonumber}
\newcommand{\nln}{\nonumber\\}

\newcommand{\eq}[1]{\begin{align} #1 \end{align}}

\newcommand{\brk}[1]{(#1)}
\newcommand{\lrbrk}[1]{\left(#1\right)}
\newcommand{\bigbrk}[1]{\bigl(#1\bigr)}
\newcommand{\Bigbrk}[1]{\Bigl(#1\Bigr)}

\newcommand{\sbrk}[1]{[#1]}
\newcommand{\lrsbrk}[1]{\left[#1\right]}
\newcommand{\bigsbrk}[1]{\bigl[#1\bigr]}

\newcommand{\Bigsbrk}[1]{\Bigl[#1\Bigr]}

\newcommand{\brc}[1]{\{#1\}}
\newcommand{\lrbrc}[1]{\left\{#1\right\}}
\newcommand{\bigbrc}[1]{\bigl\{#1\bigr\}}
\newcommand{\Bigbrc}[1]{\Bigl\{#1\Bigr\}}
\newcommand{\vev}[1]{\langle #1\rangle}

\newcommand{\lrnorm}[1]{\left\lVert#1\right\rVert}

\newcommand{\supbrk}[1]{^{\brk{#1}}}



\newcommand{\ep}{\epsilon}
\newcommand{\de}{\delta}

\newcommand{\dd}{\mathrm{d}}

\newcommand{\mId}{\mathbb{1}}
\newcommand{\Integers}{\mathbb{Z}}

\newcommand{\Complex}{\mathbb{C}}
\newcommand{\Affine}{\mathbb{A}}
\newcommand{\Torus}{\mathbb{G}}

\newcommand{\defas}{:=}

\newcommand{\MI}{I}
\newcommand{\Mi}{J}
\newcommand{\subsm}[1]{_{\scriptscriptstyle #1}}
\newcommand{\mism}[1]{I\subsm{#1}}

\newcommand{\Std}{{\tt{Std}}}
\newcommand{\Ext}{{\tt{Ext}}}
\newcommand{\Mons}{{\tt{Mons}}}
\newcommand{\RStd}{{\tt{RStd}}}
\newcommand{\fun}[2]{\code{#1[}\,#2\,\code{]}}
\newcommand{\Fun}[2]{%
    \code{#1\resizebox{!}{.35cm}{[}}%
    \,#2\,%
    \code{\resizebox{!}{.35cm}{]}}%
}
\newcommand{\spec}[1]{\fun{Spec}{#1}}
\newcommand{\Spec}[1]{\Fun{Spec}{#1}}
\newcommand{\diag}[1]{\fun{Diag}{#1}}
\newcommand{\Diag}[1]{\Fun{Diag}{#1}}
\newcommand{\rowReduce}[1]{\fun{RowReduce}{#1}}
\newcommand{\RowReduce}[1]{\Fun{RowReduce}{#1}}
\newcommand{\nullSpace}[1]{\fun{NullSpace}{#1}}
\newcommand{\jordan}[1]{\fun{JordanDecomposition}{#1}}

\newcommand{\tr}{^T}
\usepackage{pifont}
\newcommand{\mzero}{\gr{\cdot}}
\newcommand{\chline}{\arrayrulecolor{gr}\hline}
\arrayrulecolor{gr}
\newcommand{\mZero}{\mathbb{0}}


\newcommand{\CC}{\mathbb{C}}
\newcommand{\ii}{\sqrt{-1}}

\newcommand{\NN}{\mathbb{N}}
\def\pd#1{\partial_{#1}}


\newcommand{\ZZ}{\mathbb{Z}}

\newcommand{\C}{\mathbb{C}}


\newcommand{\cD}{\mathcal{D}}

\newcommand{\cF}{\mathcal{F}}

\newcommand{\cI}{\mathcal{I}}
\newcommand{\cJ}{\mathcal{J}}
\newcommand{\cK}{\mathcal{K}}
\newcommand{\cL}{\mathcal{L}}
\newcommand{\cM}{\mathcal{M}}
\newcommand{\cN}{\mathcal{N}}
\newcommand{\cO}{\mathcal{O}}

\newcommand{\cR}{\mathcal{R}}

\makeatletter
\newcommand*\bigcdot{\mathpalette\bigcdot@{1}}
\newcommand*\smtimes{\mathpalette\smtimes@{.7}}
\newcommand*\bigcdot@[2]{\mathbin{\vcenter{\hbox{\scalebox{#2}{$\m@th#1\bullet$}}}}}
\newcommand*\smtimes@[2]{\mathbin{\vcenter{\hbox{\scalebox{#2}{$\m@th#1\times$}}}}}
\makeatother

\newcommand{\soft}[1]{\textsc{#1}}

\newcommand{\code}[1]{\texttt{#1}}

\def\asir#1{}
\def\ntcomment#1{}
\def\DD{{\cal D}}
\def\Hom{\mathrm{Hom}}
\def\MM{{\cal M}}
\def\mod{\mathrm{mod}}

\def\OO{{\cal O}}
\def\PP{{\cal P}}
\def\ord{\mathrm{ord}\,}
\def\qed{\hfill $\Box$}
\def\CD{{\cal D}}   
\def\CR{{\cal R}}   
\def\CM{{\cal M}}   
\def\CN{{\cal N}}   
\def\ydim{m}
\def\ypdim{m-m'}
\def\ypcodim{m'}

\def\zvalue{a^\prime}


\def\done#1{ }

\theoremstyle{plain}
\newtheorem{theorem}{Theorem}[section]
\newtheorem{example}[theorem]{Example}
\newtheorem{lemma}[theorem]{Lemma}
\newtheorem{proposition}[theorem]{Proposition}
\newtheorem{remark}[theorem]{Remark}
\newtheorem{definition}[theorem]{Definition}
\newtheorem{algorithm}[theorem]{Algorithm}

\title{Restrictions of Pfaffian Systems for Feynman Integrals}

\author[a,b]{Vsevolod Chestnov,}
\author[d,e]{Saiei J. Matsubara-Heo,}
\author[b]{Henrik J. Munch,}
\author[d]{and Nobuki Takayama}

\newcommand{\padova}{
    Dipartimento di Fisica e Astronomia, Universit\`a degli Studi di Padova
    e INFN, Sezione di Padova,
    via Marzolo 8, I-35131 Padova, Italy.
}

\newcommand{\kobe}{Department of Mathematics, Kobe University,
1-1, Rokkodai, Nada-ku, Kobe 657-8501, Japan.}

\newcommand{\kumamoto}{Faculty of Advanced Science and Technology, Kumamoto University, 2-39-1 Kurokami Chuo-ku Kumamoto
860-8555 Japan.}

\newcommand{\bologna}{
    Dipartimento di Fisica e Astronomia, Universit\`a di Bologna
    e INFN, Sezione di Bologna,
    via Irnerio 46, I-40126 Bologna, Italy.
}

\affiliation[a]{\bologna}
\affiliation[b]{\padova}
\affiliation[d]{\kobe}
\affiliation[e]{\kumamoto}

\emailAdd{vsevolod.chestnov@unibo.it}
\emailAdd{saiei@educ.kumamoto-u.ac.jp}
\emailAdd{henrikjessen.munch@studenti.unipd.it}
\emailAdd{takayama@math.kobe-u.ac.jp}

\abstract{
This work studies limits of Pfaffian systems, a class of first-order PDEs appearing in the Feynman integral calculus.
Such limits appear naturally in the context of scattering amplitudes when there is a separation of scale in a given set of kinematic variables.
We model these limits, which are often singular, via \emph{restrictions} of $\mathcal{D}$-modules.
We thereby develop two different restriction algorithms: one based on gauge transformations, and another relying on the Macaulay matrix.
These algorithms output Pfaffian systems containing fewer variables and of smaller rank.
We show that it is also possible to retain logarithmic corrections in the limiting variable.
The algorithms are showcased in many examples involving Feynman integrals and hypergeometric functions coming from GKZ systems.
This work serves as a continuation of \cite{Chestnov:2022alh}.
}

\begin{document}
\maketitle

\newpage

\section{Introduction}

The evaluation of Feynman integrals in perturbative quantum field theory is a central, albeit challenging, problem.
A widely used approach is to obtain a set of differential equations (DEQs) whose solution yields the Feynman integrals in question.
In this setting, one considers a vector $\MI$ of unknown functions, dubbed the \emph{master integrals}, which depend on a collection of kinematic variables $z = (z_1, \ldots, z_m)$, such as the masses and momenta involved in a scattering experiment.
Using the shorthand notation $\pd{i} = \frac{\pd{}}{\pd{} z_i}$, the DEQs obeyed by the master integrals take the form
\eq{
    \pd{i} \MI = P_i \cdot \MI 
    \quad \text{for} \quad
    i=1, \ldots, m \, ,
    \label{eqn:pfaffian_system_intro}
}
where each $P_i$ is an $r \times r$ matrix with entries being rational in $z$.
This is called the \emph{Pfaffian system} for $\MI$.
Using \eqref{eqn:pfaffian_system_intro}, the unknown vector $\MI$ can be analytically expressed in terms of iterated integrals \cite{bams/1183539443,Duhr:2014woa}, or computed by numerical integration \cite{Liu:2022chg,Hidding:2020ytt, Armadillo:2022ugh} for fixed values of $z$.

In practical calculations pertaining to high-energy physics, there is often a
scale separation between the $z$ variables. Namely, one variable, say $z_1$,
is \emph{small} compared to the rest (or large, in which case $1/z_1$ is
small). A severely incomplete list of examples includes the large top mass
expansion \cite{Davies:2021kex}, the small transverse momentum expansion
\cite{Alasfar:2021ppe}, the multi-Regge limit \cite{Caron-Huot:2020vlo}, 
the soft limit in gravity \cite{Parra-Martinez:2020dzs}, 
various kinematic regions in effective field theories such as the soft-collinear effective theory \cite{Becher:2014oda}, 
and the computation of boundary conditions using expansion by regions \cite{Dulat:2014mda} in the threshold \cite{DiVita:2014pza} or massless \cite{Mastrolia:2017pfy} limits.
There are also interesting limits such as threshold expansions \cite{Lee:2018ojn, Beneke:1997zp} and limits of the $x$-parameter in the Simplified Differential Equations approach \cite{Syrrakos:2023syr,Papadopoulos:2014lla}  which involve special kinematic configurations rather than scale-separated variables. We can model these
situations as a (typically singular!) limit $z_1 \to 0$ on a Pfaffian system.
Starting from \eqref{eqn:pfaffian_system_intro}, the aim of this article is
then to derive a \emph{simpler} Pfaffian system which holds in the limit (see
also \cite[Chapter 12]{haraoka2020linear} and \cite{Bytev:2022tav} for similar
approaches). By simpler, we mean that the new Pfaffian system has a smaller
rank than $r$, and it depends on one variable less, because $z_1$ decouples
from the system.

Our algorithms are rooted in the theory of \emph{restrictions} (in the sense
that $z_1$ is being ``restricted'' to $0$), which is historically tied to the theory of $\cD$-modules.
As a supplement to studying DEQs from the viewpoint of analysis, $\cD$-modules leverage the algebraic structure of differential operators.
When a particular $\cD$-module is \emph{holonomic}, meaning that it is associated to a system of DEQs with a finite number of solutions, one can take advantage of the remarkable work by M. Kashiwara \cite{kashiwara-1975} and subsequent series of works (see \cite{Hotta-Tanisaki-Takeuchi-2008} and references therein), which generalized the theory of regular connections by P. Deligne \cite{Deligne}.
This is now established as the theory of regular holonomic systems \cite{Hotta-Tanisaki-Takeuchi-2008}.
In the context of holonomic $\cD$-modules, we can construct bases of differential operators, obtain relations between basis elements, count the number of solutions to DEQs, and more.
This naturally links $\cD$-modules to Pfaffian systems for Feynman integrals.
Indeed, in this work, restrictions are studied from two complementary points of view: one at the level of Pfaffian systems, and the other at the level of $\cD$-modules.

We take our inspiration from a particularly well-studied holonomic $\cD$-module: the GKZ hypergeometric system \cite{GKZ-1989} - though, as we show, the algorithms presented here also apply beyond this case.
In the GKZ framework \cite{Nasrollahpoursamami:2016,delaCruz:2019skx,Klausen:2021yrt,Klausen:2019hrg,Klausen:2023gui,Tellander:2021xdz,Pal:2021llg,Pal:2023kgu,Ananthanarayan:2022ntm,Feng:2022kgh,Feng:2022ude,Feng:2019bdx,Zhang:2023fil,walther2022feynman,Dlapa:2023cvx,Agostini:2022cgv, Munch:2022ouq, Klemm:2019dbm,Bonisch:2020qmm}, one generalizes parametric representations of a Feynman integral to include extra variables, such that now $z = (z_1, \ldots, z_N)$ where $N \geq m$, with the inequality generally being strict.
On one hand, this means that there are more variables to manage.
On the other hand, additional structure arises which can be facilitated in the computation of Pfaffian systems.
In particular, one can immediately write a collection of (higher-order) differential operators which annihilate the generalized Feynman integral.
These operators generate the GKZ $\cD$-module.
Using the fast algorithm of \cite{Hibi-Nishiyama-Takayama-2017}, it is possible to construct a basis of operators (the so-called standard monomials) for the GKZ $\cD$-module, which translates into a basis of integrals for the Pfaffian system.
Encoding algebraic relations between the generators of the GKZ system into a special matrix, called the \emph{Macaulay matrix}, one can readily obtain the Pfaffian matrices associated to this basis too.
This strategy for computing Pfaffian systems was carried out in the manuscript \cite{Chestnov:2022alh}, and we regard this work as continuation of the latter.

After having made use of this GKZ technology, one seeks to dispense with the auxiliary $z$ variables to match with the proper Feynman integral.
In other words, one asks for a \emph{restriction} to the case $N=m$, such that the number of variables agree.
In this work, we present two strategies with different benefits that solve the restriction problem.

\begin{paragraph}{Pfaffian-level restriction}
In this strategy, we assume that a Pfaffian system \eqref{eqn:pfaffian_system_intro} is given.
For instance, the system could have been derived via integration-by-parts
\brk{IBP} following Laporta's algorithm \cite{Laporta:2001dd},
intersection theory~\cite{Mastrolia:2018uzb, Frellesvig:2019uqt, Frellesvig:2020qot,Caron-Huot:2021xqj, Caron-Huot:2021iev}, creative
telescoping \cite{vanhove-2021}, or by the Macaulay matrix method.

Suppose that we want to restrict $z_1 = 0$, and that the Pfaffian matrices $P_i$ are singular there.
The first step in our protocol is to bring the Pfaffian system into \emph{normal form}, meaning (roughly) that the Pfaffian matrix $P_1$ has a simple pole at $z_1 = 0$, and the remaining matrices $P_2, \ldots$ are finite there.
Normal form is achieved by Moser reduction, an efficient algorithm involving a series of gauge transformations that we review and give references for in \appref{subsec:Moser_Reduction}.

Next we must choose between two cases:
(i) do we seek solutions for the integrals $\MI$ that are holomorphic at $z_1=0$, or (ii) do we seek solutions of the form $I\supbrk{n}(z_2,\ldots)\log^n(z_1)$ that behave logarithmically at the origin of $z_1$?
In case (i), one can immediately write a gauge transformation - see \eqref{eq:gauge-transform} - which yields the restricted Pfaffian system in the variables $z_2, \ldots$
In case (ii), one repeats this gauge transformation procedure several times, once for each power of $\log(z_1)$.

This Pfaffian-level restriction protocol is efficient and applicable to any Pfaffian system.
While we have only tested it for restrictions to hyperplanes, namely restrictions of the form $z_1 = f(z_2,\ldots)$ where $f$ is a linear function of the remaining variables, we expect that it will be applicable to restrictions onto general hypersurfaces too.

While the whole strategy ultimately only requires gauge transformations applied to Pfaffian systems, we derive and motivate it by studying $\cD$-modules.
The derivation is partly conjectural, but we verify it in many examples.
\end{paragraph}
\begin{paragraph}{$\cD$-module-level restriction}
In this approach, we assume that a holonomic $\cD$-module is given.
One example is the GKZ $\cD$-module%
\footnote{Alternative $\cD$-modules related to Feynman integrals can also be envisaged, such as the annihilating operators coming from conformal symmetry \cite{Henn:2023tbo} and the Yangian bootstrap, see \cite{Loebbert:2022nfu} and references therein.}.

Traditional $\cD$-module restriction algorithms stem from the breakthrough work by T. Oaku \cite{Oaku-1997}.
They rely on Gr\"obner bases in non-commutative rings, which can become computationally expensive when many variables are involved.
In this work, instead, we extend the Macaulay matrix method from \cite{Chestnov:2022alh} to incorporate restriction of variables.

Our approach contains two steps.
In the first step, we guess a basis of differential operators for the restricted $\cD$-module.
This is done systematically via \algref{alg:rest_to_pt}.
In the second step, we construct a Macaulay matrix for this basis.
This matrix is polynomial in the $z$ variables, and so we can immediately fix desired values for the $z$'s without running into singularities.
The Macaulay matrix induces a linear system of equations, the solution to which is the restricted Pfaffian system (such a linear system can be swiftly solved by rational reconstruction over finite fields \cite{Peraro:2019svx,Klappert:2019emp}).
The restriction method based on the Macaulay matrix is summarized in \algref{alg:alg3}.

In comparison to the Pfaffian-level restriction protocol, the $\cD$-module level restriction does not require a Pfaffian system as input.
This is an appealing feature, because the original, un-restricted Pfaffian system can often be computationally expensive to obtain.
Furthermore, while the Pfaffian-level restriction method is partly conjectural, the $\cD$-module-level restriction protocol is proven in this text to yield the correct restriction module.

The $\cD$-module-level protocol works for restriction to general hypersurfaces, in addition to hyperplanes.
As far as we are aware, this is the first instance of such an algorithm.
We illustrate this feature in \secref{subsec:ex:F4} by computing the restriction of the Appell $F_4(x,y)$ function onto its singular locus $xy((x-y)^2 - 2(x+y)+1) = 0$.

The computer algebra software \soft{Risa/Asir} was used extensively in our $\cD$-module computations, especially the newly developed package {\tt mt\_mm}.
\soft{Risa/Asir} can be built from source by cloning the \code{git} repository
\href{https://github.com/openxm-org/OpenXM}{https://github.com/openxm-org/OpenXM}.
Source code for {\tt mt\_mm} can be found in the directory {\tt OpenXM/src/asir-contrib/packages/src/mt\_mm}.

\end{paragraph}

\begin{paragraph}{Symmetry relations}
In addition to integration-by-parts relations, some Feynman integrals are also linearly related to one another via \emph{symmetry relations}.
These symmetries stem from reparametrizations of the Feynman \emph{integrand} resulting in the same function after integration.
Although this case is not modeled as a singular limit on a Pfaffian system, we nevertheless observe that symmetry relations can be treated using the Pfaffian-level restriction protocol.
Our test case is the equal-mass limit of the Pfaffian system for the three-mass two-loop sunrise integral - see \secref{subsec: Unequal to equal mass sunrise}.
\end{paragraph}

\begin{paragraph}{Outline}

The remaining part of this text is structured as follows.

In \secref{sec:preliminaries}, we collect preliminary material on $\cD$-modules associated to Euler integrals (a class of integrals which includes Feynman integrals).
This section also defines the notions of Pfaffian systems and restrictions.
To further familiarize oneself with the language of $\cD$-modules, we strongly recommend reading the review in \appref{sec:nut}.

\secref{sec:restriction_of_a_pfaffian_system} introduces the central concept of a \emph{logarithmic integrable connection}, a particular $\cD$-module which represents the Pfaffian system we seek to restrict.
The story presented in this section leads us to the three equations \eqref{eq:gauge-transform}, \eqref{eqn:restriction_equation} and \eqref{eq:log_constraint} which together outline the Pfaffian-level restriction protocol.
Some steps in the construction are technical, so in \secref{sec:relation_to_Feynman_integrals} we summarize how to apply the method to Feynman integrals in practical terms.
In particular, we there write the holomorphic restriction \algref{alg:holo_rest} and the logarithmic restriction \algref{alg:log_rest}.

Moving on to \secref{sec:NTsec1_MMMR}, we begin our quest to formulate restriction algorithms at the level of $\cD$-modules. 
We begin by recalling the important notions of weight vectors and $b$-functions in \secref{sec:NTsec1}, as they pertain to restriction algorithms based on Gr\"obner bases.
This subsection culminates in \algref{alg:rest_to_pt}, which systematically searches for a differential operator basis for the restricted $\cD$-module.
Having laid out this groundwork, we are then ready in \secref{sec:NTsec1_MMMR} to formulate the $\cD$-module-level restriction algorithm based on the Macaulay matrix - see \algref{alg:alg3}.
Here we heavily rely on earlier results from the manuscript \cite{Chestnov:2022alh}.

\secref{sec:examples}, making up the largest body of this work, contains a plethora of examples studying restrictions on Feynman integrals and hypergeometric functions.
The goal of each example is always to derive a Pfaffian system which holds in some limit.

We give conclusions and speculate on future developments in \secref{sec:conclusion}.

A long appendix follows the main text.
We have already mentioned \appref{sec:nut} and \appref{subsec:Moser_Reduction} above.
\appref{appendix:improve_all} describes improvements of algorithms mentioned in the main text.
Finally, \appref{sec:proofs} is a collection of proofs which require technical details on connections and $\cD$-modules.
We hope that this layout makes it easier to read the paper.

\end{paragraph}

\section{Preliminaries}\label{sec:preliminaries}

In this section, we recall the framework of twisted cohomology groups for Feynman integrals and fix basic notation used throughout the paper.
We use terminology on $\cD$-modules, such as \emph{regular holonomic}, which is explained in the standard textbooks \cite{Hotta-Tanisaki-Takeuchi-2008} and \cite{Borel}.
See \cite[Appendix B]{Chestnov:2022alh} for a short introduction to holonomic $\cD$-modules (for a pedagogical presentation which is tailored to Feynman integrals, see also the recent article \cite{Henn:2023tbo}).
\appref{sec:nut} in the present manuscript gives an introduction to tensor products of modules, as they are used extensively throughout this text.

\subsection{Twisted cohomology groups as \texorpdfstring{$\cD$}{}-modules}
We study Euler integrals of the form
\eq{
    \label{f_Gamma(z)}
    f_\Gamma(z) = \int_\Gamma g(z;x)^{\beta_0} \, x_1^{-\beta_1} \cdots x_n^{-\beta_n} \, \frac{\dd x}{x}
    \quad , \quad
    \frac{\dd x}{x} \defas \frac{\dd x_1}{x_1} \wedge \cdots \wedge \frac{\dd x_n}{x_n} \, ,
}
which are integrated along a {\it twisted cycle} $\Gamma$, namely an integration contour without boundary along which the branch of the integrand is specified \cite[Chapter 3]{aomoto2011theory}.
The exponents $\beta = (\beta_0, \ldots, \beta_n) \in \Complex^{n+1}$ are complex parameters.
$g(z;x)$ is a Laurent polynomial in $x$ with monomial coefficients $z_i$:
\eq{
    \label{g(z;x)}
    g(z;x) = \sum_{i=1}^N z_i \, x^{\alpha_i} .
}
This is written in multivariate exponent notation, i.e.~given an integer vector $\alpha_i \in \Integers^n$ we set
\begin{align}
    x^{\alpha_i} \defas
    x_1^{\alpha_{i,1}}
    \cdots
    x_n^{\alpha_{i,n}},
    \label{eq:multivar-exp}
\end{align}
where $\alpha_{i,j}$ stands for the $j$-th component of the vector $\alpha_i$.
The monomial exponents $\alpha_i$ induce an $(n+1) \times N$ matrix
\eq{
    \label{A_mat}
    A = (a_1 \ldots a_N )
}
with columns defined by $a_i := (1, \alpha_i)$, under the assumption that  $\code{Span}\{a_1,\ldots,a_N\} = \ZZ^{n+1}$.
If the variables $z_i$ are all independent, the Euler integral \eqref{f_Gamma(z)} is a solution to a PDE system called the GKZ hypergeometric system \cite{GKZ-1989}.
Generalized versions of Feynman integrals are also solutions to GKZ systems \cite{Nasrollahpoursamami:2016,delaCruz:2019skx}.
Indeed, the Lee-Pomeransky representation \cite{Lee:2013hzt} takes the form \eqref{f_Gamma(z)}, but in this case the $z_i$ are subject to linear constraints determined by the topology of a Feynman diagram - some of them may be equal to unity, or equal to each other.
Therefore, we allow $z$ to vary on an affine subspace $Y = \Affine^m \subset \Affine^N$ of dimension $m \leq N$.
The coordinates $z = (z_1, \ldots, z_m)$ denote an affine parametrization of $Y$.
Physically, $Y$ is the space of kinematic variables, e.g.~the masses and momenta associated to a given scattering process.

Let $\Torus_m$ (resp. $\Affine$) be the complex torus (resp. the complex affine line)%
\footnote{
    The torus $\Torus_m$ (resp. the complex affine line $\Affine$) is equal to $\Complex^\ast := \Complex \setminus \{0\}$ (resp. $\Complex$) as a set.
}
equipped with the Zariski topology,
We define
\eq{
    X \defas
    \big\{(z,x)\in Y \times (\Torus_m)^n    \big| \, g(z;x) \neq 0 \big\}
    .
}
$\pi : X \ \to \ Y$ is then the natural projection from the space of kinematic and integration variables $(z,x)$ to the space of kinematic variables $z$.

Let $\cD_Y$ represent the ring of linear differential operators with polynomial coefficients in $z$.
It is generated by $z_1,\dots,z_m$ and $\pd{1},\dots,\pd{m}$, over $\C$, with generators subject to the commutation relations $[z_i,z_j]=[\pd{i},\pd{j}]=0$, $[\pd{i},z_j]=\delta_{ij}$.
Setting $\mathcal{O}(X)\defas \C[z_1,\dots,z_m,x_1^{\pm 1},\dots,x_n^{\pm 1},\frac{1}{g}]$, we define an action of $\cD_Y$ on
$h=h(z;x) \in \mathcal{O}(X)$ by
\begin{align}
    \frac{\pd{}}{\pd{} z_i}\bullet h &\defas
     \frac{\pd{} h}{\pd{} z_i} + \beta_0
     \left(
     \frac{1}{g(z;x)}
     \frac{\pd{} g(z;x)}{\pd{} z_i}
     \right)
     h.\label{eqn:twisted_action}
\end{align}

The collection of (relative) $k$-forms on $X$ is an $\cO(X)$-module
\eq{
    \Omega_{X/Y}^{k}:=\bigoplus_{K \subset \{1, \ldots, n\}, \, |K| = k}
    {\mathcal{O}(X)} \, \dd x^K \, .
}
This space is acted on by the covariant derivative in integration variables only:
\eq{
\label{covariant_derivative_GKZ}
    \nabla_x  \defas
    \dd_x + \beta_0 \frac{\dd_x g(z;x)}{g(z;x)} \wedge -
    \sum_{i=1}^n \beta_i \frac{\dd x_i}{x_i}  \wedge \, .
}
The $n$-th \emph{relative de Rham cohomology group} is then defined as
\eq{
\label{de_Rham_cohomology_group}
    M_A(\beta;Y) &\defas
     \Omega_{X/Y}^{n}
    \ \Big / \
    \mathrm{Im}\Bigbrk{
        \nabla_x : \, \Omega_{X/Y}^{n - 1}
        \longrightarrow
        \Omega_{X/Y}^{n}
    } \ .
}
The action of $\cD_Y$ on $\cO(X)$ defined by \eqref{eqn:twisted_action} induces an action on the quotient $M_A(\beta;Y)$, thereby endowing $M_A(\beta;Y)$ with the structure of a $\cD_Y$-module.
If $Y=\Affine^N$ then $M_A(\beta;Y)$ equals the GKZ system \cite{Schulze-Walther-2009},
but we do not assume this unless stated explicitly.

Define the \emph{holonomic rank} of a (left) $\cD_Y$-module to be the number of holomorphic solutions%
\footnote{The definition of solutions for a $\cD$-module is given in \appref{nteq:homSol}.
}
at a generic point in $Y$.
While this definition is \emph{analytic}, we also give an \emph{algebraic} definition in the next subsection.

We have the following result on $M_A(\beta;Y)$:
\begin{theorem}\label{thm:Gauss_Manin}
    $M_A(\beta;Y)$ is regular holonomic.
    If the parameter $\beta$ is generic, its holonomic rank is given by the absolute value of the Euler characteristic of a generic fiber, $\big|\chi(\pi^{-1}(z)) \big|$.
    The latter equals the number of roots of the \emph{likelihood equation \cite{huh2013maximum}}
    \eq{
        {\beta_0 \over g(z;x)}
         {\pd{} g(z;x) \over \pd{} x_i}-\frac{\beta_i}{x_i}=0
         \quad , \quad
         i=1,\dots,n \, ,
         \label{likelihood_equation}
    }
    when $\beta$ be a generic complex vector.
\end{theorem}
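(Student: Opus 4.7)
The plan is to establish the three assertions of the theorem in sequence: regular holonomicity, then the rank formula $\bigabs{\chi(\pi^{-1}(z))}$, then its matching with the number of roots of the likelihood equations \eqref{likelihood_equation}.

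\textbf{Regular holonomicity.} I first treat the universal case $Y=\Affine^N$, where $M_A(\beta;Y)$ coincides with the GKZ hypergeometric system by \cite{Schulze-Walther-2009}; regular holonomicity for homogeneous $A$---guaranteed here by the all-ones first row of \eqref{A_mat}---is then classical (Hotta). For a general affine subspace $Y\hookrightarrow\Affine^N$, I realize $M_A(\beta;Y)$ either as the $\cD$-module inverse image along the closed embedding $i\colon Y\hookrightarrow\Affine^N$, or equivalently as the relative de Rham pushforward along the smooth morphism $\pi\colon X\to Y$ of the twisted structure sheaf $\cO_X$ equipped with the connection \eqref{eqn:twisted_action}. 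Since inverse images of closed embeddings and pushforwards along quasi-projective morphisms preserve regular holonomicity (Kashiwara-Mebkhout, cf.~\cite{Hotta-Tanisaki-Takeuchi-2008}), the conclusion follows.

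\textbf{Rank equals $\bigabs{\chi(\pi^{-1}(z))}$.} On the Zariski-open locus where $\pi$ is smooth, analytic-algebraic comparison identifies the stalk of $M_A(\beta;Y)$ at a generic $z$ with the top twisted de Rham cohomology of the fiber $\pi^{-1}(z)$ computed with the connection $\nabla_x$ from \eqref{covariant_derivative_GKZ}. For generic $\beta$, the Aomoto-Kita vanishing theorem \cite[Chapter 3]{aomoto2011theory} forces the lower twisted cohomology groups to vanish; the Euler-Poincar\'e principle applied to the finite-dimensional de Rham complex on a generic fiber then yields $\dim H^n = (-1)^n \chi(\pi^{-1}(z))$. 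Via the Kashiwara duality between the holomorphic solution sheaf and the de Rham pushforward, the holonomic rank coincides with $\dim H^n$, giving the claimed absolute value.

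\textbf{Counting critical points.} For generic $\beta$, a theorem of Huh \cite{huh2013maximum}---descending from Varchenko's critical-point theorem for master functions---asserts that the critical locus of the multivalued master function $L = \beta_0 \log g(z;x) - \sum_{i=1}^n \beta_i \log x_i$ on the fiber $\pi^{-1}(z)$ is reduced, zero-dimensional, and of cardinality $\bigabs{\chi(\pi^{-1}(z))}$. Since the critical equations $x_i\,\pd{x_i}L = 0$ are precisely the likelihood equations \eqref{likelihood_equation} after clearing the $1/x_i$ factor, the count matches.

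The main obstacle is the second step: one must simultaneously verify that the smoothness locus of $\pi$, the non-degeneracy locus of $g(z;x)$, and the genericity hypothesis on $\beta$ cut out a non-empty Zariski-open subset of $Y$, outside of which the fiber dimension or the twisted cohomology dimension may jump. Once the discriminant of the family $\{g(z;x)=0\}\subset(\Torus_m)^n$ and the locus of resonant $\beta$ are excised, the three statements combine without further analytical input.
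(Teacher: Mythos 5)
Your proposal is correct and, at its core, takes the same route as the paper: realize $M_A(\beta;Y)$ as the (zeroth cohomology of the) $\cD$-module pushforward along $\pi$ of the twisted structure sheaf, invoke preservation of regular holonomicity under direct images (the paper cites~\cite[Theorem 6.1.5]{Hotta-Tanisaki-Takeuchi-2008}), identify the holonomic rank with the dimension of the top twisted de~Rham cohomology of a generic fiber, and appeal to the Euler-characteristic/critical-point count for the likelihood equations. Two points of difference worth noting. First, for regular holonomicity you also offer a pullback route from the GKZ system on $\Affine^N$; that is valid (and the paper separately records the isomorphism $\iota^* M_A(\beta;\Affine^N)\simeq M_A(\beta;Y)$), but the pushforward route is cleaner because it does not need the identification of $M_A(\beta;\Affine^N)$ with the GKZ ideal, which requires the hypotheses of~\cite{Schulze-Walther-2009}. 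Second, where the paper simply cites~\cite[Theorem 2.6]{Agostini:2022cgv} for $\mathrm{rank}=|\chi(\pi^{-1}(z))|$, you unpack it via Aomoto--Kita vanishing and the Euler--Poincar\'e identity; that is a legitimate re-derivation, though your appeal to ``Kashiwara duality'' between solutions and de~Rham cohomology is more than is needed: once $M_A(\beta;Y)$ is an integrable connection on a Zariski-open set, its holonomic rank is the fiber dimension, which is $\dim H^n_{\mathrm{dR}}(\pi^{-1}(z),\nabla_x)$ by base change, full stop. Your closing caveat about excising the discriminant locus and the resonant $\beta$ is well taken and is exactly the genericity hypothesis the paper assumes; it is not a gap.
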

\noindent
See the proof in~\appref{app:Gauss_Manin}.

\begin{example}\label{ex:nbox}\rm

    Let
    \eq{
        I_{\nu_1,\ldots,\nu_5} =
        \int
        \frac{\dd^d k_1 \dd^d k_2}
        {
            \bigsbrk{ k_1^2 }^{\nu_1}
            \bigsbrk{ k_2^2 }^{\nu_2}
            \bigsbrk{ (p_1-k_1)^2 }^{\nu_3}
            \bigsbrk{ (p_{12} - k_{12})^2 }^{\nu_4}
            \bigsbrk{ (-p_{123} + k_{12})^2 }^{\nu_5}
        }
    }
    \begin{minipage}{12.5cm}
    denote the massless two-loop N-box integral family in $d = 4-2\ep$ dimensions with kinematics
    \eq{
        p_i^2 = 0
        \quad , \quad
        s = 2 p_1 \cdot p_2
        \quad , \quad
        t = 2 p_2 \cdot p_3 \, .
    }
    We used the notation $p_I = \sum_{i \in I} p_i$ in the propagators.
    \end{minipage}
    \begin{minipage}{3cm}
        \centering
        \includegraphicsbox{figures/nbox}
    \end{minipage}

    Up to a prefactor, the Lee-Pomeransky representation of
    $I_{\nu_1,\ldots,\nu_5}$ can be brought into the form
    \eq{
        \int_{(0,\infty)^5}
        g(z;x)^{\ep-2} \,
        x_1^{\nu_1+\ep\delta} \cdots x_5^{\nu_5+\ep\delta}
        \frac{\dd x}{x}
        \label{eq:nbox_LP_integral}
    }
    with
    \eq{
        \nonumber
        g(z;x)
        &=
        x_1 x_2 \, +
        x_1 x_4 \, +
        x_1 x_5 \, +
        x_2 x_3 \, +
        x_2 x_4 \, \\ & +
        x_2 x_5 \, +
        x_3 x_4 \, +
        x_3 x_5 \, +
        x_1 x_2 x_4 \, +
        z x_2 x_3  x_5
        }
    and $z=t/s$.
    Thus, the space of kinematic variables is given by
    \eq{
        Y=\{(z_1,\dots,z_{10})\in\Affine^{10}\mid z_1=\dots=z_9=1\}.
    }
    The two parameters $\ep, \delta \ll 1$ are analytic regulators in the sense
    of \cite{speer1969theory,Speer1971}.

    The holonomic rank of $M_A(\beta;Y)$ is $3$, as seen counting the number of
    solutions to \eqref{likelihood_equation}, but the holonomic rank of the GKZ
    system $M_A(\beta;\Affine^{10})$ is $9$.
    \hfill$\blacksquare$
\end{example}

\subsection{Zero-dimensional \texorpdfstring{$\cR$}{}-modules and Pfaffian systems}

Let $\cR_Y$ denote the ring of differential operators with coefficients in the field of \emph{rational} functions on $Y$ (this generalizes $\cD_Y$, which only contains polynomial coefficient functions).
The field of rational functions on $Y$ is denoted by $\cK_Y$.
When there is no fear of confusion, we write $\cR$ (resp. $\cK$) instead of $\cR_Y$ (resp. $\cK_Y$).
An $\cR$-module naturally arises from a $\cD_Y$-module $\CM$ by taking a tensor product $\cR\otimes_{\cD_Y}\CM$.
Note that given two $\cD_Y$-modules $\cM_1$ and $\cM_2$, they may fail to be isomorphic as $\cD_Y$-modules even if $\cR\otimes_{\cD_Y}\cM_1$ and $\cR\otimes_{\cD_Y}\cM_2$ are isomorphic as $\cR$-modules.
However, as we will see, it is enough to deal with the $\cR$-module structure to derive a Pfaffian system.

A zero-dimensional $\cR$-module 
$\cM$ is, by definition, a left $\cR$-module whose dimension over $\cK$ is finite.
The following proposition holds \cite[Theorem 6.9.1]{dojo}:

\begin{proposition}  \label{prop:from_holonomic_to_zero_dim_ideal}
If $\cM$ is a holonomic $\cD_Y$-module, then $\cR\otimes_{\cD_Y}\cM$ is a zero-dimensional $\cR$-module.
\end{proposition}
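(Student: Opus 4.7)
The plan is to translate holonomicity, which is stated in terms of the characteristic variety, into the desired finiteness over $\cK$ via the good filtration machinery standard in $\cD$-module theory.

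First I would equip $\cM$ with a good filtration with respect to the order filtration on $\cD_Y$, obtaining an associated graded $\mathrm{gr}(\cM)$ that is finitely generated over $\mathrm{gr}(\cD_Y) \cong \cO_Y[\xi_1, \ldots, \xi_m]$. The support of $\mathrm{gr}(\cM)$ in $T^*Y = \mathrm{Spec}\,\cO_Y[\xi]$ is by definition the characteristic variety $\mathrm{Ch}(\cM)$, whose dimension equals $m = \dim Y$ by the holonomicity hypothesis. Since $\cM$ is holonomic it is in particular coherent, hence finitely generated over $\cD_Y$, and consequently $\cR \otimes_{\cD_Y} \cM$ is finitely generated over $\cR$.

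Next, I would analyze the fiber of the projection $\pi : \mathrm{Ch}(\cM) \to Y$ over the generic point of $Y$. Because $\mathrm{Ch}(\cM)$ is conic along the cotangent directions and has the same dimension $m$ as $Y$, its generic fiber must be either empty (if $\pi$ is not dominant) or $0$-dimensional (if $\pi$ is dominant). A conic, $0$-dimensional closed subvariety of $\Affine^m_\cK$ is forced to be the single point $\{\xi = 0\}$, so in either case the base change of $\mathrm{Ch}(\cM)$ to $\mathrm{Spec}\,\cK$ sits inside the zero section.

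Then I would transfer the picture to the $\cR$-module level. Because $\cR \cong \cK \otimes_{\cO_Y} \cD_Y$ as filtered rings and $\cK$ is flat over $\cO_Y$, the induced filtration on $\cR \otimes_{\cD_Y} \cM$ has associated graded $\cK \otimes_{\cO_Y} \mathrm{gr}(\cM)$, a finitely generated $\cK[\xi_1, \ldots, \xi_m]$-module whose support equals the base change computed above, i.e.~$\{\xi = 0\}$ or the empty set. Hence it is annihilated by some power of the irrelevant ideal $(\xi_1, \ldots, \xi_m)$, and a finitely generated $\cK[\xi]$-module with this property has finite $\cK$-dimension.

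The main obstacle — and the only step that is not purely formal — is deducing that $\cR \otimes_{\cD_Y} \cM$ itself is finite-dimensional over $\cK$ from the finite-dimensionality of its associated graded. The standard resolution is that a good filtration whose associated graded has finite total $\cK$-dimension must stabilize after finitely many steps: once $\mathrm{gr}_p = F_p / F_{p-1}$ vanishes for $p$ large, $F_p = F_{p-1}$, and the module coincides with one of its filtration pieces, inheriting the finite dimension. An alternative route is to bypass the associated graded entirely and argue via Bernstein's inequality together with the existence of $b$-functions, which yields explicit $\cK$-linear relations reducing any element of $\cR \otimes_{\cD_Y} \cM$ to a finite generating set.
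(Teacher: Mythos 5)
The paper does not prove this proposition in-text; it only cites Theorem~6.9.1 of the reference abbreviated as ``dojo,'' so there is no internal argument to compare against. Your proof is the standard one and it is correct: pass to a good filtration, observe that $\mathrm{gr}(\cM)$ is a finitely generated $\cO_Y[\xi]$-module whose support is $\mathrm{Ch}(\cM)$, compute the support of the base change $\cK\otimes_{\cO_Y}\mathrm{gr}(\cM)$, and recover finiteness of $\cR\otimes_{\cD_Y}\cM$ over $\cK$ once the associated graded stabilizes. Two small refinements would make the write-up airtight. First, the generic-fiber dimension count should be carried out separately on each irreducible component $Z_i$ of $\mathrm{Ch}(\cM)$: components that do not dominate $Y$ vanish after base change to $\cK$, while those that do have generic fiber of dimension $\dim Z_i - m \le 0$, hence equal to the conic point $\{\xi = 0\}$. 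Second, for this argument you only need the inequality $\dim\mathrm{Ch}(\cM)\le m$, which is exactly what holonomicity gives; asserting equality to $m$ implicitly invokes Bernstein's inequality and tacitly excludes $\cM = 0$ (a trivial case), so it is cleaner to work with the upper bound alone.
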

\noindent When a $\cD_Y$-module $\CM$ is holonomic, its \emph{holonomic rank} is \emph{algebraically} defined to be $\dim_{\cK}\cR\otimes_{\cD_Y}\CM$.
\thmref{th:rest-holonomic} will show agreement between the analytic and algebraic definitions of the holonomic rank.

Given a basis $\{e_1,\dots,e_r\}$ of a zero-dimensional $\cR$-module $\cM$, there exist \emph{Pfaffian matrices}
$P_i(z)=(p_{ij}^k)_{j,k=1}^r\in \cK^{r\times r}$, $i=1,\dots,m$, such that the identities
\begin{equation}\label{eqn:conn_mat}
    \pd{i}e_j=\sum_{k=1}^rp_{ij}^ke_k
\end{equation}
hold true.
It follows that these matrices are subject to the integrability condition:
\begin{equation}\label{eqn:integrability_condition}
    \pd{i} \, P_j(z) - \pd{j} \, P_i(z)
    =
    P_i(z)\cdot P_j(z) - P_j(z) \cdot P_i(z)
    \>.
\end{equation}
Conversely, due to \eqref{eqn:conn_mat}, a set of matrices $\{P_1(z),\dots,P_m(z)\}$ satisfying the integrability condition \eqref{eqn:integrability_condition} uniquely determines a left $\cR$-module on $\cM=\oplus_{i=1}^r\cK e_i$.

For an $r$-dimensional column vector of \emph{functions} $\MI=\MI(z)$, the system of PDEs
\begin{equation}\label{eqn:pfaffian_system}
    \pd{i}\MI(z)=P_i(z) \cdot \MI(z)
\end{equation}
is called a \emph{Pfaffian system}.
In the context of Feynman integrals, $\MI$ would correspond to a basis of master integrals.

\begin{example}\rm
Let us consider the case of an ODE to give intuition on how an \emph{algebraic} basis $e$, consisting of differential operators with rational function coefficients, is related to an \emph{analytic} basis of functions $\MI$.
For the moment, we set $m=1$.
Take a univariate differential operator $D = \sum_{i=0}^{r-1} a_i(z) \pd{}^i$, where $\pd{} := \frac{\pd{}}{\pd{} z}$ and $a_i \in \cK$.
$D$ then induces an $\cR$-module $\cM_D := \cR/\cR D$.
For an unknown function $f = f(z)$, the ODE
\eq{
    D \, f = 0
}
is equivalent to the Pfaffian system
\eq{
    \pd{} I = P \cdot I
    \quad , \quad
    P =
    \lrsbrk{
        \begin{array}{ccccc}
            0&1&\cdots&\cdots&0\\
            0&0&1&\cdots&0\\
            \vdots&&&\ddots&\vdots\\
            -a_0&-a_1&\cdots&\cdots&-a_{r-1}
        \end{array}
    }
    \quad , \quad
    \MI = \sbrk{f, \pd{} f, \ldots, \pd{}^{r-1} f}\tr \, .
}
The corresponding algebraic basis is $e = \sbrk{1, \pd{}, \ldots, \pd{}^{r-1}} \subset \cM_D$.

More generally, consider any $m>0$ and a left ideal $\cJ\subset\cR$ generated by differential operators $D_1,\dots,D_\mu\in\cR$.
Assume that the quotient $\cR$-module $\cR/\cJ$ is zero-dimensional.
Then we can take a finite subset of monomials of partial derivatives $\{ e_i=\pd{}^{\alpha_i}\}_{i=1}^r$ such that its equivalence classes in $\cR/\cJ$ form a basis.
For an unknown function $f(z)$, the system of PDEs
\eq{
    D_i \, f = 0\ \ \ i=1,\dots,\mu
}
is equivalent to the Pfaffian system \eqref{eqn:pfaffian_system} with $I=\sbrk{e_1 f,\dots,e_r f}\tr$.
Let us mention that any zero-dimensional $\cR$-module is isomorphic to a quotient $\cR/\cJ$ by a left ideal $\cJ\subset\cR$.
This is proved along similar lines of \cite[Theorem 1]{leykin-cyclic}\footnote{The statement in this reference concerns $\cD$-modules, but one can write the corresponding proof for $\cR$-modules simply by replacing the length of a module by its dimension over $\cK$.}.
    \hfill$\blacksquare$
\end{example}

The following example gives a free basis for the GKZ system.
It is essential for linking an operator basis to a basis of differential forms, which in turn defines integrands for a basis of Euler integrals.
\begin{example} \rm
    Let us consider the GKZ $\cD$-module $M_A(\beta;\Affine^N)$, where we
    recall that $N$ equals to the number of monomials in \eqref{g(z;x)}.
    After choosing a term order in the ring $\cR$, a set of standard monomials ${\Std}$ is well-defined \cite[Chapter 6]{dojo}.
    Each element $e_i = \pd{}^{\alpha_i}$ of ${\Std}$ corresponds to a cohomology class represented by a monomial differential form
    \eq{
        \beta_0 \, (\beta_0-1)\dots(\beta_0-|\alpha|+1)
        \>
        \frac{x^{A \, \alpha}}{g^{|\alpha|}}
        \>
        \frac{dx}{x}
        \>,
    }
    where we set $A \, \alpha:=\sum_{i=1}^N\alpha_i \cdot a_i$.
    These cohomology classes provide a free basis of $M_A(\beta;\Affine^N)$ on a non-empty open subset of $\Affine^N$.
    A Pfaffian system for $\{e_i\}$ can e.g.~be obtained via a Gr\"obner basis computation \cite[Example 6.2.1]{dojo}.
    \hfill$\blacksquare$
\end{example}

We note that Pfaffian matrices are basis-dependent.
If we choose a different basis $\{\tilde{e}_1,\dots,\tilde{e}_r\}$ specified by an invertible matrix $G=(g_{i}^j)_{i,j=1}^r$ such that ${e}_i=\sum_{j=1}^r  g_i^j\, \tilde{e}_{j}$, then the new Pfaffian matrix is determined by a \emph{gauge transformation}:
\begin{equation}
    G[P_i]:=G^{-1}(P_iG-\pd{i}G).
    \label{eq:gauge_transform}
\end{equation}

\subsection{Secondary-like equation}

This section contains theorems which will be used in \secref{subsec: Unequal to equal mass sunrise} to extract certain symmetry relations among Feynman integrals.

For a left $\cR$-module $\cM$, we write ${\rm End}_{\cR}(\cM)$ for the set of $\cR$-morphisms $\varphi:\cM\to\cM$.
We recall that a left $\cR$-module $\cM$ is said to be {\it simple} if any of its $\cR$-submodules equal $\cM$ or $0$.
A proof of the following theorem is given in \appref{app:Schur}.
\begin{theorem}\label{thm:Schur}
    If an $\cR$-module $\cM$ is simple, then ${\rm End}_{\cR}(\cM)$ is a one-dimensional $\C$-vector space.
\end{theorem}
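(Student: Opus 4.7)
My plan follows the two-step template of Schur's lemma. First, I would argue that $\operatorname{End}_\cR(\cM)$ is a division algebra over $\C$: for any nonzero $\varphi\in\operatorname{End}_\cR(\cM)$, both $\ker\varphi$ and $\mathrm{Im}\,\varphi$ are $\cR$-submodules of the simple module $\cM$, so each must be either $0$ or $\cM$. Nonvanishing of $\varphi$ then forces $\ker\varphi=0$ and $\mathrm{Im}\,\varphi=\cM$, making $\varphi$ an isomorphism. Hence the only non-invertible element of $\operatorname{End}_\cR(\cM)$ is $0$.

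The harder step is to upgrade this division algebra to $\C\cdot\id$. Since $\varphi$ commutes in particular with the subring $\cK\subset\cR$, it is $\cK$-linear; choosing a $\cK$-basis $\{e_1,\ldots,e_r\}$ of $\cM$ (finite-dimensional over $\cK$ in the context of \propref{prop:from_holonomic_to_zero_dim_ideal}), I represent $\varphi$ by a matrix $M(z)\in\cK^{r\times r}$. Translating the identity $\varphi\,\pd{i}=\pd{i}\,\varphi$ through the Pfaffian matrices $P_i$ of \eqref{eqn:conn_mat} yields
\eq{
\pd{i}M=[M,P_i],\qquad i=1,\ldots,m.
}
Using the cyclic invariance of the trace, a short computation gives $\pd{i}\operatorname{tr}(M^k)=k\,\operatorname{tr}(M^{k-1}[M,P_i])=0$ for every $i$ and every $k\ge1$. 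By Newton's identities, the coefficients of the characteristic polynomial $\chi_M(t)=\det(tI-M)\in\cK[t]$ are therefore annihilated by all the $\pd{i}$ and must lie in $\C$. Since $\C$ is algebraically closed, $\chi_M$ admits a root $\lambda\in\C$; then $\varphi-\lambda\cdot\id$ is $\cK$-linearly non-injective, so by the first step it must vanish, giving $\varphi=\lambda\cdot\id$ as required.

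The main obstacle I anticipate is not either step individually but their interface: one has to confirm that $\cM$ really is finite-dimensional over $\cK$, because otherwise the characteristic-polynomial argument has no meaning. In the context of this paper, simplicity is applied to $\cR$-modules arising from holonomic $\cD_Y$-modules, which are zero-dimensional by \propref{prop:from_holonomic_to_zero_dim_ideal}, so this hypothesis should be implicit. A secondary subtlety is the derivation of $\pd{i}M=[M,P_i]$ itself, where one must carefully separate the action of $\pd{i}$ on coordinate functions from its action on the basis vectors $e_j$ and keep track of signs; this is routine but easy to mishandle.
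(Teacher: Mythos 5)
Your proof is correct and takes a genuinely different route from the paper's. Both arguments share the opening step (the division-algebra half of Schur's lemma, which the paper uses implicitly in its closing lines), but diverge in how they show that a nonzero $\varphi$ has a complex eigenvalue. The paper extends scalars to $\overline{\cK}\otimes_{\cK}\cM$, picks an eigenvector $v$ of $\varphi$ with eigenvalue $\alpha\in\overline{\cK}$, writes down a minimal $\partial_1$-recurrence $\partial_1^k v=\sum_j a_j\partial_1^j v$, and compares the two expansions of $\varphi(\partial_1^k v)$ coefficient-by-coefficient to extract $\partial_i\alpha=0$ directly. You avoid the algebraic closure entirely: from the secondary-like equation and cyclicity of trace you get $\partial_i\operatorname{tr}(M^k)=0$, Newton's identities then force all coefficients of $\chi_M$ into $\C$, and algebraic closedness of $\C$ supplies a root $\lambda$, after which the division-algebra step finishes. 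One minor point: with the paper's conventions in \eqref{eqn:conn_mat} and \eqref{eqn:secondary_like_equation}, the matrix relation comes out as $\partial_i M=[P_i,M]$ rather than $[M,P_i]$, but this sign has no effect on the trace argument. Your route is arguably cleaner and more structural (it is visibly basis-independent via the trace), while the paper's is more elementary, staying at the level of a single vector and avoiding Newton's identities; both need the finite-dimensionality of $\cM$ over $\cK$, which you correctly flagged and which holds in the paper's setting because $\cM$ arises from a holonomic $\cD_Y$-module via \propref{prop:from_holonomic_to_zero_dim_ideal}.
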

\noindent In \secref{subsec: Unequal to equal mass sunrise}, we use the contraposition of this theorem: if $\dim_{\C}{\rm End}_{\cR}(\cM)>1$, then the $\cR$-module $\cM$ has a submodule $\cM^\prime\subset\cM$ which is neither $\cM$ nor $0$.
It is also important to note that the converse to \thmref{thm:Schur} is true for semi-simple $\cR$-modules.
An $\cR$-module is said to be semi-simple if for any $\cR$-submodule $\cN$ of $\cM$, there exists an $\cR$-submodule $\cN^\prime$ such that $\cM=\cN\oplus\cN^\prime$.
\begin{theorem}\label{thm:semi_simple}
    When the exponent $\beta$ is generic and real, the $\cR$-module $\cR\otimes_{\cD_Y}M_A(\beta;Y)$ is a semi-simple $\cR$-module.
\end{theorem}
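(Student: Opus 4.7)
The plan is to pass to the associated monodromy representation via the Riemann--Hilbert correspondence, use the twisted intersection pairing between $M_A(\beta;Y)$ and $M_A(-\beta;Y)$ to manufacture a non-degenerate Hermitian form from the realness of $\beta$, and then deduce semi-simplicity from this polarization together with the genericity assumption. By \thmref{thm:Gauss_Manin}, $M_A(\beta;Y)$ is regular holonomic of rank $r = |\chi(\pi^{-1}(z))|$ for generic $\beta$, so upon tensoring with $\cR$ we obtain a meromorphic flat connection on $Y$ which, on a Zariski-open smooth locus $Y^\circ \subset Y$, corresponds via Riemann--Hilbert to a local system $L_\beta$ of rank $r$. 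Semi-simplicity of $\cR\otimes_{\cD_Y}M_A(\beta;Y)$ as an $\cR$-module is equivalent to semi-simplicity of the monodromy representation of $\pi_1(Y^\circ)$ on the fiber of $L_\beta$, so it suffices to establish the latter.

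The central input is the twisted intersection pairing
\eq{
\langle\,\cdot\,,\,\cdot\,\rangle \, : \, \cR\otimes_{\cD_Y}M_A(\beta;Y) \times \cR\otimes_{\cD_Y}M_A(-\beta;Y) \longrightarrow \cR,
}
induced fiber-wise by Poincar\'e--Lefschetz duality between $H^n(X_z,\nabla_\beta)$ and $H^n(X_z,\nabla_{-\beta})$. For generic $\beta$ this pairing is non-degenerate (classically due to Cho--Matsumoto and Aomoto) and horizontal with respect to Gauss--Manin, giving an isomorphism of $\cR$-modules $\cR\otimes_{\cD_Y}M_A(-\beta;Y) \cong (\cR\otimes_{\cD_Y}M_A(\beta;Y))\dual$. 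When $\beta$ is real, the rank-one character $e^{2\pi i\beta_j}$ is the complex conjugate of $e^{-2\pi i\beta_j}$, hence $\overline{L_\beta} \cong L_{-\beta}$ as local systems; composing these two identifications produces a non-degenerate monodromy-invariant Hermitian form $h$ on $L_\beta$.

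To finish, I would invoke the standard fact that a finite-dimensional complex representation of a group preserving a non-degenerate Hermitian form is semi-simple provided that the restriction $h|_W$ is non-degenerate on every invariant subspace $W$: then $V = W\oplus W^\perp$ is an invariant direct-sum decomposition and an induction on dimension completes the argument. For generic real $\beta$, the degeneracy of $h|_W$ on some invariant $W$ would impose a non-trivial algebraic condition on $\beta$, so the genericity hypothesis rules this out on a Zariski-dense subset of $\RR^{n+1}$. The step I expect to be the main obstacle is the non-degeneracy of the intersection pairing \emph{after restriction to the affine subspace $Y\subset\Affine^N$}: classical non-degeneracy statements are formulated for the full ambient GKZ system $M_A(\beta;\Affine^N)$, and restriction to $Y$ could in principle introduce a kernel. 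I would address this by a transversality argument using that \thmref{thm:Gauss_Manin} guarantees the expected holonomic rank is preserved for generic $\beta$, combined with the compatibility of the duality functor with $\cD$-module restriction, which transports non-degeneracy from the ambient GKZ system down to $Y$.
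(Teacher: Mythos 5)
Your overall strategy — reduce to the local system of flat sections, manufacture a monodromy-invariant Hermitian form from realness of $\beta$, and then split off orthogonal complements — is the same as the paper's. Your construction of the form, via the intersection pairing $M_A(\beta;Y)\times M_A(-\beta;Y)\to\cR$ combined with $\overline{L_\beta}\cong L_{-\beta}$, is in the same spirit as the argument the paper borrows from \cite[p. 559--560]{goto2020homology}.

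The genuine gap is in the step where you deduce semi-simplicity from the form. A flat non-degenerate Hermitian form does \emph{not} by itself give semi-simplicity: a single unipotent matrix $\bigl(\begin{smallmatrix}1&1\\0&1\end{smallmatrix}\bigr)$ preserves the non-degenerate indefinite form $\bigl(\begin{smallmatrix}0&i\\-i&0\end{smallmatrix}\bigr)$, yet the representation is not semi-simple. Your ``standard fact'' is true but is essentially a tautology — the content is entirely in the hypothesis that $h|_W$ is non-degenerate for \emph{every} invariant $W$. You attempt to secure this hypothesis via genericity, arguing that degeneracy of $h|_W$ would ``impose a non-trivial algebraic condition on $\beta$''. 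This is not substantiated: the invariant subspaces $W$ themselves vary with $\beta$, can appear and disappear as $\beta$ crosses subvarieties, and you give no argument that the union of bad loci is contained in a proper subvariety of $\RR^{n+1}$ (nor that $h_\beta$ depends algebraically on $\beta$ in the required sense). The paper sidesteps this entirely by citing \cite[Th\'eor\`eme 5.9]{Deligne}, which supplies the orthogonal-complement splitting at the level of flat subbundles directly, without the hand-wavy genericity step. Your secondary concern — that non-degeneracy of the intersection pairing might be lost upon restriction to $Y\subset\Affine^N$ — is legitimate to flag, but it is not where the paper's proof struggles: the form is constructed fiber-wise on $H^n(\pi^{-1}(z))$ for generic $z\in Y$, and the rank there is already the one controlled by \thmref{thm:Gauss_Manin}, so no separate transversality argument is needed.
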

\begin{theorem}\label{thm:Schur2}
    Suppose that an $\cR$-module $\cM$ is semi-simple.
    Then $\cM$ is simple if and only if ${\rm End}_{\cR}(\cM)$ is a one-dimensional $\C$-vector space.
\end{theorem}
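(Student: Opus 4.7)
The plan is to prove both directions separately, with the forward direction being essentially a restatement of an earlier theorem and the converse a direct construction via an idempotent.

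For the ``only if'' direction, I would simply invoke \thmref{thm:Schur}: if $\cM$ is simple (irrespective of whether it is semi-simple), then ${\rm End}_\cR(\cM)$ is one-dimensional over $\C$. So this direction requires no additional argument and does not even use the semi-simplicity hypothesis.

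For the ``if'' direction, I would argue by contrapositive: assume $\cM$ is semi-simple but \emph{not} simple, and exhibit two $\C$-linearly independent endomorphisms. Since $\cM$ is not simple, there exists a proper nonzero $\cR$-submodule $\cN \subsetneq \cM$. By the semi-simplicity hypothesis, we can pick an $\cR$-submodule $\cN'$ with $\cM = \cN \oplus \cN'$, and since $\cN \neq \cM$ we have $\cN' \neq 0$. Let $p : \cM \to \cM$ denote the projection onto the first factor: $p(n + n') = n$ for $n \in \cN$, $n' \in \cN'$. Because $\cN$ and $\cN'$ are $\cR$-submodules, $p$ commutes with the action of every operator in $\cR$, so $p \in {\rm End}_\cR(\cM)$.

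It remains to check that $\{\mathrm{id}_\cM, p\}$ is linearly independent in ${\rm End}_\cR(\cM)$. Suppose $\lambda \cdot \mathrm{id}_\cM + \mu \cdot p = 0$ for some $\lambda,\mu \in \C$. Evaluating on a nonzero element of $\cN'$ (which exists since $\cN' \neq 0$) gives $\lambda = 0$, since $p$ vanishes there. Evaluating on a nonzero element of $\cN$ then forces $\lambda + \mu = 0$, hence $\mu = 0$. Thus $\dim_\C {\rm End}_\cR(\cM) \geq 2$, contradicting the one-dimensionality assumption. This proves that semi-simple plus one-dimensional endomorphism algebra implies simple.

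I do not foresee any real obstacle here: the whole proof rests on the elementary fact that a nontrivial direct sum decomposition always supplies a non-scalar idempotent endomorphism, and the only input beyond this is \thmref{thm:Schur} for the easy direction. The result is essentially the standard ``Schur-type'' characterization of simple objects inside semi-simple categories, restricted to the $\cR$-module setting used in \secref{subsec: Unequal to equal mass sunrise}.
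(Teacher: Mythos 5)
Your proof is correct and follows essentially the same route as the paper's: both argue the nontrivial direction by contraposition, pick a nontrivial direct sum $\cM = \cN \oplus \cN'$ from semi-simplicity, and produce two linearly independent $\cR$-endomorphisms from the resulting projections (the paper uses the two projections onto $\cN$ and onto $\cN'$, you use $\id_\cM$ and the projection onto $\cN$ — the same $2$-dimensional span). Your explicit appeal to \thmref{thm:Schur} for the forward direction is also exactly what the paper leaves implicit.
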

\noindent See the proofs of these theorems in appendices~\multiref{app:semi_simple, app:Schur2} respectively.

In practice, it is useful to describe the set ${\rm End}_{\cR}(\cM)$ in terms of a differential equation.
Fixing a basis $\{ e_1,\dots,e_r\}$ of $\cM$, any $\cK$-linear map $\varphi:\cM\to\cM$ is represented by an $r\times r$ matrix $\Phi$ with entries in $\cK$.
Starting from \eqref{eqn:conn_mat} and using that $\varphi(\pd{i} e_j) = \pd{i}\varphi(e_j)$, a short calculation shows that $\varphi$ is $\cR$-linear if and only if the following differential equations hold:
\begin{equation}\label{eqn:secondary_like_equation}
    \pd{i}\Phi = \Phi\cdot P_i\tr - P_i\tr\cdot\Phi
    \quad , \quad
    i=1,\dots,m
    \> ,
\end{equation}
which we call the \textit{secondary-like equation} (in analogy with the secondary equation from \cite{matsubaraheo2019algorithm}). 

\subsection{Restriction of a \texorpdfstring{$\cD$}{}-module}
\label{subsec:restriction_of_a_D-module}

Consider an affine subspace $Y^\prime\subset Y$, and let $\cO_{Y^\prime}$ denote the coordinate ring of $Y^\prime$.
More precisely, $\cO_{Y'}$ is given by the quotient $\cO_Y/\cI_{Y^\prime}$, where $\cI_{Y^\prime}$ is the ideal of $\cO_Y$ which vanishes on $Y^\prime$.
When $Y^\prime=\{ z_1=\cdots=z_{m^\prime}=0 \}$, $\cI_{Y^\prime}$ is generated by $z_1,\dots,z_{m^\prime}$, in which case $\cO_{Y^\prime}\simeq\C[z_1,\dots,z_m]/\cI_{Y^\prime}=\C[z_{m^\prime+1},\dots,z_m]$.
As a note on notation, primed symbols, such as $Y'$, will henceforth be associated to restrictions.

For a $\cD_Y$-module $\CM$, we set
\begin{equation}\label{eq:definition_of_restriction}
    \iota^*\CM:=\cO_{Y^\prime}\otimes_{\cO_Y}\CM.
\end{equation}
We write $\iota:Y^\prime\hookrightarrow Y$ for the natural inclusion.
By equation \eqref{nteq:pullback} of the appendix, $\iota^*\CM$ can also be written as
\begin{equation}
    \iota^*\CM=\frac{\CM}{z_1\CM+\ldots+z_{m^\prime}\CM}.
\end{equation}
The left action of $\cD_Y$ on $\CM$ naturally induces that of $\cD_{Y^\prime}$ on $\iota^*\cM$.
We thus have 

\begin{definition}
The \emph{restriction} of a $\cD_Y$-module $\CM$ to $Y^\prime$ is defined by the $\cD_{Y^\prime}$-module $\iota^*\cM$.
\end{definition}

Recall that the $\cD_Y$-module $M_A(\beta;Y)$ from \eqref{de_Rham_cohomology_group} characterizes the vector space structure of Feynman integrals. The following proposition thus links restrictions to our study of Feynman integrals.

\begin{proposition}
    Let $Y^\prime$ be an affine subspace of $Y$.
    Then there is an isomorphism $\iota^*M_A(\beta;Y)\simeq M_A(\beta;Y^\prime)$ of $\cD_{Y^\prime}$-modules.
\end{proposition}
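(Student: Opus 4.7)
The plan is to realize both sides as cokernels of the covariant derivative and use right-exactness of the functor $\iota^* = \cO_{Y'} \otimes_{\cO_Y} -$. From the definition \eqref{de_Rham_cohomology_group}, one has the right-exact sequence
$$\Omega^{n-1}_{X/Y} \xrightarrow{\nabla_x} \Omega^n_{X/Y} \to M_A(\beta;Y) \to 0.$$
The crucial observation is that the covariant derivative $\nabla_x$ from \eqref{covariant_derivative_GKZ} is $\cO_Y$-linear, since it only differentiates in the $x$-variables while all multiplicative factors ($g$, $\dd_x g$, $\dd x_i/x_i$) live in $\cO(X)$. Therefore applying $\iota^*$ preserves right-exactness, and it suffices to identify the image of each term.

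Next I would identify $\iota^* \Omega^k_{X/Y} \simeq \Omega^k_{X'/Y'}$. Since $\Omega^k_{X/Y}$ is a free $\cO(X)$-module on the wedges $\dd x^K$, this reduces to the statement $\iota^* \cO(X) \simeq \cO(X')$. Writing the localization in a presented form
$$\cO(X) \simeq \C[z_1,\ldots,z_m, x_1^{\pm 1},\ldots,x_n^{\pm 1}, y]/(gy - 1)$$
and quotienting by $(z_1,\ldots,z_{m'})$ produces $\C[z_{m'+1},\ldots,z_m, x^{\pm 1}, y]/(g'y - 1) = \cO(X')$, where $g' := g|_{Y'}$, assuming $g'\not\equiv 0$ so that the localization is nondegenerate. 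Under these identifications, $\iota^* \nabla_x$ is manifestly the covariant derivative $\nabla'_x$ built from $g'$, since the defining formula for $\nabla_x$ only involves $g$ and $\dd_x g$, both of which simply restrict to their counterparts on $Y'$.

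Combining these, right-exactness of $\iota^*$ yields
$$\iota^* M_A(\beta;Y) \simeq \Omega^n_{X'/Y'}/\mathrm{Im}(\nabla'_x) = M_A(\beta;Y')$$
as $\cO_{Y'}$-modules. To upgrade this to an isomorphism of $\cD_{Y'}$-modules, I would check that the operators $\pd{i}$ with $i > m'$ preserve the submodule $(z_1,\ldots,z_{m'})\Omega^n_{X/Y}$ (they do, since $[\pd{i},z_j]=0$ for $j\le m'<i$), and that their induced action on the quotient matches the twisted action \eqref{eqn:twisted_action} computed with $g'$ in place of $g$; this is immediate from substituting $z_1=\cdots=z_{m'}=0$ in the formula for the action. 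The main potential obstacle is the step $\iota^* \cO(X) \simeq \cO(X')$: one must confirm that localization at $g$ commutes with the ideal quotient, which can fail in the degenerate case $g'\equiv 0$. Since this degenerate situation corresponds to the restricted integrand being undefined, the proposition implicitly assumes a non-degenerate restriction, which covers all applications considered in this paper.
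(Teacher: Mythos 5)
Your proof is correct and takes a genuinely different route from the paper's. The paper's one-line proof invokes the base change formula for direct images of $\cD$-modules (Theorem 8.4 in \cite{Borel}), a piece of general categorical machinery. You instead carry out an explicit computation: presenting $M_A(\beta;Y)$ as the cokernel of the $\cO_Y$-linear map $\nabla_x$, you exploit the unconditional right-exactness of $\iota^* = \cO_{Y^\prime}\otimes_{\cO_Y} -$ together with the fact that the free modules $\Omega^k_{X/Y}$ pull back to $\Omega^k_{X^\prime/Y^\prime}$ because localization commutes with quotients. The crucial observation that $\nabla_x$ is $\cO_Y$-linear (it only differentiates the $x$-variables, with all multiplicative factors living in $\cO(X)$) is exactly what makes this work, and you rightly highlight it. Your approach is more self-contained and makes explicit what the isomorphism is; the price is working in coordinates $Y^\prime = \{z_1 = \cdots = z_{m^\prime} = 0\}$, so for a general affine subspace you should add the one-line remark that an affine change of coordinates reduces to this case. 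The non-degeneracy caveat $g|_{Y^\prime}\not\equiv 0$ you flag is genuine but equally implicit in the paper's formulation, since a vanishing restriction of $g$ forces $X^\prime = \varnothing$.
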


\begin{proof}
    It follows immediately from the definition of $M_A(\beta;Y)$ and the base change formula of \cite[Theorem 8.4]{Borel}
\end{proof}

\begin{definition}\label{defn:rational_restriction}
The $\CR_{Y'}$-\emph{restriction} of $\cM$ to $z_1 = \ldots = z_{m'} = 0$ is defined by $\CR_{Y'} \otimes_{\CD_{Y'}} \iota^*\CM$.
If no confusion arises regarding the specification of $\cR_{Y'}$, we simply call $\CR_{Y'} \otimes_{\CD_{Y'}} \iota^*\CM$ the \emph{rational restriction} of $\cM$ to $z_1 = \ldots = z_{m'} = 0$.
\end{definition}

\noindent Pfaffian matrices for rational restrictions at the level of $\cD$-modules
can be obtained algorithmically as follows.
First, compute a basis $e$ of the rational restriction 
$\CR_{Y'} \otimes_{\CD_{Y'}} \iota^*\CM$  as a vector space over 
the rational function field $\cK_{Y'}$ on $Y'$.
Second, express $\pd{i} e$ as a linear combination of the basis $e$.
The coefficients of $e$ in this linear combination gives the Pfaffian matrix.
For more details, see \exref{ex:4-dim-sol}, \thmref{th:th5}, \thmref{th:th6}
and \algref{alg:alg3}.

Next we relate restrictions to solution spaces of PDEs.
Elements of $\cD_Y$ act on the holomorphic functions of any open set $U \subset Y$ by $\pd{i} \bullet f=\frac{\pd{} f}{\pd{} z_i}$.
We denote by $\OO^{an}(U)$%
\footnote{The superscript $an$ stands for analytic; see the discussion in~\remref{rmk:history} for clarification.}
the set of the holomorphic functions on $U$.
$\OO^{an}(U)$ is equipped with a left $\CD_Y$-module structure by this action.
Let $\cI$ be a left ideal of $\cD_Y$.
It is well-known that when $\CM=\CD_Y/\cI$ is holonomic,
i.e.~when $\cI$ is a holonomic ideal,
then the space of holomorphic solutions on $U$
\begin{equation}
    \Hom_{\CD_Y}(\CM,\OO^{an}(U))
    =
    \lrbrc{
        f \in \OO^{an}(U)
        \bigm|
        \ell \bullet f = 0 \mbox{ for all $\ell \in \cI$}
    }
    \label{eq:hom-sol}
\end{equation}
is a finite-dimensional vector space over $\CC$ (see \appref{sec:nut}).
Let $c$ be a point in $Y$.
We denote by $B(c,R)$ the open ball with the center $c$ and radius $R$.
There exists a number $R_0>0$ such that
$\mathrm{dim}_\CC \, \Hom_{\cD_Y}(\CM, \OO^{an}(B(c,R)))$ is
a constant for any $R \leq R_0$.
We call it the space of the holomorphic solutions of $\CM$ at $z=c$.
The following result is fundamental.

\begin{theorem} {\rm See the expositions \cite[Chapter 6]{dojo}, \cite{coutinho}, \cite[Section 5.2]{SST}.}
\label{th:rest-holonomic}
\begin{enumerate}
\item The left $\CD_{Y^\prime}$-module $\iota^*\CM$ is a holonomic $\cD_{Y^\prime}$-module.
\item
Let $\CM$ be regular holonomic and let $c\in Y^\prime$ be a point outside of the singular locus of $\iota^*\CM$.
Then the holonomic rank of $\iota^*\CM$ agrees with the dimension
of the space of holomorphic solutions $\mathrm{dim}_\CC \, \Hom_{\cD_{Y}}(\CM,\OO^{an}(B(c,R_0)))$ when $R_0>0$ is sufficiently small.
In particular, if $\CM$ takes the form $\CM=\cD_Y/\cI$ for some left ideal $\cI\subset\cD_Y$, we have
\begin{equation} \label{eq:sol_rank}
    \mathrm{dim}_\CC\, \Hom_{\cD_Y}\Bigbrk{\CM, \OO^{an}\bigbrk{B(c,R_0)}}
    =
    \mathrm{dim}_\CC\, \frac{\cD_Y}{\cI + (z_1-c_1)\cD_Y + \ldots + (z_m-c_m)\cD_Y}
\end{equation}
by setting $\ypcodim=\ydim$. In other words, we set $Y^\prime = \{ c\}$.
\item If $c\in Y$ does not belong to the singular locus of $\CM$, $\mathrm{dim}_\CC\, \Hom_{\cD_Y}(\CM, \OO^{an}(B(c,R_0)))$ is equal to the holonomic rank of $\CM$.
\end{enumerate}
\end{theorem}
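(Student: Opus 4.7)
The plan is to prove the three parts sequentially by appealing to standard results from holonomic $\cD$-module theory: Kashiwara's preservation theorems, the Riemann-Hilbert correspondence in the regular case, and the classical Cauchy-Kovalevskaya/Frobenius theorem for integrable connections.

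For part (1), I would invoke Kashiwara's preservation of holonomicity under inverse image by a closed embedding. Here $\iota:Y'\hookrightarrow Y$ is such an embedding of smooth affine varieties, and $\iota^*\cM=\cM/(z_1\cM+\cdots+z_{m'}\cM)$ is the zeroth cohomology of the derived $\cD$-module inverse image. One shows that $\mathrm{char}(\iota^*\cM)$ is contained in the image of $\mathrm{char}(\cM)\cap T^*Y|_{Y'}$ under the canonical projection to $T^*Y'$, and that Lagrangianity survives this operation; this is the Lagrangian-preservation half of Kashiwara's theorem. An alternative, more constructive route proceeds via Oaku's algorithm \cite{Oaku-1997}: the existence of a nonzero $b$-function along $Y'$ is equivalent to holonomicity of $\iota^*\cM$, and this $b$-function can be exhibited directly from a presentation of $\cM$.

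For part (2), the strategy is to combine the Riemann-Hilbert correspondence with an explicit point-restriction computation. Since $\cM$ is regular holonomic, so is $\iota^*\cM$ by Kashiwara-Kawai's preservation theorem, hence its solution sheaf is constructible with stalk at a non-singular $c \in Y'$ a finite-dimensional $\C$-vector space of dimension equal to the generic algebraic holonomic rank. For formula \eqref{eq:sol_rank}, setting $Y' = \{c\}$ reduces the restriction to $\cD_Y/(\cI + \sum_i(z_i-c_i)\cD_Y)$, a zero-dimensional module whose $\C$-dimension equals its holonomic rank by \propref{prop:from_holonomic_to_zero_dim_ideal}, so the sought equality with $\dim_\C \Hom_{\cD_Y}(\cM,\OO^{an}(B(c,R_0)))$ follows. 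For part (3), one applies Cauchy-Kovalevskaya/Frobenius directly: at a non-singular $c$, pick a basis of $\cR_Y\otimes_{\cD_Y}\cM$ whose Pfaffian matrices $P_i$ are holomorphic on some ball $B(c,R_0)$ and satisfy the integrability condition \eqref{eqn:integrability_condition}; Frobenius' theorem then produces, for every prescribed $\MI(c)\in\C^r$, a unique holomorphic $\MI$ solving $\pd{i}\MI=P_i\MI$ on $B(c,R_0)$, so the solution space has dimension $r$, equal to the holonomic rank.

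The main obstacle lies in part (2): carefully justifying the coincidence between the algebraic holonomic rank and the dimension of the analytic solution space. Without regularity, Stokes phenomena at irregular singularities can make the analytic solution space strictly smaller than the algebraic rank; the regularity hypothesis on $\cM$ is precisely what excludes this and allows Deligne-Kashiwara's comparison theorem to close the argument. A technical subtlety is also that in part (1) one should verify that higher derived inverse images vanish (or otherwise work in the derived category throughout), which is automatic in the non-characteristic case but must be checked when $Y'$ lies inside the singular locus of $\cM$.
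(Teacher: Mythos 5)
The paper gives no proof of this theorem; it is stated as a known result and cited to the textbook expositions \cite[Chapter 6]{dojo}, \cite{coutinho}, and \cite[Section 5.2]{SST}. Your sketch is essentially a correct high-level account of how those sources establish it: Kashiwara's preservation of holonomicity under closed embeddings for part (1), the Cauchy--Kovalevskaya--Kashiwara comparison theorem (which requires regularity) for part (2), and the ordinary Frobenius/Cauchy--Kovalevskaya theorem for integrable connections for part (3). Your instinct to flag regularity as the load-bearing hypothesis in part (2) is right; without it the algebraic restriction can undercount the analytic solution space.

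Two small inaccuracies worth fixing. First, in part (1) you worry about ``higher derived inverse images vanishing,'' but that is not an issue: the derived inverse image $L\iota^*\CM$ along a closed embedding of codimension $m'$ has cohomologies concentrated in degrees $[-m',0]$, and Kashiwara's theorem asserts holonomicity of each cohomology degree individually. The module $\iota^*\CM=\CM/(z_1\CM+\cdots+z_{m'}\CM)$ appearing in the paper is simply $H^0$, and no vanishing of the other cohomologies is needed. Second, your detour through \propref{prop:from_holonomic_to_zero_dim_ideal} in the point-restriction case is redundant: when $Y'=\{c\}$ one has $\cD_{Y'}=\cR_{Y'}=\cK_{Y'}=\CC$, so ``holonomic rank of $\iota^*\CM$'' is by definition the same as $\dim_\CC\iota^*\CM$, and finite dimensionality already follows from part (1); the identity \eqref{eq:sol_rank} is then just the first sentence of part (2) specialized to $Y'=\{c\}$, with the nontrivial content residing in the comparison theorem you already invoked.
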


Let $\CD_{Y^\prime}^k := \overbrace{\CD_{Y^\prime} \times \cdots \times \CD_{Y^\prime}}^k$ be a free $\cD_{Y'}$-module.
It is well-known that the restriction $\iota^* \CM$ can be expressed as
$\iota^* \CM=\cD^k_{Y^\prime}/\cJ$ for some $k$,
where
$\cJ$ is a left $\CD_{Y^\prime}$-submodule of $\cD_{Y'}^k$ \cite{dojo, coutinho}.
The restriction can also be expressed in terms of a G\"obner basis as in \eqref{eq:rest-by-gb2}.
The holonomic rank of $\iota^* \CM$ can be determined by computing a Gr\"obner basis of $\CR_{Y^\prime}\cJ$ in a free left $\CR_{Y^\prime}$-module $\CR_{Y^\prime}^k$%
\footnote{For introductions to Gr\"obner bases of a submodule in a free module,
see \cite[Section 3.5]{dojo}, \cite[Chapter 3]{adams}, and \appref{sec:nut}.
The first two references discuss modules over a polynomial ring, but it is easy to
generalize their exposition to left modules over $\CD_{Y^\prime}$ or $\CR_{Y^\prime}$.}.

For a given holonomic $\CD_{Y^\prime}$-module $\CD_{Y'}^k/\cJ$,
there exists a left ideal $\cI$ of $\CD_{Y'}$ such that
$\CD_{Y'}/\cI$ is isomorphic to $\CD_{Y'}^k/\cJ$ as a $\CD_{Y'}$-module.
An algorithmic construction of $\cI$, relying on the so-called \emph{cyclic vector}, is given in \cite{leykin-cyclic}.
Our \remref{rmk:cyclic} gives an easy, albeit inexact, derivation.
The cyclic vector also allows for the construction of an ideal associated to a Pfaffian system.

\section{Restriction of a Pfaffian system}\label{sec:restriction_of_a_pfaffian_system}

In this section, we present our algorithm for the restriction of a generic Pfaffian
system.
In~\secref{ssec:integrable-connection}, we revise key points in the theory
of integrable connections.
Then in~\secref{subsec:Deligne_extension} and~\secref{ssec:restriction-solutions}, we
show the restriction procedure for a logarithmic
connection and the corresponding solution of the Pfaffian system.
Finally, in~\secref{sec:relation_to_Feynman_integrals}, we apply the developed theory to
the Pfaffian systems coming from Feynman integrals.

\subsection{Integrable connection}
\label{ssec:integrable-connection}
The arguments we present in this section are Zariski local, meaning that we always work on a Zariski open subset $Y_0$ of the total space $Y$.
Let $\cO_{Y_0}$ denote the ring of regular functions on $Y_0$.
When $Y_0$ is a complement of a hypersurface $\{ f=0\}$ defined by $f\in\cO_Y\setminus \{0\}$, $\cO_{Y_0}$ is given by $\CC[z_1,\dots,z_m,\frac{1}{f}]$.
We set $\cD_{Y_0}:=\cO_{Y_0}\otimes_{\cO_Y}\cD_Y$.
The ring structure of $\cD_Y$ induces that of $\cD_{Y_0}$.
An \emph{integrable connection} $\CM$ on $Y_0$ is a free $\cD_{Y_0}$-module of finite rank as an $\cO_{Y_0}$-module\footnote{
    In the standard literature, an integrable connection is usually defined as
    a $\cD_Y$-module which is locally free as an $\cO_Y$-module. However, a
    locally free $\cO_Y$-module is a free module when it is restricted to an
    open subset. As we are only interested in the structure of $\cR$-modules
    in this paper, the global structure of a locally free module does not play
    any role. Hence, our definition of an integrable connection does not
    involve locally free modules.
}.
More concretely, $\CM$ is isomorphic to $\cO_{Y_0}^r$ as an $\cO_{Y_0}$-module
for some integer $r$. The action of $\pd{i}$ (for $i=1, \ldots, m$) is
determined by an $r\times r$ matrix $P_i(z)$ with entries in $\cO_{Y_0}$ according to the
formula \eqref{eqn:conn_mat}, where $\{e_1,\dots,e_r\}$ is any $\cO_{Y_0}$-free
basis. The integrability condition \eqref{eqn:integrability_condition} follows straightforwardly for the
matrices $P_i\brk{z}$ obtained in this way.
Conversely, any set of matrices $\{P_i\}_{i=1}^r$ with
entries in $\cO_{Y_0}$ satisfying the integrability condition
\eqref{eqn:integrability_condition} gives rise to a connection
\eq{
    \nabla_i = \pd{i} - P_i
}
by the formula \eqref{eqn:conn_mat}.
In this representation, the integrability condition translates into the
commutator relation $[\nabla_i, \nabla_j] = 0\>$.

Any holonomic $\cD_Y$-module is an integrable connection on a non-empty open
subset $Y_0\>$. For a $\cD_Y$-module $\CM\>$, we define a left
$\cD_{Y_0}$-module $\CM|_{Y_0}$ by $\CM|_{Y_0} \defas \cO_{Y_0}\otimes_{\cO_Y}\CM$.
As the following proposition shows, studying the $\cR$-module structure of a
$\cD_Y$-module is the same as restricting it onto an open subset wherein it is an
integrable connection.

\begin{proposition}
  Let $\CM_1,\CM_2$ be a pair of $\cD_Y$-modules.
  $\CM_1$ and $\CM_2$ are isomorphic as $\cR$-modules if and only if there is a non-empty open subset $Y_0 \subset Y$ such that $\CM_1|_{Y_0}$ and $\CM_2|_{Y_0}$ are both integrable connections and isomorphic as $\cD_{Y_0}$-modules.
\end{proposition}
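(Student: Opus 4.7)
The argument hinges on the fact that the ring $\cR=\cR_Y$ depends only on the field of rational functions $\cK$, which—since $Y$ is irreducible—coincides with the field of rational functions on any non-empty Zariski open $Y_0\subset Y$. Consequently $\cR$ may equally be viewed as a localization of $\cD_{Y_0}$, and the associativity of tensor products together with the identity $\cD_{Y_0}\otimes_{\cD_Y}\CM=\CM|_{Y_0}$ yields $\cR\otimes_{\cD_Y}\CM\simeq\cR\otimes_{\cD_{Y_0}}\CM|_{Y_0}$.

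The ``if'' direction follows immediately from this identity: tensoring the given $\cD_{Y_0}$-isomorphism $\CM_1|_{Y_0}\to\CM_2|_{Y_0}$ with $\cR$ over $\cD_{Y_0}$ produces the required $\cR$-isomorphism.

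For the ``only if'' direction, my plan is in three steps. First, invoke the fact recalled at the start of the subsection — that any holonomic $\cD_Y$-module becomes an integrable connection on a non-empty Zariski open subset — to locate a common $Y_0'\subset Y$ on which both $\CM_i|_{Y_0'}$ are integrable connections, and fix $\cO_{Y_0'}$-free bases of each. Second, these bases extend to $\cK$-bases of $\cR\otimes_{\cD_Y}\CM_i$ via the identity above, so the hypothesized $\cR$-isomorphism is represented by an invertible matrix $\Phi$ with entries in $\cK$. Third, let $Y_0\subset Y_0'$ be the complement of the (finitely many) polar divisors of the entries of $\Phi$ and of $\Phi^{-1}$; on $Y_0$ both $\Phi$ and $\Phi^{-1}$ have entries in $\cO_{Y_0}$, so $\Phi$ defines an $\cO_{Y_0}$-linear isomorphism $\CM_1|_{Y_0}\to\CM_2|_{Y_0}$. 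The $\cR$-linearity of the original map — in particular its commutation with each $\pd{i}$ — immediately promotes this to $\cD_{Y_0}$-linearity, and each $\CM_i|_{Y_0}$ remains free of finite rank over $\cO_{Y_0}$, hence an integrable connection.

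The main obstacle lies in the first step: without an implicit finiteness on the $\CM_i$ (holonomicity is the standing assumption throughout the paper), no such $Y_0'$ need exist and the statement can fail. Once the integrable-connection reduction is secured, steps two and three amount to clearing denominators; the only point requiring a quick check is that the shrunk $\Phi$ defines a $\cD_{Y_0}$-morphism and not merely an $\cO_{Y_0}$-morphism, which is automatic from the $\cR$-linearity of the original $\varphi$.
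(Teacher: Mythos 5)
Your argument is correct, and the paper itself supplies no proof of this proposition; the route you take is the natural one. The key identity $\cR\otimes_{\cD_Y}\CM\simeq\cK\otimes_{\cO_{Y_0'}}\CM|_{Y_0'}$ (from $\cR\simeq\cK\otimes_{\cO_{Y_0'}}\cD_{Y_0'}$ as a right $\cD_{Y_0'}$-module, together with associativity of $\otimes$) turns an $\cO_{Y_0'}$-basis of a connection into a $\cK$-basis of the $\cR$-module, so the ``only if'' direction reduces to clearing the denominators of the gauge matrix $\Phi$ and of $\Phi^{-1}$; since the gauge relation $\pd{i}\Phi = \Phi P_i^{(1)}-P_i^{(2)}\Phi$ holds over $\cK$, it persists over $\cO_{Y_0}$ once the relevant entries are made regular, which is exactly the $\cD_{Y_0}$-linearity check you flag. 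Your caveat about holonomicity is also well-placed: taken literally, ``a pair of $\cD_Y$-modules'' is too weak for the ``only if'' direction — $\CM_1=\CM_2=\cD_Y$ are trivially $\cR$-isomorphic, yet $\cD_Y|_{Y_0}=\cD_{Y_0}$ is never of finite rank over $\cO_{Y_0}$ and hence not an integrable connection on any $Y_0$. The statement is evidently meant within the paper's standing holonomic context, and under that reading every step of your argument is sound.
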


Next, we recall the definition of a logarithmic connection. To simplify the
exposition, let $Y^\prime$ be an affine hyperplane of $Y$. By an affine linear
change of coordinates, we may assume that $Y^\prime = \{z_1=0\}\>$. We write
$\cD_Y(\log Y^\prime)$ for the subring of $\cD_Y$ generated as
\begin{align}
    \cD_Y(\log Y^\prime) \defas
    \CC\vev{z_1,\dots,z_m, \, z_1\pd{1} \, ,\pd{2},\dots,\pd{m}}
    \>.
\end{align}
The use of the symbol $\log Y^\prime$ is justified because \eqref{eqn:conn_mat1} below has a logarithmic singularity along $Y^\prime$.
We further introduce $Y_0^\prime:=Y_0\cap Y^\prime$ and assume
$Y_0^\prime\neq\varnothing\>$. The relations between the sets $Y\>$,
$Y^\prime\>$, $Y_0\>$, and $Y_0^\prime$ are shown in \figref{fig:Ven}.
\begin{figure}
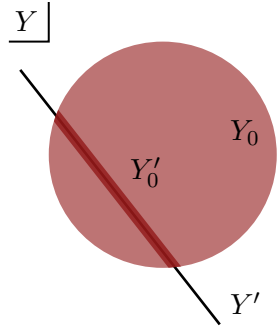

    \centering
    \includegraphicsbox{figures/jp1}
    \caption{The different spaces $Y,Y^\prime,Y_0$ and $Y_0^\prime$ wherein the $z$-variables live. The coordinates $\{z_1,\ldots,z_m\}$ and $\{z_{m'+1}, \ldots, z_m\}$ parameterize $Y$ and $Y'$ respectively. $Y_0$ is an open subset in $Y$ containing $Y'$. The intersection $Y_0' = Y_0 \cap Y'$ is shown as a dark red line.}
    \label{fig:Ven}
\end{figure}

Next we define
\eq{
    \cD_{Y_0}(\log Y^\prime):=\cO_{Y_0}\otimes_{\cO_Y}\cD_Y(\log Y^\prime)
    \>.
}
The ring structure of $\cD_Y(\log Y^\prime)$ induces that of $\cD_{Y_0}(\log Y^\prime)$.
\begin{definition}\rm
    \label{def:logarithmic_connection}
    A \emph{logarithmic connection} $\CM$ on $Y_0$ along $Y^\prime$ is a 
    $\cD_{Y_0}(\log Y^\prime)$-module that is free as an $\cO_{Y_0}$-module.
    More concretely, $\CM$ is isomorphic to $\cO_{Y_0}^r$ as an $\cO_{Y_0}$-module
    for some integer $r$, and the action of $\pd{i}$ is given by an $r\times
    r$ matrix $P_i(z)$ with entries in $\cO_{Y_0}$ by the formulas
    \begin{align}
        z_1 \pd{1} \, e_j
        &= \sum_{k=1}^r p_{1j}^k(z) \, e_k
        \label{eqn:conn_mat1}
        \\
        \pd{i} \, e_j
        &= \sum_{k=1}^r p_{ij}^k(z) \, e_k
        \> , \quad i=2,\dots,m
        \label{eqn:conn_mat2}
    \end{align}
    where $\{e_1,\dots,e_r\}$ is some $\cO_{Y_0}$-free basis.
\end{definition}
Setting
\eq{
    z' = (z_2,\ldots,z_m)
    \>,
}
the residue matrix of $\CM$ along $Y^\prime$ is given by
$P_1\brk{0, z^\prime} = \bigbrk{p_{1j}^k\brk{0, z^\prime}}_{j, k = 1}^r\>$.
The residue matrix with respect to a different free basis is obtained by a conjugation of $P_1(0, z^\prime)\>$.
Therefore, matrix multiplication by $P_1(0,z^\prime)$ from the left is a
well-defined linear operator on
$\cO_{Y_0^\prime}\otimes_{\cO_{Y_0}}\CM \simeq \cO_{Y_0^\prime}^r$, as it does
not depend on a particular choice of the free basis.
Given a logarithmic connection $\CM$ on $Y_0$ along $Y^\prime$ we associate to
it a $\cD_Y$-module $\cD_Y\otimes_{\cD_Y(\log Y^\prime)} \cM\>$, which is again
denoted by the symbol $\CM$ when there is no fear of confusion.
As an $\cR$-module, $\CM$ corresponds to a Pfaffian system \eqref{eqn:pfaffian_system} with
\begin{align}
  P_1 \ \ \text{replaced by}\ \ \frac{1}{z_1}P_1(z) \, ,
  \label{eqn:connection_matrix_log}
\end{align}
where the factor of $1/z_1$ comes from \eqref{eqn:conn_mat1}.

An integrable connection $\CM$ on $Y_0\setminus Y^\prime$ is said to be
\emph{regular} along $Y^\prime$ if there exists a logarithmic connection
$\widetilde{\CM}$ on $Y_0$ along $Y^\prime$ whose restriction onto $Y_0\setminus Y^\prime$ is $\CM$.
This terminology is compatible with the regular holonomicity of a
$\cD_Y$-module in the following sense: if $\CM$ is a regular holonomic
$\cD_Y$-module and $\CM|_{Y_0\setminus Y^\prime}$ is an integrable connection,
then $\CM$ is regular.  Although the notion of regularity is well-defined,
$\widetilde{\CM}$ is not uniquely determined by $\CM$.
Let us finally mention
that the regularity of $\cM$ is equivalent to the corresponding Pfaffian
matrices taking the form \eqref{eqn:connection_matrix_log} after a gauge
transformation.

\subsection{Restriction of a logarithmic connection in normal form}
\label{subsec:Deligne_extension}

A matrix is called \emph{resonant}%
\footnote{
    This is different from resonance in GKZ systems.
}
when it has two distinct eigenvalues with an integral difference.
When $\CM$ is a regular connection along $Y^\prime$ defined on $Y_0\setminus
Y^\prime$, there exists a logarithmic connection $\widetilde{\CM}$ along
$Y^\prime$ on $Y_0$ such that
\begin{enumerate}
    \item The restriction of $\widetilde{\CM}$ onto $Y_0\setminus Y^\prime$ is $\CM$,
    \item The residue matrix $P_1(0,z^\prime)$ is non-resonant and the only integral eigenvalue, if it exists, is $0$.
\end{enumerate}
We say that such an $\widetilde{\CM}$ is in \emph{normal form}.
If $Y' = \{z_1 = 0\}$, the normal form Pfaffian matrices associated to
$\widetilde{\CM}$ will expand as
\eq{
    \frac{P_1}{z_1} &= \sum_{n=-1}^\infty P_{1,n}(z') \, z_1^n \\
    P_i &= \sum_{n= 0}^\infty P_{i,n}(z') \, z_1^n
    \quad , \quad
    i = 2, \ldots, m \, .
}
In other words, $P_1$ has a logarithmic singularity on $Y'$, and the other
Pfaffian matrices are finite there \brk{see~\appref{subsec:Moser_Reduction} for
more details and proofs}.

The existence of a normal form of a logarithmic connection is well-known \cite[Proposition 5.4]{Deligne}.
Working at the level of Pfaffian matrices, normal form can be found via the
standard Moser reduction algorithms of \cite{Barkatou-Jaroschek-Maddah-2017, Barkatou-1997},
which we review in \appref{subsec:Moser_Reduction}.

The next proposition follows from~\defref{def:logarithmic_connection}:
\begin{proposition}\label{prop:logarithmic_connection_as_D_module}
    Let $\CM$ be a logarithmic connection along $Y^\prime$ on $Y_0$.
    Let $\{e_1,\dots,e_r\}$ be a free basis of $\CM$ and let $P_i(z)$ be the matrices determined by \eqref{eqn:conn_mat1} and \eqref{eqn:conn_mat2}.
    Let $\cN$ be the left $\cD_{Y_0}$-submodule of $\cD^r_{Y_0}$ generated by the row vectors of
    $\mId \, z_1\pd{1} - P_1(z)$ and
    $\mId \, \pd{i} - P_i(z)$
    for $i=2,\dots,m$, with $\mId$ being the identity matrix.
    Then the left $\cD_{Y_0}$-module $\CM$ is isomorphic to $\cD_{Y_0}^r/\cN$.
\end{proposition}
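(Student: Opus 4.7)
The plan is to establish the isomorphism by constructing two $\cD_{Y_0}$-linear maps $\bar\varphi: \cD_{Y_0}^r/\cN \to \CM$ and $\psi: \CM \to \cD_{Y_0}^r/\cN$, and verifying that they are mutually inverse. To produce $\varphi$, I would first define the $\cD_{Y_0}$-linear map $\varphi : \cD_{Y_0}^r \to \CM$ sending the $j$-th standard basis vector $u_j$ to $e_j$, viewed in the associated $\cD_{Y_0}$-module $\cD_{Y_0}\otimes_{\cD_{Y_0}(\log Y^\prime)}\CM$. The generators of $\cN$ lie in $\ker\varphi$ by direct computation: the $j$-th row of $\mId\,z_1\pd{1}-P_1(z)$ maps to $z_1\pd{1}e_j-\sum_k p_{1j}^k e_k$, which vanishes by \eqref{eqn:conn_mat1}, and analogously for the rows of $\mId\,\pd{i}-P_i(z)$ with $i\geq 2$ by \eqref{eqn:conn_mat2}. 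Hence $\varphi$ descends to $\bar\varphi: \cD_{Y_0}^r/\cN \to \CM$, which is surjective because the $e_j$ form an $\cO_{Y_0}$-basis of the logarithmic connection and therefore generate $\CM$ as a $\cD_{Y_0}$-module.

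The main content lies in producing the inverse. Exploiting the $\cO_{Y_0}$-freeness of $\CM$, I would first define an $\cO_{Y_0}$-linear map $\psi_0: \CM \to \cD_{Y_0}^r/\cN$ by $\psi_0\bigbrk{\sum_j f_j e_j} := \sum_j f_j \bar u_j$, with $f_j\in\cO_{Y_0}$, and then verify that $\psi_0$ is actually $\cD_{Y_0}(\log Y^\prime)$-linear by checking compatibility with the action of $z_1\pd{1}$ and of each $\pd{i}$ ($i\geq 2$). For instance, by the Leibniz rule together with \eqref{eqn:conn_mat1},
\eq{
    z_1\pd{1}\Bigbrk{\sum_j f_j e_j} = \sum_j \lrbrk{z_1\tfrac{\pd{} f_j}{\pd{} z_1}} e_j + \sum_{j,k} f_j p_{1j}^k e_k \, ,
}
and its image under $\psi_0$ coincides with $z_1\pd{1}\bigbrk{\sum_j f_j \bar u_j}$ once one uses the defining relation $z_1\pd{1} \bar u_j = \sum_k p_{1j}^k \bar u_k$ valid in $\cD_{Y_0}^r/\cN$; the calculation for $\pd{i}$ with $i\geq 2$ is entirely analogous. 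The universal property of the tensor product then extends $\psi_0$ to the desired $\cD_{Y_0}$-linear map $\psi: \cD_{Y_0}\otimes_{\cD_{Y_0}(\log Y^\prime)}\CM \to \cD_{Y_0}^r/\cN$.

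By construction $\psi \circ \bar\varphi$ sends $\bar u_j \mapsto e_j \mapsto \bar u_j$, so this composition agrees with the identity on the $\cD_{Y_0}$-generators $\{\bar u_j\}$ of $\cD_{Y_0}^r/\cN$; as both maps are $\cD_{Y_0}$-linear, they are equal. This yields injectivity of $\bar\varphi$ and completes the isomorphism. The main obstacle, though essentially a bookkeeping exercise, is the $\cD_{Y_0}(\log Y^\prime)$-linearity check for $\psi_0$: one must carefully track the Leibniz contributions and the index placement on $P_i(z)$ so that the matrix action on the quotient $\cD_{Y_0}^r/\cN$ matches the action on $\CM$. As an alternative viewpoint, once $\CM$ is presented as $\cD_{Y_0}(\log Y^\prime)^r/\cN_0$ at the logarithmic level (by the same construction), the stated identity follows immediately from right-exactness of $\cD_{Y_0}\otimes_{\cD_{Y_0}(\log Y^\prime)} (-)$ together with the observation $\cD_{Y_0}\cdot\cN_0 = \cN$.
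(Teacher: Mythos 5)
The paper gives no proof for this proposition: it merely asserts that the claim ``follows from Definition~\ref{def:logarithmic_connection}.'' Your argument correctly supplies the omitted details and is a valid, complete proof. The construction of $\bar\varphi$ is routine; the substance lies exactly where you identify it, namely in checking that $\psi_0$ is $\cD_{Y_0}(\log Y')$-linear so that the universal property of $\cD_{Y_0}\otimes_{\cD_{Y_0}(\log Y')}(-)$ applies, after which $\psi\circ\bar\varphi=\id$ gives injectivity of the already-surjective $\bar\varphi$. I verified the key Leibniz computation: for $f\in\cO_{Y_0}$, the commutation $z_1\pd{1}\cdot f = f\cdot z_1\pd{1} + z_1(\pd{1}f)$ in the Weyl algebra gives $z_1\pd{1}(f_j e_j) = (z_1\pd{1}f_j)e_j + f_j\sum_k p_{1j}^k e_k$, matching $z_1\pd{1}(f_j\bar u_j) = (z_1\pd{1}f_j)\bar u_j + f_j z_1\pd{1}\bar u_j$ once one uses that the $j$-th row of $\mId\,z_1\pd{1} - P_1$ lies in $\cN$. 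One remark on the closing ``alternative viewpoint'': presenting $\CM$ as $\cD_{Y_0}(\log Y')^r/\cN_0$ at the logarithmic level requires the same injectivity argument you just gave (i.e.\ building $\psi_0$), so invoking right-exactness of the tensor functor together with $\cD_{Y_0}\cdot\cN_0=\cN$ is a legitimate reorganization but not an independent shortcut; you acknowledge this with ``by the same construction,'' so there is no gap — just be aware that the logarithmic-level presentation is not free.
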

Let $\CM$ be a logarithmic connection on $Y_0$ along $Y^\prime$, and suppose non-resonance, so that none of the eigenvalues of the residue matrix $P_1(0,z^\prime)$ is a
positive integer. The restriction $\iota^*\CM$ as an integrable
connection can be calculated as follows\footnote{
    Recall that $\iota$ is the natural inclusion introduced
    in~\secref{subsec:restriction_of_a_D-module}.
}.
We fix a free basis $\{e_1,\dots,e_r\}$ of $\CM$ and identify it
with $\cO_{Y_0}^r$ as an $\cO_{Y_0}$-module.
We set
\eq{
    {\CM}^\prime:=\cO_{Y_0^\prime}^r \,/\, \cN^\prime
    \>,
}
where $\cN^\prime$ denotes an
$\cO_{Y_0^\prime}$-submodule of $\cO_{Y_0^\prime}^r$ spanned by the row vectors of
$P_1(0,z^\prime)\>$.
From the integrability condition
\eqref{eqn:integrability_condition}, we have
\eq{
  \pd{i} \, P_1(0,z^\prime)
  =
  \bigsbrk{P_{i}(0,z^\prime), \, P_1(0,z^\prime)}
  \> , \quad
  \text{for $i = 2,\ldots,m$}
  \>.
  \label{eq:integrability_condition_rest}
}
It follows easily that the action of $\cD_{Y_0^\prime}$ on $\cO_{Y_0^\prime}^r$ given by
\eq{
  \pd{i} \, e_j
  =
  \sum_{k=1}^r p_{ij}^k(0,z^\prime) \, e_k
  \label{eqn:action_on_free_basis}
}
induces an action on the quotient ${\CM}^\prime\>$.

\begin{proposition}\label{prop:logarithmic_restriction}
    The restriction $\iota^*\CM$ is isomorphic to the integrable connection $\CM^\prime$ as a $\cD_{Y_0^\prime}$-module.
\end{proposition}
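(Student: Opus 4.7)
I would build an explicit isomorphism $\bar\Phi:\cM'\to\iota^*\cM$ of $\cD_{Y_0'}$-modules. By \propref{prop:logarithmic_connection_as_D_module}, write $\cM\simeq\cD_{Y_0}^r/\cN$ with generators $\epsilon_1,\dots,\epsilon_r$ satisfying $z_1\pd{1}\epsilon_j=\sum_k p^k_{1j}(z)\epsilon_k$ and $\pd{i}\epsilon_j=\sum_k p^k_{ij}(z)\epsilon_k$ for $i\geq 2$. Then $\iota^*\cM=\cM/z_1\cM$ is generated as an $\cO_{Y_0'}$-module by the classes $\bar e_j$ of $\epsilon_j$. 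Define $\Phi:\cO_{Y_0'}^r\to\iota^*\cM$ sending the $j$-th standard basis vector to $\bar e_j$. I would verify that $\Phi$ descends through $\cO_{Y_0'}^r/\cN'$: the relation $z_1\pd{1}\epsilon_j=\sum_k p^k_{1j}(z)\epsilon_k$ in $\cN$ forces $\sum_k p^k_{1j}(z)\,\bar e_k=0$ in $\iota^*\cM$ because the left-hand side $z_1(\pd{1}\epsilon_j)\in z_1\cM$ vanishes there; expanding $p^k_{1j}(z)=p^k_{1j}(0,z')+z_1 q^k_{1j}(z)$ and discarding the second summand (which lies in $z_1\cM$) yields $\sum_k p^k_{1j}(0,z')\bar e_k=0$, exactly the defining relations of $\cN'$.

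Compatibility with the $\cD_{Y_0'}$-structure is immediate for $i\geq 2$: the operator $\pd{i}$ commutes with $z_1$, so it descends to $\iota^*\cM$, and the relation $\pd{i}\epsilon_j=\sum_k p^k_{ij}(z)\epsilon_k$ reduces modulo $z_1\cM$ to $\pd{i}\bar e_j=\sum_k p^k_{ij}(0,z')\bar e_k$, matching \eqref{eqn:action_on_free_basis}. That this descended action preserves $\cN'$ is exactly the content of the integrability identity \eqref{eq:integrability_condition_rest}. So $\bar\Phi:\cM'\to\iota^*\cM$ is a well-defined $\cD_{Y_0'}$-linear map.

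The crux is bijectivity, which is where the non-resonance assumption enters. For surjectivity I would show $\cM=\widetilde\cM+z_1\cM$, where $\widetilde\cM=\sum_k\cO_{Y_0}\cdot e_k$. Since $\cM$ is $\cD_{Y_0}$-generated by the $e_j$'s, and the relations for $i\geq 2$ immediately reduce terms $\pd{i}e_j$ to elements of $\widetilde\cM$, it suffices to handle $\pd{1}^n e_j$ for $n\geq 1$. Using $z_1^n\pd{1}^n=\theta_1(\theta_1-1)\cdots(\theta_1-n+1)$ in $\cD_{Y_0}(\log Y')$ together with $\theta_1 e_j\equiv P_1(0,z')e_j\pmod{z_1\cM}$, one finds
\eq{
z_1^n\pd{1}^n\bar e_j \equiv P_1(0,z')\bigl(P_1(0,z')-1\bigr)\cdots\bigl(P_1(0,z')-n+1\bigr)\bar e_j \pmod{z_1\cM}.
}
The non-resonance assumption says the eigenvalues of $P_1(0,z')$ lie in $\{0\}\cup(\CC\setminus\ZZ)$, so the polynomial $t(t-1)\cdots(t-n+1)$ is invertible on every non-zero generalized eigenspace; peeling off the factor of $z_1$ by induction then expresses $\overline{\pd{1}^n e_j}$ as an $\cO_{Y_0'}$-combination of $\bar e_k$'s modulo $\cN'$. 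Injectivity follows by the dual argument: a relation $\sum_k f_k(z')\bar e_k=0$ in $\iota^*\cM$ lifts to $\sum_k f_k(z')e_k\in z_1\cM$, and the same non-resonant spectrum forces $(f_1,\dots,f_r)$ into the $\cO_{Y_0'}$-row-span of $P_1(0,z')$, i.e.\ into $\cN'$.

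\textbf{Main obstacle.} The delicate step is controlling the $\bar\Phi$-kernel when $P_1(0,z')$ carries a non-trivial Jordan block for the eigenvalue $0$: a naive "invert $P_1(0,z')-k$" argument handles only the non-integer eigenspaces cleanly. The cleanest way to close this gap is via the Kashiwara--Malgrange $V$-filtration on $\cM$, for which $V^0\cM=\widetilde\cM$ and $\mathrm{gr}^0_V\cM$ realises $\iota^*\cM$; under normal form the $\theta_1$-action on $\mathrm{gr}^0_V\cM=\widetilde\cM/z_1\widetilde\cM$ is precisely $P_1(0,z')$, and non-resonance ensures its spectrum lies in $\{0\}\cup(\CC\setminus\ZZ)$, which forces $\iota^*\cM\simeq\cO_{Y_0'}^r/\mathrm{image}(P_1(0,z'))=\cM'$. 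Alternatively, passing to a local Jordan form of $P_1(0,z')$ over a stalk reduces the problem to a block-diagonal analysis that can be handled by hand.
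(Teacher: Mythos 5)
Your blueprint — realize $\cM\simeq\cD_{Y_0}^r/\cN$ via \propref{prop:logarithmic_connection_as_D_module}, send $[v]\in\cM'=\cO_{Y_0'}^r/\cN'$ to the class of $v$ in $\cM/z_1\cM=\iota^*\cM$, verify well-definedness and $\cD_{Y_0'}$-equivariance from the generators of $\cN$, and invoke non-resonance for bijectivity — is the same strategy as the paper's proof, and the well-definedness and equivariance steps are carried out correctly. The gap is in the surjectivity reduction. Your displayed identity is vacuous: $z_1^n\pd{1}^n e_j=z_1\cdot\bigbrk{z_1^{n-1}\pd{1}^n e_j}$ already lies in $z_1\cM$ for every $n\geq 1$, so its class is zero, and the formula degenerates to the constraint $P_1(0,z')\bigbrk{P_1(0,z')-1}\cdots\bigbrk{P_1(0,z')-n+1}\bar e_j\equiv 0$ — a tautological consequence of the $n=1$ case $P_1(0,z')\bar e_j\equiv 0$ (the matrices commute). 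It is not a recursion determining $\overline{\pd{1}^n e_j}$, and the suggestion to "peel off the factor of $z_1$" cannot be made literal since $z_1^{-1}\notin\cO_{Y_0}$.

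The recursion one actually needs is obtained by applying $\pd{1}^{n-1}$ to the relation $(\mId\,z_1\pd{1}-P_1(z))\,e=0$ in $\cM$, normal-ordering, and reducing mod $z_1\cM$, which yields for $n\geq 2$
\eq{
\bigbrk{(n-1)\mId-P_1(0,z')}\cdot\overline{\pd{1}^{n-1}e}\ \equiv\ \sum_{k=1}^{n-1}\binom{n-1}{k}\,(\pd{1}^kP_1)(0,z')\cdot\overline{\pd{1}^{n-1-k}e}\,.
}
Non-resonance makes $(n-1)\mId-P_1(0,z')$ invertible because $n-1\geq 1$ is not an eigenvalue; this is precisely the paper's formula, written there in the transposed row-vector form. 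Notice that only matrices $k\mId-P_1(0,z')$ with $k\geq 1$ are inverted — the zero eigenvalue never enters — so the Jordan-block concern in your "main obstacle" paragraph is an artifact of starting the falling factorial at $k=0$, not a genuine feature of the problem, and neither the $V$-filtration detour nor a Jordan-form case analysis is required. (Your $V$-filtration alternative is a legitimate high-level route, but as written it merely restates, rather than proves, that normal form realizes $V^0\cM=\widetilde\cM$ with $\theta_1$ acting as $P_1(0,z')$ on $\mathrm{gr}^0_V$; and the injectivity sketch likewise needs the recursion above, or the paper's degree-zero comparison, to close.)
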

\noindent See the proof in~\appref{app:logarithmic_restriction}.

As the rank of the residue matrix $P_1(0,z^\prime)$ is constant on
$Y_0^\prime\>$, ${\CM}^\prime$ is free as an $\cO_{Y_0^\prime}$-module (with $Y_0^\prime$ being replaced by a smaller Zariski open subset if necessary),
so we may denote by $Q_i\brk{z^\prime}$ the Pfaffian matrices associated to this integrable connection.
In what follows, we give a procedure for finding the Pfaffian matrices $Q_i(z')$ associated to the restriction module $\CM'$.

Let $r^\prime$ be the holonomic rank of $\CM^\prime\>$.
For an element $m$ of $\cO_{Y_0^\prime}^r$, we write $[{m}]$ for the equivalence class in $\cM^\prime$.
We suppose the existence of a special, free basis $\{f_1,\dots, f_r\}$ for
$\cO_{Y_0^\prime}^r$ with associated Pfaffian matrices $\widetilde{Q}_2(z'), \ldots, \widetilde{Q}_m(z')$.
This basis is assumed to have the following special property: $\{ f_{r^\prime+1},\dots,f_r\}$ is a basis of
$\cN^\prime$.
The Pfaffian matrix $Q_i(z^\prime)$ of $\CM^\prime$ with respect to the basis
$\{[{f}_1],\dots,[{f}_{r^\prime}]\}$ is then given by the first
$r^\prime\times r^\prime$ block of the matrix $\widetilde{Q}_i(z')$.

In general, our given basis $e_i$ with associated Pfaffian matrices $P_2(0,z'), \ldots, P_m(0,z')$ will not automatically yield this block matrix structure.
In other words, we know $e_i$ and $P_i(0,z')$ but we do not know the special basis $f_i$ and $\widetilde{Q}_i(z')$.
Let us thus proceed to construct the gauge transformation mapping $e_i$ to the special basis $f_i$.

Given our free basis $\{ e_1,\dots, e_r\}$ of $\cO_{Y_0^\prime}^r$, we can always
find an $r^\prime \times r$ matrix $B=(b_{ij})$ with entries in $\cO_{Y_0^\prime}\>$ such that
$\{ [f_1],\dots,[f_{r^\prime}]\}$ with $f_i=\sum_{j=1}^rb_{ij}e_j$ forms a free basis of $\CM^\prime\>$.
Then, by the definition of $\CM^\prime$, there are exactly $r-r^\prime$
linearly independent rows of the residue matrix $P_1(0,z^\prime)\>$.
Let $R$ be the matrix containing these independent rows.
Writing $R = (r_{ij})_{i = r'+1, \ldots, r}^{j = 1, \ldots, r}$, then $f_{i}=\sum_{j=1}^rr_{ij}e_j$ forms a free basis of $\cN^\prime$.
Aligning the rows of $B$ and $R$ produces an invertible $r\times r$ matrix $M$:
\eq{
    M =
    \lrsbrk{
        \begin{array}{c}
            \mtop{B} \\
            \chline
            \mbot{R}
        \end{array}
    }
    =
    \vcenter{\hbox{
        \includegraphicsbox{figures/full-mat-brace}
    }}
    \>.
    \label{eq:M-def}
}
\noindent
In vector notation, we now have $f = M \cdot e$.

Borrowing notation from \eqref{eq:gauge_transform}, we thus obtain the Pfaffian matrix $\widetilde{Q}_i(z^\prime)$ associated to $\{f_1, \ldots, f_r\}$ from the known matrices $P_i(0,z')$ via a gauge transformation by $M^{-1}$:
\eq{
    \widetilde{Q}_i\brk{z^\prime}
    =
    \brk{M^{-1}}\bigsbrk{P_i\brk{0, z^\prime}}
    =
    \bigbrk{
        M \cdot P_i\brk{0,z^\prime}
        +
        \pd{i} \, M
    } \cdot M^{-1}
    \>,
    \quad i = 2, \ldots, m.
  \label{eqn:L}
}
The sought-after Pfaffian matrix $Q_i(z')$ for the restriction ideal $\cM'$, written in the basis $B$, is finally obtained from the first $r' \times r'$ block of $\widetilde{Q}_i(z')$:
\begin{empheq}[box=\fbox]{align}
    \bigbrk{
        \pd{i} M
        + M \cdot P_i\brk{0, z^\prime}
    } \cdot M^{-1} =
    \lrsbrk{
        \begin{array}{cc}
            \mtop{Q_i\brk{z^\prime}} & \mtop{\star} \\
            \mbot{\mZero} & \mbot{\star}
        \end{array}
    }.
    \label{eq:gauge-transform}
\end{empheq}
This formula for obtaining $Q_i(z')$ will be essential for calculating restrictions of Pfaffian systems in
\secref{sec:examples}.
Note that the lower-left block is zero because $\{ f_{r^\prime+1},\dots,f_r\}$ is a free basis of $\cN^\prime$ (see also \appref{sssec:M-matrix} for a matrix-based derivation of this fact).
This $\mZero$-block can be used as an intermediate check on practical computations.

Let us conclude this section by stating an assumption used throughout the rest of this paper.
First, recall that on a non-empty Zariski open subset $Y_0$ of $Y$, the $\cD_{Y_0}$-module $M(\beta;Y)|_{Y_0}$ is a regular integrable connection.

\begin{quote}
    \underline{Assumption}:
    Let $\beta$ be generic. The restriction of a normal form of
    $M_A(\beta;Y)|_{Y_0}$ to $Y^\prime$ is isomorphic to $M_A(\beta;Y^\prime)$
    as an $\cR_{Y^\prime}$-module.
\end{quote}
By \propref{prop:logarithmic_restriction} and \appref{subsec:Moser_Reduction}, the restriction of a normal form of
$M_A(\beta;Y)|_{Y_0}$ to $Y^\prime$ can be computed by means of linear algebra.
By this assumption, we can then compute the restriction of $\cD$-modules
$M_A(\beta;Y^\prime)\simeq \iota^*M_A(\beta;Y)$ from the data of
$\cR\otimes_{\cD_Y}M_A(\beta;Y)\>$.

\begin{remark} \rm
    Although we are presently unable to prove this assumption, we have verified
    in all our examples that the corank of the residue matrix is always
    identical to the expected holonomic rank shown in \thmref{thm:Gauss_Manin}.
    Moreover, we compute the $Q_i$ matrices in various examples and observe that
    they agree with the Pfaffian matrices derived independently using IBP
    software.
\end{remark}

\subsection{Restriction of solutions to a Pfaffian system}
\label{ssec:restriction-solutions}
We use the same notation as the previous subsection.
In this subsection, we investigate the boundary value problem of Pfaffian systems in normal form.
This can be seen as a singular boundary value problem; see \cite[Section 2]{haraoka2002integral},  \cite{takayama1992propagation} and references therein.
\begin{paragraph}{Holomorphic restriction}
Let $\CM$ be a logarithmic connection on $Y_0$ along $Y^\prime$ and let $\{e_1,\dots,e_r\}$ be its free basis.
Take $\{\phi_1,\dots,\phi_r\}$ to be the dual basis to $\{e_1,\dots,e_r\}$, i.e.~each $\phi_i$ is an $\cO_{Y_0}$-linear map from $\CM$ to $\mathcal{O}_{Y_0}$ satisfying $\phi_i(e_j)=\delta_{ij}$.
By the definition of the action of $\pd{i}$ on $\CM$,
we can write the space of holomorphic solutions of $\cM$ on an open
set $U$ \brk{recall the definition~\eqref{eq:hom-sol}} as
\begin{equation}
    {\rm Hom}_{\cD_{Y_0}}(\CM,\cO^{an}(U))
    =
    \lrbrc{
        \sum_{i=1}^r\MI_i(z)\,\phi_i
        \biggm|
        \sbrk{
            \MI_1(z),\dots,\MI_r(z)
        }\tr\text{ satisfies \eqref{eqn:pfaffian_system} with \eqref{eqn:connection_matrix_log}}
    }\>.
    \label{eqn:Hom}
\end{equation}
The vector of functions $\MI(z)=[\MI_1(z),\dots,\MI_r(z)]\tr$ is a solution\footnote{
    Indeed, equation~\eqref{eqn:Hom} implies that for any $m \in \cM$
    we have the homomorphism property
    $
        \pd{i} \bigbrk{\sum_{k=1}^r\MI_k(z)\,\phi_k\brk{m}}
        =
        \sum_{k=1}^r\MI_k(z)\,\phi_k\brk{\pd{i} m}
    $.
    So if we choose $m = e_j\>$, then
    $
        \pd{i} \bigbrk{\sum_{i=k}^r\MI_k(z)\,\phi_k\brk{e_j}}
        =
        \sum_{i=k}^r\MI_k(z)\,\phi_k\brk{\pd{i} e_j},
    $
    resulting in
    $
        \pd{i} \MI_j(z)
        =
        \sum_{k=1}^r\MI_k(z)\,
        p_{i j}^k(0,z^\prime)
        \>,
    $
    after we use $\phi_k\brk{e_j} = \delta_{kj}$
    and~\eqref{eqn:action_on_free_basis}.
    Thus we see that the functions $\MI_j$ indeed satisfy the Pfaffian system
    \eqref{eqn:pfaffian_system}.
}
to the Pfaffian system~\eqref{eqn:pfaffian_system}
and it will be associated to Feynman master integrals later on.

Set $U^\prime:=Y^\prime\cap U$.
We derive a realization of ${\rm Hom}_{\cD_{Y_0^\prime}}(\CM^\prime,\cO^{an}(U^\prime))$ similar to \eqref{eqn:Hom}.
Because $\cM$ is logarithmic, we can expand $P_1(z)$ as
\begin{equation}\label{eqn:expansion_of_P_1}
    P_1(z)=P_{1,-1}(z^\prime)z_1^{-1}+P_{1,0}(z^\prime)+P_{1,1}(z^\prime)z_1+\ldots
\end{equation}
The column vector
$
    \MI(z') \defas \MI(0,z^\prime)
    =\sbrk{
        \MI_1(0,z^\prime),\dots,\MI_r(0,z^\prime)
    }\tr
$
then satisfies the following system of PDEs:
\begin{empheq}[box=\fbox]{align}
    \label{eqn:restriction_equation}
    & \pd{i} \MI(z^\prime) = P_{i,0}(z^\prime) \cdot \MI(z^\prime)
    \>, \quad
    i=2,\dots,m \\[5pt] \nonumber
    & P_{1,-1}(z^\prime) \cdot \MI(z^\prime)=0
    \>.
\end{empheq}
Thus, we have an identification
\begin{equation}
    {\rm Hom}_{\cD_{Y_0^\prime}}(\CM^\prime,\cO^{an}(U^\prime))
    =
    \lrbrc{
        \sum_{i=1}^r \MI_i(0, z^\prime)\,\phi_i
        \biggm|
        \MI(z^\prime)\text{ is subject to \eqref{eqn:restriction_equation}}
    }\>.
\end{equation}
We call \eqref{eqn:restriction_equation} the \emph{holomorphic restriction} of the Pfaffian system $\pd{i} \MI(z) = P_i \cdot \MI(z)$, as it pertains to solution vectors $\MI(z)$ which are holomorphic at $z_1 = 0$.
Note that the first condition involving the residue matrix $P_{1,-1}(z')$ imposes constraints on the solution vector $\MI(z')$. These extra relations lead to a drop in rank in comparison to the unrestricted Pfaffian system.

Solutions to Pfaffian systems \eqref{eqn:connection_matrix_log} and solutions of \eqref{eqn:restriction_equation} are related to each other in the following sense:

\begin{theorem}\label{thm:restriction}
    Suppose that the Pfaffian system \eqref{eqn:connection_matrix_log} is
    non-resonant.  Then there is a one-to-one correspondence between the
    solutions $\MI(z)$ of \eqref{eqn:connection_matrix_log} being holomorphic
    at $(z_1,z')=(0,0)$ and the holomorphic solutions $\MI(z^\prime)$ of
    \eqref{eqn:restriction_equation}.  The correspondence is given by the
    boundary value map $\MI(z)\mapsto \MI(0,z^\prime)$.
\end{theorem}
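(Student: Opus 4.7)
The plan is to establish the bijection by showing that the boundary value map $\MI(z) \mapsto \MI(0, z')$ is invertible, with the inverse constructed via a Taylor expansion in $z_1$ whose coefficients are fixed by a recursion made solvable by the non-resonance assumption.

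\textbf{Step 1 (recursion and injectivity).} I would write any candidate solution near $(0,0)$ as $\MI(z) = \sum_{k \geq 0} \MI^{(k)}(z') \, z_1^k$. Substituting into $z_1 \pd{1} \MI = z_1 P_1(z) \MI$ and using the expansion \eqref{eqn:expansion_of_P_1}, matching coefficients of $z_1^m$ yields the constraint $P_{1,-1}(z') \, \MI^{(0)}(z') = 0$ at $m=0$ together with the recursion $(m \mId - P_{1,-1}(z')) \, \MI^{(m)}(z') = \sum_{k=0}^{m-1} P_{1,m-1-k}(z') \, \MI^{(k)}(z')$ for $m \geq 1$. Non-resonance, combined with the hypothesis that the only integer eigenvalue of $P_{1,-1}$ is $0$, forbids $m \in \{1,2,\ldots\}$ from being an eigenvalue, so $m \mId - P_{1,-1}(z')$ is invertible on a neighborhood of $z'=0$. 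Hence $\MI^{(0)}$ determines the full series uniquely, giving injectivity; conversely, any $\MI(z')$ satisfying \eqref{eqn:restriction_equation} defines the $\MI^{(m)}$ unambiguously, yielding a candidate formal solution.

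\textbf{Step 2 (compatibility with the other directions).} I would then check that the series thus constructed also satisfies $\pd{i} \MI = P_i(z) \MI$ for $i=2,\ldots,m$. Setting $E_i(z) := \pd{i} \MI - P_i(z) \MI$, the integrability condition \eqref{eqn:integrability_condition} applied to the connection matrices $(z_1^{-1} P_1, P_2, \ldots, P_m)$ rearranges to $z_1 \pd{1} P_i - \pd{i} P_1 = P_1 P_i - P_i P_1$; a short calculation using this identity and $z_1 \pd{1} \MI = P_1 \MI$ gives $z_1 \pd{1} E_i = P_1(z) \, E_i$, so $E_i$ obeys the same Fuchsian equation in $z_1$ as $\MI$. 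Since $E_i(0, z') = \pd{i} \MI^{(0)}(z') - P_{i,0}(z') \, \MI^{(0)}(z') = 0$ by \eqref{eqn:restriction_equation}, the uniqueness half of Step 1 applied to $E_i$ forces $E_i \equiv 0$.

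\textbf{Step 3 (convergence).} The last and most delicate step is to show that the formal series converges to a genuine holomorphic function. Using non-resonance to bound $\lVert (m \mId - P_{1,-1}(z'))^{-1} \rVert = O(1/m)$ uniformly on a compact neighborhood of $z'=0$, together with the geometric decay of the Taylor coefficients of the holomorphic matrix $P_1(z)$, a standard majorant-series estimate for Fuchsian systems (see, e.g., \cite[Section 2]{haraoka2002integral} and \cite{takayama1992propagation}) yields convergence of $\sum_k \MI^{(k)}(z') \, z_1^k$ on some polydisc around $(0,0)$. The main technical obstacle is precisely this convergence argument; Steps 1 and 2 are purely algebraic and go through cleanly given the non-resonance and integrability hypotheses.
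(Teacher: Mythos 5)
Your proposal is correct and follows essentially the same route as the paper's proof in \appref{app:restriction}: the recursion $(m\,\mId - P_{1,-1})\,\MI^{(m)} = \sum_{k<m} P_{1,m-1-k}\,\MI^{(k)}$ made invertible by non-resonance, a compatibility check in the remaining directions via the integrability condition, and a majorant-series bound for convergence. The only cosmetic difference is that you package the compatibility step by showing the auxiliary quantity $E_i := \pd{i}\MI - P_i\MI$ solves the same Fuchsian ODE in $z_1$ with zero boundary value and hence vanishes, whereas the paper proves the equivalent identity \eqref{eqn:C2} by a direct induction on the Taylor coefficients.
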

\noindent See the proof in~\appref{app:restriction}.

\end{paragraph}

\begin{paragraph}{Logarithmic restriction}
So far, we have only focused on limits of Pfaffian systems for which the solution $\MI$ is holomorphic.
However, in physics applications, it is interesting to study solutions that develop logarithmic singularities in the restriction limit $z_1 \to 0$.
Let us hence define $\widetilde{\cO}^{(N)}$ to be the vector space of functions of the form
\eq{
    f_0(z)+f_1(z)\log z_1+\ldots +f_N(z)\log^Nz_1 \, ,
}
where the $f_i(z)$ denote convergent power series at the origin.
Then we have an identification
\begin{equation}
    {\rm Hom}_{\cD_{Y_0}}(\CM,\widetilde{\cO}^{(N)})
    =
    \lrbrc{
        \sum_{i=1}^r\MI_i(z)\,\phi_i
        \biggm|
        \MI_i(z)\in\widetilde{\cO}^{(N)}, \,
        \sbrk{I_1, \ldots, I_r}\tr
        \text{ satisfies \eqref{eqn:pfaffian_system} with \eqref{eqn:connection_matrix_log}}
    }\>.
    \label{eqn:Hom_log}
\end{equation}

We fix a free basis $\{e_1,\dots,e_r\}$ of $\CM$ to identify it with $\cO_{Y_0}^r$ as an $\cO_{Y_0}$-module.
For any non-negative integer $N$, we set ${\CM}^\prime_{N}:=\cO_{Y_0}^r/{\rm Im}\,P_1(0,z^\prime)^{N+1}$.
As discussed in the construction of $\CM^\prime$ in \secref{subsec:Deligne_extension}, it follows that the action of $\cD_{Y_0^\prime}$ on $\cO_{Y_0}^r$ given by \eqref{eqn:action_on_free_basis} induces that on the quotient ${\CM}^\prime$.

Consider now a vector expression of the form
\eq{
    \MI\supbrk{0}(z)+\MI\supbrk{1}(z)\log z_1+\ldots+\MI\supbrk{N}(z)\log^Nz_1.
}
We present the following generalization of \eqref{eqn:restriction_equation}, which is a system of PDEs for the vector functions $\MI\supbrk{i}(0,z')=\MI\supbrk{i}(z')$:
\begin{empheq}[box=\fbox]{align}
    \begin{array}{l l}
        \pd{i} \, \MI\supbrk{n}(z^\prime) = P_{i,0}(z^\prime) \cdot \MI\supbrk{n}(z^\prime)\>,
        & \quad i=2,\dots,m,\ n=0,\dots,N
        \\[10pt]
        P_{1,-1}(z^\prime) \cdot \MI\supbrk{n}(z^\prime) = (n+1) \MI\supbrk{n+1}(z^\prime)\>,
        & \quad n = 0, \ldots, N
    \end{array}
    \label{eq:log_constraint}
\end{empheq}
with $\MI\supbrk{N+1}(z^\prime) = 0$.
We call \eqref{eq:log_constraint} the \emph{logarithmic restriction} of \eqref{eqn:connection_matrix_log} at order $N$.
We have an identification
\begin{equation}
    {\rm Hom}_{\cD_{Y_0^\prime}}(\CM^\prime_N,\widetilde{\cO}^{(N)})
    =
    \lrbrc{
        \sum_{i=1}^r\MI\supbrk{i}(z^\prime)\,\phi_i
        \biggm|
        \MI\supbrk{i}(z^\prime)\in\cO^{an}, \, \MI\supbrk{i}\text{ is subject to \eqref{eq:log_constraint}}
    }\>.
\end{equation}

\begin{theorem}\label{thm:log_restriction}
    Suppose that the Pfaffian system \eqref{eqn:connection_matrix_log} is  non-resonant.
    Then there is a one-to-one correspondence between the solutions $\MI(z)$ to
    \eqref{eqn:connection_matrix_log} of the form $\MI\supbrk{0}(z)+\MI\supbrk{1}(z)\log z_1+\ldots+\MI\supbrk{N}(z)\log^Nz_1$ and the holomorphic solutions $\MI\supbrk{n}(z^\prime)$ of \eqref{eq:log_constraint}.
    More explicitly, the correspondence is given by the boundary value map $\MI(z)\mapsto \MI\supbrk{0}(0,z^\prime)$.
\end{theorem}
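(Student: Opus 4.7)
The plan is to generalize the argument underlying \thmref{thm:restriction} by allowing an ansatz polynomial in $\log z_1$ with holomorphic coefficients, and then to leverage the non-resonance assumption to solve the resulting two-parameter recursion uniquely.

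For the forward direction, I substitute $\MI(z) = \sum_{n=0}^N \MI\supbrk{n}(z) \log^n z_1$ into the Pfaffian system in normal form, using the identity $z_1\partial_1 \log^n z_1 = n \log^{n-1} z_1$ together with the Laurent expansion $z_1 P_1(z) = P_{1,-1}(z') + z_1 P_{1,0}(z') + \ldots$ and the Taylor expansions $P_i(z) = P_{i,0}(z') + z_1 P_{i,1}(z') + \ldots$ for $i \geq 2$. Because each $\MI\supbrk{n}(z)$ is holomorphic at $z_1 = 0$, the contributions $z_1\partial_1 \MI\supbrk{n}|_{z_1=0}$ vanish, and collecting the coefficient of $\log^n z_1$ at $z_1 = 0$ produces exactly the system~\eqref{eq:log_constraint} for the boundary values $\MI\supbrk{n}(z') := \MI\supbrk{n}(0,z')$.

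For the inverse direction, I start from data $\MI\supbrk{n}(z')$ satisfying~\eqref{eq:log_constraint} and build holomorphic extensions $\MI\supbrk{n}(z) = \sum_{k\geq 0} \MI\supbrk{n}_k(z')\, z_1^k$ with $\MI\supbrk{n}_0 = \MI\supbrk{n}(z')$. Inserting the full ansatz into $z_1 \partial_1 \MI = z_1 P_1(z)\MI$ and extracting the coefficient of $z_1^k \log^n z_1$ yields the recursion
\[
    (k\,\mId - P_{1,-1}(z'))\, \MI\supbrk{n}_k + (n+1)\,\MI\supbrk{n+1}_k \;=\; \sum_{k'=0}^{k-1} P_{1,k-k'-1}(z')\, \MI\supbrk{n}_{k'}.
\]
I solve this by downward induction on $n$ (with $\MI\supbrk{N+1} \equiv 0$) and upward induction on $k \geq 1$. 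The non-resonance assumption ensures that $k\,\mId - P_{1,-1}(z')$ is invertible for every positive integer $k$, so each $\MI\supbrk{n}_k$ is uniquely determined. The $k = 0$ instance of the recursion is precisely the second line of~\eqref{eq:log_constraint}, which holds by hypothesis.

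Two technical points remain. First, I must show that the formal series $\sum_k \MI\supbrk{n}_k(z')\, z_1^k$ converges in a neighborhood of $z_1 = 0$; this follows from a standard majorant estimate analogous to the classical Frobenius theory for regular singular systems, using the uniform boundedness of $\lVert(k\,\mId - P_{1,-1})^{-1}\rVert$ as $k \to \infty$. Second, and this is the main obstacle, I must verify that the constructed $\MI(z)$ also satisfies $\partial_i \MI = P_i\MI$ for $i \geq 2$, not merely the equation along $z_1$. To this end I consider the error $E_i := \partial_i \MI - P_i \MI$ as a formal series in $z_1$ and $\log z_1$, and use the integrability condition~\eqref{eqn:integrability_condition} to show that $z_1 \partial_1 E_i = (z_1 P_1)\, E_i$, which is structurally identical to the equation satisfied by $\MI$. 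The boundary value $E_i|_{z_1 = 0}$ vanishes in each $\log^n z_1$ coefficient by virtue of the first line of~\eqref{eq:log_constraint}, and non-resonance then forces $E_i \equiv 0$ by the same recursion argument. Finally, the map $\MI(z) \mapsto \MI\supbrk{0}(0,z')$ is bijective because the remaining $\MI\supbrk{n}(z')$ for $n \geq 1$ are fixed by $\MI\supbrk{0}(z')$ via iteration of the second line of~\eqref{eq:log_constraint}.
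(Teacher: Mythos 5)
Your proof is correct, but it takes a genuinely different route from the paper's. You generalize the Frobenius-type series recursion used for \thmref{thm:restriction}: you expand $\MI = \sum_{n=0}^{N}\sum_{k\geq 0}\MI\supbrk{n}_k(z')\,z_1^k\log^n z_1$, extract the two-index recursion in $(k,n)$, solve it by downward induction in $n$ and upward induction in $k$ using invertibility of $k\mId - P_{1,-1}$ for $k\geq 1$ (non-resonance), establish convergence by majorants, and then handle the remaining directions $\partial_i$, $i\geq 2$, by a separate error-term argument: the error $E_i=\partial_i\MI-P_i\MI$ is shown, via the integrability condition, to satisfy the same homogeneous recursion with vanishing boundary data, forcing $E_i\equiv 0$. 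The paper instead does not re-run a series argument; it constructs a gauge matrix $G(z_1,z')$ solving an auxiliary rank-$r^2$ Pfaffian system $\partial_1 G = T_{P_1,\Gamma/z_1}(G)$, $\partial_2 G = P_2 G$, which is \emph{itself} logarithmic and can be fed into \thmref{thm:restriction}. This conjugates the connection to the constant model $\partial_1 \Mi=(\Gamma/z_1)\Mi$, $\partial_2\Mi=0$, so every solution is of the closed form $\MI = G(z_1,z')\,z_1^\Gamma\,y$, and the bijectivity of the boundary-value map is read off from $P_{1,-1}(z')G(0,z')=G(0,z')\Gamma$. Your approach is more elementary and self-contained, makes the coefficient recursion fully explicit (which is also what \eqref{eqn:define_I_k} records in the paper for the holomorphic case), and makes the role of non-resonance transparent at each power $z_1^k$; the paper's approach is shorter because it reuses \thmref{thm:restriction} as a black box and yields a structural formula $G\,z_1^\Gamma$ for the full solution, which the text later points to for reconstructing mixed $z_1^a\log^b z_1$ corrections. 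Two small points worth tightening in your write-up: (i) the convergence estimate should be stated so that the extra inhomogeneous term $(n+1)\MI\supbrk{n+1}_k$ is absorbed into the majorant bound — this works by the same $R^k$ bound as in the proof of \thmref{thm:restriction}, proceeding from $n=N$ downward — and (ii) when you argue $E_i\equiv 0$, you should note that the uniqueness half of your recursion (the same non-resonance step) is what kills all $E_i$ coefficients, not just the $k=0$ ones; as stated this is implicit but deserves a sentence.
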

\noindent See the proof in~\appref{app:log_restriction}.
\end{paragraph}

\subsection{Relation to Feynman integrals}\label{sec:relation_to_Feynman_integrals}

Let us summarize how the preceding sections relate to the study of Feynman integrals.

Fix a single Feynman integral.
Its associated de Rham cohomology group $M_A(\beta;Y)$ (see \eqref{de_Rham_cohomology_group}) carries the structure of a holonomic $\cD_Y$-module, so the previous results on $\cD$-modules and Pfaffian systems also apply to $M_A(\beta;Y)$.
In particular, there exist an $r$-dimensional vector of master integrals $\MI=\MI(z)$ and Pfaffian matrices $P_i(z)$ such that $\pd{i} \MI = P_i(z) \cdot \MI(z)$ holds for $i = 1, \ldots, m$.
Assume that this Pfaffian system has a singularity at $z_1 = 0$.
We recall the notation $z' = (0, z_2, \ldots, z_m)$.

According to the discussion of \secref{subsec:Deligne_extension}, there exists a normal form of $M_A(\beta;Y)$. In practice, this means that we can use Moser reduction to find a set of new Pfaffian matrices $\widetilde{P}_i(z)$ such (i) $\widetilde{P}_1(z)$ has a simple pole at $z_1=0$, (ii) $\widetilde{P}_2(z_1), \ldots, \widetilde{P}_m(z)$ are holomorphic along $\{z_1=0\}$, and (iii) the spectrum of the residue matrix $\widetilde{P}_{1,-1}(z')$ is non-resonant (i.e.,~no difference of eigenvalues is integral) with zero being the unique integral eigenvalue.
By abuse of notation, we keep using the notation $\MI$ for the solution vector in this new basis.

We are interested in a system of \emph{simpler} PDEs which holds in the limit $z_1 \to 0$.
This problem is split into two cases: (i) the solution vector $\MI$ is finite at $z_1 = 0$, or (ii) the solution vector is $\log$-singular at $z_1 = 0$.

\begin{paragraph}{Case (i): Holomorphic restriction}
The relevant set of PDEs was given in \eqref{eqn:restriction_equation}.
However, this is written in a redundant fashion because we have not resolved the constraint%
\footnote{This constraint can be thought of as a collection of IBP relations which hold in the limit.}
\eq{
    \widetilde{P}_{1,-1} \cdot \MI(0,z') = 0 \, ,
    \label{eq:holomorphic_residue_constraint}
}
where the limit $z_1 \to 0$ on $\MI(z_1,z')$ is taken at the \emph{integrand} level.
There are $\fun{Rank}{\widetilde{P}_{1,-1}}$ many relations above, so we would like a restricted PDE system which is manifestly of dimension $r' = r - \fun{Rank}{\widetilde{P}_{1,-1}}$.
Assume we know an $r'$-dimensional basis $\Mi(z')$ for the restricted system (in \secref{sec:1L_bhabha} and \appref{sssec:choice-B-matrix}, we show how to construct such a basis from the original, $r$-dimensional one).
This basis fixes the matrix $B$ in \eqref{eq:M-def} according to the relation $\Mi = B \cdot \MI(0,z')$.
Moreover, the matrix $R$ in \eqref{eq:M-def} equals $\rowReduce{\widetilde{P}_{1,-1}}$,
where the operation \code{RowReduce} includes the deletion of zero-rows.
So we now know the matrix $M$ from that same formula.
Equation \eqref{eq:gauge-transform} finally yields the $r' \times r'$-dimensional Pfaffian matrices $Q_i(z')$ satisfying
\eq{
    \pd{i} \Mi = Q_i \cdot \Mi
    \quad , \quad
    i = 2, \ldots, m.
}
This restricted Pfaffian system is simpler than the original one in the sense that it has a smaller rank, and it depends on one less variable.
If needed, once $\Mi$ has been determined, it is possible to recover $\MI$ by inverting the relation
\eq{
    M \cdot
    \MI(0,z')
    =
    \lrsbrk{
        \begin{array}{c}
            B \cdot \MI(0,z') \\
            \hline
            R \cdot \MI(0,z')
        \end{array}
    }
    =
    \lrsbrk{
        \begin{array}{c}
            \Mi \\
            \hline
            0 \\[-5pt]
            \vdots \\
            0
        \end{array}
    } \, ,
}
where the block of $r-r'$ zeros below $\Mi$ on the RHS arises due to \eqref{eq:holomorphic_residue_constraint}.

Let us summarize these steps in an algorithm.

\begin{algorithm}
{\rm (Pfaffian-level holomorphic restriction to $z_1=0$)} \rm \ \label{alg:holo_rest} \\[2pt]
    \underline{Input}: The $r$-dimensional Pfaffian system $\pd{i} \MI(z) = P_i(z) \cdot \MI(z), \, i = 1, \ldots m$. \\[5pt]
    \underline{Output}: The solution vector $\MI(0,z')$.
    \begin{algorithmic}[1]

        \vspace{0.2cm}
        \State Gauge transform $P_i \to \widetilde{P}_i := G[P_i]$ to \emph{normal form} via Moser reduction.

        \vspace{0.2cm}
        \State Find the residue matrix $\widetilde{P}_{1,-1}$ and its row-reduced form $R := \rowReduce{\widetilde{P}_{1,-1}}$. Set $r' = r - \fun{Rank}{R}$.

        \vspace{0.2cm}
        \State Choose an $r'$-dimensional basis $\Mi$. This yields the $r' \times r$-dimensional matrix $B$ via $\Mi = B \cdot \MI(0,z')$. If there is no obvious choice for $\Mi$, then $B$ can be set to $B = \fun{NullSpace}{R}$.

        \vspace{0.2cm}
        \State Construct the invertible block matrix
        $
        M =
        \lrsbrk{
            \begin{array}{c}
                B \\
                \chline
                R
            \end{array}
        }
        $.

        \vspace{0.2cm}
        \State For each $i = 2, \ldots, m$, compute
        $
        \bigbrk{
            \pd{i} M
            + M \cdot P_i\brk{0, z^\prime}
        } \cdot M^{-1}
        $
        and save the upper-left $r' \times r'$ block matrices $Q_i(z')$.

        \vspace{0.2cm}
        \State Solve the Pfaffian system $\pd{i} \Mi(z') = Q_i(z') \cdot \Mi(z')$ for $\Mi$.

        \vspace{0.2cm}
        \State \Return $\MI(0,z') = M^{-1} \cdot \sbrk{\Mi  \, | \, 0 \cdots 0\,}^T$.
    \end{algorithmic}
\end{algorithm}

We visualize this algorithm in~\figref{fig:ker} by showing the building blocks of the restricted
Pfaffian system and its solution introduced in~\secref{subsec:Deligne_extension}.
The \mtop{red} surface depicts every possible solution $I(0, z_2)$ which
satisfies the constraint $\widetilde{P}_{1,-1}(z_2) \cdot I(0, z_2) = 0$, i.e.~the
nullspace of $\widetilde{P}_{1,-1}$.  For a fixed value of $z_2$, this nullspace
spans a $1$-dimensional vector space, shown as \mtop{red} lines.  As
$z_2$ varies, these lines trace out the \mtop{red} surface.  This
surface thus portrays a basis for the restricted Pfaffian system, as
given by the row-span of the $\mtop{B}$ matrix
in~\protect\eqref{eq:M-def}.
For fixed $z_2$, the \mbot{purple} lines represent the $1$-dimensional
span of the row vectors in $\widetilde{P}_{1,-1}$.  As $z_2$ varies, we obtain the
\mbot{purple} surface.  This surface hence depicts the $\mbot{R}$
matrix in~\protect\eqref{eq:M-def}.  Orthogonality of the \mbot{purple}
surface w.r.t.~the \mtop{red} surface ensures invertibility of the $M$
matrix in~\protect\eqref{eq:M-def}.
The \yel{yellow} curve illustrates a \emph{particular} solution vector
$
    \MI\brk{0, z_2} \defas \sbrk{\MI_1\brk{0, z_2}, \MI_2\brk{0,z_2}}\tr\>
$
given some initial condition $I(0,0)$.
Note that this solution is constrained to lie on the \mtop{red} surface.

\begin{figure}[H]
    \centering
    \includegraphicsbox[scale=.75]{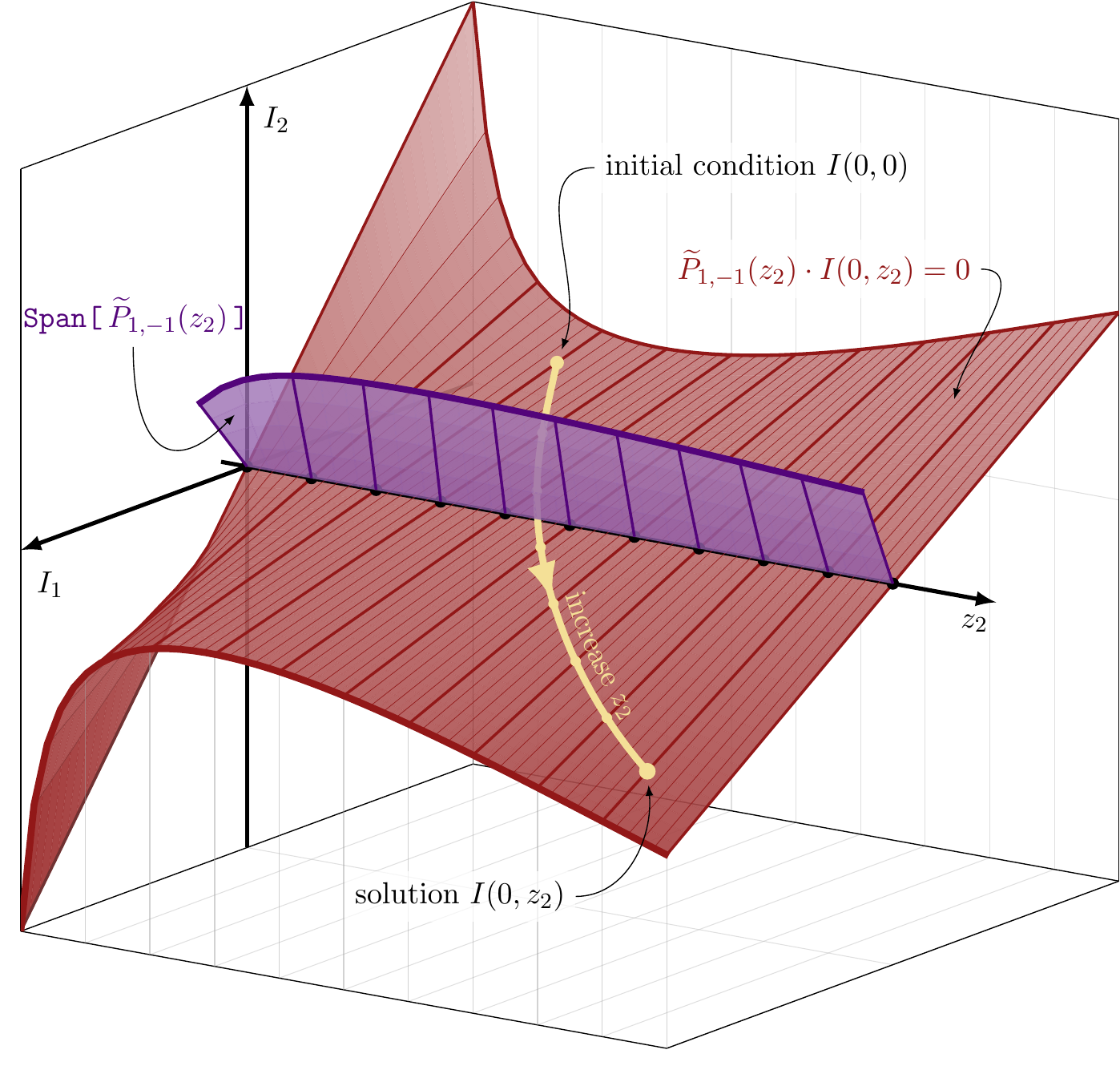}
    \caption{
        \protect\algref{alg:holo_rest}
        exemplified as the
        restriction $z_1 \to 0$ of a rank $r = 2$ Pfaffian system in two
        variables $z_1$ and $z_2$. \\
    }
    \label{fig:ker}
\end{figure}

\end{paragraph}

\begin{paragraph}{Case (ii): Logarithmic restriction}
By the general theory of PDEs \cite{haraoka2020linear}, we expect the solution vector $\MI(z)$ to take the following form in the limit $z_1 \to 0$:
\eq{
    \MI\supbrk{\log}(z)
    \quad = \quad
    \sum_{\lambda \, \in \, \spec{\widetilde{P}_{1,-1}}}
    \hspace{-0.3cm} z_1^\lambda \hspace{0.2cm}
    \sum_{n=0}^{N_\lambda} \MI\supbrk{\lambda,n}(z') \log^n(z_1) .
    \label{eq:log_singular_I}
}
The outer sum runs over the unique eigenvalues $\lambda$ of the residue matrix $\widetilde{P}_{1,-1}$, and for each $\lambda$ there is a collection of $N_\lambda+1$ unknown vectors $\MI\supbrk{\lambda,0}, \ldots, \MI\supbrk{\lambda,N_\lambda}$.
The integer $N_\lambda$ depends on the structure of the Jordan blocks in the Jordan decomposition of $\widetilde{P}_{1,-1}$.

For fixed $\lambda,$ each $\MI\supbrk{\lambda,n}$ is subject to the system of PDEs \eqref{eq:log_constraint}.
This system includes a list of constraints
\eq{
    (\widetilde{P}_{1,-1}-\lambda) \cdot \MI\supbrk{\lambda,n} =
    (n+1) \MI\supbrk{\lambda,n+1}
    \quad , \quad
    n = 1, \ldots, N_\lambda,
}
which leads to a drop in rank.
Equation \eqref{eq:log_constraint} was specifically written for the eigenvalue $\lambda = 0$, but in the general case there is a shift by $-\lambda$ as on the LHS.
We observe that the constraints above imply
\eq{
    \MI\supbrk{\lambda, n} =
    (n+1)! \, (\widetilde{P}_{1,-1}-\lambda)^{n+1} \cdot \MI\supbrk{\lambda,0}
    \quad , \quad
    n = 1, \ldots, N_\lambda,
    \label{log_singular_constraint}
}
for which reason we only need to compute $\MI\supbrk{\lambda,0}$.
Setting $n = N_\lambda$ above, we have
\eq{
    \label{eq:N_lambda_constraints}
    (\widetilde{P}_{1,-1}-\lambda)^{N_\lambda+1} \cdot \MI\supbrk{\lambda,0} = 0 \, ,
}
giving $\fun{Rank}{\widetilde{P}_{1,-1}^{N_\lambda+1}}$ many constraints among the entries of the solution vector $\MI\supbrk{\lambda,0}$.

For each $\lambda$, we calculate $\MI\supbrk{\lambda,0}$ by repeating the procedure from case (i).
The first step is to find an $r'_\lambda$-dimensional basis $\Mi\supbrk{\lambda}$, where
$r'_\lambda = r - \fun{Rank}{(\widetilde{P}_{1,-1}-\lambda)^{N_\lambda+1}}$, so as to fix the matrix $B\supbrk{\lambda}$ in
\eq{
    M\supbrk{\lambda} =
    \lrsbrk{
        \begin{array}{c}
            B\supbrk{\lambda} \\
            \chline
            R\supbrk{\lambda}
        \end{array}
    }.
}
It is important that $B\supbrk{\lambda}$ be chosen such that $M\supbrk{\lambda}$ is invertible.
The $R\supbrk{\lambda}$ block encodes the constraints from the residue matrix in \eqref{eq:N_lambda_constraints}:
\eq{
    R\supbrk{\lambda} = \rowReduce{(\widetilde{P}_{1,-1}-\lambda)^{N_\lambda+1}} \, .
}
Because of \eqref{eq:N_lambda_constraints}, we thus get a block of zeros in
\eq{
    M\supbrk{\lambda} \cdot
    \MI\supbrk{\lambda,0}
    =
    \lrsbrk{
        \begin{array}{c}
            B\supbrk{\lambda} \cdot \MI\supbrk{\lambda,0} \\
            \hline
            R\supbrk{\lambda} \cdot \MI\supbrk{\lambda,0}
        \end{array}
    }
    =
    \lrsbrk{
        \begin{array}{c}
            \Mi\supbrk{\lambda} \\
            \hline
            0 \\[-5pt]
            \vdots \\
            0
        \end{array}
    } \ .
    \label{log_singular_M}
}
This can be inverted to recover $\MI\supbrk{\lambda,0}$ once we have computed $\Mi\supbrk{\lambda}$.
Inserting $M\supbrk{\lambda}$ into \eqref{eq:gauge-transform}, we obtain a collection of $r'_\lambda \times r'_\lambda$-dimensional Pfaffian matrices $Q\supbrk{\lambda}_i(z')$ such that
\eq{
    \pd{i} \Mi \supbrk{\lambda} =
    Q\supbrk{\lambda}_i \cdot \Mi \supbrk{\lambda}
    \quad , \quad
    i = 2, \ldots, m.
    \label{log_singular_Pfaffian_system}
}

The equations \eqref{log_singular_constraint,log_singular_M,log_singular_Pfaffian_system} together determine the unknown vectors $\MI\supbrk{\lambda,n}$ in \eqref{eq:log_singular_I}, apart from fixing the boundary constants stemming from solutions of the Pfaffian system \eqref{log_singular_Pfaffian_system}.
These constants can e.g.~be fixed by comparing with the original integral $\MI(z)$ at special numerical points.

Summarizing all the steps above, we have
\begin{algorithm}
{\rm (Pfaffian-level logarithmic restriction to $z_1=0$)} \rm \ \label{alg:log_rest} \\[2pt]
    \underline{Input}: The $r$-dimensional Pfaffian system $\pd{i} \MI(z) = P_i(z) \cdot \MI(z), \, i = 1, \ldots m$. \\[5pt]
    \underline{Output}: The solution vector $\MI\supbrk{\log}(z)$.
    \begin{algorithmic}[1]

        \vspace{0.2cm}
        \State Gauge transform $P_i \to \widetilde{P}_i := G[P_i]$ to \emph{normal form} via Moser reduction.

        \vspace{0.2cm}
        \State Find the residue matrix $\widetilde{P}_{1,-1}$.

        \vspace{0.2cm}
        \For{each unique $\lambda \in \spec{\widetilde{P}_{1,-1}}$}

            \vspace{0.2cm}
            \State Compute
            $
                R\supbrk{\lambda} = \rowReduce{(\widetilde{P}_{1,-1}-\lambda)^{N_\lambda+1}}
            $
            and set $r'_\lambda = r - \fun{Rank}{R\supbrk{\lambda}}$.

            \vspace{0.2cm}
            \State Choose an $r'_\lambda$-dimensional basis $\Mi\supbrk{\lambda}$. This fixes an $r'_\lambda \times r$-dimensional matrix $B\supbrk{\lambda}$ via $\Mi\supbrk{\lambda} = B\supbrk{\lambda} \cdot \MI\supbrk{\lambda,0}$. One choice is $B\supbrk{\lambda} = \fun{NullSpace}{R\supbrk{\lambda}}$.

            \vspace{0.2cm}
            \State Construct the invertible block matrix
            $
            M\supbrk{\lambda} =
            \lrsbrk{
                \begin{array}{c}
                    B\supbrk{\lambda} \\
                    \chline
                    R\supbrk{\lambda}
                \end{array}
            }
            $.

            \vspace{0.2cm}
            \State For each $i = 2, \ldots, m$, compute
            $
            \bigbrk{
                \pd{i} M\supbrk{\lambda}
                + M\supbrk{\lambda} \cdot P_i\brk{0, z^\prime}
            } \cdot (M\supbrk{\lambda})^{-1}
            $
            and save the upper-left $r'_\lambda \times r'_\lambda$ block matrices $Q\supbrk{\lambda}_i(z')$.

            \vspace{0.2cm}
            \State Solve the Pfaffian system $\pd{i} \Mi\supbrk{\lambda}(z') = Q_i\supbrk{\lambda}(z') \cdot \Mi\supbrk{\lambda}(z')$ for $\Mi\supbrk{\lambda}$.

            \vspace{0.2cm}
            \State Compute
            $
            \MI\supbrk{\lambda,0}(z') = (M\supbrk{\lambda})^{-1} \cdot \sbrk{\Mi\supbrk{\lambda} \, | \, 0 \cdots 0\,}^T
            $

            \vspace{0.2cm}
            \State Compute
            $
            \MI\supbrk{\lambda, n}(z') =
            (n+1)! \, (\widetilde{P}_{1,-1}-\lambda)^{n+1} \cdot \MI\supbrk{\lambda,0}
            $
            for each $n = 1, \ldots, N_\lambda$.
        \EndFor
        \State \Return The sum \eqref{eq:log_singular_I}.
    \end{algorithmic}
\end{algorithm}

This is a novel approach for computing logarithmic corrections to scattering events.
A pedagogical example is given in \secref{sec:1L_bhabha}, where we compute logarithmic corrections to a one-loop box diagram contributing to Bhabha scattering.
In that example we imagine that $z_1$ denotes a small mass, such as that of the electron.
While the goal of the example is to compute $\MI\supbrk{\lambda,\bullet}(z')$, the actual, full, solution $\MI(z)$ would read
\eq{
    \MI(z) \ \ =
    \sum_{\lambda \, \in \, \spec{\widetilde{P}_{1,-1}}}
    \hspace{-0.3cm} z_1^\lambda \hspace{0.2cm}
    \sum_{n=0}^{N_\lambda} \MI\supbrk{\lambda,n}(z_1,z^\prime) \log^n(z_1) \, ,
}
where $\MI\supbrk{\lambda,n}(z_1,z^\prime)$ is the analytic function satisfying
$\MI\supbrk{\lambda,n}(0,z^\prime)=\MI\supbrk{\lambda,n}(z^\prime)$.
Using the recursion relations~\eqref{eqn:define_I_k, eq:MI-Gamma} from the proofs of \thmref{thm:restriction} and
\thmref{thm:log_restriction}, presented in \appref{sec:proofs}, it it possible
to reconstruct $\MI\supbrk{\lambda,n}(z_1,z^\prime)$ from
$\MI\supbrk{\lambda,n}(z^\prime)$ as a power series in $z_1$.  In other words,
one would thereby obtain terms of the form $z_1^a \log^b(z_1)$ for $a,b \in
\ZZ_{>0}$.  Explicit calculations of this nature are left for future work.
\end{paragraph}

\section{Restrictions as \texorpdfstring{$\cD$}{}-modules by the Macaulay matrix method}
\label{sec:NTsec1_MMMR}

In this section, we will discuss rational restrictions from the $\cD$-module
point of view.
The first subsection gives an algorithm for finding bases of rational restrictions.
The second subsection uses these bases to construct rational restrictions via the Macaulay matrix method.

\subsection{Restrictions as \texorpdfstring{$\CD$}{}-modules} \label{sec:NTsec1}

Recall the notion of rational restrictions from \defref{defn:rational_restriction}.
There, $Y$ is the $\ydim$-dimensional complex plane and
$Y'$ is $(\ypdim)$-dimensional complex plane in $Y$ defined by
$z_{1}= \ldots = z_{\ypcodim}=0$.
The following proposition will be used to prove the correctness of
\algref{alg:alg2} later.
A proof is given in \appref{appendix:proof:prop:prop1}.
\begin{proposition} \label{prop:prop1}
Let $\cI$ be a regular holonomic ideal of $\CD_Y$.
We have the isomorphism of left $\CR_{Y'}$-modules
\begin{equation} \label{eq:weyl-closure-and-restriction}
 \CR_{Y^\prime} \otimes_{\CD_{Y^\prime}} \CD_Y/(\cI+ \sum_{j=1}^{\ypcodim} z_j \CD_Y)
\simeq
 \CR_{Y^\prime} \otimes_{\CD_{Y^\prime}} \CD_Y/((\CR_Y \cI)\cap \CD_Y + \sum_{j=1}^{\ypcodim} z_j \CD_Y) \, .
\end{equation}
\end{proposition}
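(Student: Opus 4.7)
The plan is to deduce the isomorphism from an exact sequence together with a vanishing statement. Setting $\cJ := (\CR_Y \cI) \cap \CD_Y$, the Weyl closure of $\cI$ in $\CD_Y$, the inclusion $\cI \subseteq \cJ$ yields a short exact sequence of $\CD_Y$-modules
\[
    0 \longrightarrow \cJ/\cI \longrightarrow \CD_Y/\cI \longrightarrow \CD_Y/\cJ \longrightarrow 0 .
\]
Applying the right-exact restriction $\iota^* = \cO_{Y^\prime}\otimes_{\cO_Y}-$, and then the exact localization $\CR_{Y^\prime}\otimes_{\CD_{Y^\prime}}-$ (exact because $\CR_{Y^\prime}$ is a flat Ore localization of $\CD_{Y^\prime}$ at $\cO_{Y^\prime}\setminus\{0\}$), produces the exact sequence
\[
    \CR_{Y^\prime}\otimes_{\CD_{Y^\prime}}\iota^*(\cJ/\cI) \to \CR_{Y^\prime}\otimes_{\CD_{Y^\prime}}\iota^*(\CD_Y/\cI) \to \CR_{Y^\prime}\otimes_{\CD_{Y^\prime}}\iota^*(\CD_Y/\cJ) \to 0.
\]
The rightmost arrow is the map of the proposition, so it suffices to prove $\CR_{Y^\prime}\otimes_{\CD_{Y^\prime}}\iota^*(\cJ/\cI)=0$, or equivalently that every element of $\iota^*(\cJ/\cI)$ is annihilated by some nonzero $\bar f\in\cO_{Y^\prime}$.

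To construct such annihilators, I would exploit the defining property of the Weyl closure: any representative $\ell\in\cJ$ admits some $f\in\cO_Y\setminus\{0\}$ with $f\ell\in\cI$. Splitting $f=\bar f+f_1$, with $\bar f\in\cO_{Y^\prime}$ the image of $f$ under $\cO_Y\twoheadrightarrow\cO_Y/(z_1,\ldots,z_{\ypcodim})\simeq\cO_{Y^\prime}$ and $f_1=\sum_{j=1}^{\ypcodim}z_j h_j\in(z_1,\ldots,z_{\ypcodim})\cO_Y$, one computes $f_1\ell=\sum_j z_j(h_j\ell)\in\sum z_j\CD_Y$. Consequently the relation $f\ell\in\cI$ descends modulo $\cI+\sum z_j\CD_Y$ to the identity $\bar f\cdot[\ell]=0$ in $\iota^*(\cJ/\cI)$. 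Whenever $\bar f\neq 0$, invertibility of $\bar f$ in $\CR_{Y^\prime}$ forces $[\ell]$ to vanish after tensoring with $\CR_{Y^\prime}$, finishing this case.

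The main obstacle is the remaining case in which every such $f$ satisfies $\bar f=0$, i.e.~the $\cO_Y$-annihilator ideal $\mathfrak{a}_\ell:=\{f\in\cO_Y:f\ell\in\cI\}$ is contained in $(z_1,\ldots,z_{\ypcodim})\cO_Y$; geometrically this means the support of $[\ell]$ in the $\cO_Y$-torsion module $\cJ/\cI$ contains $Y^\prime$. Here the regular holonomicity hypothesis on $\cI$ is essential. Since $\cJ/\cI$ is a regular holonomic $\CD_Y$-module of vanishing generic rank, its support decomposes into finitely many irreducible components; prime avoidance then lets one choose $f\in\cO_Y$ vanishing on every component of $\mathrm{Supp}([\ell])$ that does not itself contain $Y^\prime$ while ensuring $\bar f\neq 0$, so that $f\cdot[\ell]$ becomes supported inside a union of subvarieties each containing $Y^\prime$. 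The Kashiwara--Malgrange V-filtration along $Y^\prime$ (or, equivalently, a Gr\"obner-basis reduction in $\CD_Y$ relative to a weight vector separating $\{z_1,\ldots,z_{\ypcodim}\}$ from $\{z_{\ypcodim+1},\ldots,z_\ydim\}$) then implies that the $(z_1,\ldots,z_{\ypcodim})$-torsion part of a regular holonomic module maps to zero under $\iota^*$, so $f\cdot[\ell]=0$ in $\iota^*(\cJ/\cI)$. This produces the desired $\cO_{Y^\prime}$-annihilator $\bar f$ of $[\ell]$, shows that $\iota^*(\cJ/\cI)$ is $\cO_{Y^\prime}$-torsion, and completes the proof.
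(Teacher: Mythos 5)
You reduce the proposition, via the exact sequence $0\to\cJ/\cI\to\CD_Y/\cI\to\CD_Y/\cJ\to 0$ with $\cJ:=(\CR_Y\cI)\cap\CD_Y$, right-exactness of $\iota^*$, and flatness of $\CR_{Y'}$ over $\CD_{Y'}$, to the single vanishing statement $\CR_{Y'}\otimes_{\CD_{Y'}}\iota^*(\cJ/\cI)=0$. This framing is genuinely different from the paper's: there, one shows directly that the natural map in \eqref{eq:weyl-closure-and-restriction} is a surjection between $\cK_{Y'}$-vector spaces of equal dimension, the equality of dimensions being obtained by comparing local holomorphic solution spaces of $\CD_Y/\cI$ and $\CD_Y/\cJ$ on small balls (which agree outside the singular locus and hence everywhere, by the identity theorem) and then applying \thmref{th:rest-holonomic}~(2). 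Your first case is correct as written.

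The second case, however, has a genuine gap. After prime avoidance, $f\cdot[\ell]$ has $\cO_Y$-support contained in the union of those components of $\mathrm{Supp}([\ell])$ that contain $Y'$, and this union can strictly contain $Y'$. Kashiwara's equivalence --- which is what underlies the assertion that the restriction of the $(z_1,\dots,z_{\ypcodim})$-torsion part vanishes --- applies only to modules whose support is \emph{contained in} $Y'$, not to modules whose support merely \emph{contains} $Y'$; an element supported on a strictly larger variety is not $(z_1,\dots,z_{\ypcodim})$-power torsion, so the conclusion $f\cdot[\ell]=0$ in $\iota^*(\cJ/\cI)$ does not follow, and the V-filtration appeal as stated does not close this. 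The target vanishing is true, but seems to require the same analytic input the paper uses: since $\cJ/\cI$ is $\cO_Y$-torsion, one has $\Hom_{\CD_Y}(\cJ/\cI,\OO^{an}(U))=0$ for every connected open $U$ (for any $\cD_Y$-morphism $\varphi$ and any $[\ell]$ with $\cO_Y$-annihilator $f\neq 0$, $f\cdot\varphi([\ell])=\varphi(f[\ell])=0$ forces $\varphi([\ell])\equiv 0$); as $\cJ/\cI$ is a submodule of the regular holonomic $\CD_Y/\cI$, it is regular holonomic, and \thmref{th:rest-holonomic}~(2) then gives that $\iota^*(\cJ/\cI)$ has holonomic rank $0$. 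This proves the vanishing and subsumes both of your cases without a case split, at the cost of no longer being a purely algebraic argument.
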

\noindent The ideal $(\CR_Y\cI) \cap \CD_Y$ is known as the Weyl closure of $\cI$ \cite[Definition 5.1]{Tsai2001algorithms}.

Let $\cI$ be a left holonomic ideal in $\CD_Y$.
The restriction of the left holonomic $\CD_Y$-module
$\CM=\CD_Y/\cI$ to $Y^\prime$ is defined by $\iota^* \CM=\CD_Y/(\cI + z_1 \CD_Y + \ldots + z_{m'} \CD_Y)$.
We denote the restriction $\iota^* \CM$ by $\CN_{Y^\prime}$.
When $m=m'$, we denote the restriction by $\CN_0$.

Let $w \in \ZZ^{\ydim}$ be a vector and let $w_i$ denote its $i$-th component.
Define the $(-w,w)$-\emph{order} for $c z^\alpha \pd{}^\beta$, $c \in \CC$, by
\begin{equation} \label{eq:def-ord}
 \ord_{(-w,w)} (c z^\alpha \pd{}^\beta) =  -w \cdot \alpha + w \cdot \beta \, ,
\end{equation}
where $z^\alpha = \prod_{i=1}^{\ydim} z_i^{\alpha_i}$,
$\pd{}^\beta = \prod_{i=1}^{\ydim} \pd{i}^{\beta_i}$
and
$\cdot$ is the standard inner product.
The order is defined for $\ell \in \CD_Y$
as the maximum of all monomial $(-w,w)$-orders appearing in $\ell$.
For $\ell, \ell' \in \CD_Y$, we have
$ \ord_{(-w,w)} (\ell \ell') = \ord_{(-w,w)} (\ell) + \ord_{(-w,w)}(\ell')$.
Putting $\theta_i=z_i \pd{i}$, we note that $\ord_{(-w,w)}(\theta_i)=0$.
Since $w$ defines the order among monomials,
it is called the \emph{weight vector}.

Now, let us briefly review an algorithm for
constructing the restriction $\CN_{Y'}$ of $\CM$
by following the exposition \cite[Section 5.2]{SST}.
Let $w \in \ZZ^{\ydim}$ be a weight vector such that
\begin{equation} \label{eq:weight-cond}
w_{\ypcodim+1}=\ldots=w_{\ydim}=0
\quad , \quad
w_i > 0
\quad , \quad
1 \leq i \leq \ypcodim.
\end{equation}
Let $b(u)$ be the $b$-\emph{function} (also called the indicial polynomial)
of the left ideal $\cI$ in $\CD_Y$
with respect to $w$.
More precisely,
$b(\sum_{i=1}^{\ypcodim} w_i z_i \pd{i})$
is a monic generator of the principal ideal
\begin{equation} \label{eq:def-bfunc}
 \CC\left[\sum_{i=1}^{\ypcodim} w_i z_i \pd{i}\right] \cap {\rm in}_{(-w,w)} \cI \, ,
\end{equation}
where ${\rm in}_{(-w,w)}(f)$ is the sum of the highest
order terms in $f \in \CD_Y$ such that
the weight for $z_i$  is $-w_i$ and
the weight for $\pd{i}$ is $w_i$ \cite[Section 1.1]{SST}%
\footnote{
We note that several computer algebra systems have commands for computing $b$-functions -
see in particular \cite[p. 194]{SST} for a definition of the $b$-function which is computer algebra friendly and \cite{dojo} for details on computer algebra systems which are useful in this context.
\asir{Example: {\tt import("nk\_restriction.rr"); nk\_restriction.generic\_bfct\_and\_gr([x*dx+y*dy-1, x*dx-y*dy-2],[x,y],[dx,dy],[1,0]); }
gives the $b$-function and GB for the ideal {\tt x*dx+y*dy-1, x*dx-y*dy-2} along $x=0$. }}.

Fix a weight vector $w$ satisfying (\ref{eq:weight-cond}).
Let $\cF_k$ be the set of all elements in $\CD_Y$
such that the $(-w,w)$-order is less than or equal
to $k$.
In other words, we have
\begin{equation} \label{eq:KM-filtration}
    \cF_k
    =
    \lrbrc{
        \sum c_{\alpha \beta} z^\alpha \pd{}^\beta
        \Bigm|
        -w \cdot \alpha + w \cdot \beta \leq k
    }\>,
\end{equation}
which is a free left $\CD_{Y'}$-module.

\begin{theorem} {\rm \cite{Oaku-1997}, \cite[Section 5.2]{SST}} \label{th:th2}
\begin{enumerate}
\item If $b(u)=0$ does not have a non-negative integral root, then the restriction
$\CN_{Y'}=0$.
\item Let $u_0$ be the maximal non-negative integral root of $b(u)=0$
and let
$ \{ g_1, \ldots, g_t \}$ be a $(-w,w)$-Gr\"obner basis of $\cI$
(see \cite[Section 1.1]{SST}).
Then the restriction $\CN_{Y'}$ is expressed as
\begin{equation}  \label{eq:rest-by-gb}
\CN_{Y'} \simeq \frac{\cF_{u_0}}
  { \sum_{\{i\in\{1,\ldots,t\} \,|\, \ord_{(-w,w)} g_i \leq u_0\}} \cF_{u_0-\ord_{(-w,w)}(g_i)} g_i + \sum_{j=1}^{\ypcodim} z_j \cF_{u_0+w_j} }
\end{equation}
as a left $\CD_{Y'}$-module.
\end{enumerate}
\end{theorem}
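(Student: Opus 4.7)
The plan is to follow Oaku's filtered module strategy \cite{Oaku-1997} (see also \cite[Section~5.2]{SST}), using the $b$-function to control when the $\cF_\bullet$-filtration stabilizes on $\CN_{Y^\prime}$. First, I would verify that each $\cF_k$ is a left $\CD_{Y^\prime}$-submodule of $\CD_Y$: by \eqref{eq:weight-cond}, $w_i=0$ for $i>\ypcodim$, so the generators $z_i,\pd{i}$ ($i>\ypcodim$) of $\CD_{Y^\prime}$ have $(-w,w)$-order zero and preserve $\cF_k$. Since moreover $\cI$ is a left ideal and $\sum_{j=1}^{\ypcodim}z_j\CD_Y$ is stable under left multiplication by $\CD_{Y^\prime}$ (as such operators commute with the $z_j$, $j\leq\ypcodim$), the image $F_k$ of $\cF_k$ in $\CN_{Y^\prime}$ gives a $\CD_{Y^\prime}$-module filtration with $\CN_{Y^\prime}=\bigcup_k F_k$.

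Second, I would exploit the $b$-function as a weight-reduction tool. By its defining property there exists $\ell\in\cI$ with $\mathrm{in}_{(-w,w)}(\ell)=b(\theta)$, where $\theta := \sum_{i=1}^{\ypcodim}w_iz_i\pd{i}$ has $(-w,w)$-order zero; thus $\ell=b(\theta)+r$ with $\ord_{(-w,w)}(r)<0$. A crucial observation is that $\theta\in\sum_{j=1}^{\ypcodim}z_j\CD_Y$, so $\theta\cdot X\in\sum z_j\CD_Y$ for every $X\in\CD_Y$, and $\theta$ therefore acts as zero on $\CN_{Y^\prime}$. Combined with the commutation $[\theta,Q]=-k\,Q$ for a monomial $Q$ of $(-w,w)$-weight $k$, one derives $Q\,b(\theta)=b(\theta+k)\,Q$. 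Expanding $b(\theta+k)=b(k)+\theta\cdot S(\theta,k)$ with a polynomial $S$ and using that $\theta$ annihilates $\CN_{Y^\prime}$ then yields $[Q\,b(\theta)]=b(k)\,[Q]$ in $\CN_{Y^\prime}$. Since $Q\ell\in\cI$ forces $[Q\ell]=0$, one concludes $b(k)\,[Q]=-[Qr]$. Because $\ord_{(-w,w)}(Qr)\leq k-1$, the right-hand side lies in $F_{k-1}$, so whenever $b(k)\neq 0$ one obtains $[Q]\in F_{k-1}$.

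Both parts of the theorem now follow by descending induction on $k$ (decomposing non-homogeneous $Q$'s into their weight components). For part~(1), if $b$ has no non-negative integer root, then $b(k)\neq 0$ for every $k\geq 0$, forcing $F_k=F_{k-1}$ for all $k\geq 0$; the induction bottoms out at $F_{-1}=0$, because any monomial of strictly negative $(-w,w)$-order has a factor $z_j$ ($j\leq\ypcodim$) on the left and thus lies in $\sum z_j\CD_Y$. Hence $\CN_{Y^\prime}=0$. For part~(2), the same induction halts precisely at $k=u_0$, yielding $F_{u_0}=\CN_{Y^\prime}$ and a surjection $\cF_{u_0}\twoheadrightarrow \CN_{Y^\prime}$.

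To finish with the explicit presentation, I would identify the kernel of the latter surjection, namely $\cF_{u_0}\cap(\cI+\sum_{j=1}^{\ypcodim}z_j\CD_Y)$. The principal technical obstacle, and the hardest step, is to show that this intersection admits the explicit filtered decomposition $\sum_{i:\ord_{(-w,w)}g_i\leq u_0}\cF_{u_0-\ord_{(-w,w)}(g_i)}\,g_i+\sum_{j=1}^{\ypcodim}z_j\cF_{u_0+w_j}$. This amounts to two filtered division identities, one for $\cI$ and one for the left submodule generated by $z_1,\dots,z_{\ypcodim}$, each asserting that a $(-w,w)$-Gr\"obner basis generates the relevant submodule compatibly with the $\cF_\bullet$-filtration \emph{in every degree} (so that an apparent order drop from cancellation can always be absorbed into the representatives). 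This ultimately rests on the fact that Gr\"obner reductions with respect to a weight order never raise the $(-w,w)$-order; I would appeal to the material in \cite[Chapter~1 and Section~5.2]{SST} to conclude.
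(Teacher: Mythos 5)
The paper gives no proof of this theorem; it is stated purely as a citation to Oaku and SST. Your sketch faithfully reproduces the standard argument from those references --- the $b$-function weight-reduction via the commutation $Qb(\theta)=b(\theta+k)Q$ together with $\theta\in\sum_j z_j\CD_Y$ annihilating $\CN_{Y'}$ so that $b(k)[Q]\in F_{k-1}$, the descending stabilization at $F_{u_0}$ (and at $F_{-1}=0$ for part~(1)), and the filtered Gr\"obner-basis identification of the kernel --- and you correctly flag the last step, absorbing order drops from cancellation between the $\cI$-part and the $\sum z_j\CD_Y$-part, as the technically demanding point requiring the detailed apparatus of \cite[Section 5.2]{SST}.
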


\noindent The right hand side of (\ref{eq:rest-by-gb}) can also be expressed as
\begin{equation} \label{eq:rest-by-gb2}
\frac{\sum_{ w \cdot \alpha \leq u_0} \CD_{Y'} \prod_{j=1}^{\ypcodim} \pd{j}^{\alpha_j}}
  { \sum_{\{i\in\{1,\ldots,t\} \,|\, \ord_{(-w,w)} g_i \leq u_0\}} \CD_{Y'}
    \sum_{ w \cdot \beta \leq u_0-\ord_{(-w,w)}(g_i)} :\prod_{j=1}^{\ypcodim}
    \pd{j}^{\beta_j} g_i:{|_{z_1=\ldots=z_{\ypcodim}=0}} } \, .
\end{equation}
The notation $:\ell:{|_{z_1=\ldots=z_{\ypcodim}=0}}$ means that we first order
each monomial in $\ell$ such that $\pd{i}$'s come to the right and $z_j$'s come
to the left (via $[\pd{i}, z_j] = \delta_{ij}$), whereafter we set $z_1 = \ldots=z_{\ypcodim}=0$.
For example, we have
$$
  :\pd{1} z_1:\big|_{z_1=0} = \>\> (z_1 \pd{1} + 1) \big|_{z_1=0} = 1.
$$
Note that $u_0$ may be set equal to a non-negative integer
larger than or equal to the maximal non-negative integral root
of $b(u)=0$.
This procedure will be utilized in the following sections.

\begin{remark} \rm \label{rem:monomial-basis}
Let us regard $\CR_{Y'} \otimes_{\CD_{Y'}} \CN_{Y'}$ as a vector space over
$\CC(z_{\ypcodim+1}, \ldots, z_{\ydim})$.
Considering a Gr\"obner basis in a free $\CR_{Y'}$-module for the denominator
of the expression (\ref{eq:rest-by-gb2}),
we can take a basis of the vector space
consisting only of monomials in $\pd{i}$'s.
\end{remark}

\noindent
\ntcomment{
Risa/Asir codes for examples in this note are at\\
\url{http://www.math.kobe-u.ac.jp/OpenXM/Math/rest-by-MM/2023-01-13-rest-by-mm}.
}

We proceed by constructing an algorithm for finding the Pfaffian system associated to the restriction $\CN_{Y'}$ via Gr\"obner bases for submodules.
This algorithm will lend itself to the Macaulay matrix restriction method presented in \secref{sec:MMMR} by providing a standard basis.

In order to simplify our presentation,
we first illustrate our algorithms for $\ydim=2$ and $\ypcodim=1$.
The generalization to more than $2$ variables is not difficult, and will be discussed later.
In the rest of this section, we set $x=z_1, y=z_2$ and $\CD'=\CD_{\ydim}, \CD=\CD_{\ypdim}$.

We want to restrict a given left ideal $\cI=\CD'\cdot \{f_1, \ldots, f_\mu\}$ in
$\CD'=\CC\langle x, y, \pd{x}, \pd{y} \rangle$
to $x=0$.
The restriction module is defined as
$$ \CD'/(\cI + x \CD'). $$
As stated in \thmref{th:rest-holonomic},
when $\CD'/\cI$ is a holonomic $\CD'$-module,
the restriction module is a holonomic $\CD$-module
with
$\CD=\CC\langle y, \pd{y} \rangle$.

Let $\cJ$ be a left submodule of $\CD^{k+1}$.
The holonomic rank of the left $\CD$-module $\CD^{k+1}/\cJ$ is, by definition, $
 \mathrm{dim}_{\CC(y)} \CR^{k+1}/(\CR \cJ)$,
where $\CR=\CC(y)\langle \pd{y} \rangle$.

Let $\ell \in \CD'$ and take $k$ to be a sufficiently large number.
We regard the operator
$$ \ell = \sum_{j=0}^k c_j(x,y,\pd{y}) \pd{x}^j
$$
as an element of $\CD[x]^{k+1}$ by
\begin{equation} \label{eq:def_vk}
 v_k(\ell) \defas (c_k, c_{k-1}, \ldots, c_1, c_0) \in \CD[x]^{k+1}.
\end{equation}
We regard $\CD[x]^{k+1}$ (resp. $\CD^{k+1}$) as a subset of $\CD[x]^{k+2}$
(resp. $\CD^{k+2}$)
by the inclusion
$$
\CD[x]^{k+1} \ni (c_k, c_{k-1}, \ldots, c_1, c_0) \mapsto
(0,c_k, c_{k-1}, \ldots, c_1, c_0) \in \CD[x]^{k+2}.
$$

Let $u_0$ be the maximal non-negative integral root of the $b$-function
of $\cI$ for the weight $w=(1,0)$.
If there is no such $u_0$, the restriction is $0$.
Let $u_1$ be the maximal $(-w,w)$-order of the elements in
a $(-w,w)$-Gr\"obner basis for $\cI$ (see \cite[Section 1.1]{SST} for details on Gr\"obner bases of Weyl algebras).
\begin{algorithm}{{\rm (Rational restriction to $x=0$)}}  \rm \ \label{alg:alg1} \\
    \underline{Input}: Generators $\{f_1, \ldots, f_\mu\}$ of a holonomic ideal $\cI$ in $\CD'$.
    Holonomic rank $r$ of the restriction of $\CD'/\cI$ to $x=0$.
    A positive integer $\gamma$ such that $\gamma \geq \max(u_0+1,u_1+1)$. \\
    \underline{Output}: Generators of a left submodule $\cJ\cap \CD^\gamma$ of $\CD^{\gamma}$ such that $\CR^{\gamma}/\CR (\cJ \cap \CD^\gamma) \simeq \CR \otimes_{\CD} \CD'/(\cI+x \CD')$, that is the rational restriction.
    \begin{algorithmic}[1]
        \State $w=(1,0)$
        \State $k = \gamma-1$
        \Repeat
        \State  $\cJ = \CD\cdot \Bigbrc{
            v_k \, \bigbrk{\,:\pd{x}^j f_i:{\big|_{x=0}}}
            \Bigm|
            \ord_{(-w,w)} (\pd{x}^j f_i) \leq k , 1 \leq i \leq N,
            \ j \in \NN_0
        }$
        \State $k \defas k+1$.
        \Until{$\mathrm{rank}\bigbrk{\CD^{\gamma}/\cJ \cap \CD^{\gamma}} = r$}
        \State \Return $\cJ \cap \CD^\gamma$
    \end{algorithmic}
\end{algorithm}

\noindent We can obtain generators of $\cJ \cap \CD^\gamma$ by
the POT (position over term) order%
\asir{The asir command to compute a Gr\"obner basis
by the POT order is
{\tt nd\_weyl\_gr(Generators,Variables,0,[1,0])}.
Example: {\tt nd\_weyl\_gr([[dx,dx],[1,dy],[1,dy*dy]],[x,y,dx,dy],0,[1,0])}.
}
in $\CD^{k+1}$
(see \appref{sec:nut} for Gr\"obner bases in a free module and
the POT order).
Once we get a set of generators of $\cJ \cap \CD^\gamma$,
the Pfaffian system of the restriction $\CD'/(\cI+x\CD')$
can be obtained from a Gr\"obner basis
of $\CR (\cJ \cap \CD^\gamma)$.
An improvement of this algorithm is given in \appref{appendix:improve}.

\begin{theorem} \label{th:th3}
\algref{alg:alg1} stops at some number $k$ and gives the correct
answer.
\end{theorem}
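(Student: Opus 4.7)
The proof splits into correctness and termination, leveraging Oaku's restriction theorem (\thmref{th:th2}) together with left Noetherian-ness of $\CD = \CC\langle y, \pd{y}\rangle$.

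For correctness, since $\gamma \geq u_0 + 1$, \thmref{th:th2} implies that the restriction $\CN_0 = \CD'/(\cI + x\CD')$ is generated over $\CD$ by the classes of $\pd{x}^0, \ldots, \pd{x}^{\gamma-1}$, so the map $\phi: \CD^\gamma \twoheadrightarrow \CN_0$ sending the $i$-th basis vector $e_i$ to the class of $\pd{x}^{\gamma - i}$ is surjective. Because $\pd{x}^j f_i \in \cI$ and $\pd{x}^j f_i - :\pd{x}^j f_i:|_{x=0} \in x\CD'$, each generator $v_k(:\pd{x}^j f_i:|_{x=0})$ of $\cJ$ lies in the kernel of the natural extension of $\phi$ to $\CD^{k+1}$. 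Hence $\cJ \cap \CD^\gamma \subseteq \ker \phi$ and we obtain a surjection $\bar\phi: \CD^\gamma/(\cJ \cap \CD^\gamma) \twoheadrightarrow \CN_0$. Tensoring with $\CR$ and using $\dim_{\CC(y)} \CR \otimes_\CD \CN_0 = r$ gives $\mathrm{rank}(\CR^\gamma / \CR(\cJ \cap \CD^\gamma)) \geq r$, with equality forcing $\bar\phi \otimes \mathrm{id}_\CR$ to be an isomorphism. In particular, whenever the Repeat loop exits, the output is correct.

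For termination, note that as $k$ increases the generating set of $\cJ$ enlarges, so $\cJ \cap \CD^\gamma$ forms a weakly increasing chain of submodules in $\CD^\gamma$. Since the one-variable Weyl algebra $\CD$ is left Noetherian, $\CD^\gamma$ is Noetherian as a left $\CD$-module and the chain stabilizes at some finite $k_\infty$. To see that the stable rank equals $r$, I would invoke the presentation \eqref{eq:rest-by-gb2}: pick a $(-w,w)$-Gr\"obner basis $\{g_1,\ldots,g_t\}$ of $\cI$ with orders $\leq u_1$, and express each $g_i$ as a $\CD'$-linear combination of the $f_\ell$. After normal-ordering and setting $x=0$, every Oaku relation $:\pd{x}^\beta g_i:|_{x=0}$ becomes a $\CD$-linear combination of elements $:\pd{x}^m f_\ell:|_{x=0}$ of $(-w,w)$-order at most $\max(u_0, u_1) \leq \gamma - 1$. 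Hence once $k \geq \gamma - 1$ (which holds already in the first iteration), $\CR(\cJ \cap \CD^\gamma)$ contains the $\CR$-span of all Oaku relations, so $\bar\phi \otimes \mathrm{id}_\CR$ is an isomorphism and the termination test succeeds.

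The main obstacle is precisely this rewriting step: controlling the $(-w,w)$-order when translating Gr\"obner basis relations into relations in the original generators, after commuting derivatives to the right and setting $x = 0$. Normal-ordering can a priori produce terms of large classical order, but most of them vanish modulo $x$. A clean bookkeeping uses the Kashiwara--Malgrange $V$-filtration associated with the weight $w$, in which the $(-w,w)$-Gr\"obner basis property of $\{g_i\}$ translates into a compatibility with the filtration and yields the desired bound. The Noetherian argument then serves as a safety net ensuring termination even if the effective bound is not achieved at the first iteration.
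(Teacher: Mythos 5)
Your correctness argument is sound: constructing the surjection $\bar\phi:\CD^\gamma/(\cJ\cap\CD^\gamma)\twoheadrightarrow\CN_0$ (using \thmref{th:th2} to guarantee $[\pd{x}^0],\ldots,[\pd{x}^{\gamma-1}]$ generate the restriction), tensoring with the flat extension $\CR$, and noting rank $\geq r$ with equality forcing an isomorphism is a clean way to see that the output is correct whenever the loop exits. This parallels the paper's reasoning, if stated in more homological language.

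The termination half has a genuine gap. You argue that once the Oaku denominator of \eqref{eq:rest-by-gb2} lands in $\CR(\cJ\cap\CD^\gamma)$ the rank drops to $r$, and you appeal to Noetherianity as a "safety net." But the Oaku relations $:\pd{x}^\beta g_i:|_{x=0}$ all have $\pd{x}$-degree at most $u_0$, hence live in the subspace $\CD^{u_0+1}\subset\CD^\gamma$. When $\gamma>u_0+1$ they therefore say nothing about the components $e_{k'}$ for $u_0<k'\leq\gamma-1$, and $\CR^\gamma/\CR(\text{Oaku denominator})$ has rank $r+(\gamma-u_0-1)$, not $r$. In other words, even if your inclusion held, $\bar\phi\otimes\mathrm{id}_\CR$ would not be an isomorphism. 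What is still needed — and what the paper supplies — is the appearance in $\cJ^{(k)}$ of relations of the form $b(k')e_{k'}+\text{(lower)}$ with $b(k')\neq 0$ for each $u_0<k'\leq\gamma-1$, which eliminate the spurious standard monomials. The paper obtains exactly these via the $b$-function operator $B=b(\theta_x)+x\,c(x,y,\theta_x,\pd{y})\in\cI$ and the key identity \eqref{eq:keyfact}, together with an expression $B=\sum_{i,j}d_{ij}\,\pd{x}^j f_i$; that expression is what produces the effective bound $k_0=\max\bigl(u_0+1,\max_i(m_i+\ord f_i)\bigr)$, which is in general unrelated to $\max(u_0,u_1)$. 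Your claimed bound "$(-w,w)$-order at most $\max(u_0,u_1)$" for the rewritten relations is not justified, because the $\pd{x}$-degrees arising from $g_i=\sum_\ell h_{i\ell}f_\ell$ depend on the uncontrolled coefficients $h_{i\ell}$, not only on the orders of $g_i$ and $f_\ell$; Noetherianity guarantees stabilization of the chain $\cJ^{(k)}\cap\CD^\gamma$ but, absent the elimination of the high standard monomials, does not pin the stable rank to $r$.

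For the record, the gap is fillable without \eqref{eq:rest-by-gb2} at all: any element of $\ker\phi$ corresponds to $[u]$ for some $u=\sum_i h_if_i\in\cI$, and passing to the $x^0$-part of the normal form of each $h_i$ shows $[u]=\sum_{i,m}d_{im}[\pd{x}^m f_i]$ with $d_{im}\in\CD$, so $\ker\phi\subseteq\bigcup_k\cJ^{(k)}\cap\CD^\gamma$; combined with the reverse inclusion from correctness and Noetherianity of $\CD^\gamma$, this gives the stable rank $=r$ directly. That route is genuinely different from (and arguably simpler than) the paper's $b$-function argument, though it produces no effective bound on $k$.
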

\noindent
The proof of this theorem is technical; see \appref{appendix:proof:th:th3}.

Let us illustrate \algref{alg:alg1} with a small input.
\begin{example}\rm  \label{ex:4-dim-sol}
Let $\cI$ be the left ideal generated by
\begin{eqnarray*}
f_1 &=& 2\theta_x \theta_y-1 = 2 x y \pd{x}\pd{y}-1, \\
f_2 &=& 2 \theta_y^2+\theta_x-\theta_y-1
     =  2 (y^2\pd{y}^2+y \pd{y})+x \pd{x}-y \pd{y}-1.
\end{eqnarray*}
The solution space for $\cI$ is spanned by
$\{x y^{1/2}, \,
x^{1/\sqrt{2}} y^{1/\sqrt{2}}, \,
x^{-1/\sqrt{2}} y^{-1/\sqrt{2}}\}$.
We will construct the restriction to $x=0$.
The listed solutions imply that the expected rank of the restriction is $1$,
    standing for the solution $x y^{1/2}$, which is the only solution holomorphic at $x = 0$.
The $b$-function for the weight $w=(1,0)$ is
$(u-1)(u^2-1/2)$, implying that $u_0=1$.
The maximum $(-w,w)$-order of the $(-w,w)$-Gr\"obner basis
of $\cI$ is $0$, hence $u_1=0$.
Thus we put $\gamma=2$.
We start with $k=1$ in \algref{alg:alg1}.
The submodule $\cJ$ is generated by
\begin{eqnarray*}
&&(0, - 1)=v_1(:f_1:|_{x=0}), \\[5pt]
&&(    2  {y}  \pd{y}- 1,0)=v_1(:\pd{x}f_1:|_{x=0})
 =v_1(2y\pd{x}\pd{y}+2xy\pd{x}^2\pd{y}-\pd{x} |_{x=0}),\\[5pt]
&&(0,     2   {y}^{ 2}    \pd{y}^{ 2} +  {y}  \pd{y}- 1)
 = v_1(:f_2:|_{x=0}),\\[5pt]
&&(    2   {y}^{ 2}    \pd{y}^{ 2} +  {y}  \pd{y},0)
 = v_1(:\pd{x}f_2:|_{x=0})
\end{eqnarray*}
The Gr\"obner basis of $\cJ$ with the POT order is
$$
\{ (0, - 1), \, (    2  {y}  \pd{y}- 1,0) \},
$$
whose holonomic rank is $1$.
The set of standard monomials of the Gr\"obner basis in $\cD^2$ is
$S=\{s_1:=(1,0) \in \cD^2 \}$.
Then, we have the $1 \times 1$ Pfaffian matrix $\left(\frac{1}{2y}\right)$
for $S$ as
\begin{equation*}
\pd{y}s_1=(\pd{y},0) \equiv \left(\frac{1}{2y},0\right)=\frac{1}{2y}s_1.
\end{equation*}
    \hfill$\blacksquare$
\end{example}

Let us now generalize to the case of several variables
$z=(z_1, \ldots, z_m)$.
Fix an integer $m$ and a weight vector $w$ satisfying (\ref{eq:weight-cond}).
We put
\begin{equation}
    \label{eq:def_ck}
    c(k) = \# E_{k}\>, \quad
    E_k = \bigbrc{
        \alpha
        \bigm|
        \alpha \cdot w \leq k, \,
        \alpha_1, \ldots,\alpha_{\ypcodim} \in \NN_0, \,
        \alpha_{\ypcodim+1}=\ldots=\alpha_{\ydim}=0
    }\>.
\end{equation}
We regard $\bigbrc{
    \pd{1}^{\alpha_1} \ldots \pd{\ypcodim}^{\alpha_{\ypcodim}} \bigm| \alpha \in E_{k}
}$ as the standard basis of $\CD_{Y'}[z_1,\ldots, z_{\ypcodim}]^{c(k)}$.
We define the map $v_k$ from $\CD_Y$ to $\CD_{Y'}[z_1,\ldots, z_{\ypcodim}]^{c(k)}$
by
\begin{equation}
v_k(\ell) = \sum_{\alpha \in E_k} \ell_\alpha \pd{1}^{\alpha_1} \ldots \pd{\ypcodim}^{\alpha_{\ypcodim}} \, ,
\end{equation}
where we decompose $\CD \ni \ell$ as
$\ell = \sum_{\alpha \in E_k} \ell_\alpha \pd{1}^{\alpha_1} \ldots \pd{\ypcodim}^{\alpha_{\ypcodim}}$,
$\ell_\alpha \in \CD_{Y'}[z_1, \ldots, z_{\ypcodim}]$.
\algref{alg:alg1} can be generalized using $v_k$.
The map $v_k$ takes values in a $c(k)$-dimensional vector space.
We start the procedure from $k \geq \max(u_1,u_0)$, where
$u_0$ is the maximal non-negative integral root of the $b$-function
with respect to $w$, and $u_1$ is the maximum of $\ord_{(-w,w)}$
for the $(-w,w)$-Gr\"obner basis of $\cI$.
Note that $c(k)=k+1$ when $\ydim=2, \, \ypcodim=1$.

The case $\ypcodim = \ydim$ will be used to find a rational restriction
by the Macaulay matrix method, so the following algorithm assumes this equality.
\begin{algorithm}{{\rm (Basis for the restriction to $z_1=\cdots=z_m=0$)}} \rm \ \label{alg:rest_to_pt} \\
    \underline{Input}: Generators $\{f_1, \ldots, f_\mu\}$ of a holonomic ideal $\cI$ in $\CD_Y$.
    The vector space dimension $r$ of holomorphic solutions at $z=(z_1, \ldots, z_m)=0$.
    A positive integer $\gamma$ such that $\gamma \geq \max(u_0,u_1)$. \\
    \underline{Output}: A $\C$-basis of the restriction $\CN_0$ to the point $z=(z_1, \ldots, z_m)=0$.
    \begin{algorithmic}[1]
        \State $w=(1,\ldots,1)$
        \State $k = \gamma$
        \Repeat
        \State  $\cJ =
            \CC\cdot \Bigbrc{
                v_k\,\bigbrk{\,:\pd{}^\alpha f_i:{\big|_{z=0}}}
                \Bigm|
                \ord_{(-w,w)}\lrbrk{\prod_{j=1}^{\ydim} \pd{j}^{\alpha_{j}} f_i} \leq k,
                \ \alpha \in \NN_0^{\ydim}
            } \subseteq \CC^{c(k)}
        $
        \State $k \defas k+1$
        \Until{$\mathrm{dim}\bigbrk{\CC^{c(\gamma)}/\cJ \cap \CC^{c(\gamma)}} = r$}
        \State \Return A vector space basis of $\CC^{c(\gamma)}/\cJ \cap \CC^{c(\gamma)}$.
    \end{algorithmic}
    \end{algorithm}
\noindent Here we identify the vector space $\CC^{c(k)}$
with the vector space spanned by $\pd{}^\alpha, \alpha \in E_k$.
For example, when $m=2$, $k=2$, and $\pd{}^\alpha$ is sorted as
$(\pd{1}^2,\pd{1}\pd{2},\pd{2}^2,\pd{1},\pd{2},1)$,
then $(0,0,0, 0,0,1)$ stands for $1$,
$(0,0,0, 0,1,0)$ for $\pd{2}$,
and
$(1,0,0, 0,0,0)$ for $\pd{1}^2$.
Note that this algorithm only utilizes numerical linear algebra, so
it is enough to compute a basis of the linear space
$\CC^{c(\gamma)}/\cJ \cap \CC^{c(\gamma)}$.
The numerical matrix representing $\cJ$ is large in general, so in our implementations we perform the steps $6$ and $7$ over a finite field instead of $\CC$, which greatly speeds up the computation.
Note also that we do not know $\gamma$ beforehand without computing a Gr\"obner basis, for which reasons we also choose $\gamma$ probabilistically in our implementation.

\begin{example}\rm  \label{ex:y2-x3-1-2}
We denote $z_1, z_2$ by $x, y$ respectively.
The function $(y^2-x^3)^{1/2}$ spans the holomorphic solution space
of the operators
$\{2 y \pd{x}+3 x^2 \pd{y}, \, 2 x \pd{x}+3 y \pd{y}-3\}$
around $(x,y)=(1,2)$.
Let us compute the restriction to this point.
We make the change of variables $x \defas x+1$ and $y \defas y+2$
and apply \algref{alg:rest_to_pt} with $w=(1,1)$ and
$m=2$.
We have $b(u) = u$, so $u_0=0$.
The $(-w,w)$-initial terms of the $(-w,w)$-Gr\"obner basis are
\begin{align}
    \bigbrc{
        \pd{x} + \ldots,
        \pd{y} + \ldots,
        x \pd{y}^2 + \ldots,
        x^2 \pd{x}^3 + \ldots
    }\>.
    \nonumber
\end{align}
The $(-w,w)$-orders all equal $1$.
Hence, we can take $u_1=1$.
We have
$E_0=\{(0,0)\}$ and
$E_1=\{(1,0), \, (0,1), \, (0,0)\}$.
Therefore, we can set $\gamma=1$ and $c(\gamma)=2$.
The input operators are
\eq{
    f_1 &= 2 (y+2) \pd{x}+3 (x+1)^2 \pd{y} \\
    f_2 &= 2 (x+1) \pd{x}+3 (y+2) \pd{y}-3.
}
Then the generators of $\cJ$ for $k=1$ are
\eq{
    v_k(f_1|_{x=y=0})=(4,3,0)
    \quad , \quad
    v_k(f_2|_{x=y=0})=(2,6,-3) \, ,
}
with the vector $v_k$ being indexed by $(\pd{x},\pd{y},1)$.
The row echelon form of
$\lrsbrk{
    \begin{array}{ccc}
     4 & 3 & \mzero \\
     2 & 6 & -3 \\
    \end{array}
}
$
is
$
\lrsbrk{
    \begin{array}{ccc}
        -2 & \mzero & -1 \\
        \mzero  & -3 & 2 \\
    \end{array}
},
$
meaning that the rank of $\CC^3/\cJ \cap \CC^3$ is $1$.
Note that $\gamma=1$ does not work even when we increase $k$ to larger values.
    \hfill$\blacksquare$
\end{example}

\subsection{Macaulay matrix method for restrictions}
\label{sec:MMMR}

In this subsection, we consider the case when $Y^\prime\subset Y$ is an irreducible,
 closed and smooth subvariety.
Building on the work of \cite{Chestnov:2022alh}, the aim of this section is to give a Macaulay matrix method for rational restrictions, as per \defref{defn:rational_restriction},
when $Y^\prime$ is either $z_1= \ldots = z_{\ypcodim}=0$
or $Y^\prime$ is an irreducible hypersurface.
The rational restriction gives a description of
holomorphic solutions on $Y^\prime$.

Let $\MM=\DD_Y/\cI$ be a regular holonomic $\DD_Y$-module.
The space $Y$ can be decomposed into algebraic varieties $Y_\alpha$
such that $Y=\sqcup Y_\alpha$ and
the space of holomorphic solutions of $\MM$ on $Y_\alpha$
is a locally constant sheaf.
In other words, all holomorphic solutions on $Y_\alpha$
are analytic continuations of holomorphic solutions at a point in $Y_\alpha$ along paths in $Y_\alpha$.
In particular, the number of holomorphic solutions at a point $c \in Y_\alpha$ only depends on $Y_\alpha$ and not $c$.
This decomposition is called a \emph{stratification}.
Stratifications were proved to exist in a broader context in the work of M. Kashiwara \cite{kashiwara-1975}.
For instance, a stratification of the system in the \exref{ex:4-dim-sol} is
\eq{
    \Bigbrk{\CC^2 \setminus V(xy)}
    \,\bigsqcup\,
    \Bigbrk{V(x) \setminus (0,0)}
    \,\bigsqcup\,
    \Bigbrk{V(y) \setminus (0,0)}
    \,\bigsqcup\,
    \brk{0,0}.
}
Here, $V(f)$ is the variety defined by $f=0$.
The number of holomorphic solutions on these stratifications are respectively $4, 1, 0$ and $0$.

Consider the restriction $\cN:=\iota^*\CM$ and let $W \subset Y^\prime$ be the maximum-dimensional stratum of $Y'$.
$W$ is known to be a Zariski open set, and it does not intersect with the singular locus of  $\cN$.
We suppose that the origin $0$ belongs to $W$.
\begin{lemma} \label{lemma:basis-lemma}
Let $S=\{ s_i\}$ be a finite set consisting of monomials in $\pd{}$.
Assume that a set consisting of equivalence classes represented by elements of $S$ is a $\CC$-basis for a $\CC$-vector space $\DD_Y/(\cI + \sum_{i=1}^{\ydim} z_i \DD_Y)$.
Then the set of equivalence classes in $\cN|_W$ represented by elements of the form $1\otimes s_i\in \cN|_W$, for some $s_i\in S$, gives a free basis of $\cN|_W$ as an $\OO_{Y^\prime}$-module.
\end{lemma}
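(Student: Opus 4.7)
The plan is to combine coherence of $\cN$ with Nakayama's lemma to promote the given $\CC$-basis at the origin to a free $\OO_W$-basis of the integrable connection $\cN|_W$. The key inputs are: (i) $W$ is the top-dimensional stratum and contains $0$, so $W$ avoids the singular locus of $\cN$; hence by the theory of regular holonomic $\cD$-modules recalled in \secref{ssec:integrable-connection}, $\cN|_W$ is an integrable connection, meaning a locally free $\OO_W$-module of constant rank $r$ equal to the holonomic rank of $\cN$; (ii) by \thmref{th:rest-holonomic}, this rank $r$ coincides with $\dim_\CC \DD_Y/(\cI + \sum_{i=1}^m z_i\DD_Y)$, which by hypothesis equals $\# S$.

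Next, I would interpret the fiber of $\cN|_W$ at the origin. Using the base-change identification
\[
    \OO_{Y',0}/\mathfrak{m}_0 \otimes_{\OO_{Y'}} \cN \;\simeq\; \DD_Y/\bigl(\cI + \textstyle\sum_{i=1}^m z_i \DD_Y\bigr),
\]
the hypothesis says precisely that the classes of $\{s_i\}$ give a $\CC$-basis of this fiber. Thus the $\OO_W$-linear map $\varphi : \OO_W^{\,r} \to \cN|_W$ sending the $i$-th standard basis vector to $1 \otimes s_i$ becomes an isomorphism after tensoring with the residue field at $0$. Since both source and target are coherent, Nakayama's lemma applied to the local ring $\OO_{Y',0}$ upgrades this to an isomorphism of stalks at the origin.

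From there, the global statement on $W$ follows from a rank argument: $\varphi$ is a map between two locally free $\OO_W$-modules of the same rank $r$, and the locus where $\det\varphi$ vanishes is a proper closed subset of $W$ (since $\det\varphi$ is nonzero at $0$). Shrinking $W$ by this closed set if necessary — equivalently, replacing $W$ by the Zariski open set on which $\varphi$ is an isomorphism — yields the sought free basis. Because $\cN|_W$ was already obtained by restricting to a top-dimensional stratum, this final shrinking still leaves a dense open subset on which all subsequent algorithms (notably \algref{alg:alg3}) operate.

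The main obstacle is step three, i.e.\ promoting the stalkwise isomorphism at $0$ to one that is valid across all of $W$. The straightforward route via $\det\varphi$ only gives a Zariski open neighborhood of $0$; to obtain a statement on the full stratum $W$, one would need to exploit that $\cN|_W$ carries a flat connection and that the sections $1\otimes s_i$ are obtained from a $\DD$-module basis, so that $\det\varphi$ satisfies a first-order ODE whose coefficients are regular on $W$ — this forces $\det\varphi$ to be a unit on $W$ once it is nonzero at one point. Aside from this subtlety, the argument is essentially a Nakayama-plus-rank calculation and should be short.
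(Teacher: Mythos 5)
Your proof is correct, and it takes a genuinely different route from the one in the paper. The paper's argument is analytic: it forms the matrix $H=(s_i\bullet h_j)$ out of a fundamental system of holomorphic solutions $h_1,\dots,h_r$ on the top-dimensional stratum, invokes perversity of the solution sheaf to get density of that stratum, cites \cite[Proposition 5.2.14]{SST} to get $\det H\not\equiv 0$ from the hypothesis that $S$ descends to a basis of $\DD_Y/(\cI+\sum_i z_i\DD_Y)$, and then writes the connection matrices explicitly as $H^{-1}\,\pd{Y^\prime}\bullet H$ on the locus $\det H\neq 0$. Your argument is algebraic: it identifies the fiber of $\cN$ at $0$ with the full restriction $\DD_Y/(\cI+\sum_i z_i\DD_Y)$ via base change, then applies Nakayama to the local ring $\OO_{Y',0}$ and a rank count between two free modules of the same rank to upgrade the fiberwise basis to a stalkwise isomorphism, and finally uses $\det\varphi$ to spread out to a Zariski open set. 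The two routes are two sides of the same coin (your $\det\varphi$ and the paper's $\det H$ cut out essentially the same locus), but yours has the advantage of avoiding solution sheaves entirely and staying inside commutative algebra, while the paper's is more constructive in that it exhibits the Pfaffian matrices directly as $H^{-1}\,\pd{}\bullet H$ and thereby makes their regularity on $W$ manifest.

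Two minor points worth flagging. First, both proofs silently shrink $W$ to the non-vanishing locus of the relevant determinant, so the ``obstacle'' you identify in your final paragraph is not a gap relative to the paper --- the paper does exactly the same thing (it redefines $W$ mid-proof), and for the downstream algorithms a dense Zariski open set is all that is needed. Second, your proposed fix via a first-order ODE for $\det\varphi$ does not go through as stated: the sections $1\otimes s_i$ are not flat, and the connection matrix $P$ in $\nabla(1\otimes s_i)=\sum_j P_{ji}(1\otimes s_j)$ has singularities precisely where $\{1\otimes s_i\}$ fails to be a basis, so regularity of the coefficients of the putative ODE is exactly what you are trying to prove --- the argument would be circular. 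Dropping that paragraph, or replacing it by the honest observation that the lemma is implicitly about a dense open subset of the stratum, would leave a clean and correct proof.
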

\noindent It follows from this lemma that $S$ gives a basis for the rational
restriction of $\MM$ to $Y^\prime$.
The lemma is well-known folklore in the theory of $\DD$-modules
and we give a constructive and elementary proof in the
appendices \multiref{appendix:basis-lemma, appendix:strata-and-comprehensive-gb}.

Let $\{f_1, \ldots, f_\mu\}$ be a set of generators for a regular holonomic ideal
$\cI$ in $\cD_Y$.
We make a parallel change of variables
$(z_1, \ldots, z_{\ydim})=(z'_1, \ldots, z'_{\ydim})+(c_1, \ldots, c_{\ydim})$
by choosing a numerical vector $(c_1, \ldots, c_{\ydim})$ in $W \subset Y^\prime$
such that we can apply \lemref{lemma:basis-lemma}.
We dispose of the old coordinates $z_i$ and denote the new coordinates $z'_i$ by $z_i$.
A set of monomials $S$ as in \lemref{lemma:basis-lemma} can be obtained by \algref{alg:rest_to_pt}.
Since we do not know the stratum $W$, the chosen variable shift is probabilistic.
If we unluckily choose a point outside of $W$, the method fails.
However, the relative measure of $Y^\prime \setminus W$ is $0$, so the algorithm succeeds with probability $1$.
The following theorem easily follows from \lemref{lemma:basis-lemma}.

\begin{theorem} \label{th:th5}
Let $S$ be as in \lemref{lemma:basis-lemma}, and regard it as a column vector.
There exists an $r \times \mu$ matrix $Q_i$
with entries in $\CR_{Y^\prime}[\pd{1},\ldots,\pd{\ypcodim}]$
and an $r \times r$ matrix $P_i$
with entries in $\CC(z'):=\CC(z_{\ypcodim+1}, \ldots, z_{\ydim})$
such that
\begin{equation} \label{eq:pfaffian-of-restriction}
\pd{i} S = P_i \cdot S \ + :Q_i \cdot [f_1, \ldots, f_\mu]\tr:\big|_{z_1=\ldots=z_{\ypcodim}=0}
\end{equation}
holds in $\CR_{Y^\prime}[\pd{1},\ldots,\pd{\ypcodim}]$.
\end{theorem}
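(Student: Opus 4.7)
The plan is to read off both matrices $P_i$ and $Q_i$ directly from \lemref{lemma:basis-lemma}: first promote $S$ to a $\cK_{Y'}$-basis of the rational restriction, use this basis to define $P_i$, and then lift the resulting congruence $\pd{i}S\equiv P_i\cdot S$ to an honest identity in $\CR_{Y'}[\pd{1},\ldots,\pd{\ypcodim}]$ modulo normal-ordered images of $f_1,\ldots,f_\mu$.

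Concretely, \lemref{lemma:basis-lemma} furnishes a free $\OO_{Y'}$-basis $\{1\otimes s\mid s\in S\}$ of $\cN|_W$. Since $W$ is Zariski open in $Y'$ and $\CR_{Y'}$ is flat over $\CD_{Y'}$, extending scalars from $\OO_{Y'}$ to $\cK_{Y'}=\CC(z_{\ypcodim+1},\ldots,z_{\ydim})=\CC(z')$ promotes this to a $\cK_{Y'}$-basis of the rational restriction $\CR_{Y'}\otimes_{\CD_{Y'}}\cN$, which therefore has dimension $r=\#S$. For each $i\in\{1,\ldots,\ydim\}$ and each $s\in S$, the element $\pd{i}s$ lives in this finite-dimensional vector space and admits a unique expansion in $\{1\otimes s'\}_{s'\in S}$; assembling these expansion coefficients column by column defines the scalar matrix $P_i\in\CC(z')^{r\times r}$.

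Next, I would transfer the identity from the quotient into the ambient algebra $\CR_{Y'}[\pd{1},\ldots,\pd{\ypcodim}]$. The commutation relations $[\pd{j},z_k]=\delta_{jk}$, together with the vanishing of $z_1,\ldots,z_{\ypcodim}$, mean that the normal-ordering map $:\cdot:\big|_{z_1=\cdots=z_{\ypcodim}=0}$ realizes a $\cK_{Y'}$-linear isomorphism $\CR_{Y'}[\pd{1},\ldots,\pd{\ypcodim}]\simeq \CR_{Y'}\otimes_{\CD_{Y'}}\bigl(\CD_Y/\sum_{j=1}^{\ypcodim}z_j\CD_Y\bigr)$. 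Under this identification, $\CR_{Y'}\otimes_{\CD_{Y'}}\cN$ becomes the quotient of $\CR_{Y'}[\pd{1},\ldots,\pd{\ypcodim}]$ by the left submodule generated by $:f_k:\big|_{z_1=\cdots=z_{\ypcodim}=0}$ for $k=1,\ldots,\mu$; here \propref{prop:prop1} is invoked to replace the a priori relevant Weyl closure $(\CR_Y\cI)\cap\CD_Y$ by the original ideal $\cI$ after the $\CR_Y$-extension. Since $\pd{i}S-P_i\cdot S$ vanishes in this quotient, it must equal a left combination of those normal-form generators, and reading off the coefficients produces the required $r\times\mu$ matrix $Q_i$ with entries in $\CR_{Y'}[\pd{1},\ldots,\pd{\ypcodim}]$.

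The hardest part is the middle step, namely rigorously matching the abstract $\CD$-module quotient with the concrete algebra $\CR_{Y'}[\pd{1},\ldots,\pd{\ypcodim}]$ through normal ordering, and invoking \propref{prop:prop1} precisely so that the ideal generators can be taken to be the original $f_k$'s rather than elements of the Weyl closure. Once that identification is in place, the remainder reduces to finite-dimensional linear algebra in the $\cK_{Y'}$-basis $S$, since \lemref{lemma:basis-lemma} already encodes the essential basis structure.
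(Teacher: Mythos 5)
Your overall strategy---lift $S$ from \lemref{lemma:basis-lemma} to a $\CC(z')$-basis of the rational restriction, define $P_i$ from the basis expansion there, and recover $Q_i$ by expressing the residue $\pd{i}S - P_i\cdot S$ in the image of the ideal---is consistent with what the paper intends when it declares the theorem to ``easily follow'' from the lemma. However, the step where you identify the quotient is wrong as stated. You claim $\CR_{Y'}\otimes_{\CD_{Y'}}\cN$ is the quotient of $\CR_{Y'}[\pd{1},\ldots,\pd{\ypcodim}]$ by ``the left submodule generated by $:f_k:\big|_{z_1=\cdots=z_{\ypcodim}=0}$''. But normal-ordering followed by restriction does \emph{not} commute with left multiplication by $\pd{1},\ldots,\pd{\ypcodim}$: for example, with $f = z_1\pd{1}$ one has $:f:\big|_{z_1=0}=0$ while $:\pd{1}f:\big|_{z_1=0}=\pd{1}\neq 0$, so $:\pd{1}f:|_{z=0}$ is not any multiple of $:f:|_{z=0}$. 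The kernel of the quotient map is the image of $\CR_Y\cI$ under $\ell\mapsto :\ell:|_{z=0}$, which as a left $\CR_{Y'}$-module is spanned by $\{:\pd{}^\alpha f_k:|_{z=0}\}_{\alpha,k}$, not by $\{:f_k:|_{z=0}\}_k$; the theorem's statement sidesteps this precisely by applying $:\cdot:|_{z=0}$ \emph{after} forming $Q_i\cdot[f_1,\ldots,f_\mu]\tr$, which is the point your characterization obscures.

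There is also a genuine missing step in ``reading off the coefficients.'' A priori, writing $\pd{i}S - P_i\cdot S$ as an element of the image of $\CR_Y\cI$ gives $\ell_k\in\CR_Y$, which may involve $z_1,\ldots,z_{\ypcodim}$, not $\ell_k\in\CR_{Y'}[\pd{1},\ldots,\pd{\ypcodim}]$ as the theorem requires. What is missing is a short truncation argument: write $\ell_k=\sum_{\alpha,\beta}g_{\alpha\beta,k}\,z^\beta\pd{}^\alpha$ in normal form with $g_{\alpha\beta,k}\in\CR_{Y'}$, observe that $:z^\beta\pd{}^\alpha f_k:|_{z=0}=z^\beta\cdot:\pd{}^\alpha f_k:|_{z=0}=0$ whenever $\beta\neq 0$, and therefore replace $\ell_k$ by $\tilde{\ell}_k:=\sum_\alpha g_{\alpha 0,k}\pd{}^\alpha\in\CR_{Y'}[\pd{1},\ldots,\pd{\ypcodim}]$ without changing $:\ell_k f_k:|_{z=0}$. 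Finally, the appeal to \propref{prop:prop1} seems superfluous here: by \defref{defn:rational_restriction} the rational restriction is the quotient by the image of $\cI$ itself, so $\pd{i}S-P_i\cdot S$ lies in the image of $\cI$ directly, with no Weyl-closure detour needed. With the two gaps above repaired, the argument goes through.
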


\noindent The expression \eqref{eq:pfaffian-of-restriction} leads to the following Macaulay matrix method for
finding a Pfaffian system associated to the restriction
to $z_1 = \ldots = z_{\ypcodim} = 0$.

\begin{algorithm}{{\rm (Rationally restricted Pfaffian system via the Macaulay matrix)}}
    \label{alg:alg3} \rm \\
    \underline{Input}: The basis $S$. Generators $\{f_1, \ldots, f_\mu\}$ of a regular holonomic
    ideal $\cI$ in $\CD_m$.
    \\
    \underline{Output}: Pfaffian matrices of the restriction
$\CR_{Y^\prime} \otimes_{\CD_{Y^\prime}} \CD_Y/(\cI+z_1 \CD_Y + \ldots + z_{\ypcodim} \CD_Y)$.
\begin{algorithmic}[1]
    \State $\RStd$=$S$.
    \State Call {\tt find\_macaulay\_and\_pfaffian}($\{f_1, \ldots, f_\mu\}$, $\RStd$, $\ypcodim$).
    \State \Return The Pfaffian matrices $P_i$,  $i=m'+1, \ldots, m$.
\end{algorithmic}
Here
{\tt find\_macaulay\_and\_pfaffian}($\{f_1, \ldots, f_n\}$, $\RStd$, $\ypcodim$)
is Algorithm 1 of \cite{Chestnov:2022alh}
with the following modification in steps 2 and 6:
\begin{equation}  \label{eq:restriction-of-MMM}
    M_D := \> \lrsbrk{
        \begin{array}{c|c}
            M_{\Ext} &
            M_{\RStd}
        \end{array}
    } \ \Big|_{z_1= \ldots = z_{\ypcodim}=0}
    \>.
\end{equation}
\end{algorithm}

\noindent The Macaulay matrix $M_D$ is constructed from the coefficients
of certain normally ordered differential operators.
These operators come from acting on the generators of $\cI$ with a set of monomials
${\Mons} := \{ \prod_{d_1+\cdots+d_{\ydim} \leq D, d_i \in \ZZ_0 }
\pd{1}^{d_1} \dots \pd{\ydim}^{d_{\ydim}} \}\>$,
where the integer $D$ is called the degree of the Macaulay matrix.
The columns of $M_D$ are thus indexed by ${\Mons}$.
Setting ${\Ext} := {\Mons} \setminus {\RStd}$, we obtain matrix blocks
$M_{\RStd}$ and $M_{\Ext}$ that are indexed by the monomials in ${\RStd}$ and
${\Ext}$ respectively.
Note that ${\RStd}$ can be a subset of
a standard basis ${\Std}$ of $\cR_m/\cR_m \cI\>$.

With these modifications in comparison to \cite{Chestnov:2022alh}, let us briefly describe the Macaulay matrix method for finding Pfaffian systems.
First, define the matrices $C_{\Ext}$ and $C_{\RStd}$, with entries consisting of $1$'s and $0$'s, by the expression
\eq{
    \pd{i} \, {\RStd} =:
    C_{\Ext} \cdot {\Ext} + C_{\RStd} \cdot {\RStd} \, .
}
Here ${\RStd}$ and ${\Ext}$ are regarded as column vectors.
The Macaulay matrix method then instructs us to solve the linear equation
\begin{equation} \label{eq:MM_equation_for_restriction}
  C_{\Ext} - C \cdot M_{\Ext}|_{z_1=\ldots=z_{\ypcodim}=0}=0
\end{equation}
for an unknown matrix $C$.
The Pfaffian matrix in direction $\pd{i}$ is then given by
\eq{
    P_i(z') = C_{\RStd}-C \cdot M_{\RStd}|_{z_1=\ldots=z_{\ypcodim}=0} \, .
}
See \cite[Section 4]{Chestnov:2022alh} for additional details.

\begin{remark} \rm \label{rem:choice_of_Std}
Suppose that we know the holonomic rank $r$ of the restriction.
One strategy to
find $S$ is to use the probabilistic method of \algref{alg:rest_to_pt},
as explained after \lemref{lemma:basis-lemma}.
A second strategy is to try all possible standard monomials until
the Macaulay matrix method succeeds.
\end{remark}

\begin{example}\rm
We consider a left ideal generated by
\eq{
    &\pd{x} (\theta_x + c_{1}-1) - (\theta_x+\theta_y+a) (\theta_x+b_{1}) \\
    &\pd{y} (\theta_y + c_{2}-1) - (\theta_x+\theta_y+a) (\theta_y+b_{2}) \, ,
}
where $\theta_x = x \pd{x}$ and $\theta_y = y \pd{y}$.
This system annihilates the Appell $F_2$ function (see \cite{encyclopedia} for a definition).
Let us try to find the restriction to $x=0$ by applying
\algref{alg:rest_to_pt} with $\gamma=2, k=4$ and \algref{alg:alg3} with $\gamma=1$ and $r=2$.
We have implemented the algorithms in \soft{Risa/Asir} in the package {\tt mt\_mm.rr} \cite{url-asir}.
The output $\RStd$ (a standard basis for the rational restriction) is
$\sbrk{1, \pd{y}}$ and
the Pfaffian matrix {\tt P2} is constructed by the Macaulay matrix method as follows.
\begin{Verbatim}[fontsize=\footnotesize]
import("mt_mm.rr")$
Ideal = [(-x^2+x)*dx^2+(-y*x)*dx*dy+((-a-b1-1)*x+c1)*dx-b1*y*dy-b1*a,
         (-y^2+y)*dy^2+(-x*y)*dy*dx+((-a-b2-1)*y+c2)*dy-b2*x*dx-b2*a]$
Xvars = [x,y]$
//Rule gives a probabilistic determination of RStd (Std for the restriction)
Rule=[[y,y+1/3],[a,1/2],[b1,1/3],[b2,1/5],[c1,1/7],[c2,1/11]]$
Ideal_p = base_replace(Ideal,Rule);
RStd=mt_mm.restriction_to_pt_(Ideal_p,Gamma=2,KK=4,[x,y] | p=10^8);
RStd=reverse(map(dp_ptod,RStd[0],[dx,dy]));
Id = map(dp_ptod,Ideal,poly_dvar(Xvars))$
MData = mt_mm.find_macaulay(Id,RStd,Xvars | restriction_var=[x]);
//For larger problems, use FiniteFlow instead of find_pfaffian
P2 = mt_mm.find_pfaffian(MData,Xvars,2 | use_orig=1);
\end{Verbatim}
\noindent The output {\tt P2} is
\eq{
    P_2=
    \lrsbrk{
        \begin{array}{cc}
            0&  1 \\
            \frac{    -   {b}_{2}   {a}} {   {y}  (  {y}- 1)}&  \frac{   -      (   {a}+ {b}_{2}+ 1)  {y} + {c}_{2}} {   {y}  (  {y}- 1)} \\
        \end{array}
    } \, ,
}
which appears in the Pfaffian system for the restriction as
$ \pd{y} S = P_2 \cdot S$, $S=[1, \pd{y}]\tr$.

Programs for larger problems are posted at \cite{dataAndProgramsOfThisPaper}.
    \hfill$\blacksquare$
\end{example}

\bigbreak

The method presented above can also be applied to restrictions to irreducible hypersurfaces.
Suppose that $Y^\prime$ is the vanishing locus of an irreducible polynomial $L\in \CC[z]=\CC[z_1, \ldots, z_{\ydim}]$ and that $Y^\prime$ is non-singular.
We then have the following theorem, which is proved in \appref{appendix:proof-of-th6}:

\begin{theorem} \label{th:th6}
Let $S$ be a set as in \lemref{lemma:basis-lemma}
and regard it as a column vector.
There exists an $r \times \mu$ matrix $Q_i$
with entries in $\DD_Y$,
an $r \times r$ matrix $P_i$
with entries in $\CC[z]$
and a polynomial $q_i \in \CC[z]$
such that
\begin{equation} \label{eq:pfaffian-of-restriction-to-L}
q_i \pd{i} S = P_i \cdot S
\ + :Q_i \cdot [f_1, \ldots, f_\mu]\tr: \quad \mathrm{mod}\, L
\end{equation}
holds in $\CD_Y$.
Here $\CD_Y \ni \sum c_\alpha(z) \pd{}^\alpha = 0 \ \mathrm{mod}\,L$ means that
each $c_\alpha(z)$ is divisible by $L$.
\end{theorem}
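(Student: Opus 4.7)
The plan is to mirror the structure of the proof of \thmref{th:th5}, adapting each step from the coordinate-subspace setting to the hypersurface setting, with $L$ playing the role formerly played by the ideal $(z_1,\ldots,z_{m'})$. First I would establish the appropriate analog of \lemref{lemma:basis-lemma}: namely, that under the stated hypotheses (with $Y^\prime = V(L)$ smooth and irreducible, and after a probabilistic shift of coordinates placing a chosen point into the maximal stratum $W \subset Y^\prime$), the monomial set $S$ still descends to a free basis of $\iota^*\cM|_W$ over $\cO_{Y^\prime}|_W$. The key input is that on the locally constant stratum $W$, the holonomic rank of $\iota^*\cM$ is constant and equals the fiber dimension at any point of $W$, so a $\CC$-basis at one point lifts to a free basis on $W$. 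Together with \propref{prop:from_holonomic_to_zero_dim_ideal}, this yields that $S$ is a $\cK_{Y^\prime}$-basis of $\cR_{Y^\prime}\otimes_{\cD_{Y^\prime}}\iota^*\cM$.

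Once $S$ is known to be such a basis, for each direction $i$ I would expand $\partial_i S$ in that basis. Concretely, because the basis expansion holds over the fraction field $\cK_{Y^\prime} = \mathrm{Frac}(\CC[z]/L)$, there exist unique matrices $\tilde{P}_i$ with entries in $\cK_{Y^\prime}$ such that $\partial_i S \equiv \tilde P_i \cdot S$ modulo $\cI + L\cD_Y$. Clearing a common denominator of all entries of $\tilde P_i$ produces a polynomial $q_i \in \CC[z]$, coprime to $L$, and a matrix $P_i$ with entries in $\CC[z]$ (representatives modulo $L$) such that
\begin{equation*}
q_i \, \partial_i S \;-\; P_i \cdot S \;\equiv\; 0 \pmod{\cI + L\,\cD_Y}.
\end{equation*}

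Next I would unpack this congruence at the level of operators. By definition of the left ideal $\cI$, there exist $Q_i \in \cD_Y^{r \times \mu}$ and $R_i \in \cD_Y^{r}$ such that
\begin{equation*}
q_i \, \partial_i S - P_i \cdot S \;=\; Q_i \cdot [f_1,\ldots,f_\mu]^T + L \cdot R_i
\end{equation*}
in $\cD_Y$. Applying the normal ordering operation $:\cdot:$ (moving all $\partial_j$'s to the right of $z_j$'s via $[\partial_j,z_k]=\delta_{jk}$) to both sides and then reducing modulo $L$ kills the $L\cdot R_i$ term while preserving the other terms (up to adjustments that are themselves divisible by $L$, which get absorbed into $R_i$), giving exactly the claimed equation \eqref{eq:pfaffian-of-restriction-to-L}. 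The role of \thmref{th:th5} is recovered formally by taking $L = z_1\cdot\ldots\cdot z_{m'}$-type setups where ``mod $L$'' specializes to setting $z_1=\ldots=z_{m'}=0$ and where one can take $q_i=1$.

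The main obstacle I expect is the first step: extending \lemref{lemma:basis-lemma} from a point (or coordinate subspace) to a smooth irreducible hypersurface. The constructive proof referenced in \appref{appendix:basis-lemma} uses a comprehensive Gr\"obner basis argument to move the basis $S$ from a point into a Zariski neighborhood. The hypersurface case should follow by the same strategy applied fiber-by-fiber over $W$, combined with Kashiwara's constructibility result guaranteeing that the solution sheaf is locally constant along the stratum. Making this fiberwise-to-global transition rigorous, and tracking that $q_i$ can be chosen coprime to $L$ (so that division by $q_i$ is legal in $\cK_{Y^\prime}$), is the technical heart of the argument.
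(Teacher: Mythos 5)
Your proposal follows essentially the same route as the paper's proof, and the core argument---expand $\pd{i} S$ in the $\cK_{Y'}$-basis $S$ of $\CC(Y')\otimes_{\CC[z]}\cD_Y/\cI$, clear a common denominator $q_i$ coprime to $L$ (which works because $L$ is irreducible so any nonzero class in $\CC[z]/L$ lifts to a polynomial not divisible by $L$), then unpack the resulting ideal membership $q_i\pd{i}S - P_i\cdot S \in \cI + L\cD_Y$ into $Q_i\cdot[f_1,\dots,f_\mu]\tr + L R_i$, and normal-order, noting $:LR_i: = L:R_i:$ so the last term vanishes mod $L$---is correct. In fact your account of the final extraction of $Q_i$ and the role of normal ordering is \emph{more} explicit than the paper's, which ends abruptly after obtaining $P_i(z)\in\CC(Y')^{r\times r}$.

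However, the ``main obstacle'' you anticipate is not an obstacle at all: \lemref{lemma:basis-lemma} is stated and proved in the paper for an arbitrary irreducible closed smooth subvariety $Y'\subset Y$ (see the opening of the subsection containing it), and its proof in \appref{appendix:basis-lemma} uses perversity of the solution sheaf and a generic Wronskian nonvanishing argument, neither of which is tied to $Y'$ being a coordinate subspace. So there is no analog to be established; you would simply be re-deriving what is already there. A second, smaller slip: \thmref{th:th5} is \emph{not} recovered by taking $L=z_1\cdots z_{m'}$. That $L$ is reducible (contradicting the hypothesis of \thmref{th:th6}), and $V(z_1\cdots z_{m'})$ is the \emph{union} of coordinate hyperplanes, not their intersection $z_1=\dots=z_{m'}=0$. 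The two theorems cover complementary situations: \thmref{th:th5} handles coordinate subspaces of any codimension, \thmref{th:th6} handles irreducible hypersurfaces; only for $m'=1$ do they overlap. Finally, the paper's proof makes a technical remark invoking Kashiwara's equivalence to justify the action of all $\pd{i}$ on the restriction. You side-step this by working directly with $\pd{i}s_j$ as an element of $\cD_Y/\cI$ and then projecting, which is cleaner and sufficient for the claimed operator identity, so the omission is not a gap.
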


\noindent We call $P_i/q_i$ the Pfaffian matrix on the codimension-$1$ stratum $W$.

Let us now give a procedure for obtaining $P_i$ and $q_i$.
Our column vector $S$ will be denoted by $\RStd$.
In this context, ${\Mons}$ consists of the monomials in $\pd{}$
appearing in $:\pd{}^\alpha f_j:$ which construct the Macaulay matrix.
Working modulo $L$, the equation \eqref{eq:MM_equation_for_restriction} is now modified to
\eq{
    C_{\Ext} - C \cdot M_{\Ext} \equiv 0
    \quad \text{mod } L \, ,
    \label{eq:mm_ext_eq}
}
where the unknown matrix $C$ is solved for   over the fraction field of the quotient ring $\cO_{Y^\prime}=\CC[z]/L\CC[z]$.
In other words, we find a $C$ with rational function entries
and an $|{\RStd}| \times |{\Ext}|$-dimensional matrix $H$ with {\it polynomial entries}
such that
\begin{equation}  \label{eq:4.12m}
  q_i (C_{\Ext}-C \cdot M_{\Ext}) = H L
\end{equation}
holds, where $q_i$ is a polynomial which is relatively prime to $L$
that cancels the denominators of the elements of $C$.
Note that the entries of $M_{\Ext}$ are polynomial in $z$.
Then (\ref{eq:4.12m}) can be solved by a syzygy computation in the polynomial ring $\CC[z]$.
For example, when $C_{\Ext}=[1,0]$, $M_{\Ext}=\lrsbrk{\begin{array}{cc} m_{11} & m_{12} \\ m_{21} & m_{22} \\ \end{array}}$,
$H=[H_1,H_2]$ and $C=[c_1/q_i,c_2/q_i]$, we may solve
the syzygy equation
\eq{
  q_i \lrsbrk{\begin{array}{c} 1 \\ 0 \\ \end{array}}
  -c_1 \lrsbrk{\begin{array}{c} m_{11} \\ m_{12} \\ \end{array}}
  -c_2 \lrsbrk{\begin{array}{c} m_{21} \\ m_{22} \\ \end{array}}
  -H_1 \lrsbrk{\begin{array}{c} L \\ 0 \\ \end{array}}
  -H_2 \lrsbrk{\begin{array}{c} 0 \\ L \\ \end{array}}
  = \lrsbrk{\begin{array}{c} 0 \\ 0 \\ \end{array}}
}
for the unknowns $H_1, \, H_2, \, c_2, \, c_2$ and $q_i$.
This can be solved via a Gr\"obner basis computation \cite{adams}.
Finally, the matrix
\eq{
    \label{eq:Pi_qi}
    \frac{\widetilde P_i}{q_i}=C_{\RStd}- {C} \cdot M_{\RStd}
}
gives a Pfaffian matrix on the codimension-$1$ stratum $W$.
Note that the entries of $\widetilde P_i/q_i$ belong to the fraction field
of the ring $\cO_{Y^\prime}$.

In \secref{subsec:ex:F4}, we use this procedure to find the restriction to the hypersurface singularity $(x-y)^2-2 (x+y)+1=0$ of the Appell $F_4$ hypergeometric system (see \cite{encyclopedia} for a definition)
and Horn's hypergeometric function
$H_i$, $i=1, \dots, 7$.
As far as we are aware, the results on $H_i$ were not known before.

\section{Examples}\label{sec:examples}
We have explained general algorithms for restriction of Pfaffian systems in \secref{sec:restriction_of_a_pfaffian_system}, and restrictions for regular holonomic $\CD$-modules in
\secref{sec:NTsec1_MMMR}.
Now we apply these methods to Pfaffian systems associated to Feynman integrals and GKZ-hypergeometric systems.
Data and code for these examples can be found in \cite{dataAndProgramsOfThisPaper}.

When no confusion arises, $0$-entries in matrices and vectors are denoted by $\mzero$ in this section.

\subsection{From GKZ to \soft{N}-box}\label{sec:gkz_nbox}
In this first example, we compute the restriction of a GKZ system at the level
of Pfaffian matrices, i.e.~by the holomorphic restriction \algref{alg:holo_rest}.
\subsubsection{Setup}
Let $\MI_{\nu_1,\ldots,\nu_5}$ denote the $2$-loop massless N-box integral family presented in \exref{ex:nbox}.
We study this family as a restriction of a GKZ system:
\eq{
    A = \lrsbrk{
        \begin{array}{ccccc ccccc}
            1 & 1 & 1 & 1 & 1 & 1 & 1 & 1 & 1 & 1\\
            1 & 1 & 1 & \mzero & \mzero & \mzero & \mzero & \mzero & 1 & \mzero\\
            1 & \mzero & \mzero & 1 & 1 & 1 & \mzero & \mzero & 1 & 1\\
            \mzero & \mzero & \mzero & 1 & \mzero & \mzero & 1 & 1 & \mzero & 1\\
            \mzero & 1 & \mzero & \mzero & 1 & \mzero & 1 & \mzero & 1 & \mzero\\
            \mzero & \mzero & 1 & \mzero & \mzero & 1 & \mzero & 1 & \mzero & 1
        \end{array}
    }
    \quad
    \xrightarrow[i \, \neq \, 10]{z_i = 1}
    \quad
    \includegraphicsbox{figures/nbox}
}

While the Lee-Pomeransky polynomial for the GKZ system has generic monomial coefficients $z_1, \ldots, z_{10}$, to match the proper N-box Feynman integral family we must to take the limits
\eq{
    z_{1}, \ldots, z_{9} \to 1
    \quad , \quad
    z_{10} \to \frac{t}{s} \, .
    \label{nbox_limit}
}
As noted in \exref{ex:nbox}, this restriction drops the rank from $9$ to $3$.
The goal of this example is to obtain the $3 \times 3$ Pfaffian matrix $Q_{10}(z_{10})$ by the holomorphic restriction procedure.

Before employing the restrictions at the level of Pfaffians, let us first use the homogeneity property described in \cite[Appendix A]{Chestnov:2022alh} to rescale 6 ($=5+1$, where 5 is the number of integration variables) variables to 1:
\eq{
    z_1, \, z_2, \, z_3, \, z_4, \, z_5, \, z_9 \to 1 \, .
}
In the notation of \cite[Appendix A]{Chestnov:2022alh}, this corresponds to choosing a \emph{simplex} $\sigma = [1,2,3,4,5,9]$.
This rescaling is the maximal restriction which does not change the rank of the GKZ system.

By the method of \cite{Hibi-Nishiyama-Takayama-2017}, we obtain $9$ standard monomials for the rescaled GKZ system:
\eq{
    {\Std} =
    \lrsbrk{
        \pd{8}^2, \> \pd{6} \pd{10}, \> \pd{7} \pd{10}, \>
        \pd{10}^2, \> \pd{6}, \> \pd{7}, \> \pd{8}, \>
        \pd{10}, \> 1
    }\tr \, .
    \label{eq:nbox_std}
}
The Macaulay matrix method of \cite{Chestnov:2022alh} yields a Pfaffian system
\eq{
    \pd{i} {\Std} = P_i \cdot {\Std}
    \quad , \quad
    i = 6, 7, 8, 10.
    \label{eq:pfaff-nbox}
}
We want to apply the restriction $z_6, z_7, z_8 \to 1$ onto this system.

The $3$-dimensional integral basis $\MI$ for the restricted system is taken to be
\eq{
    \lrsbrk{
        \begin{array}{c}
            \MI_{02202} \\
            \MI_{22020} \\
            \MI_{11111}
        \end{array}
    }
    =
    G_0 \cdot \MI
    \, ,
    \label{eq:masters-nbox}
}
where the mass dimension is factored out via $G_0$, a diagonal matrix containing powers of $s$.
By \cite[Proposition A.1]{Chestnov:2022alh}, we can translate this integral basis into a $\cD_Y$-module basis, finding
\eq{
    \label{eq:nbox_restricted_basis}
    &
    \MI_{02202} \quad \to \quad
    \frac{c_1}{\ep(\ep-1)} \pd{10}^2
    \\ & \nonumber
    \MI_{22020}\quad \to \quad
    \frac{c_2}{\ep(\ep-1)}
    \Bigsbrk{
        \ep (2+5\de) (1+2\ep+5\ep\de) +
        2 z_{10} (1+2\ep+5\ep\de) \pd{10} +
        z_{10}^2 \pd{10}^2
    }
    \\ & \nonumber
    \MI_{11111} \quad \to \quad
    \frac{c_3}{\ep(\ep-1)}
    \Bigsbrk{
        (1+\ep+2\de\ep) \pd{10} +
        z_{10} \pd{10}^2 -
        z_7 \pd{7} \pd{10} +
        z_6 \pd{6} \pd{10}
    } \, ,
}
where we recall that $\ep$ and $\de$ are analytic regulators stemming from the integral representation \eqref{eq:nbox_LP_integral}.
The prefactors
\eq{
    c_1 =
    \tfrac{s^{-4 - \ep - 9 \delta \ep} \Gamma\brk{2 - \ep}}{\Gamma\brk{\delta \ep}^6 \Gamma\brk{-3 \ep - 9 \delta \ep} \Gamma\brk{2 + \delta \ep}^3} \, , \>
    c_2 =
    \tfrac{s^{-4 - \ep - 9 \delta \ep} \Gamma\brk{2 - \ep}}{\Gamma\brk{\delta \ep}^6 \Gamma\brk{-3 \ep - 9 \delta \ep} \Gamma\brk{2 + \delta \ep}^3} \, , \>
    c_3 =
    \tfrac{\ep^2 (1 + 2 \ep) s^{-4 - \ep - 9 \delta \ep} \Gamma\brk{2 - \ep}}{\Gamma\brk{\delta \ep}^4 \Gamma\brk{1 - 3 \ep - 9 \delta \ep} \Gamma\brk{1 + \delta \ep}^5}
}
come from the Lee-Pomeransky representation together with factors of $s$ from $G_0$.

\subsubsection{Normal form}

We choose to do the restriction in the order%
\footnote{We have checked that the final result does not depend on the order of the individual limits.}
$z_6$, $z_7$, $z_8$\ .
For the application of the restriction method, we must be careful in checking that the Pfaffian system is in normal form (recall \secref{subsec:Deligne_extension}).
We observe that the system is in normal form w.r.t.~the variables $z_6$ and $z_7$.
However, the system is not logarithmic w.r.t.~$z_8$:
\eq{
    P_{8}\big|_{z_6, z_7 \to 1} =
    \brk{z_8 - 1}^{-2} \> P_{8, -2}\big|_{z_6, z_7 \to 1}
    + \brk{z_8 - 1}^{-1} \> P_{8, -1}\big|_{z_6, z_7 \to 1}
    + \ldots
}
This 2nd order pole can be cured by Moser reduction, in this case a gauge transformation by
\eq{
    G = \diag{\brk{z_8 - 1}^{-1}, 1, \ldots, 1}
    \quad , \quad
    G[P_i] = G^{-1}(P_i \cdot G - \pd{i} G) \, .
}
We check that the transformed residue matrices $G\sbrk{P_{i}}_{-1}$, for $i = 6,7,8$, all have non-resonant spectra.
Hence, the Pfaffian system is now in normal form w.r.t.~all the variables we seek to restrict.
To simplify the notation in the following, let us
\begin{align}
    \text{replace $P_i \to G_1\sbrk{P_i}$ in this example}
    \nonumber
\end{align}
to denote the gauge transformed Pfaffian system.

\subsubsection{Restriction}

The restriction procedure implores us to build the matrix
\eq{
    M =
    \lrsbrk{
        \begin{array}{c}
            B
            \\
            \chline
            R
        \end{array}
    }
}
coming from \eqref{eq:M-def}.
The block matrix $B$ translates between the unrestricted, $9$-dimensional basis and the restricted, $3$-dimensional basis.
This requires us to expand the basis \eqref{eq:nbox_restricted_basis} in terms of the $\Std$ from \eqref{eq:nbox_std}%
\footnote{It can happen that monomials in $\pd{i}$ appear in the restricted basis which are not part of $\Std$. This situation is treated in \cite[Section 3.2]{Chestnov:2022alh}.},
leading to the $3 \times 9$ matrix
\eq{
    B =
    \frac{1}{\ep(\ep-1)}
    \lrsbrk{
        \begin{array}{ccccc ccccc}
            \mzero & \mzero & \mzero & 1 & \mzero & \mzero & \mzero & \mzero & \mzero
            \\[4pt]
            \mzero & \mzero & \mzero & z_{10}^2 & \mzero & \mzero & \mzero & 2z_{10} (1 + 2 \ep + 5 \delta \ep) & \ep(2 + 5 \delta) (1 + 2 \ep + 5 \delta \ep)
            \\[4pt]
            \mzero & z_6 & -z_7 & z_{10} & \mzero & \mzero & \mzero & 1 + \ep + 2 \delta \ep & \mzero
        \end{array}
    } \, ,
}
with columns labeled by $\Std$, and rows labeled by $I$.
The block matrix $R$ is built from the independent rows of the residue matrices:
\eq{
    R =
    \RowReduce{\lrsbrk{
        \begin{array}{c}
            P_{6, -1}
            \\
            P_{7, -1}
            \\
            P_{8, -1}
        \end{array}
    }}
    \ .
    \label{eq:nbox-R}
}
Here we have stacked the matrices on top of each other, row reduced, and finally deleted zero-rows.

Next we insert $M$ into \eqref{eq:gauge-transform}:
\eq{
    \Bigbrk{
        \pd{10} M +
        M \cdot P_{10, 0}
    }
    \cdot M^{-1}
    =
    \lrsbrk{
        \begin{array}{cc}
            Q_{10} & \star
            \\
            \mZero\subsm{6 \times 3} & \star
        \end{array}
        \label{eqn:MGR}
    } \, ,
}
where $P_{10, 0}$ is defined by the limit $P_{10} \big |_{z_6, z_7, z_8 \to
0}\>$, which is a well-defined because we are in normal form.
Sending the analytic regular $\de \to 0$ in $Q_{10}$, we finally obtain:
\eq{
    Q_{10} = \lrsbrk{
        \begin{array}{ccc}
                -\frac{2 \brk{1 + \ep}}{z_{10}} & \mzero & \mzero \\
                \mzero & \mzero & \mzero \\
                -\frac{z_{10} \ep}{6 \brk{1 + z_{10}}} &
                \frac{\ep}{6 z_{10} \brk{1 + z_{10}}} &
                -\frac{z_{10} + 2 \ep}{z_{10} \brk{1 + z_{10}}}
        \end{array}
    } \, .
    \label{eq:deq-nbox-answer}
}
This is the $3$-dimensional Pfaffian matrix for the restricted system.
This result agrees with an independent calculation performed with the IBP software \soft{LiteRed}\cite{Lee:2012cn,Lee:2013mka}.

\subsection{Logarithmic corrections to the one-loop Bhabha box integral}\label{sec:1L_bhabha}

Let us now turn to non-GKZ example to illustrate \algref{alg:log_rest}.

\subsubsection{Setup}

We would like to study logarithmic
corrections in the small-mass expansion of the following topology:
\eq{
    \nonumber
    \includegraphicsbox{figures/bhabha1}
    \quad
    \xrightarrow{m \to 0}
    \quad
    \includegraphicsbox{figures/box}
}
The corresponding family of Feynman integrals in $d = 4-2\ep$ dimensions is
\eq{
    I_{\nu_1\nu_2\nu_3\nu_4}(m) =
    \int
    \frac{\dd^d k / i \pi^{d/2}}
    {
        \bigsbrk{ k^2 - m^2 }^{\nu_1}
        \bigsbrk{ (k+p_1)^2 }^{\nu_2}
        \bigsbrk{ (k+p_{12})^2 - m^2 }^{\nu_3}
        \bigsbrk{ (k+p_{123})^2 }^{\nu_4}
    } \, ,
}
with $p_K := \sum_{k \in K}p_k$ for $K \subset \{1,2,3,4\}$.
The kinematic variables are given by
\eq{
    p_1^2 = p_2^2 = p_3^3 = p_4^2 = m^2
    \quad , \quad
    s = p_{12}^2
    \quad , \quad
    t = p_{23}^2 \, .
    \label{eq:bhabha_kinematics}
}
This family of integrals contributes to Bhabha scattering at $1$-loop order.

\vspace{5pt}

After IBP reduction, we find $5$ master integrals, rescaled by a matrix $G_0$ to become unitless:
\eq{
    \lrsbrk{
        \begin{array}{c}
            I_{0010}(m) \\
            I_{0101}(m) \\
            I_{1010}(m) \\
            I_{0111}(m) \\
            I_{1111}(m)
        \end{array}
    }
    =:
    G_0
    \cdot \MI
    \quad , \quad
    G_0 =
    (-s)^{-\ep} \, \Diag{(-s), \, 1, \, 1, \, (-s)^{-1}, \, (-s)^{-2}} \, .
    \label{1L_bhabha_Y}
}
The master integrals $I$ obey a Pfaffian system
\eq{
    \label{1L_bhabha_pfaffian_system}
    \pd{i} \, \MI = P_i \cdot \MI
    \quad , \quad
    z_1 = \frac{m^2}{-s}
    \quad , \quad
    z_2 = \frac{-t}{-s}
}
in terms of two $5 \times 5$ Pfaffian matrices $P_i = P_i(z_1,z_2)$.

Our goal is to find an approximation to the master integrals $\MI$ which holds in the limit $m \to 0$, or equivalently
\eq{
    z_1 \to 0 \, .
}
The approximation is found by the restriction procedure laid out in \secref{sec:relation_to_Feynman_integrals}.
In particular, we will find and solve \emph{simpler} Pfaffian systems in comparison to \eqref{1L_bhabha_pfaffian_system}, in the sense that they will be of smaller rank and contain one variable less.

\subsubsection{Normal form}

In the basis $\MI$, we find that $P_1$ and $P_2$ respectively have 2nd and 1st order poles at $z_1=0$.
Moreover, the spectrum of the residue matrix of $P_1$ is resonant.
We therefore need to bring the Pfaffian system to normal form before we can
apply the restriction algorithm.
We find that a single gauge transformation
\eq{
    \MI \, \to \, G_1 \cdot \MI
    \quad , \quad
    G_1[P_i] := G_1^{-1}(P_i \cdot G_1 - \pd{i} G_1)
}
by a diagonal matrix
\eq{
    G_1 = \diag{z_1, 1, 1, 1, 1}
}
does the trick. To simplify the notation, we
\begin{align}
    \text{replace $P_i \to G_1\sbrk{P_i}$ in this example} \, .
\end{align}
Now the following expansions in $z_1$ hold true:
\eq{
    P_1 &= \frac{1}{z_1} \> P_{1, -1}(z_2) + O(z_1^0) \\
    P_2 &= z_1^0 \> P_{2, 0}(z_2) + O(z_1) \, ,
    \label{1L_bhabha_G1}
}
with $P_{1, -1}$ having unique eigenvalues $\{0,-\ep\}$, which is a non-resonant spectrum.

Jordan decomposition of the residue matrix $P_{1, -1}$ provides information on how to subdivide the approximated solution to $\MI$.
More precisely, given
\eq{
    \jordan{P_{1, -1}} =
    \lrsbrk{
        \begin{array}{ccccc}
            0      & \mzero & \mzero & \mzero & \mzero \\
            \mzero & 0      & \mzero & \mzero & \mzero \\
            \mzero & \mzero & 0      & \mzero & \mzero \\
            \mzero & \mzero & \mzero & -\ep   & 1      \\
            \mzero & \mzero & \mzero & \mzero & -\ep
        \end{array}
    } \, ,
    \label{eq:bhabha-1}
}
we observe two blocks: a $3 \times 3$ block associated to the eigenvalue $\lambda_1 = 0$ and a $2 \times 2$ block associated to the eigenvalue $\lambda_2 = -\ep$.
As in \eqref{eq:log_singular_I}, we then expect the approximated solution to take the form
\eq{
   \MI\supbrk{\log}(z_1,z_2)
   \quad \overset{z_1 \to 0}{=} \quad
   z_1^{\lambda_1}
   \MI\supbrk{\lambda_1}(z_2)
   \, + \,
   z_1^{\lambda_2}
   \lrsbrk{\MI\supbrk{\lambda_2,0}(z_2) + \MI\supbrk{\lambda_2,1}(z_2) \log(z_1)} \, .
   \label{bhabha_sum_of_eigenspaces}
}
The vector $\MI\supbrk{\lambda_1}$ will be found from the solution to a rank-$3$ Pfaffian system, and the vectors $\MI\supbrk{\lambda_2,0}, \MI\supbrk{\lambda_2,1}$ from the solution to a rank-$2$ Pfaffian system.
The logarithm in $z_1$ appears due to the $"1"$ in the superdiagonal of the $\lambda_2$-eigenvalue block.

\subsubsection{Restriction w.r.t.~eigenvalue \texorpdfstring{$\lambda_1=0$}{}}
\label{sssec:zero}

Here we describe how to find the $5$-dimensional vector $\MI\supbrk{\lambda_1}$ by solving a $3 \times 3$ subsystem in the variable $z_2$ only.
Why is this subsystem of rank $3$?
Recalling \eqref{eqn:restriction_equation}, this question is answered by the following constraints which hold in the limit $z_1 \to 0$:
\eq{
    \rowReduce{P_{2,-1}}
    \cdot
    \MI\supbrk{\lambda_1}
    &:=
    R\supbrk{\lambda_1}
    \cdot
    \MI\supbrk{\lambda_1}
    \\&=
    \lrsbrk{
        \begin{array}{ccccc}
            1      & \mzero & \mzero & \mzero                 & \mzero \\
            \mzero & 1      & \mzero & \frac{z_2 \ep}{1-2\ep} & \mzero
        \end{array}
    }
    \, \cdot \,
    \MI\supbrk{\lambda_1}
    \label{1L_bhabha_rowreduce}
    \\&=
    0 \, .
}
We may regard $\MI\supbrk{\lambda_1}$ as the piece of the full solution vector
$\MI$  which is holomorphic in the massless limit, wherefore we expect every
entry of $\MI\supbrk{\lambda_1}$ to be proportional to $\MI_i(m=0)$, for some
$i = 1, \ldots, 5$, with the massless limit being taken at the \emph{integrand}
level\footnote{
   This corresponds to the so-called ``hard region'' in the expansion by
   regions method~\cite{Beneke:1997zp}.
}.
The first row of \eqref{1L_bhabha_rowreduce} thus states that the massless tadpole $\MI_{0010}(0)$ vanishes, and the 2nd row yields an IBP relation between the massless $t$-channel bubble $\MI_{0101}(0)$ and the massless triangle $\MI_{0111}(0)$.
These two constraints drop the rank from $5$ to $3$.
Summing up, $\MI\supbrk{\lambda_1}$ is of the form
\eq{
    \MI\supbrk{\lambda_1} =
    G_0 \cdot
    \lrsbrk{
        \begin{array}{c}
            \mzero \\
            I_{0101}(0) \\
            I_{1010}(0) \\
            \frac{2\ep-1}{z_2 \ep} I_{0101}(0) \\
            I_{1111}(0)
        \end{array}
    } .
}

Next we find a 3-dimensional basis for the vector space associated to the eigenvalue $\lambda_1=0$.
One possible basis is
\eq{
    \nullSpace{P_{1,-1}}
    =
    \lrsbrk{
        \begin{array}{ccccc}
            \mzero & \frac{z_2 \ep}{2\ep-1} & \mzero & 1      & \mzero \\
            \mzero & \mzero                 & 1      & \mzero & \mzero \\
            \mzero & \mzero                 & \mzero & \mzero & 1
        \end{array}
    } \, ,
    \label{1L_bhabha_nullspace}
}
where the rows of this matrix, viewed as vectors, span the nullspace of $P_{1,-1}$.
The 1st row suggests a linear combination between $\MI_{0101}(0)$ and $\MI_{0111}(0)$ as a basis element.
We could use this as it is, but by virtue of \eqref{1L_bhabha_rowreduce} it will be simpler to rewrite $\MI_{0111}(0)$ in terms of $\MI_{0101}(0)$ up to a non-zero prefactor, which we are free to omit in our choice of basis.
The 2nd and 3rd rows of \eqref{1L_bhabha_nullspace} respectively instruct us to pick the $s$-channel bubble $I_{1010}(0)$ and the box $I_{1111}(0)$ as the last two basis elements.
Summing up, we have the following basis for the restricted Pfaffian system associated to the eigenvalue $\lambda_1$:
\eq{
    \Mi\supbrk{\lambda_1}
    :=
    (-s)^{\ep}
    \lrsbrk{
        \begin{array}{c}
            I_{0101}(0) \\
            I_{1010}(0) \\
            (-s)^2 I_{1111}(0)
        \end{array}
    } \, ,
}
with prefactors of $s$ stemming from $G_0$ in \eqref{1L_bhabha_Y}.
The $3 \times 5$ matrix relating the bases $\Mi\supbrk{\lambda_1}$ and $\MI\supbrk{\lambda_1}$ is thus
\eq{
    B\supbrk{\lambda_1} =
    \lrsbrk{
        \begin{array}{ccccc}
            \mzero & 1      & \mzero & \mzero & \mzero \\
            \mzero & \mzero & 1      & \mzero & \mzero \\
            \mzero & \mzero & \mzero & \mzero & 1
        \end{array}
    }
    \quad , \quad
    \Mi\supbrk{\lambda_1} = B\supbrk{\lambda_1} \cdot \MI\supbrk{\lambda_1} \, .
    \label{eq:B-lambda1}
}
Using \eqref{eq:M-def}, we proceed to build the invertible matrix
\eq{
    \def\arraystretch{1.2}
    M\supbrk{\lambda_1} =
    \lrsbrk{
        \begin{array}{c}
            B\supbrk{\lambda_1} \\
            \hline
            R\supbrk{\lambda_1}
        \end{array}
    } \, ,
    \label{eq:1L_bhabha_M0}
}
which has the property
\eq{
    \lrsbrk{
        \begin{array}{c}
            \Mi\supbrk{\lambda_1} \\
            \mzero \\
            \mzero
        \end{array}
    }
    =
    M\supbrk{\lambda_1} \cdot \MI\supbrk{\lambda_1} \, .
}
The $3 \times 3$ Pfaffian matrix $Q_2\supbrk{\lambda_1}$ associated to the vector $\Mi\supbrk{\lambda_1}$ is now found via \eqref{eq:gauge-transform}:
\eq{
    \Bigbrk{
        \pd{2} M\supbrk{\lambda_1}
        +
        M\supbrk{\lambda_1} \cdot P_{2,0}
    }
    \cdot  (M\supbrk{\lambda_1})^{-1}
    =
    \lrsbrk{
        \begin{array}{cc}
            Q_2\supbrk{\lambda_1}    & \star \\
            \mZero\subsm{2 \times 3} & \star
        \end{array}
    } \, ,
}
with $P_{2,0}$ defined in \eqref{1L_bhabha_G1}.
Explicitly, the Pfaffian system restricted to $z_1 = 0$ w.r.t.\ the eigenvalue $\lambda_1=0$ is
\eq{
    \pd{2} \Mi\supbrk{\lambda_1} = Q_2\supbrk{\lambda_1} \cdot \Mi\supbrk{\lambda_1}
    \quad , \quad
    Q_2\supbrk{\lambda_1} =
    \lrsbrk{
        \begin{array}{ccc}
            -\frac{\ep}{z_2} & \mzero & \mzero \\
            \mzero           & \mzero & \mzero \\
            \frac{2(1-2\ep)}{z_2^2(z_2+1)} &
            \frac{2(2\ep-1)}{z_2  (z_2+1)} &
            \frac{-z_2-\ep-1}{z_2 (z_2+1)}
        \end{array}
    } \, .
}
This agrees with the system one would find by performing IBP reduction directly on the massless integrals $\Mi\supbrk{\lambda_1}$, as can be checked with \soft{LiteRed}.
It is standard to solve such a system by passing to a canonical basis \cite{Henn:2013pwa} (see \cite{Henn:2014qga} for this specific example).

\subsubsection{Restriction w.r.t.~eigenvalue \texorpdfstring{$\lambda_2=-\ep$}{}}
\label{subsec:eigen_epsilon}
Next we determine the $5$-dimensional vectors $\MI\supbrk{\lambda_2,0}, \MI\supbrk{\lambda_2,1}$ by solving a rank-$2$ Pfaffian system.
The rank is $2$ because of \eqref{eq:N_lambda_constraints}, which imposes the following constraint on the solution vector $\MI\supbrk{\lambda_2,0}$:
\eq{
    \RowReduce{(P_{1,-1} - \lambda_2)^2}
    \cdot \MI\supbrk{\lambda_2,0}
    &:=
    R\supbrk{\lambda_2}
    \cdot \MI\supbrk{\lambda_2,0}
    \\&=
    \lrsbrk{
        \begin{array}{ccccc}
            1      & \mzero & \mzero & \frac{z_2 \ep}{1-\ep} & \frac{z_2 \ep}{2(1-\ep)} \\
            \mzero & 1      & \mzero & \mzero                & \mzero                   \\
            \mzero & \mzero & 1      & \mzero                & \mzero
        \end{array}
    }
    \cdot \MI\supbrk{\lambda_2,0}
    \\&=
    0 \, .
    \label{1L_bhabha_Psq_constraint}
}
Since there are $3$ relations imposed on the entries of $\MI\supbrk{\lambda_2,0}$, the rank drops from $5$ to $2$.
Let us further note that, according to \eqref{log_singular_constraint}, we obtain $\MI\supbrk{\lambda_2,1}$ for free once we know $\MI\supbrk{\lambda_2,0}$:
\eq{
    (P_{1,-1} - \lambda_2) \cdot \MI\supbrk{\lambda_2,0} = \MI\supbrk{\lambda_2,1} \, .
}

Let
\eq{
    \Mi\supbrk{\lambda_2} =
    \lrsbrk{
        \begin{array}{c}
            \Mi_{1}\supbrk{\lambda_2} \\
            \Mi_{2}\supbrk{\lambda_2}
        \end{array}
    } \,
}
denote the basis of the Pfaffian system associated to the eigenvalue
$\lambda_2$.  Unlike the zero eigenvalue case of~\secref{sssec:zero}, here we
no longer have a natural choice for the basis matrix $B\supbrk{\lambda_2}$ that
relates $\Mi\supbrk{\lambda_2}$ with $\MI\supbrk{\lambda_2}$. Luckily, when
buidling the transformation matrix $M\supbrk{\lambda_2}$ from
\eqref{eq:M-def}, we only need to ensure that the our choice of
$B\supbrk{\lambda_2}$ together with the $R\supbrk{\lambda_2}$
of \eqref{1L_bhabha_Psq_constraint} produce a full rank square matrix, so
let us take\footnote{
    See also the procedure described in~\appref{sssec:choice-B-matrix} that
    gives an alternative choice of the basis matrix.
}
\eq{
    B\supbrk{\lambda_2} \defas
    \lrsbrk{
        \begin{array}{ccccc}
            \mzero & \mzero & \mzero & \frac{z_2 \ep}{\ep-1} & \frac{z_2 \ep}{2 (\ep-1)} \\
            \mzero & \mzero & \mzero & 1 & \mzero \\
        \end{array}
    }
    \> .
}

Using \eqref{eq:M-def} to write
\eq{
    \def\arraystretch{1.2}
    M\supbrk{\lambda_2} =
    \lrsbrk{
        \begin{array}{c}
            B\supbrk{\lambda_2} \\
            \chline
            R\supbrk{\lambda_2}
        \end{array}
    }
    \implies
    \MI\supbrk{\lambda_2,0}
    =
    \bigbrk{M\supbrk{\lambda_2}}^{-1}
    \cdot
    \lrsbrk{
        \begin{array}{c}
            \Mi\supbrk{\lambda_2} \\[-5pt]
            \mzero         \\[-5pt]
            \mzero         \\[-5pt]
            \mzero
        \end{array}
    }
    =
    \lrsbrk{
        \begin{array}{c}
            \Mi_1\supbrk{\lambda_2} \\
            \mzero           \\
            \mzero           \\
            \Mi_2\supbrk{\lambda_2} \\
            \frac{2(\ep-1)}{z_2 \ep} \Mi_2\supbrk{\lambda_2} -
            2 \Mi_2\supbrk{\lambda_2}
        \end{array}
    } \, ,
}
we then insert $M\supbrk{\lambda_2}$ into \eqref{eq:gauge-transform} to obtain the $2 \times 2$ Pfaffian matrix associated to $\Mi\supbrk{\lambda_2}$:
\eq{
    \Bigbrk{
        \pd{2} M\supbrk{\lambda_2}
        +
        M\supbrk{\lambda_2} \cdot P_{2,0}
    }
    \cdot  \bigbrk{M\supbrk{\lambda_2}}^{-1}
    =
    \lrsbrk{
        \begin{array}{cc}
            Q_2\supbrk{\lambda_2}    & \star \\
            \mZero\subsm{3 \times 2} & \star
        \end{array}
    } \, .
}
Explicitly, the Pfaffian system restricted to $z_1 = 0$ w.r.t.~the eigenvalue $\lambda_2$ is
\eq{
    \pd{2} \, \Mi\supbrk{\lambda_2} =
    Q_2\supbrk{\lambda_2} \cdot \Mi\supbrk{\lambda_2}
    \quad , \quad
    Q_2\supbrk{\lambda_2} =
    \lrsbrk{
        \begin{array}{cc}
            \mzero              & \mzero \\
            \frac{\ep-1}{z_2^2} & \frac{-1}{z_2}
        \end{array}
    } \, .
    \label{eq:final_pfaffian}
}
The general solution is
\eq{
    \Mi_1\supbrk{\lambda_2} = c_1(\ep)
    \quad , \quad
    \Mi_2\supbrk{\lambda_2} = \frac{c_1(\ep) (1-\ep) \log (z_2)}{z_2} + \frac{c_2(\ep)}{z_2} \, .
}
The constants $c_{1}(\ep),c_{2}(\ep)$ will be fixed by matching with the full solution vector $\MI$.

\subsubsection{Full result}

Let us combine the solution vectors from the previous two sections.
Given $\Mi\supbrk{\lambda_1}$ and $\Mi\supbrk{\lambda_2}$, the vectors $\MI\supbrk{\lambda,\bullet}$ from \eqref{bhabha_sum_of_eigenspaces} are determined by
\eq{
    \MI\supbrk{\lambda_1}
    &=
    \bigbrk{ M\supbrk{\lambda_1} }^{-1}
    \cdot
    \lrsbrk{
        \begin{array}{c}
            \Mi\supbrk{\lambda_1} \\[-5pt]
            \mzero      \\[-5pt]
            \mzero
        \end{array}
    }
    \\
    \MI\supbrk{\lambda_2,0}
    &=
    \bigbrk{ M\supbrk{\lambda_2} }^{-1}
    \cdot
    \lrsbrk{
        \begin{array}{c}
            \Mi\supbrk{\lambda_2} \\[-5pt]
            \mzero         \\[-5pt]
            \mzero         \\[-5pt]
            \mzero
        \end{array}
    }
    \\[12pt]
    \MI\supbrk{\lambda_2,1}
    &=
    (P_{1,-1} - \lambda_2) \cdot \MI\supbrk{\lambda_2,0} \, .
}
Reinstating prefactors and gauge transformations, the expression \eqref{bhabha_sum_of_eigenspaces} becomes
\eq{
    \lrsbrk{
    \begin{array}{c}
        I_{0010}(m) \\[5pt]
        I_{0101}(m) \\[5pt]
        I_{1010}(m) \\[5pt]
        I_{0111}(m) \\[5pt]
        I_{1111}(m)
    \end{array}
    }
    \, \overset{m \to 0}{=} \,
    z_1^{\lambda_1}
    G_0 \cdot G_1 \cdot
    \lrsbrk{
        \begin{array}{c}
            0                                \\[5pt]
            I_{0101}(0)                      \\[5pt]
            I_{1010}(0)                      \\[5pt]
            \frac{1-2\ep}{t \ep} I_{0101}(0) \\[5pt]
            I_{1111}(0)
        \end{array}
    }
    \, + \,
    z_1^{\lambda_2}
    G_0 \cdot G_1 \cdot
    \lrsbrk{
        \begin{array}{c}
            \frac{c_1}{\ep-1}                                                \\[5pt]
            0                                                                \\[5pt]
            0                                                                \\[5pt]
            \frac{c_1 \log(z_2/z_1) + c_2}{z_2}                              \\[5pt]
            \frac{-2}{\ep} \frac{c_1 [\ep \log(z_2/z_1) - 1] + \ep c_2}{z_2}
        \end{array}
    } \, ,
}
where $c_1 \to \frac{c_1}{\ep-1}$ has been rescaled to simplify the expression.
$c_1$ can be found by matching against the exact result for the tadpole:
\eq{
    \nonumber
    &\text{LHS:} \quad
    I_{0010}(m)
    =
    -\Gamma(\ep-1) (m^2)^{1-\ep} \\
    \label{1L_bhabha_sol}
    &\text{RHS:} \quad
    (-s)^{1-\ep} \cdot z_1^{1-\ep} \cdot \frac{c_1}{\ep-1}
    =
    (-s)^{1-\ep} \cdot \lrbrk{ \frac{m^2}{-s} }^{1-\ep} \cdot \frac{c_1}{\ep-1} \\
    &\implies
    c_1 = (1-\ep) \Gamma(\ep-1) \, .
    \nonumber
}
The constant $c_2$ can be found order by order in $\ep$ by matching with the LHS at a particular phase space point.
Expanding%
\footnote{
    Note that leading $\ep$-poles for $I_{1111}(m)$ and $I_{1111}(0)$ are $\ep^{-2}$ and $\ep^{-1}$ respectively.
    We therefore start the Laurent series for $c_2$ at $O(\ep^{-2})$ to cancel the $\ep^{-2}$ poles on the RHS.
}
$c_2 = \sum_{i=-2}^2 c_{2,i} \ep^i$ and matching with $I_{1111}(m)$ at $(s,t,m^2) = (-1, -9, 10^{-3})$ using \soft{AMFlow} \cite{Liu:2022chg}, we obtain
\eq{
    (c_{2,-2}, c_{2,-1}, c_{2,0}, c_{2,1}, c_{2,2}) =
    (1, -0.588997, -7.12155, -4.69031, -20.4068).
}
After inserting the solutions for $c_{1},c_{2}$ into the LHS of \eqref{1L_bhabha_sol}, we find good agreement at other phase space points.
For instance, at $(s,t,m^2) = (-1,-8,10^{-3})$ we find that, up to $O(\ep^3)$, the LHS agrees with $I_{0111}(m)$ within $0.01\%$ accuracy, and with $I_{1111}(m)$ within $0.001\%$ accuracy.

\subsection{From massive Bhabha to the massless double-box}

Here we apply the Pfaffian-level restriction protocol to a more complicated two-loop diagram to illustrate that the method is computationally cheap.

\subsubsection{Setup}
We study the massless limit of a planar 2-loop integral family
contributing to Bhabha scattering in $d = 4 - 2\ep$ dimensions:
\begin{align*}
    \includegraphicsbox{figures/bhabha2}
    \quad
    \xrightarrow{m \to 0}
    \quad
    \includegraphicsbox{figures/dbox}
\end{align*}
Let
\eq{
    & \MI_{\nu_1 \ldots \nu_9}(m) =
    \int
    \frac{\dd^d k_1 \dd^d k_2}
    {
        D_1^{\nu_1}
        \cdots
        D_9^{\nu_9}
    }
    \\[9pt] \nonumber
    & D_1 = k_1^2  - m^2
    \, , \,
    D_2 = (k_1 + p_1 + p_2)^2 - m^2
    \, , \,
    D_3 = k_2^2 - m^2 \\
    & D_4 = (k_2 + p_1 + p_2)^2 - m^2
    \, , \,
    D_5 = (k_1+p_1)^2
    \, , \,
    D_6 = (k_1-k_2)^2  \\ \nonumber
    & D_7 = (k_2-p_3)^2
    \, , \,
    D_8 = (k_2+p_1)^2
    \, , \,
    D_9 = (k_1-p_3)^2 \, ,
}
with $D_8$ and $D_9$ added because of irreducible scalar products.
The kinematics are as in \eqref{eq:bhabha_kinematics}.
Solutions for this integral family can be found in \cite{Henn:2013woa} (see also \cite{Duhr:2021fhk}).

As in the 1-loop example, we write two dimensionless variables
\eq{
    z_1 = \frac{m^2}{-s}
    \quad , \quad
    z_2 = \frac{s}{t} \, ,
}
and study the massless limit $m^2 \to 0$, i.e. the restriction
\eq{
    z_1 \to 0 \, .
}
For simplicity, in this section we only illustrate the holomorphic restriction, i.e.~we do not compute logarithmic corrections.

Inspired by \cite{Henn:2013woa}, we choose the following unrestricted basis:
\eq{
    G_0 \cdot \MI(m) = \bigsbrk{
        &
        \mism{2 0 2 0 0 0 0 0 0} \, , \,
        \mism{0 0 0 0 2 2 1 0 0} \, , \,
        \mism{0 0 1 0 2 2 0 0 0} \, , \,
        \mism{1 2 2 0 0 0 0 0 0} \, , \,
        \mism{0 1 2 0 0 2 0 0 0} \, , \,
        \mism{0 2 2 0 0 1 0 0 0} \, , \,
        \nln
        &
        \mism{0 0 1 1 1 2 0 0 0} \, , \,
        \mism{0 1 2 0 1 1 0 0 0} \, , \,
        \mism{1 0 0 0 1 1 2 0 0} \, , \,
        \mism{1 2 1 2 0 0 0 0 0} \, , \,
        \mism{0 1 1 0 1 1 1 0 0} \, , \,
        \mism{0 1 1 0 1 2 1 0 0} \, , \,
        \nln
        &
        \mism{0 2 1 0 1 1 1 0 0} \, , \,
        \mism{0 2 2 0 1 1 1 0 0} \, , \,
        \mism{1 0 1 1 1 1 0 0 0} \, , \,
        \mism{1 0 1 2 1 1 0 0 0} \, , \,
        \mism{1 0 2 0 1 1 1 0 0} \, , \,
        \mism{1 0 1 0 1 1 2 0 0} \, , \,
        \nln
        &
        \mism{1 1 0 0 1 1 2 0 0} \, , \,
        \mism{1 1 1 0 1 1 1 0 0} \, , \,
        \mism{1 1 1 0 1 1 1 -1 0} \, , \,
        \mism{1 1 1 1 1 1 1 0 0} \, , \,
        \mism{1 1 1 1 1 1 1 -1 0}
    }\tr
    \, ,
    \label{eq:masters-bhabha}
}
where $G_0$ is a diagonal matrix containing factors of $s$ which render the master integrals unitless.
We also choose a basis $\Mi$ for the restricted system, which is associated massless double-box topology $I_{1 \ldots 1}(m=0)$, where it is understood that $m=0$ is taken at the level of the integrand.
The $8 \times 23-$dimensional binary matrix $B$ relating the two bases as
\eq{
    \Mi = B \cdot I(m=0)
    \label{eq:bhabha_J}
}
is given by
\eq{
    B =
    \lrsbrk{
        \tiny
        \begin{array}{ccccccccccccccccccccccc}
            \mzero & 1 & \mzero & \mzero & \mzero & \mzero & \mzero & \mzero & \mzero & \mzero & \mzero & \mzero & \mzero & \mzero & \mzero & \mzero & \mzero & \mzero & \mzero & \mzero & \mzero & \mzero & \mzero \\
            \mzero & \mzero & \mzero & \mzero & 1 & \mzero & \mzero & \mzero & \mzero & \mzero & \mzero & \mzero & \mzero & \mzero & \mzero & \mzero & \mzero & \mzero & \mzero & \mzero & \mzero & \mzero & \mzero \\
            \mzero & \mzero & \mzero & \mzero & \mzero & \mzero & 1 & \mzero & \mzero & \mzero & \mzero & \mzero & \mzero & \mzero & \mzero & \mzero & \mzero & \mzero & \mzero & \mzero & \mzero & \mzero & \mzero \\
            \mzero & \mzero & \mzero & \mzero & \mzero & \mzero & \mzero & \mzero & \mzero & 1 & \mzero & \mzero & \mzero & \mzero & \mzero & \mzero & \mzero & \mzero & \mzero & \mzero & \mzero & \mzero & \mzero \\
            \mzero & \mzero & \mzero & \mzero & \mzero & \mzero & \mzero & \mzero & \mzero & \mzero & 1 & \mzero & \mzero & \mzero & \mzero & \mzero & \mzero & \mzero & \mzero & \mzero & \mzero & \mzero & \mzero \\
            \mzero & \mzero & \mzero & \mzero & \mzero & \mzero & \mzero & \mzero & \mzero & \mzero & \mzero & \mzero & \mzero & \mzero & \mzero & \mzero & \mzero & \mzero & 1 & \mzero & \mzero & \mzero & \mzero \\
            \mzero & \mzero & \mzero & \mzero & \mzero & \mzero & \mzero & \mzero & \mzero & \mzero & \mzero & \mzero & \mzero & \mzero & \mzero & \mzero & \mzero & \mzero & \mzero & \mzero & \mzero & \mzero & 1 \\
            \mzero & \mzero & \mzero & \mzero & \mzero & \mzero & \mzero & \mzero & \mzero & \mzero & \mzero & \mzero & \mzero & \mzero & \mzero & \mzero & \mzero & \mzero & \mzero & \mzero & \mzero & 1 & \mzero \\
        \end{array}
    }
    \, .
    \label{eq:bhabha_B}
}

We have the following Pfaffian data for the massive and massless systems:
\eq{
    \def\arraystretch{1.2}
    \begin{array}{cccl}
        \text{System} & \text{Matrices} & \text{Rank} & \text{Variables}
        \\
        \hline
        \text{Massive} & \brc{P_1, P_2} & 23 & \brc{z_1, z_2}
        \\
        \text{Massless} & \brc{Q_2} & 8 & \brc{z_2}
    \end{array}
    \label{eq:pfaff-bhabha}
}
Our goal is to derive the massless differential equation matrix $Q_2$ as a restriction of the massive system $\brc{P_1, P_2}$.

\subsubsection{Normal form}
The massive Pfaffian system~\eqref{eq:pfaff-bhabha} for the
basis~\eqref{eq:masters-bhabha} is not in normal form. The first issue is
that the Pfaffians $P_i$ have higher order singularities in the restriction limit:
\eq{
    P_1 &=
    z_1^{-3} \> P_{1, -3} +
    z_1^{-2} \> P_{1, -2} +
    z_1^{-1} \> P_{1, -1} +
    \ldots
    \\[5pt]
    P_2 &=
    z_1^{-2} \> P_{2, -2} +
    z_1^{-1} \> P_{2, -1} +
    P_{1, 0} +
    \ldots \ .
}
To cure it, we perform Moser reduction via two gauge transformations by
\eq{
    G_1 = \diag{%
        \underbrace{z_1, \ldots, z_1}_{\text{8 times}},
        1, \ldots, 1%
    }%
    \quad , \quad
    G_2 = \diag{%
        \underbrace{z_1, \ldots, z_1}_{\text{12 times}},
        1, \ldots, 1%
    }
}
whereby $P_1$ becomes logarithmically singular at $z_1=0$,
\eq{
    &G_{21}\sbrk{P_1} = z_1^{-1} \> G_{21}\sbrk{P_{1}}_{-1} + \ldots
    \ .
    \label{eq:moser-bhabha}
}
Here we introduced the short-hand notation
\eq{
    G_{i \ldots j} \defas G_i \cdot \ldots \cdot G_j
    \ .
}
The transformations~\eqref{eq:moser-bhabha} still leave $G_{21}\sbrk{P_2}$ with an unwanted singularity at $z_1 = 0$:
\eq{
    G_{21}\sbrk{P_2} = z_1^{-1} \> G_{21}\sbrk{P_{2}}_{-1} + G_{21}\sbrk{P_{2}}_{0} + \ldots \ .
}
This issue arises because the residue matrix $G_{21}\sbrk{P_{1}}_{-1}$ is resonant, meaning that its spectrum contains eigenvalues with integral difference:
\eq{
    \Spec{G_{21}\sbrk{P_{2}}_{-1}}
    &=
    \bigbrc{
        -2, \> -2, \> -2, \> -1, \> -1, \> 0, \> 0, \> 0
    }
    \nln
    & \cup \> \bigbrc{
        -2 - \ep, \> -2 - \ep, \> -1 - \ep, \> -1 - \ep, \> -\ep, \> -\ep, \> -\ep
    }
    \nln
    & \cup \> \bigbrc{
        -3 - 2 \ep, \> -2 - 2 \ep, \> -2 - 2 \ep, \> -1 - 2 \ep, \> -1 - 2 \ep, \> -1 - 2 \ep, \> -2 \ep
    }
    \nln
    & \cup \> \bigbrc{
        -2 - 4 \ep
    }
    \ .
    \label{eq:resonance-bhabha}
}
These resonances are cured via an additional gauge transformation by a
matrix $G_3$, resulting in the non-resonant spectrum that reads
\eq{
    \Spec{G_{321}\sbrk{P_{1}}_{-1}}
    = \bigbrc{
        0,
        \ldots,
        - \ep,
        \ldots,
        -2 \ep,
        \ldots,
        -2 - 4 \ep,
        \ldots
    }
    \, ,
    \label{eq:spec-bhabha}
}
where we only wrote unique eigenvalues.
We review the algorithmic procedure to construct such a $G_3$ in
\appref{subsec:Moser_Reduction}~\footnote{
    In our basic computer implementation, we perform sequential Jordan
    decompositions of the residue matrix, which can be computed in a matter of
    seconds using \soft{Mathematica} on a laptop.
}.

With this final composition of gauge transformations applied to our Pfaffian
system, we now have that
\eq{
    G_{321}\sbrk{P_2} = G_{321}\sbrk{P_{2}}_{0} + \ldots \ ,
}
which means that the system is in normal form.

\subsubsection{Restriction}
Given our choice of restricted basis, the matrix $R$ from \eqref{eq:M-def} is simply
\eq{
    R =
    \RowReduce{G_{321}\sbrk{P_{1}}_{-1}}
    =
    \lrsbrk{
        \begin{array}{c|c}
            \mZero\subsm{15 \times 8} & \mId\subsm{15 \times 15}
        \end{array}
    }
    \ .
}
Had we chosen a restricted basis which was not a subset of $\MI(m=0)$ (cf.~\eqref{eq:masters-bhabha}), say a Laporta basis where all non-zero propagator powers equal to $\pm 1$, then $R$ would generally depend on $z_2$ and $\ep$.
Given the matrix $B$ from \eqref{eq:bhabha_B}, we are now ready to build the restriction transformation matrix~\eqref{eq:M-def}:
\eq{
    \def\arraystretch{1.2}
    M = \lrsbrk{
        \begin{array}{c}
            B \cdot G_{321}
            \\
            \chline
            R
        \end{array}
    }
    \, .
}
The upper matrix block encodes information about the restricted basis, and the lower block contains linear relations (IBPs) which hold in the massless limit.

We finally compute the gauge transformation~\eqref{eq:gauge-transform}
\eq{
    \Bigbrk{
        \pd{2} M
        + M \cdot G_{321}\sbrk{P_{2}}_{0}
    } \cdot M^{-1}
    =
    \lrsbrk{
        \begin{array}{cc}
            Q_2 & \star
            \\
            \mZero\subsm{15 \times 8} & \star
        \end{array}
    }
    \ ,
}
and obtain the $8$-dimensional restricted Pfaffian matrix $Q_2$ associated to the massless double-box topology.
In other words, $Q_2$ satisfies $\pd{2} \Mi = Q_2 \cdot \Mi$, with $J$ given in \eqref{eq:bhabha_J}.
We have verified the form of $Q_2$ by an independent calculation using the IBP software \soft{Kira} \cite{Klappert:2020nbg}.

\subsection{Unequal to equal-mass sunrise}

This example studies symmetry relations of Feynman integrals from the point of view of Pfaffian-level restriction.

\label{subsec: Unequal to equal mass sunrise}
\subsubsection{Setup}
We examine the equal-mass limit of the 2-loop 3-mass sunrise integral
family:
\begin{align*}
    \includegraphicsbox{figures/sunrise3m}
    \quad
    \xrightarrow{m_i \to m}
    \quad
    \includegraphicsbox{figures/sunrise}
\end{align*}
The family of momentum space Feynman integrals is
\eq{
    \MI_{\nu_1\nu_2\nu_3}(m_1,m_2,m_3) =
    \int
    \frac{\dd^d k_1 \dd^d k_2}
    {
        \bigsbrk{ k_1^2 - m_1^2         }^{\nu_1}
        \bigsbrk{ k_2^2 - m_2^2         }^{\nu_2}
        \bigsbrk{ (k_1 + k_2 + p)^2 - m_3^2  }^{\nu_3}
    } \, .
    \label{eq:sunrise_momentum_space}
}
We choose the following $3$ dimensionless kinematic variables:
\eq{
    z_1 = \frac{m_1^2}{s}
    , \quad
    z_2 = \frac{m_2^2}{s}
    , \quad
    z_3 = \frac{m_3^2}{s}
    , \quad
    \text{where}
    \quad
    s = p^2 \, .
}
We are interested in the equal-mass limit, that is
\eq{
    m_i \to m,
    \quad
    z_i \to z \defas \frac{m^2}{s},
    \quad
    \text{for $i = 1, 2, 3$} \, .
    \label{eq:equal-mass-limit}
}
The details of the two Pfaffian systems before and after restriction are as follows:
\eq{
    \def\arraystretch{1.2}
    \begin{array}{cccl}
        \text{System} & \text{Matrices} & \text{Rank} & \text{Variables}
        \\
        \hline
        \text{Unequal mass} & \brc{P_1, P_2, P_3} & 7 & \brc{z_1, z_2, z_3}
        \\
        \text{Equal mass} & \brc{Q_z} & 3 & \brc{z}
    \end{array}
    \label{eq:pfaff-sunrise}
}
For the unequal-mass integral family we choose the following basis of 
master integrals:
\eq{
    G_0 \cdot \MI \brk{m_1,m_2,m_3} = \bigsbrk{
        \MI_{011} , \,
        \MI_{101} , \,
        \MI_{110} , \,
        \MI_{111} , \,
        \MI_{211} , \,
        \MI_{121} , \,
        \MI_{112}
    }\tr \, ,
    \label{eq:masters-sunrise}
}
where, as before, $G_0$ is a diagonal matrix containing factors of $s$ which
render the integrals unitless.
In the equal-mass case, our target basis reads
\eq{
    \Mi(m) = \bigsbrk{
        \MI_{011}(m,m,m) , \,
        \MI_{111}(m,m,m) , \,
        \MI_{211}(m,m,m)
    }\tr \, ,
}
so we introduce the following binary matrix 
\eq{
    B = \lrsbrk{
        \begin{array}{ccc ccc c}
            1 & \mzero & \mzero & \mzero & \mzero & \mzero & \mzero
            \\
            \mzero & \mzero & \mzero & 1 & \mzero & \mzero & \mzero
            \\
            \mzero & \mzero & \mzero & \mzero & 1 & \mzero & \mzero
        \end{array}
    } 
    \label{eq:basis-sunrise}
}
relating the bases by $J(m) = B \cdot I(m,m,m)$.

\subsubsection{Restriction}
The unequal-mass Pfaffian system turns out to have the following denominators:
\eq{
    &\fun{Denominator}{P_i} = \bigbrc{z_1, z_2, z_3,
        \\
        &
        \quad 1 - 4 z_1 + 6 z_1^2 - 4 z_1^3 + z_1^4 - 4 z_2 + 4 z_1 z_2 + 4 z_1^2 z_2 - 4 z_1^3 z_2 +
        \nln
        & \qquad
        6 z_2^2 + 4 z_1 z_2^2 + 6 z_1^2 z_2^2 - 4 z_2^3 - 4 z_1 z_2^3 + z_2^4 - 4 z_3 + 4 z_1 z_3 + 4 z_1^2 z_3 -
        \nln
        & \qquad
        4 z_1^3 z_3 + 4 z_2 z_3 - 40 z_1 z_2 z_3 + 4 z_1^2 z_2 z_3 + 4 z_2^2 z_3 + 4 z_1 z_2^2 z_3 - 4 z_2^3 z_3 +
        \nln
        & \qquad
        6 z_3^2 + 4 z_1 z_3^2 + 6 z_1^2 z_3^2 + 4 z_2 z_3^2 + 4 z_1 z_2 z_3^2 + 6 z_2^2 z_3^2 - 4 z_3^3 - 4 z_1 z_3^3 -
        4 z_2 z_3^3 + z_3^4
        \nn
    } \, .
}
The last polynomial reduces to $\brk{z-1}^3 \brk{9z-1}$ under the
identification~\eqref{eq:equal-mass-limit}, so none of the denominators vanish
in the equal-mass limit! This example is hence not a restriction to a singular
locus.
In particular, we have no residue matrix to utilize, so we need to slightly modify the restriction procedure to derive the equal-mass rank-3 Pfaffian system.

To this end, we begin by observing that some of the
MIs~\eqref{eq:masters-sunrise} become identical in the equal-mass limit, e.g.\
$\MI_{011}\brk{m, m, m} = \MI_{101}\brk{m, m, m}$%
\footnote{In the momentum space representation \eqref{eq:sunrise_momentum_space}, this can be shown by shifts $k_i \to k_i + p$ in the loop momenta. These shifts have no effect on the integrated result, because the integration contour is over all of Minkowski space.}.
Let us collect the four such \emph{symmetry relations} arising in the
equal-mass limit into the matrix
\eq{
    R = \lrsbrk{
        \begin{array}{ccc ccc c}
            1 & -1 & \mzero & \mzero & \mzero & \mzero & \mzero
            \\
            1 & \mzero & -1 & \mzero & \mzero & \mzero & \mzero
            \\
            \mzero & \mzero & \mzero & \mzero & 1 & -1 & \mzero
            \\
            \mzero & \mzero & \mzero & \mzero & 1 & \mzero & -1
        \end{array}
    } \, .
    \label{eq:symmetry-matrix}
}
In analogy with the residue matrix relations from
\eqref{eqn:restriction_equation}, we have
\eq{
    R \cdot \MI(m,m,m) = 0 \, .
    \label{eq:symmetry-sunrise}
}
Now the holomorphic restriction protocol can proceed as before.
Namely, taking the equal-mass limit\footnote{
    In practice, we take $P_{z, 0} = \bigbrk{P_1 + P_2 + P_3}\big|_{z_i \to
    z}\>$, which is obtained from the unequal mass
    system~\eqref{eq:pfaff-sunrise} by shifting the variables $\brc{z_1, z_2,
    z_3} \mapsto \brc{z, z + x_1, z + x_2}$ and then taking the smooth limit of
    $x_1, x_2 \to 0$.
} of the initial Pfaffian system, we compute
\eq{
    \bigbrk{
        \pd{z} M
        + M \cdot P_{z, 0}
    } \cdot M^{-1}
    =
    \lrsbrk{
        \begin{array}{cc}
            {Q}_{z} & \star
            \\
            \mZero\subsm{4 \times 3} & \star
        \end{array}
    }
    \> , \quad \text{with} \quad
    M = \lrsbrk{
        \begin{array}{c}
            B
            \\
            \chline
            R
        \end{array}
    } 
}
and $B$ given in \eqref{eq:basis-sunrise}.
This gives
\eq{
    Q_z = \lrsbrk{
        \begin{array}{ccc}
            \frac{-2 \ep}{z} & \mzero & \mzero
            \\
            \mzero & \mzero & 3
            \\
            \frac{2 \ep^2}{(z-1) z^2 (9z-1)} & -\frac{(1 + 2 \ep) (1 + 3 \ep) (3z-1)}{(z-1) z (9z-1)} & -\frac{1 + \ep - 20 z - 30 \ep z + 27 z^2 + 45 \ep z^2}{(z-1) z (9z-1)}
        \end{array}
    } \, ,
    \label{eq:deq-sunrise-answer}
}
in agreement with an independent \soft{LiteRed} calculation.

\subsubsection{Deriving symmetry relations}
\label{sssec:derive-symmetry-relations}
In this previous calculation, we merely \emph{guessed} the symmetry relation matrix \eqref{eq:symmetry-matrix}.
In the following we show how it can be derived.

Let us recall the secondary-like equation \eqref{eqn:secondary_like_equation}, which in the
present context reads
\eq{
    \pd{z} \Phi + P_{z, 0}\tr \cdot \Phi - \Phi \cdot P_{z, 0}\tr = 0
    \> ,
}
for some unknown matrix $\Phi$.
The \soft{IntegrableConnections} \cite{integrableconnection} library in
\soft{Maple} can find rational solutions to such PDEs via the function
\code{RationalSolutions} (see also the function \code{EigenRing}).
We find five non-trivial solutions
\eq{
    &\brc{\Phi_1, \ldots, \Phi_5}
    =
    \\[3pt]\nn
    &\lrbrc{
        \lrsbrk{\tiny
            \begin{array}{ccccccc}
            \mzero & -1 & 1 & \mzero & \mzero & \mzero & \mzero \\
            \mzero & \mzero & \mzero & \mzero & \mzero & \mzero & \mzero \\
            \mzero & 1 & -1 & \mzero & \mzero & \mzero & \mzero \\
            \mzero & \mzero & \mzero & \mzero & \mzero & \mzero & \mzero \\
            \mzero & \mzero & \mzero & \mzero & \mzero & -1 & 1 \\
            \mzero & \mzero & \mzero & \mzero & \mzero & \mzero & \mzero \\
            \mzero & \mzero & \mzero & \mzero & \mzero & 1 & -1 \\
            \end{array}
        }
        ,
        \lrsbrk{\tiny
            \begin{array}{ccccccc}
            \mzero & 1 & \mzero & \mzero & \mzero & \mzero & \mzero \\
            1 & \mzero & \mzero & \mzero & \mzero & \mzero & \mzero \\
            \mzero & \mzero & 1 & \mzero & \mzero & \mzero & \mzero \\
            \mzero & \mzero & \mzero & 1 & \mzero & \mzero & \mzero \\
            \mzero & \mzero & \mzero & \mzero & \mzero & 1 & \mzero \\
            \mzero & \mzero & \mzero & \mzero & 1 & \mzero & \mzero \\
            \mzero & \mzero & \mzero & \mzero & \mzero & \mzero & 1 \\
            \end{array}
        }
        ,
        \lrsbrk{\tiny
            \begin{array}{ccccccc}
            \mzero & 1 & \mzero & \mzero & \mzero & \mzero & \mzero \\
            \mzero & 1 & \mzero & \mzero & \mzero & \mzero & \mzero \\
            1 & -1 & 1 & \mzero & \mzero & \mzero & \mzero \\
            \mzero & \mzero & \mzero & 1 & \mzero & \mzero & \mzero \\
            \mzero & \mzero & \mzero & \mzero & \mzero & 1 & \mzero \\
            \mzero & \mzero & \mzero & \mzero & \mzero & 1 & \mzero \\
            \mzero & \mzero & \mzero & \mzero & 1 & -1 & 1 \\
            \end{array}
        }
        ,
        \lrsbrk{\tiny
            \begin{array}{ccccccc}
            \mzero & 1 & \mzero & \mzero & \mzero & \mzero & \mzero \\
            \mzero & \mzero & 1 & \mzero & \mzero & \mzero & \mzero \\
            1 & \mzero & \mzero & \mzero & \mzero & \mzero & \mzero \\
            \mzero & \mzero & \mzero & 1 & \mzero & \mzero & \mzero \\
            \mzero & \mzero & \mzero & \mzero & \mzero & 1 & \mzero \\
            \mzero & \mzero & \mzero & \mzero & \mzero & \mzero & 1 \\
            \mzero & \mzero & \mzero & \mzero & 1 & \mzero & \mzero \\
            \end{array}
        }
        ,
        \lrsbrk{\tiny
            \begin{array}{ccccccc}
            1 & -1 & \mzero & \mzero & \mzero & \mzero & \mzero \\
            \mzero & \mzero & \mzero & \mzero & \mzero & \mzero & \mzero \\
            -1 & 1 & \mzero & \mzero & \mzero & \mzero & \mzero \\
            \mzero & \mzero & \mzero & \mzero & \mzero & \mzero & \mzero \\
            \mzero & \mzero & \mzero & \mzero & 1 & -1 & \mzero \\
            \mzero & \mzero & \mzero & \mzero & \mzero & \mzero & \mzero \\
            \mzero & \mzero & \mzero & \mzero & -1 & 1 & \mzero \\
            \end{array}
        }
    } \, .
}
We are interested in eigenspaces of these matrices, so let us first record
their eigenvalues:
\eq{
    \nonumber
    \spec{\Phi_1} &= \brc{-1, -1, 0, 0, 0, 0, 0}
    \> ,
    \\
    \nonumber
    \spec{\Phi_2} &= \brc{-1, -1, 1, 1, 1, 1, 1}
    \> ,
    \\
    \spec{\Phi_3} &= \brc{0, 0, 1, 1, 1, 1, 1}
    \> ,
    \\
    \nonumber
    \spec{\Phi_4} &= \brc{e^{2 / 3 \pi \ii}, e^{2 / 3 \pi \ii}, e^{4 /
    3 \pi \ii}, e^{4 / 3 \pi \ii}, 1, 1, 1}
    \> ,
    \\
    \nonumber
    \spec{\Phi_5} &= \brc{0, 0, 0, 0, 0, 1, 1}
    \> .
}
Next we collect the eigenvectors of $\Phi_i$ corresponding to an eigenvalue $\lambda$
into the rows of a matrix $V_i\supbrk{\lambda}\>$, meaning that it satisfies
\eq{
    \lrsbrk{\Phi_i - \lambda \mId}
    \cdot
    \lrbrk{V_i\supbrk{\lambda}}\tr
    = 0
    \quad , \quad
    i = 1, \ldots, 5 \, .
}
Now comes our main observation: the rows of the symmetry relations
matrix~\eqref{eq:symmetry-matrix} span
the same linear space as the rows of \brk{some of the} $V_i\supbrk{\lambda}$
matrices! We have
\eq{
    \rowReduce{R}
    =
    \rowReduce{\lrsbrk{
        \def\arraystretch{1.2}
        \begin{array}{c}
            V_1\supbrk{-1}
            \\
            V_2\supbrk{-1}
            \\
            V_3\supbrk{0}
            \\
            V_5\supbrk{1}
        \end{array}
    }}\>,
    \label{eq:symmetry-eigenspaces}
}
where, as before, the $\code{RowReduce}$ operation includes deletion of zero-rows.
The correspondence~\eqref{eq:symmetry-eigenspaces} is the reason why our choice
of the symmetry relations matrix~\eqref{eq:symmetry-matrix} actually gave the
correct restricted Pfaffian~\eqref{eq:deq-sunrise-answer}: the rows of this $R$
span an invariant subspace $\cN$ of the
$\cR=\C(z)\langle\partial_z\rangle$-module $\cM$ associated to the Pfaffian
matrix $P_{z,0}$ (see the discussion around~\thmref{thm:Schur}).
Equation \eqref{eq:deq-sunrise-answer} is thus the Pfaffian matrix for the
quotient $\cR$-module $\cM/\cN$ with respect to a basis $B$, and can be
determined by the restriction method as shown above.

\subsection{Restriction of GKZ systems by the Macaulay matrix method}

Next we present examples of Feynman integrals arising as restrictions of GKZ systems.
We employ the Macaulay matrix based algorithms from \secref{sec:NTsec1_MMMR}.
For simplicity, in all examples we omit $\Gamma$-prefactors coming from the Lee-Pomeransky representation.
To obtain Pfaffian matrices for the proper Feynman integrals, one would need to include these prefactors and then send $\de \to 0$.

Given $n$ integration variables in the Euler integral \eqref{f_Gamma(z)}, for every example below we fix the GKZ $\beta$-parameters to be the $(n+1)$-dimensional vector $\beta = \sbrk{\ep, -\ep \de, \ldots, - \ep \de}\tr$.
This follows the convention of \cite{Chestnov:2022alh}.

\subsubsection{One-loop triangle}\label{sec:gkz_1L_1M_triangle}

We first consider the one-loop triangle diagram with one internal mass $m$.
After rescaling the integration variables by $m^2$, the Lee-Pomeransky polynomial associated to this diagram takes the form
\eq{
    g(x;z) &= z_1 x_1 + z_2 x_2 + z_3 x_3 + z_4 x_1^2+z_5 x_1 x_2 + z_6 x_1 x_3 \\
    z_1 &= \ldots = z_5 = 1
    \quad , \quad
    z_6 = \frac{(p_1+p_2)^2}{m^2} 
}
where $p_1,p_2$ denote the incoming momenta of two massless particles.

We are interested in obtaining the Pfaffian system associated to the proper Feynman integral as a restriction of the GKZ system:
\begin{equation}
    A=\lrsbrk{
        \begin{array}{cccccc}
            1&  1&  1&  1&  1&  1 \\
            1& \mzero& \mzero&  2&  1&  1 \\
            \mzero&  1& \mzero& \mzero&  1& \mzero \\
            \mzero& \mzero&  1& \mzero& \mzero&  1 \\
        \end{array}
    }
    \quad
    \xrightarrow[i \, \neq \, 6]{z_i = 1}
    \quad
    \includegraphicsbox{figures/tri1m}
\end{equation}
With all variables $z_1,\ldots,z_6$ being generic, the GKZ system has rank $3$.
However, an Euler characteristic computation (see \thmref{thm:Gauss_Manin}) gives $\chi = 2$ for the case $z_1 = \ldots = z_5 = 1$ and generic $z_6$.
We thus seek to drop the rank from $3$ to $2$ via a $\cD_Y$-module restriction.

Following \cite[Appendix A]{Chestnov:2022alh}, we are free to immediately rescale $z_1 = \ldots z_4 = 1$ without changing the rank of GKZ system.
The remaining variables are thus $z_5$ and $z_6$, and we seek the restriction $z_5 = 1$.
We first find a standard basis when $z_5=1$ by applying \algref{alg:rest_to_pt} with the randomly chosen value $z_6=1/17$ and the algorithm parameters $\gamma=2$, $k=3$.
The output is a basis consisting of $2$ elements:
\begin{equation}
    \RStd = \sbrk{ \pd{6},1 }\tr \, ,
\end{equation}
which agrees with the rank being $2$.
We can obtain a Pfaffian matrix in the variable $z_6$ for this restricted basis by \algref{alg:alg3}.
The Macaulay matrix succeeds at degree $D=1$, and we get
$$P_6=\lrsbrk{
\begin{array}{cc}
 \frac{(     (     4  {\epsilon}  {\delta}+ {\epsilon}+ 1)  {z}_{6}-   4  {\epsilon}  {\delta}-  2  {\epsilon}- 1)} {   {z}_{6}  (1-{z}_{6})}&  \frac{ {\epsilon}^{ 2}    {\delta}  (   3  {\delta}+ 1)} {   {z}_{6}  (1-{z}_{6})} \\
 1& 0 \\
\end{array}
}.
$$

\subsubsection{Two-loop N-box}\label{sec:gkz_2L_1M_nbox}

The two-loop N-box with one internal mass has the following Lee-Pomeransky polynomial:
\eq{
    \nonumber
    g(z;x) = &
    z_1 x_1 x_2 +
    z_2 x_1 x_4 +
    z_3 x_1 x_5 +
    z_4 x_2 x_3 +
    z_5 x_2 x_4 +
    z_6 x_2 x_5 + \\ &
    z_7 x_3 x_4 +
    z_8 x_3 x_5 +
    z_9 x_1 x_2 x_4 +
    z_{10} x_1 x_4^2  +
    z_{11} x_1 x_4 x_5 +
    z_{12} x_2 x_3 x_4 + \\ \nonumber &
    z_{13} x_2 x_3 x_5 +
    z_{14} x_2 x_4^2  +
    z_{15} x_2 x_4 x_5 +
    z_{16} x_3 x_4^2  +
    z_{17} x_3 x_4 x_5  \, .
}
We study it as a restriction of a GKZ system:
\begin{equation}
    A=\lrsbrk{
     \begin{array}{ccccccccccccccccc}
       1&  1&  1&  1&  1&  1&  1&  1&  1&  1&  1&  1&  1&  1&  1&  1&  1 \\
       1&  1&  1& \mzero& \mzero& \mzero& \mzero& \mzero&  1&  1&  1& \mzero& \mzero& \mzero& \mzero& \mzero& \mzero \\
       1& \mzero& \mzero&  1&  1&  1& \mzero& \mzero&  1& \mzero& \mzero&  1&  1&  1&  1& \mzero& \mzero \\
       \mzero& \mzero& \mzero&  1& \mzero& \mzero&  1&  1& \mzero& \mzero& \mzero&  1&  1& \mzero& \mzero&  1&  1 \\
       \mzero&  1& \mzero& \mzero&  1& \mzero&  1& \mzero&  1&  2&  1&  1& \mzero&  2&  1&  2&  1 \\
       \mzero& \mzero&  1& \mzero& \mzero&  1& \mzero&  1& \mzero& \mzero&  1& \mzero&  1& \mzero&  1& \mzero &  1 \\
    \end{array}
    }
    \quad
    \xrightarrow[i \, \neq \, 9, \, 13]{z_i \to 1}
    \quad
    \includegraphicsbox{figures/nbox1m}
\end{equation}

\vspace{5pt}
The 17-variable GKZ system has rank $33$ for generic $\ep, \de$.
To match with the proper Feynman integral, we seek the restriction
\eq{
    z_i = 1
    \, , \,
    i \neq 9, 13
}
with $z_9 = 1+\frac{s}{m^2}$ and $z_{13} = \frac{t}{m^2}$ being the remaining kinematic variables.
By computing the Euler characteristic with restricted variables, we expect a rank drop from $33$ to $7$.

To begin with, we choose the simplex $\sigma = \lrsbrk{1,2,3,4,5,10}$.
The remaining variables to restrict are thus $z_i = 1$ for $i \in \{6, 7, 8, 11, 12, 14, 15, 16, 17\}$, for which we apply \algref{alg:rest_to_pt}.
Taking parameters $\gamma = 3, k = 4$ and randomly chosen values $z_9 = 1/17, z_{13} = 1/19$, we are able to guess the basis
\begin{equation}
    \RStd = [\pd{8},\pd{15},\pd{16},\pd{17},\pd{9},\pd{13},1 ]\tr
\end{equation}
by running the algorithm\footnote{It takes 7.32s on an m-PC(Intel(R) Core(TM) i7-10700K CPU @ 3.80GHz, 8G memory)}
over the finite field of characteristic $p=100000007$.
Note that the basis is $7$-dimensional, as expected.

Next we use \algref{alg:alg3} to obtain a Macaulay matrix, which succeeds\footnote{It takes 3.28s on the m-PC.} at degree $D=1$.
Since the Macaulay matrix is valid, it means that the probabilistic basis from \algref{alg:rest_to_pt} is indeed a basis.
Using \soft{FiniteFlow} \cite{Peraro:2019svx},
in a matter of seconds we obtain a solution to equation \eqref{eq:MM_equation_for_restriction} for $i=9, 13$ via rational reconstruction over finite fields.
The result is a set of $7 \times 7$ Pfaffian matrices $P_9$ and $P_{13}$.
We have checked that they satisfy integrability.

\subsubsection{Two-loop diagonal-box}\label{sec:gkz_2L_0m_diagonal_box}
Here we consider the massless two-loop diagonal-box diagram, whose Lee-Pomeransky polynomial takes the form
\eq{
    \nonumber
    g(z;x) = &
    z_{1} x_1 x_4 +
    z_{2} x_1 x_5 +
    z_{3} x_1 x_6 +
    z_{4} x_2 x_4 +
    z_{5} x_2 x_5 +
    z_{6} x_2 x_6 + \\ \nonumber &
    z_{7} x_3 x_4 +
    z_{8} x_3 x_5 +
    z_{9} x_3 x_6 +
    z_{10} x_4 x_5 +
    z_{11} x_4 x_6 +
    z_{12} x_4 x_7 + \\ &
    z_{13} x_5 x_7 +
    z_{14} x_6 x_7 +
    z_{15} x_1 x_3 x_4 +
    z_{16} x_1 x_3 x_5 +
    z_{17} x_1 x_3 x_6 + \\ \nonumber &
    z_{18} x_1 x_4 x_5 +
    z_{19} x_2 x_4 x_6 +
    z_{20} x_2 x_4 x_7 +
    z_{21} x_2 x_5 x_7 +
    z_{22} x_2 x_6 x_7 \, .
}
We consider the following restriction of the GKZ system:
\begin{equation}
    A=\lrsbrk{
        \begin{array}{cccccccccccccccccccccc}
         1&  1&  1&  1&  1&  1&  1&  1&  1&  1&  1&  1&  1&  1&  1&  1&  1&  1&  1&  1&  1&  1 \\
         1&  1&  1& \mzero& \mzero& \mzero& \mzero& \mzero& \mzero& \mzero& \mzero& \mzero& \mzero& \mzero&  1&  1&  1&  1& \mzero& \mzero& \mzero& \mzero \\
         \mzero& \mzero& \mzero&  1&  1&  1& \mzero& \mzero& \mzero& \mzero& \mzero& \mzero& \mzero& \mzero& \mzero& \mzero& \mzero& \mzero&  1&  1&  1&  1 \\
         \mzero& \mzero& \mzero& \mzero& \mzero& \mzero&  1&  1&  1& \mzero& \mzero& \mzero& \mzero& \mzero&  1&  1&  1& \mzero& \mzero& \mzero& \mzero& \mzero \\
         1& \mzero& \mzero&  1& \mzero& \mzero&  1& \mzero& \mzero&  1&  1&  1& \mzero& \mzero&  1& \mzero& \mzero&  1&  1&  1& \mzero& \mzero \\
         \mzero&  1& \mzero& \mzero&  1& \mzero& \mzero&  1& \mzero&  1& \mzero& \mzero&  1& \mzero& \mzero&  1& \mzero&  1& \mzero& \mzero&  1& \mzero \\
         \mzero& \mzero&  1& \mzero& \mzero&  1& \mzero& \mzero&  1& \mzero&  1& \mzero& \mzero&  1& \mzero& \mzero&  1& \mzero&  1& \mzero& \mzero&  1 \\
\mzero& \mzero& \mzero& \mzero& \mzero& \mzero& \mzero& \mzero& \mzero& \mzero& \mzero&  1&  1&  1& \mzero& \mzero& \mzero& \mzero& \mzero&  1&  1&  1 \\
\end{array}
    }
    \quad
    \xrightarrow[i \, \leq \, 18, \> 19 \, \leq \, j \, \leq \, 22]{z_i = 1, \> z_j = t / s}
    \quad
    \includegraphicsbox{figures/diabox}
\end{equation}
The GKZ system has rank $115$ for generic $\ep, \de$.
Upon restricting to the proper Feynman integral case,
\eq{
    z_1 = \ldots = z_{18} = 1
    \quad , \quad
    z_{19} = \ldots = z_{22} = z := \frac{t}{s}
}
we except a rank drop to $7$, according to the Euler characteristic.

We follow an analogous strategy as in the previous section.
We guess {\tt RStd}
by making the change of coordinates
$z_i=y_i+1$ for $i=1, \ldots, 18$,
$z_i=y_i+z$ for $i=19, 20, 21$,
$z_{22}=z$
and taking the restriction to $y_i=0$ ($i=1, \ldots, 21$).
For $\sigma = \sbrk{1,2,5,7,9,11,13,18}, \, \gamma = 3, \, k = 4, p=100000007$ and the randomly chosen value $z = 1/7$, we find a $7$-dimensional basis%
\footnote{It takes 40.56s on an m-PC(Intel(R) Core(TM) i7-10700K CPU @ 3.80GHz, 8G memory)}
\eq{
    \RStd =
    \sbrk{ \pd{21}\pd{z},\pd{z}^2,\pd{6},\pd{17},\pd{21},\pd{z},1 }\tr \, ,
}
and a valid corresponding Macaulay matrix at degree $D=1$%
\footnote{It takes 65.76s on the m-PC.}.
Note that $\pd{i}$ means $\partial/\partial y_i$.
With this Macaulay matrix, the $7 \times 7$ Pfaffian $P_z$ is solved for in less than $3$ minutes on a laptop via \soft{FiniteFlow}.

\subsubsection{Two-loop double-box}\label{sec:gkz_0m_dbox}

Our last Feynman integral example is the massless two-loop double-box.
We start from the $26$-term Lee-Pomeransky polynomial
\eq{
    \nonumber
    g(z;x) &=
    z_1 x_1 x_2 +
    z_2 x_1 x_3 +
    z_3 x_1 x_6 +
    z_4 x_1 x_7 +
    z_5 x_2 x_3 +
    z_6 x_2 x_4 +
    z_7 x_2 x_5 + \\&
    z_8 x_3 x_4 +
    z_9 x_3 x_5 +
    z_{10} x_3 x_6 +
    z_{11} x_3 x_7 +
    z_{12} x_4 x_6 +
    z_{13} x_4 x_7 +
    z_{14} x_5 x_6 + \\&\nonumber
    z_{15} x_5 x_7 +
    z_{16} x_1 x_2 x_5 +
    z_{17} x_1 x_2 x_6 +
    z_{18} x_1 x_3 x_5 +
    z_{19} x_1 x_3 x_6 +
    z_{20} x_1 x_5 x_6 + \\&\nonumber
    z_{21} x_1 x_5 x_7 +
    z_{22} x_2 x_3 x_5 +
    z_{23} x_2 x_3 x_6 +
    z_{24} x_2 x_4 x_6 +
    z_{25} x_2 x_5 x_6 +
    z_{26} x_3 x_4 x_7 \, ,
}
and compute the restriction
\begin{equation}
A=
    \lrsbrk{
        \begin{array}{cccccccccccccccccccccccccc}
            1&  1&  1&  1&  1&  1&  1&  1&  1&  1&  1&  1&  1&  1&  1&  1&  1&  1&  1&  1&  1&  1&  1&  1&  1&  1 \\
            1&  1&  1&  1& \mzero& \mzero& \mzero& \mzero& \mzero& \mzero& \mzero& \mzero& \mzero& \mzero& \mzero&  1&  1&  1&  1&  1&  1& \mzero& \mzero& \mzero& \mzero& \mzero \\
            1& \mzero& \mzero& \mzero&  1&  1&  1& \mzero& \mzero& \mzero& \mzero& \mzero& \mzero& \mzero& \mzero&  1&  1& \mzero& \mzero& \mzero& \mzero&  1&  1&  1&  1& \mzero \\
            \mzero&  1& \mzero& \mzero&  1& \mzero& \mzero&  1&  1&  1&  1& \mzero& \mzero& \mzero& \mzero& \mzero& \mzero&  1&  1& \mzero& \mzero&  1&  1& \mzero& \mzero&  1 \\
            \mzero& \mzero& \mzero& \mzero& \mzero&  1& \mzero&  1& \mzero& \mzero& \mzero&  1&  1& \mzero& \mzero& \mzero& \mzero& \mzero& \mzero& \mzero& \mzero& \mzero& \mzero&  1& \mzero&  1 \\
            \mzero& \mzero& \mzero& \mzero& \mzero& \mzero&  1& \mzero&  1& \mzero& \mzero& \mzero& \mzero&  1&  1&  1& \mzero&  1& \mzero&  1&  1&  1& \mzero& \mzero&  1& \mzero \\
            \mzero& \mzero&  1& \mzero& \mzero& \mzero& \mzero& \mzero& \mzero&  1& \mzero&  1& \mzero&  1& \mzero& \mzero&  1& \mzero&  1&  1& \mzero& \mzero&  1&  1&  1& \mzero \\
            \mzero& \mzero& \mzero&  1& \mzero& \mzero& \mzero& \mzero& \mzero& \mzero&  1& \mzero&  1& \mzero&  1& \mzero& \mzero& \mzero& \mzero& \mzero&  1& \mzero& \mzero& \mzero& \mzero&  1 \\
        \end{array}
    }
    \quad
    \xrightarrow[i \, \leq \, 25]{z_i = 1}
    \quad
    \includegraphicsbox{figures/dbox}
\end{equation}
For generic $\ep$ and $\de$, the rank of the GKZ system is a whopping $238$.
While the GKZ system has generic $z_i$, the proper massless double-box Feynman integral has monomial coefficients
\eq{
    z_i = 1 \text{ for all } i=1,\ldots,25
    \quad , \quad
    z_{26} =-1-s/t =: z \, .
}
The Euler characteristic suggests a rank $12$ system for the restricted variables above.
However, this Feynman integral is well-known to satisfy a rank $8$ Pfaffian system - we comment on this below.

We fix the simplex $\sigma = \sbrk{1,2,3,4,5,6,7,25}$ and then run \algref{alg:rest_to_pt} with parameters $\gamma = 3, \, k = 5, \, p=100000007$ and the randomly chosen value $z = 1/7$.
The result is a $12$-dimensional basis%
\footnote{It takes 23,815s CPU time on a t-PC(AMD EPYC 7552 four 48-Core Processors @ 1.5GHz, 1T memory)}
\begin{equation}
    \RStd =
    \sbrk{ \partial_{{z}}  \partial_{{15}},  \partial_{{23}}  \partial_{{z}},  \partial_{{24}}  \partial_{{z}},  \partial_{{z}}^{ 2} , \partial_{{13}}, \partial_{{15}}, \partial_{{21}}, \partial_{{22}}, \partial_{{23}}, \partial_{{24}}, \partial_{{z}}, 1}\tr \, ,
\end{equation}
whose size agrees with the Euler characteristic.

In step 4 of \algref{alg:rest_to_pt}, we need to compute the echelon form of a matrix of size $132,145 \times 33,649$.
\algref{alg:alg3} succeeds at finding a Macaulay matrix\footnote{It takes 501s CPU time on the t-PC.} at degree $D=1$.
The size of $M_{\Ext}$ is $2926 \times 10775$.
Running \soft{FiniteFlow}%
\footnote{It took 89,021s CPU-hours on the t-PC. However, \soft{FiniteFlow} ran in parallel over 96 threads, so the real time was about 20 min.}
to solve the matrix equation (\ref{eq:MM_equation_for_restriction}) associated to $M_{\Ext}$, we finally obtain a $12 \times 12$ Pfaffian matrix $P_z$,
which can be downloaded from \cite{dataAndProgramsOfThisPaper}.

We suspect that symmetry relations (namely, reparametrizations of the Euler
\emph{integrand}) would bridge the gap between $\chi = 12$ and the expected
rank $8$ for the Pfaffian system associated to this Feynman integral.
By using the secondary-like equation~\eqref{eqn:secondary_like_equation}, it should be
possible to block-diagonalize our $12$-dimensional ODE dictated by $P_z$ to
reach the minimal size of $8$.
Indeed, similar to~\secref{subsec: Unequal to equal mass sunrise}, with
the help of \soft{IntegrableConnections} \cite{integrableconnection} library we
found a $4$-dimensional invariant subspace of symmetry relations that reduce
the rank from $12$ down to $8$, as expected for this family of Feynman
integrals\footnote{
    It took about 70 CPU-hours and 4Gb of RAM on the t-PC for
    \soft{IntegrableConnections} to find these rational solutions.
}.
It is a future problem for us to give a more efficient implementation combining the secondary-like equation with the methods of \cite{Barkatou-1993, singer-1996}.

\subsection{Restriction of Appell's \texorpdfstring{$F_4$}{} and Horn's \texorpdfstring{$H_i$}{} functions to hypersurfaces}
\label{subsec:ex:F4}

The Appell $F_4$ function has the following series representation:
\begin{equation}
F_4(a,b,c_1,c_2;x,y)=\sum_{i,j=0}^\infty
  \frac{(a)_{i+j} (b)_{i+j}} {(1)_i (1)_j (c_1)_i (c_2)_j} x^i y^j.
\end{equation}
It appears in several contexts related to Feynman integrals \cite{Anastasiou:1999ui,FCCZ-2020}.
$F_4$ is annihilated by the following operators:
\begin{eqnarray}
\label{eq:apell_f1_f2}
f_1&=&\theta_x (\theta_x + c_1-1) - x (\theta_x+\theta_y+a)(\theta_x+\theta_y+b)\\
f_2&=&\theta_y (\theta_y + c_2-1) - y (\theta_x+\theta_y+a)(\theta_x+\theta_y+b) \, ,
\end{eqnarray}
where the Euler operators are $\theta_x = x \partial_x$ and $\theta_y = y \partial_y$.
Let us assume that the parameters $a, b, c_1, c_2$ take on generic values.
The operators $f_1$ and $f_2$ generate a holonomic ideal $\cI$
in the $\cD$-module $\CD_2$ whose coordinates are $z_1=x$, $z_2=y$.
The left $\CD_2$-module $\CD_2/\cI$ is regular holonomic with
holonomic rank $4$, and the singular locus is
\begin{equation}
  x y ((x-y)^2-2(x+y)+1) = 0 \, .
\end{equation}
Put $L=(x-y)^2-2(x+y)+1$.
The point $(x,y)=(1/4,1/4)$ is on the singular locus,
and we will construct a Pfaffian system restricted
to a codimension-$1$ stratum $W=V(L) \setminus \{(1,0),(0,1)\}$ containing that point.
Note that $V(L)$ and $xy=0$ intersect at $(1,0)$ and $(0,1)$.

Computing the ODE on $x=y$ by restricting the Pfaffian system,
we see that the set of exponents at the point $V(x-y) \cap W$ is
\begin{equation} \label{eq:exponents_along_VL}
    \{a, \, b, \, c_1, \, c_2\}
    \ \to \
    \{0, \, 0, \, 0, \, 2(a+b-c_1-c_2)+5\} \, .
\end{equation}
Moreover, the residue matrix at this point is diagonal,
meaning that there is no logarithmic solution.
By an analytic coordinate transformation, we can locally transform $W$
and $V(x-y)$
into hyperplanes that intersect transversally.
Moreover, it can be expressed as a logarithmic connection
with the residue matrix along $W$ being the diagonal matrix
with eigenvalues (\ref{eq:exponents_along_VL}).
It follows from \propref{prop:logarithmic_connection_as_D_module} that
the holonomic rank $r$ on $W$ is $3$
for generic parameter values.
Applying the restriction algorithm from \thmref{th:th2} with the shift
$(x,y) \to (x+1/4,y+1/4)$,
we obtain the standard basis $S = \RStd = [1,\pd{x},\pd{y}]\tr$
(see \remref{rem:choice_of_Std}).

Let us write a modifed version of \eqref{eq:restriction-of-MMM} modulo $L$:
\eq{
    \label{eq:restriction-of-MMM-frac}
    M_D \defas \lrsbrk{
        \begin{array}{c|c}
            M_{\Ext} & M_{\RStd}
        \end{array}
    } \text{ mod $L$}
    \>.
}
The left block matrix above has columns labeled by
$
    \Ext =
    \sbrk{ \pd{x}^3, \pd{x}^2 \pd{y}, \pd{x}\pd{y}^2,\pd{y}^3, \pd{x}^2, \pd{x}\pd{y}, \pd{y}^2 }
$.
We proceed to solve \eqref{eq:4.12m} for the unknown matrix $C$, which will produce the Pfaffian matrices by \eqref{eq:Pi_qi}.
Equation \eqref{eq:4.12m} requires the following two matrices as input:
\eq{
    \nonumber
    M_{\Ext} &=
    {\tiny
        \lrsbrk{
            \begin{array}{ccccccc}
                \mzero& \mzero& \mzero& \mzero&   -  {x}^{ 2} + {x}&  -   2  {y}  {x}&  -  {y}^{ 2}  \\
                \mzero& \mzero& \mzero& \mzero&  -  {x}^{ 2} &  -   2  {y}  {x}&   -  {y}^{ 2} + {y} \\
                \mzero&   -  {x}^{ 2} + {x}&  -   2  {y}  {x}&  -  {y}^{ 2} & \mzero&    (  - {a}  - {b}- 3)  {x}+ {c}_{1}&   (  - {a}  - {b}- 3)  {y} \\
                \mzero&  -  {x}^{ 2} &  -   2  {y}  {x}&   -  {y}^{ 2} + {y}& \mzero&   (  - {a}  - {b}- 3)  {x}&    (  - {a}  - {b}- 3)  {y}+  {c}_{2}+ 1 \\
                  -  {x}^{ 2} + {x}&  -   2  {y}  {x}&  -  {y}^{ 2} & \mzero&    (  - {a}  - {b}- 3)  {x}+  {c}_{1}+ 1&   (  - {a}  - {b}- 3)  {y}& \mzero \\
                 -  {x}^{ 2} &  -   2  {y}  {x}&   -  {y}^{ 2} + {y}& \mzero&   (  - {a}  - {b}- 3)  {x}&    (  - {a}  - {b}- 3)  {y}+ {c}_{2}& \mzero.
            \end{array}
        }
    } \\
    C_{\Ext} &=
    \lrsbrk{
        \begin{array}{ccccccc}
            \mzero& \mzero& \mzero& \mzero& \mzero& \mzero& \mzero \\
            \mzero& \mzero& \mzero& \mzero&  1& \mzero& \mzero \\
            \mzero& \mzero& \mzero& \mzero& \mzero&  1& \mzero \\
            \mzero& \mzero& \mzero& \mzero& \mzero& \mzero&  1 \\
        \end{array}
    }.
}
Although we explained a general method to find $C$ by a syzygy computation in \secref{sec:MMMR}, we employ a more efficient method for this case.
As \eqref{eq:4.12m} is written modulo $L=\underline{x^2} + \ldots$, we can eliminate all instances of $x^p$ for $p \geq 2$ via division by $L$.
Consequently, it is sufficient to write $C$ as a rational ansatz being linear in $x$:
\eq{
    C =
    \frac
    {p_{ij}(a,b,c_1,c_2) x + q_{ij}(a,b,c_1,c_2,y)}
    {r(a,b,c_1,c_2) x + s(a,b,c_1,c_2,y)} \, .
}
After this preprocessing, we solve the system of linear equations
\eq{
    \fun{Numerator}{C_{\Ext} - C \cdot M_{\Ext}} = 0
}
for $p_{ij}, q_{ij}, r$ and $s$, which, by \eqref{eq:Pi_qi}, yields the Pfaffian matrices $P_x$ and $P_y$ in the fraction field $\CC[x,y]$ (the matrices can be downloaded at \cite{dataAndProgramsOfThisPaper}).
As per \cite{oaku-1996}, $P_x$ and $P_y$ induce a ${\cal D}$-module on an affine variety.

Let $[g_1, \, g_2, \, g_3]$ denote the linearly independent holomorphic solutions at $(x,y) = (1/4, \, 1/4) \in W$.
The standard monomials on the singular locus $W$
are given by
$[s_1,\, s_2, \,s_3] = [1, \, \pd{x}, \, \pd{y}]$.
We then have a $3 \times 3$ matrix $F_{ij} = (s_i \bullet g_j)$.
On $W$, it holds that $(\pd{x} \bullet F) \cdot F^{-1} = P_x$ and $(\pd{y} \bullet F) \cdot F^{-1} = P_y$.
The matrices $P_x$ and $P_y$ are Pfaffian matrices on $W$.
Note that these equalities hold on $W$ but not outside of $V(L)$,
because the PDE system \eqref{eq:apell_f1_f2} is irreducible \cite{hattori-takayama}.

Let us end our discussion on $F_4$  with a numerical analysis of the restricted Pfaffian system.
To this end, we parameterize $L=0$ by
\eq{
    x = q_x(t),
    \,
    y = q_y(t)
    \quad \text{with} \quad
    q_x(t) = t,
    \,
    q_y(t) = t + 1 \pm 2 \sqrt{t} \, .
}
Fixing the exponent parameters $(a, \, b, \, c_1, \, c_2) = (-2/3, \, 1/3, \, 1/3, \, 1/3)$, we thus obtain an ODE of rank $3$ on $V(L)$:
\eq{
    \frac{\dd}{\dd t} g(t) =
    \lrsbrk{
        P_x(q_x(t), q_y(t))
        \frac{\dd q_x}{\dd t} +
        P_y(q_x(t), q_y(t))
        \frac{\dd q_y}{\dd t}
    }
    \cdot g(t) \, .
}
A solution for the $3$-dimensional vector $g(t)$ with initial condition $g = [1, \, 0, \, 0]\tr$
is given in a \soft{python} notebook at \cite{dataAndProgramsOfThisPaper}.

We note that the restriction of $F_4$ to its singular locus was also studied in \cite{mitsuo1995connection} from a different point of view, utilizing a double cover map.
However, it is not known that this method can be applied to Horn's functions $H_1, \ldots, H_7$ (see \cite{erdelyi} for definitions).
With our general algorithm, we are able to obtain rational restrictions of every system for $H_i$ onto irreducible components of their singular loci.
The relevant hypersurfaces are given in the following table:
\begin{center}
\begin{tabular}{c|c}
      & Hypersurface  \\ \hline
$H_1$ &   $y^2-4 x y+2 y+1=0$ \\
$H_2$ &   $xy-y-1=0$\\
$H_3$ &   $y^2-y+x=0$\\
$H_4$ &   $y^2-2y-4x+1=0$\\
$H_4$ &   $27y^2x-36xy-y+16x^2+8x+1=0$\\
$H_6$ &   $y^2x-y-1=0$\\
$H_7$ &   $4xy^2-y^2-2y-1=0$\\
\end{tabular}
\end{center}
The holonomic $\cD$-modules for each $H_i$ are of rank $4$, and in all cases the restricted standard basis is ${\RStd} = [1,\pd{x},\pd{y}]\tr$.
In other words, the rational restrictions of
each system is of rank $3$.
See \cite{dataAndProgramsOfThisPaper} for programs for these restrictions.

\section{Conclusion}\label{sec:conclusion}

In this work we studied the process of taking limits on Feynman integrals and,
more generally, Euler integrals. Our philosophy has been to study such
integrals from the perspective of the differential equations that they obey; in
particular their Pfaffian PDE systems. The structure of Pfaffian systems, in
turn, is richly encoded in the algebraic language of holonomic $\cD$-modules.
It is thereby possible to formalize the notion of a limit as a
\emph{restriction} of a $\cD$-module.
In our work, the motivating example of a $\cD$-module has been the GKZ
hypergeometric system, but our results and methods apply directly
to Feynman integrals as well.

A restriction is generally accompanied by a drop in rank. Regarding the
Pfaffian system for some Euler integral, this means that we expect a smaller
and simpler Pfaffian system to survive in the limit. The main result of this
publication is the development of efficient algorithms for computing these
smaller Pfaffian systems. We presented two such restriction algorithms, each
one having unique benefits. The first algorithm, working at the level of
Pfaffian matrices, only requires sequential applications of certain gauge
transformations. This algorithm is able to extract logarithmic corrections in
the limiting variable, a feature that we expect to be especially useful for
particle physics phenomenology. The second algorithm, working at the level of
$\cD$-modules, computes the restricted Pfaffian system by passing through the
\emph{Macaulay matrix}. This algorithm benefits from the use of linear algebra
over finite fields, making it quite efficient, and it does not require the unrestricted Pfaffian system, which can be hard to obtain, as input. While traditional algorithms
have been limited to restriction onto hyperplanes, we have shown that the Macaulay matrix based
method is able to restrict onto hypersurfaces.

In future work, it would be interesting to test our methods on state-of-the-art
examples relevant for current phenomenological applications. While
this paper only attempted to compute logarithmic corrections to the one-loop
case of Bhabha scattering, there is in principle no computational barrier preventing the push to higher
loops and more points. For the sake of increased accuracy, it is then alluring to
also include the mixed power-and-log corrections of the form $z_1^a \log^b(z_1)$,
$a,b \in \ZZ_{>0}\>$, as discussed at the end of
\secref{sec:relation_to_Feynman_integrals}.
On the mathematical side, we
believe that it would be interesting to investigate restriction \emph{ideals}
using the Macaulay matrix method. This would also aid in the computation of
symbol alphabets at two loops, as they can be obtained from restrictions of GKZ
systems \cite{Dlapa:2023cvx}.

\section*{Acknowledgements}
We deeply appreciate discussions had with our mentor Prof. Pierpaolo Mastrolia
and colleagues Manoj K. Mandal and Federico Gaparotto.
This work gives answers to questions raised in the previous paper \cite{Chestnov:2022alh}, written in collaboration with them.

S.M. and N.T. are supported in part by the JST CREST Grant Number JP19209317.
S.M. is supported by JSPS KAKENHI Grant Number 22K13930.
N.T. is supported in part by JSPS KAKENHI Grant Number JP21K03270.
V.C. is supported by the Diagrammalgebra Stars Wild-Card Grant UNIPD,
and in part by the European Research Council (ERC) under the
European Union's research and innovation programme grant agreement 101040760
(ERC Starting Grant FFHiggsTop).

\appendix

\section{Modules and tensors in a nutshell}  \label{sec:nut}
\def\DDD{\CD}
\def\RRR{R}

This exposition is independent of the main text and contains a pedagogical
review of tensor products in the context of $\cD$-modules, along with several
instructive examples of tensor products also appearing in the text.

Let $R$ be a ring with $1$ and $\CM$ be an abelian group
with the group operation denoted additively by ``+''.
Consider a map from $R \times \CM$ to $\CM$, whose image $a \, m$ for given $a \in R$
and $m \in \CM$ is denoted multiplicatively%
\footnote{
    Explicitly, we consider a map $\psi \, : \, R \times \CM \to \CM$ and write
    its image as multiplication: $\psi\brk{a, m} \equiv a \, m\>$.
    We use the same notation for multiplication inside of the ring $R$.
}
and is referred to as the result of the \textit{action} of $a$ on $m$.
Then the abelian group $\CM$ forms a \textit{left $R$-module}
when it satisfies the following conditions:
\begin{enumerate}
    \item
        For any $a \in R$ and $m_i \in \CM$, we have
        $ a \, (m_1 + m_2) = a \, m_1 + a \, m_2$.
    \item
        For any $a_i \in R$ and $m \in \CM$, we have
        $ (a_1 \, a_2) \, m = a_1 \, (a_2 \, m)\>$, and
        $(a_1 + a_2) \, m = a_1 \, m + a_2 \, m\>$.%
        \footnote{
            In other words,
            $\psi(a_1 \, a_2, m) = \psi\bigbrk{a_1, \psi\brk{a_2, m}}$ and
            $\psi(a_1 + a_2, m) = \psi(a_1, m) + \psi(a_2, m)\>$.
        }
    \item
        For any $m \in \CM$, we have $1 \, m = m$.
\end{enumerate}
When $R$ is a (commutative) field, a left $R$-module is nothing but an
$R$-vector space.

A \textit{right $R$-module} is defined analogously
by the following conditions:
$(m_1 + m_2) \, a = m_1 \, a +  m_2 \, a\>$,
$m \, (a_1 \, a_2) = (m \, a_1) \, a_2\>$,
$m \, (a_1 + a_2) = m \, a_1 + m \, a_2\>$, and
$m = m \, 1$.
\begin{example}
    \label{ex:D1-right}
    \rm
    Consider the ring of univariate differential operators with polynomial
    coefficients $\DDD=\CC\langle z, \pd{z} \rangle$.
    The quotient set $\CM = \DDD / (z \, \DDD)$ forms a right $\DDD$-module
    by defining the right action $[m \, a]$ of some $a \in \DDD$
    on $[m] \in \CM$ as the product \brk{i.e. composition} of differential
    operators in $\DDD$ and a subsequent reduction to the equivalence class%
    \footnote{
        Here and in the following the square brackets $[\bigcdot]$ denote the
        equivalence class of some element $\bigcdot$ under some relation.
        Sometimes we will omit these brackets for the sake of brevity if it
        does not lead to confusion.
    }
    by $z \, \DDD\>$.

    \hfill$\blacksquare$
\end{example}

\begin{example}
    \rm
    Let $\DDD_n$ be the Weyl algebra:
    \begin{equation}
        \DDD_n = \CC\langle z_1, \ldots, z_n, \pd{1}, \ldots, \pd{n} \rangle
        \>,
    \nonumber
    \end{equation}
    that is, the ring of differential operators in $n$ variables $z_1, \ldots,
    z_n\>$.
    The set of length $k$-vectors $\DDD_n^k$ is an example of a left
    $\DDD_n$-module: The addition is defined by
    $(f_1, \ldots, f_k) + (g_1, \ldots, g_k) = (f_1+g_1, \ldots, f_k+g_k)$
    and the action of $h \in \DDD_n$ on $(f_1, \ldots, f_k)$ is given by
    $(h f_1, \ldots, h f_k)\>$.
    $\DDD_n^k$ is called the \textit{free $\DDD_n$-module} of rank $k$.
    \hfill$\blacksquare$
\end{example}

Let us now briefly review the construction of a Gr\"obner basis in the free module
$\DDD_n^k$ that we have just introduced above.
We fix a term order $\prec$ in $\DDD_n$.
For a $k$-vector $f=(0,\ldots, 0, f_i, \ldots, f_k)$ with $f_i \not= 0\>$,
the index $i$ is called the \textit{top index} of $f$.
Consider two elements $f$ and $g$ in $\DDD_n^k\>$.  We say that $f$ is larger
than $g$ in the \textit{position over term} (POT) order of $\DDD_n^k\>$, when
the top index of $f$ is smaller than the top index of $g\>$, or when the top
indices of $f$ and $g$ are both $i$ and the corresponding components
$f_i \succ g_i\>$.
The reduction \brk{or the division} algorithm follows from this definition of the
total order. Namely, the $S$-pair $S(f,g)$ of $f$ and $g$ is a $(0, \ldots, 0)$ vector
when the top indices of $f$ and $g$ are not the same.
When the top indices are the same and equal to $i$, then the $S$-pair of $f$
and $g$ is defined by $h_1 f - h_2 \, g\>$, where $h_j \in \DDD_n$  are chosen
so that $h_1 f_i - h_2 \, g_i$ is the $S$-pair of $f_i$ and $g_i$ in $\DDD_n$.
\begin{example}
    \rm
    Consider the free module $\DDD_2^2$ in two variables $z_1$ and $z_2\>$.
    When $f=(\pd{1}, z_1)$ and $g=(0, z_1)\>$,
    then $S(f,g)=(0,0)$ as the top indices of $f$ and $g$ are not the same.
    If $f = (z_1 \, \pd{1}, z_1)$ and $g = (z_2 \, \pd{1}, z_1)$,
    then the $S$-pair becomes $S(f, g) = z_2 \, f - z_1 \, g = (0,z_1 z_2-z_1^2)\>$.
    For further examples and details we refer the interested reader to
    \cite{adams} and \cite[Sections 6.6-7]{dojo}.

    \hfill$\blacksquare$
\end{example}

\begin{definition}
    \label{def:tensor}
    For a left $\DDD_n$-module $\CM$ and a right $\DDD_n$-module $P$, let us define
    the \textit{tensor product} $P \otimes_{\DDD_n} \CM$ as follows.  We denote by $p
    \otimes m$ the element of the direct product $(p,m)$, where $p \in P$ and $m
    \in \CM$. Then $P \otimes_{\DDD_n} \CM$ is the set of a finite formal sum of elements of this
    form modulo the equivalence relation $\sim$ given by:
    \begin{enumerate}
        \item
            $p_1 \otimes m + p_2 \otimes m \sim (p_1 + p_2) \otimes m$,
            $m \in \CM$, $p_i \in P$ \brk{linearity in the left slot}.
        \item
            $p \otimes m_1 + p \otimes m_2 \sim p \otimes (m_1 + m_2)$,
            $p \in P$, $m_i \in \CM$ \brk{linearity in the right slot}.
        \item
            $p f \otimes m \sim p \otimes f m$
            for $f \in \DDD_n$, $m \in \CM$, and $p \in P\>$.%
            \footnote{
                Hence the subscript $\DDD_n$ in $\otimes_{\DDD_n}$  means that any element $f \in \DDD_n$
                can be carried across the $\otimes$ sign.
            }
    \end{enumerate}
    In other words, if we denote by $F(P \otimes \CM)$ the set of formal finite sums of $p \otimes m$,
    then the tensor product is just the factor
    $P \otimes_{\DDD_n} \CM \defas F(P \otimes \CM)/\sim\>$.
\end{definition}
\begin{remark}
    \rm
    \label{rem:Dn-1-action}
    When the left entry $P$ is also a left $\DDD_{n-1}$-module,
    the tensor product $P \otimes_{\DDD_n} \CM$ can be turned into a left
    $\DDD_{n-1}$-module by specifying the action of $g \in \DDD_{n-1}$ on $p \otimes
    m$ as $g \, (p \otimes m) = (g \, p) \otimes m\>$.
\end{remark}

\begin{example}
    \rm
    Let $\CM$ be a left $\DDD_n$-module.
    In order to illustrate the various concepts introduced so far,
    here we show a proof of the following well-known isomorphism between two left $\DDD_{n-1}$-modules written as quotients:
    \begin{equation}
        \DDD_n \big/ \brk{z_n \, \DDD_n} \otimes_{\DDD_n} \CM
        \simeq \CM \big/ \brk{z_n \, \CM}
        \label{eq:Dn-tensor-M}
    \end{equation}
    We start with showing that the right $\DDD_n$-module $\DDD_n \big/ (z_n \, \DDD_n)$
    \brk{see~\exref{ex:D1-right}}, while being also a left $\DDD_{n-1}$-module,
    is in fact \textit{not} a left $\DDD_n$-module.
    Indeed, suppose that it is a left $\DDD_n$-module.
    We denote by $[\bigcdot]$ the equivalence class of some element $\bigcdot$
    in $\DDD_n \big/ \brk{z_n \, \DDD_n}$.
    On one hand, $[z_n] = [0]$ in $\DDD_n \big/ \brk{z_n \, \DDD_n}\>$, so that the action
    of the derivative $\pd{n}$ should also be%
    \footnote{
        Note that the condition $a ([0]+[0])=a([0])=a[0]+a[0]$ implies
        $a[0]=[0]$ for any $a \in \DDD_n$.
    }
    $\pd{n}\sbrk{z_n} = [0]$ in $\DDD_n \big/ \brk{z_n \, \DDD_n}$.
    On the other hand, $\pd{n} [z_n] = [z_n \pd{n} + 1]$ and
    $\sbrk{z_n \pd{n}} =\sbrk{0}$ in $\DDD_n \big/ \brk{z_n \, \DDD_n}\>$,
    leading to $\pd{n} [z_n] = [1] \neq \sbrk{0}$ in $\DDD_n \big/ \brk{z_n \, \DDD_n}\>$,
    which is a contradiction.
    The following technical observation will become useful shortly:
    any non-zero element of $\DDD_n \big/ \brk{z_n \, \DDD_n} \otimes_{\DDD_n} \CM$
    can be expressed as $\sbrk{1} \otimes m$ for some $m \not\in z_n \, \CM\>$.
    Indeed, in a single product any factor $f \in \DDD_n$ can be moved from the
    left to the right slot:
    $\sbrk{f} \otimes m = \sbrk{1} \otimes f m\>$.
    After performing this operation termwise,
    any finite sum can be united into a single product:
    $\sbrk{1} \otimes m_1 + \sbrk{1} \otimes m_2 = \sbrk{1} \otimes(m_1 + m_2)\>$
    according to~\defref{def:tensor}.
    For two non-zero elements $\sbrk{1} \otimes m_1$ and  $\sbrk{1} \otimes
    m_2\>$, where
    $m_i \not\in z_n \, \CM\>$,
    it can be shown that
    the condition $\sbrk{1} \otimes m_1 \not= \sbrk{1} \otimes m_2$ is equivalent to
    $m_1 \not= m_2$ in $\CM$.
    We also note that $\sbrk{0} \otimes m$ and $\sbrk{f} \otimes 0$ are
    $0$ by definition.

    We define a $\DDD_{n-1}$-morphism $\varphi$
    from $\DDD_n \big/ \brk{z_n \, \DDD_n} \otimes_{\DDD_n} \CM$
    to $\CM/(z_n \CM)$ by
    $\varphi\bigbrk{\sbrk{f} \otimes m} \defas [f m]$
    and extend this definition on finite sums
    $\varphi\bigbrk{
        \sbrk{f_1} \otimes m_1
        + \sbrk{f_2} \otimes m_2
    } = \sbrk{f_1 m_1} + \sbrk{f_2 m_2}\>$ by linearity.
    Let us now prove that $\varphi$ is indeed an isomorphism.

    \paragraph{$\varphi$ is well-defined:}
        We prove that if $t_1 \sim t_2$,
        $t_i \in \DDD_n \big/ \brk{z_n \, \DDD_n} \otimes_{\DDD_n} \CM$,
        then $\varphi\brk{t_1} = \varphi\sbrk{t_2}$.
        From the definition of $\varphi$ it follows that if $t \sim 0 \otimes 0$, then $\varphi\brk{t} = \sbrk{0}$.
        Any $0$ in $\DDD_n \big/ \brk{z_n \, \DDD_n} \otimes_{\DDD_n} \CM$
        can be expressed as $\sbrk{z_n} \otimes m$, $m \in \CM$
        after a simplification by ``+''.
        Note that $\varphi$ satisfies $\varphi(t_1+t_2)=\varphi(t_1)+\varphi(t_2)$
        for any $t_i \in \DDD_n \big/ \brk{z_n \, D_n} \otimes_{\DDD_n} \CM$.
        Since
        $\varphi\bigbrk{\sbrk{z_n} \otimes m} = \sbrk{z_n m}$, $z_n m \in z_n \CM$,
        then $\sbrk{0} \otimes m$ is sent to $[0]$ by $\varphi$.

    \paragraph{$\varphi$ is $\DDD_{n-1}$-morphism:}
        Let $c_i \in \DDD_{n-1}$, $f_i \in \DDD_n$, and $m_i \in \CM$.
        \begin{alignat*}{3}
            &\varphi\bigbrk{
                  c_1 \, \brk{\sbrk{f_1} \otimes m_1}
                + c_2 \, \brk{\sbrk{f_2} \otimes m_2}
            } &&
            \\
            &=\varphi\bigbrk{
                  \sbrk{c_1 \, f_1} \otimes m_1
                + \sbrk{c_2 \, f_2} \otimes m_2
            }
            &&\mbox{(\remref{rem:Dn-1-action})}
            \\
            &= \sbrk{c_1 \, f_1 \, m_1}
             + \sbrk{c_2 \, f_2 \, m_2}
            &&\mbox{(definition of $\varphi$)}
            \\
            &= c_1 \, \sbrk{f_1 \, m_1}
             + c_2 \, \sbrk{f_2 \, m_2}
            &&\mbox{(action of $c_i$ on $\CM \big/ \brk{z_n \, \CM}$)}
            \\
            &= c_1 \, \varphi\bigbrk{\sbrk{f_1} \otimes m_1}
             + c_2 \, \varphi\bigbrk{\sbrk{f_2} \otimes m_2}
            \qquad
            &&\mbox{(definition of $\varphi$)}
        \end{alignat*}

    \paragraph{$\varphi$ is surjective:}
        Clearly the image $\varphi\bigbrk{\sbrk{1} \otimes \CM}$ covers the whole
        $\CM \big/ z_n \, \CM$.

    \paragraph{$\varphi$ is injective:}
        Due to the linearity of $\varphi$, it is enough to prove that if
        $\varphi\bigbrk{t} = 0$
        for some $t \in \DDD_n \big/ \brk{z_n \, \DDD_n} \otimes_{\DDD_n} \CM$
        ,
        then necessarily $t = 0$.
        Suppose that $t \not= 0$.
        Then there exists $m \in \CM$ such that $t \sim \sbrk{1} \otimes m$.
        Since $\varphi\bigbrk{\sbrk{1} \otimes m} = \sbrk{m} = 0$, we have
        $m \in z_n \, \CM$.
        Writing $m = z_n \, m'$, $m' \in \CM$, we have
        $\sbrk{1} \otimes \brk{z_n \, m'} \sim \sbrk{z_n} \otimes m' \sim 0 \otimes m'$,
        which means that $t = 0$.

        Thus we constructed an explicit isomorphism $\varphi$ between
        $\DDD_n \big/ \brk{z_n \, \DDD_n} \otimes_{\DDD_n} \CM$ and $\CM \big/ \brk{z_n
        \, \CM}$ and proved
        the identity~\eqref{eq:Dn-tensor-M}.
    \hfill$\blacksquare$
\end{example}

\begin{example}
    \rm
    Another example of tensor product is:
    \begin{equation}
        \CN=\CC[z_1,\ldots,z_n] \big/ \langle z_n \rangle
        \otimes_{\CC[z_1, \ldots,z_n]} \CM
        \>.
        \nonumber
    \end{equation}
    For $1 \leq i \leq n-1$
    we define the action of $\pd{i}$ on the product
    $\sbrk{f} \otimes m\>$ as
    $\pd{i} (\sbrk{f} \otimes m)
    \defas \sbrk{\pd{i} f} \otimes m
    + \sbrk{f} \otimes (\pd{i} m)\>$,
    where $\sbrk{f} \in \CC[z_1, \ldots, z_n] \big/ \langle z_n \rangle$ and
    $m \in \CM\>$.
    Multiplication by $h \in \CC[z_1, \ldots, z_{n-1}]$ is defined by
    $h \, (\sbrk{f} \otimes m) \defas \sbrk{h \, f} \otimes m\>$ \brk{similar to~\remref{rem:Dn-1-action}}.
    Then we can prove that $\CN$ is a left $\DDD_{n-1}$-module
    \begin{equation} \label{nteq:pullback}
     \CN \simeq \CM \big/ \brk{z_n \, \CM}
      \>.
    \end{equation}
    Indeed, the isomorphism $\varphi$ from $\CN$ to $\CM \big/ \brk{z_n \, \CM}$ is given by
    $\varphi\bigbrk{\sbrk{f} \otimes m} \defas \sbrk{f \, m}\>$.
    Note that for $\sbrk{f} = \sbrk{z_n \, h} \sim \sbrk{0}$ we have
    $\varphi\bigbrk{\sbrk{z_n \, h} \otimes m} = \sbrk{z_n \, h \, m} \sim
    \sbrk{0}\>$, which makes the map $\varphi$ is well-defined.
    It is left to the reader to prove that $\varphi$ is an isomorphism.
    \hfill$\blacksquare$
\end{example}

\begin{example}
    \rm
    \label{ex:tensors-3}
    Our final example of tensor product is
    $\CC[z][1/f] \otimes_{\CC[z]} \DDD_n$\ ,
    where $\CC[z]=\CC[z_1, \ldots, z_n]$ and $f$ is a polynomial in $z\>$.
    Since the left entry $\CC[z][1/f]=\CC[z_1, \ldots, z_n,1/f]$ can be viewed as
    a left $\DDD_n$-module, we can regard the tensor product
    $\CC[z][1/f] \otimes_{\CC[z]} \DDD_n$ as a left $\DDD_n$-module by
    $\pd{i} ( g \otimes \ell ) = (\pd{i}g) \otimes \ell + g \otimes (\pd{i}\ell)\>$,
    for $g \in \CC[z][1/f]$ and $\ell \in \DDD_n\>$.
    It is an exercise for the reader to describe the action of a general element
    of $\DDD_n$ explicitly.
    Furthermore, we can introduce the ring structure on
    $\CC[z][1/f] \otimes_{\CC[z]} \DDD_n$ by defining the product ``$\cdot$''
    between two elements as
    $(g_1 \otimes \ell_1) \cdot (g_2 \otimes \ell_2)
    = g_1 \otimes \ell_1 \, (g_2 \otimes \ell_2)\>$, where
    $g_i \in \CC[z][1/f]$ and $\ell_i \in \DDD_n\>$.
    Roughly speaking, $\CC[z][1/f] \otimes_{\CC[z]} \DDD_n$ is
    $\CC\langle z_1, \ldots, z_n, 1/f, \pd{1}, \ldots, \pd{n} \rangle\>$.
    \hfill$\blacksquare$
\end{example}

In \propref{prop:from_holonomic_to_zero_dim_ideal},
we construct a zero-dimensional $\cR$-module from a holonomic
$\cD_Y$-module by a tensor product.
The zero-dimensional one can be expressed as a quotient by a left ideal
as the right hand side of (\ref{eqn:A3}) below.
Let us sketch a proof of the isomorphism.
Take a left ideal $\cI$ in the Weyl algebra $\CD_n\>$, and let $\CM \defas \DDD_n
\big/ \cI\>$ be a left $\DDD_n$-module.
Let $\CR_n=\CC(z) \langle \pd{1}, \ldots, \pd{n}\rangle$
be the ring of differential operators with rational function coefficients,
where $\CC(z) \defas \CC(z_1, \ldots, z_n)$ is the field of rational functions.
The left $\CR_n$-modules $\CR_n \otimes_{\CD_n} \CM$ and
$\CR_n \big/ \brk{\CR_n \, \cI}$ are isomorphic as left $\CR_n$-modules.
Here the left action of $h \ni \CR_n$ on $f \otimes g \in \CR_n \otimes_{\CD_n} \CM$
is defined by $h \, f \otimes g\>$.
The isomorphism is given by
\begin{equation}
    \CR_n \otimes_{\CD_n} \CM \ni f \otimes g
    \mapsto f g \in \CR_n \big/ \brk{\CR_n \, \cI}
    \>,
    \label{eqn:A3}
\end{equation}
and we can easily check that it is an isomorphism of left $\CR_n$-modules
by noting that $\CR_n$ is an associative algebra.
Note that
$f_1 f_2 \otimes g = f_1 \otimes f_2 \, g$  in $\CR_n \otimes_{\CD_n} \CM$
when $f_2 \in \CD_n\>$.
We can show analogously that $\CR_n \otimes_{\CD_n} \CD_n^m \big/ \cJ$ is
isomorphic to $\CR_n^m \big/ \brk{\CR_n \, \cJ}$ as left $\CR_n$-modules, where
$\cJ$ is some left submodule of $\CD_n\>$.

\begin{remark}\rm \label{rmk:cyclic}
We referred to the algorithm of constructing a \emph{cyclic vector} in \secref{sec:preliminaries}.
It is a way to construct an ideal given a Pfaffian system.

In order to apply the cyclic vector to $\CR_n^m\big/ \brk{\CR_n \, \cJ}$, we first compute the Weyl closure of $\cJ$, and then apply the algorithm of \cite{leykin-cyclic} to the closure.
This procedure is exact, but it needs huge computational resources.
We present an easy but inexact method. For simplicity, we illustrate the case of $m=2$, $n=1$ ($\pd{}=\pd{1}$, $z=z_1$), and $\CR_1 \cJ$ is generated by
$$ (\pd{},0)-(a,b), \quad (0,\pd{})-(c,d), \quad a,b,c,d \in \CC(z). $$
These generators induce a system of differential equations for two unknown functions $u^1$ and $u^2$:
\begin{equation} \label{eqnut:sys}
 \pd{}\bullet u^1-au^1-bu^2=0, \quad \pd{}\bullet u^2-cu^1-du^2=0.
\end{equation}
Differentiating both equations w.r.t.~$z$, we obtain two more identities:
\begin{equation} \label{eqnut:sys2}
 u^1_{zz}-a_z u^1-au^1_z-b_zu^2-bu^2_z=0,
 u^2_{zz}-c_zu^1-cu^1_z-d_zu^2-du^2_z=0
\end{equation}
where ${u}_z$ and ${u}_{zz}$ denote the first and second derivatives w.r.t.~$z$.
We regard $u^1,u^2, u^1_z,u^2_z, u^1_{zz}$ and $ u^2_{zz}$ as independent variables in the polynomial ring $\CC(z)[u^1,u^2, u^1_z,u^2_z, u^1_{zz},u^2_{zz}]$
and eliminate $u^2, u^2_z, u^2_{zz}$ from \eqref{eqnut:sys} and \eqref{eqnut:sys2}.
Then we obtain an ODE $u^1_{zz} + c_1(z) u^1_z + c_2(z) u^1=0$ for the function $u^1$.
In some cases, the left ideal $\cI=\langle \pd{}^2+c_1\pd{}+c_2 \rangle$
in $\CR_1$ gives an isomorphism of
$\CR_1^2\big/ \brk{\CR_n \, \cJ}$ and $\CR_1/\cI$.
\end{remark}

\begin{remark}\rm\label{rmk:history}
As a historical concluding remark, let us rephrase the notion of a \emph{solution} to a system of differential equations in the language of homological algebra, which is a starting point of the theory of $\CD$-modules.
Let us denote by $\OO^{an}(U)$ the set of holomorphic functions%
\footnote{
    The superscript ``an'' stands for ``analytic''. On the other hand, $\OO^{alg}(U)$ is the set
    of rational functions that do not have poles on $U$.
    They are called regular functions on $U$.
    We omit the superscript ``alg'', that stands for ``algebraic'', in this paper.
}
on an open set $U\>$, and by
\begin{equation}
    \label{nteq:homSol}
    \Hom_{\CD_n}(\CM,\OO^{an}(U))\>
\end{equation}
the set of left $\CD_n$-morphisms from the left $\CD_n$-module $\CM$ to
$\OO^{an}(U)\>$.
The correspondence between a homomorphism
$\varphi \in \Hom_{\CD_n}(\CM,\OO^{an}(U))$ and a solution $f$ is given by
$f=\varphi\bigbrk{\sbrk{1}}\>$.
In fact, take $\CM \defas \DDD_n \big/ \cI$, and let $\ell$ be
an element of the left ideal $\cI\>$, so that $\sbrk{\ell} \sim
\sbrk{0}$ in $\CM\>$.
Since $\varphi$ is a left $\CD_{n}$-morphism, we have:
\begin{equation}
    0
    = \varphi\bigbrk{\sbrk{0}}
    = \varphi\bigbrk{\sbrk{\ell}}
    = \varphi\bigbrk{\ell \, \sbrk{1}}
    = \ell \bigcdot \varphi\bigbrk{\sbrk{1}}
    \>,
\end{equation}
which means that $\varphi\bigbrk{\sbrk{1}} \in \OO^{an}(U)$ is a solution to
the differential equation $\ell \bigcdot f = 0\>$.
Conversely, let $f$ be a solution of $\cI\>$, meaning that for any $\ell \in
\cI$ it satisfies $\ell \bigcdot f = 0\>$.
Then for and other element $m \in \CD_{n}\>$, the mapping
$m \mapsto m \bullet f \in \OO^{an}(U)$ defines a left $\CD_{n}$-morphism,
i.e. an element of $\Hom_{\CD_n}(\CM,\OO^{an}(U))\>$.
The derived functor of $\Hom$ gives cohomological solution spaces
in the theory of $\CD$-modules \cite{Hotta-Tanisaki-Takeuchi-2008}.
\end{remark}

\section{Normal form by Moser reduction}
\label{subsec:Moser_Reduction}

In \secref{subsec:Deligne_extension}, it was emphasised that a Pfaffian system must be in \emph{normal form} before the restriction protocol can proceed.
A given set of Pfaffian matrices $P_1, \ldots, P_m$ is in normal form w.r.t.~$z_1$ when $P_1$ satisfies two conditions:
\begin{enumerate}
    \item \emph{Logarithmic}: $P_1$ has a simple pole at $z_1 = 0$.
    \item \emph{Non-resonant}: The eigenvalues of the residue matrix $P_{1,-1}(z')$ have no integral difference, and $0$ is the unique integer eigenvalue.
\end{enumerate}
As explained later in this section, the non-resonance condition additionally implies that $P_2, \ldots, P_m$ are finite at $z_1 = 0$.
This is important, because in the restriction protocol of \secref{subsec:Deligne_extension} we use the matrices $P_2(0,z'), \ldots, P_m(0,z')$, where $z' = (z_2, \ldots, z_m)$.

Normal form can be obtained by following the Moser reduction scheme \cite{Moser}, which was later made algorithmic by Barkatou \cite{Barkatou1995rational} (see also \cite{Gerard-Ramis-1983}).
For CAS implementations, see e.g.~\texttt{ISOLDE} \cite{barkatou2013isolde} and \texttt{Fuchsia} \cite{Gituliar:2017vzm}.
For the reader's convenience, in this section we review Moser reduction, largely following \cite{Barkatou-Jaroschek-Maddah-2017}.

\subsection{Logarithmic \texorpdfstring{$P_1$}{}}

Let $\mId_r$ denote the $r$-dimensional identity matrix. The following theorem is a result of \cite{Moser, Barkatou1995rational}.

\begin{theorem}\label{thm:Moser}
    Let $P(z)$ be an $r\times r$ matrix, depending on a single variable $z$, with entries in the field of rational functions $\CC(z)$.
    Assume that a system of ODEs
    \begin{equation}\label{eq:regular_ode}
        \frac{d \MI}{dz}=P(z) \cdot \MI
    \end{equation}
    is regular singular at $z=0$.
    After a finite sequence of gauge transforms $P \to G[P] = G^{-1}(P \cdot G - \pd{z} G)$ with $G$ of the form
    \eq{
        G= \text{a constant matrix,}
        \quad
        \text{or}
        \quad
        G =
        \lrsbrk{
            \begin{array}{cc}
                z\mId_{r^\prime}&\mZero\\
                \mZero&\mId_{r-r^\prime}
            \end{array}
        } \, , \
        \text{for} \ r' \in \NN
        \, ,
    }
    the ODEs \eqref{eq:regular_ode} can be transformed into a system with only simple poles at $z=0$.
    Such a sequence of gauge transforms can be found algorithmically.
\end{theorem}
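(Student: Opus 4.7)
The plan is to follow the classical Moser reduction scheme, whose main tool is a numerical invariant of the singularity --- the \emph{Moser rank} --- which is strictly decreased by each admissible gauge transformation until the system reaches Poincar\'e rank zero (i.e., a simple pole). Write the Laurent expansion of $P(z)$ at $z=0$ as
\begin{equation}
    P(z) = z^{-p-1} A_0 + z^{-p} A_1 + z^{-p+1} A_2 + \cdots,
    \qquad A_0 \neq 0,
\end{equation}
so that $p \geq 0$ is the Poincar\'e rank, and simple-pole form corresponds to $p=0$. Define the Moser invariant
\begin{equation}
    m(P) \defas \max\bigbrc{0, \, p + \mathrm{rank}(A_0)/r}.
\end{equation}
Note that $m(P) \leq 1$ is equivalent to the system having at worst a simple pole. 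The values of $m(P)$ live in the discrete set $\brc{0, 1/r, 2/r, \ldots}$, so any strictly decreasing sequence of Moser invariants must terminate.

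The core technical step is to show that when $p \geq 1$ and the system is regular singular, one can construct a gauge transformation $G$ of the specified shape (a constant change of basis, optionally followed by a \emph{shearing} $\mathrm{diag}(z\mId_{r'}, \mId_{r-r'})$) such that $m(G[P]) < m(P)$. First I would use a constant invertible matrix to bring $A_0$ into block form adapted to its kernel, say $A_0 = \bigsmb{\begin{smallmatrix} 0 & 0 \\ \star & \star \end{smallmatrix}}$, so that the first $r - \mathrm{rank}(A_0)$ columns vanish. Then I would apply the shearing $G = \mathrm{diag}(z\mId_{r-\mathrm{rank}(A_0)}, \mId_{\mathrm{rank}(A_0)})$; a direct calculation on each block of $G^{-1}(PG - \pd{z}G)$ shows that the $1/z^{p+1}$ term either disappears (so $p$ drops) or persists but with a leading matrix of strictly smaller rank (so $\mathrm{rank}(A_0)$ drops while $p$ stays the same). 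Either outcome strictly decreases $m(P)$.

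The key obstruction, and the only place the regular-singular hypothesis enters, is that the above reduction requires the so-called \emph{Moser polynomial}
\begin{equation}
    \theta_P(\lambda) \defas z^{\mathrm{rank}(A_0)} \det\Bigsmb{\lambda \mId + z^{-1}A_0 + A_1}\Big|_{z=0}
\end{equation}
to have a root different from what it would have for an ``irreducible'' obstruction (Moser's criterion: reducibility is equivalent to $\theta_P$ not being of the form $c\lambda^r$). The hard part of the proof is precisely Moser's theorem that regular singularity forces this criterion to be satisfied whenever $m(P) > 1$: otherwise, by iterating the contrapositive, one would produce a formal fundamental solution with essential singularity, contradicting moderate growth along sectors. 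I would invoke this criterion as a black box, following the presentation in Barkatou-Jaroschek-Maddah and Barkatou's original paper.

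Finally, the algorithmic claim follows because each of the ingredients above is explicit: computing $p$, $A_0$, and $\mathrm{rank}(A_0)$ is linear algebra over $\Complex(z)$; finding a root of $\theta_P(\lambda)$ is root-finding in $\Complex[\lambda]$; and constructing the constant and shearing gauge transforms is immediate once the kernel decomposition of $A_0$ is in hand. Together with termination via the discrete invariant $m(P)$, this yields the desired finite sequence of gauge transformations.
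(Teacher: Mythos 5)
The paper does not prove this theorem; it is cited to Moser and Barkatou, and your proposal correctly aims to reconstruct that argument. Two of its load-bearing claims are, however, mis-stated. First, you have Moser's criterion backwards: for $m(P) > 1$, the Moser invariant can be strictly lowered by an admissible gauge transformation \emph{if and only if} $\theta_P(\lambda)$ vanishes identically, not ``if and only if $\theta_P$ is not of the form $c\lambda^r$.'' A quick sanity check: $P = z^{-2}\mId_r$ gives $\theta_P(\lambda) \equiv 1$, and the fundamental matrix $e^{-1/z}\mId_r$ has an essential singularity, so this system is genuinely Moser-irreducible, consistent with $\theta_P \not\equiv 0$. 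With the criterion inverted, the contrapositive step you invoke against regular singularity would not produce the conclusion.

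Second, the ``direct calculation on each block'' you gesture at does not actually show the shearing decreases $m(P)$. After a constant transformation placing $\ker A_0$ in the first $r - r_0$ coordinates (note also that you mix up rows and columns in your displayed block form) and applying $G = \mathrm{diag}(z\mId_{r-r_0}, \mId_{r_0})$, the new leading coefficient at order $z^{-p-1}$ picks up a block of $A_1$, namely its upper-right block; generically the resulting matrix still has rank $r_0$, and neither $p$ nor $\mathrm{rank}(A_0)$ drops. Moser's reduction works because additional \emph{constant} normalizations, acting on both $A_0$ and the relevant blocks of $A_1$, are applied before shearing, and the existence of those normalizations is exactly what $\theta_P \equiv 0$ encodes. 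So the Moser polynomial is not a side-condition verified after the shearing; it is what licenses the shearing, and a complete proof must interleave the block reductions with the normalization that $\theta_P$ controls. Black-boxing Moser's criterion is reasonable, but the calculation you sketch before invoking it would, as written, fail to set it up.
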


\begin{remark}\label{rem:Moser} \rm
    The gauge transformation matrices can be constructed without algebraic extensions.
    In other words, if the entries of $P(z)$ belong to $K(z)$ for some subfield $K\subset\C$, then $G$ can be taken to have coefficients in $K$.
    This is useful when the system \eqref{eq:regular_ode} contains additional parameters independent of $z$.
\end{remark}

The theorem above pertains to ODEs, but Pfaffian systems are PDEs in general.
Fortunately, we can still apply the theorem to the PDE case by treating one variable at a time.
Consider a Pfaffian system in $m$ variables: $\pd{i} \MI = P_i(z) \cdot \MI\>$
, for $i = 1, \ldots, m\>$.
Assume that the $P_i$ have higher-order poles along $z_1=0\>$.
Treat
\eq{
    \frac{\dd \MI}{\dd z_1} = P_1(z_1, z^\prime) \cdot \MI
}
as an ODE in $z_1$ with parameters $z^\prime = \brk{z_2, \ldots, z_m}\>$.
By \thmref{thm:Moser} and \remref{rem:Moser}, we can find a gauge
transformation matrix $G\>$ such that $G[P_i]$ only has
a simple pole along $z_1=0\>$.  We may thus assume that the Pfaffian matrices
take the form:
\eq{
    P_{1}(z) = \sum_{n=-1}^\infty P_{1,n}(z') \, z_1^n \, .
    \label{eq:expand_P_i(z)}
}
This solves the problem of finding a logarithmic set of Pfaffian matrices.

\subsection{Non-resonant \texorpdfstring{$P_{1,-1}$}{}}

Next we show how to cure the resonant spectrum of $P_{1,-1}(z')$ .
The first thing to note is that the eigenvalues of $P_{1,-1}(z^\prime)$ are independent of $z^\prime$ \cite[Theorem 12.1]{haraoka2020linear}.
Therefore, if $P_{1,-1}(z^\prime)$ is non-resonant at a point, then it is non-resonant everywhere.
Fix a generic value $z^\prime=\zvalue$, and set $P:=P_{1,-1}(\zvalue)$.
Suppose that $P$ is in Jordan normal form, namely in block-diagonal form with Jordan cells $J_1, \ldots, J_k$.
Let $r_i$ denote the size of each $J_i$.
Subdivide the matrix $P_1(z)$ into cells as
\begin{equation}
    P_1=
    \lrsbrk{
        \begin{array}{c|c|c|c}
        (P_1)_{11}&(P_1)_{12}&\dots&(P_1)_{1k}\\
        \hline
        (P_1)_{21}&(P_1)_{22}& & \vdots\\
        \hline
        \vdots&&\ddots& \\
        \hline
        (P_1)_{k1}&\dots& &(P_1)_{kk}
        \end{array}
    }
\end{equation}
and construct the gauge transformation matrix
\eq{
    G=
    \lrsbrk{
        \begin{array}{c|c|c|c}
        z_1 \mId_{r_1} & \mZero &\dots& \mZero \\
        \hline
        \mZero & \mId_{r_2}& & \vdots\\
        \hline
        \vdots&&\ddots& \\
        \hline
        \mZero &\dots& &\mId_{r_k}
    \end{array}
    } \, .
}
By assumption, $P_1$ has a simple pole along $z_1 = 0$, and the same will hold true for $G[P_1]$.
A short calculation shows that the residue matrix of $G[P_1]$ evaluated at $z' = \zvalue$ is of the form
\begin{equation}
    \label{gauge_transformation_non_resonance}
    G[P] =
    \lrsbrk{
        \begin{array}{c|c|c|c|c}
        J_1-\mId_{r_1}&(P_{1,0})_{12}&(P_{1,0})_{13}&\cdots&(P_{1,0})_{1k}\\
        \hline
        \mZero &J_2& \mZero &\dots & \vdots\\
        \hline
        \vdots&&\ddots& &\vdots\\
        \hline
        \mZero &\dots& &&J_{k}
        \end{array}
    } \, .
\end{equation}
We have thereby shifted the eigenvalue corresponding to the Jordan block $J_1$ by $1$.
Repeating this procedure sufficiently many times for each $J_i$, we will render $P$ non-resonant at $z' = \zvalue$, and therefore non-resonant everywhere.

Let us now show that non-resonance implies finiteness of $P_2, \ldots, P_m$.
The following lemma is well-known.
\begin{lemma}
    \label{lem:eigenvalue}
    Suppose that $A,B \in M_r(\CC)$ are $r\times r$ matrices with their entries in $\CC$.
    Let $\lambda(A), \lambda(B)$ denote the sets of eigenvalues for $A,B$.
    Consider the linear map
    \begin{equation}
        T_{A,B}:M_r(\CC)\ni X\mapsto A \cdot X-X\cdot B\in M_r(\CC) \, .
    \end{equation}
    Then the set of eigenvalues of $T_{A,B}$ equals the set of eigenvalue differences $\lambda(A) - \lambda(B)$.
\end{lemma}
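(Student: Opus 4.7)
The plan is to establish the lemma by reducing $A$ and $B$ to upper-triangular form via an appropriate similarity of the operator $T_{A,B}$, whereby $T_{A,B}$ acquires a triangular matrix representation whose diagonal entries are exactly the differences $\alpha_i - \beta_j$.

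First, I would prove that the spectrum of $T_{A,B}$ is unchanged if $A$ and $B$ are independently conjugated. For invertible $P, Q \in M_r(\CC)$, the linear isomorphism $S: M_r(\CC) \to M_r(\CC)$ defined by $S(X) = P X Q^{-1}$ satisfies $S \circ T_{A,B} \circ S^{-1} = T_{P A P^{-1},\, Q B Q^{-1}}$, as a one-line computation shows. Hence $T_{A,B}$ and $T_{PAP^{-1},\,QBQ^{-1}}$ are similar operators on $M_r(\CC)$ and share the same eigenvalues.

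Second, by the Schur decomposition I may choose $P$ and $Q$ so that $A^\prime := P A P^{-1}$ and $B^\prime := Q B Q^{-1}$ are both upper-triangular with diagonal entries $(\alpha_1,\dots,\alpha_r) = \lambda(A)$ and $(\beta_1,\dots,\beta_r) = \lambda(B)$ respectively. By the previous step, it suffices to compute the spectrum of $T_{A^\prime, B^\prime}$. Using the standard basis of matrix units $\{E_{ij}\}$ of $M_r(\CC)$, the upper-triangularity of $A^\prime$ and $B^\prime$ yields
\[
    T_{A^\prime, B^\prime}(E_{ij}) = (\alpha_i - \beta_j)\, E_{ij} + \sum_{k < i} A^\prime_{ki}\, E_{kj} - \sum_{l > j} B^\prime_{jl}\, E_{il}.
\]
Ordering the basis $\{E_{ij}\}$ so that whenever $k < i$ the unit $E_{kj}$ comes strictly after $E_{ij}$, and whenever $l > j$ the unit $E_{il}$ comes strictly after $E_{ij}$ — for instance, lexicographically first by decreasing row index and then by increasing column index — the matrix of $T_{A^\prime, B^\prime}$ in this basis becomes (lower) triangular with diagonal entries $\{\alpha_i - \beta_j \mid 1 \leq i,j \leq r\}$. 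The eigenvalues of $T_{A^\prime, B^\prime}$, and hence of $T_{A,B}$, therefore form exactly the set $\lambda(A) - \lambda(B)$.

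The main subtlety is verifying that the chosen ordering indeed triangularizes the representing matrix: the two "correction" sums in the display above must both land in positions below the diagonal. This is straightforward bookkeeping but the correct ordering is not unique, and a careless choice would mix the correction terms with earlier basis vectors and fail to yield triangularity.
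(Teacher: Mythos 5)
The paper does not actually supply a proof of \lemref{lem:eigenvalue}; it merely states that the result is ``well-known'' and moves on. So there is no paper proof to compare against, and your argument should be judged on its own. It is correct. The three pillars all check out: the conjugation identity $S \circ T_{A,B} \circ S^{-1} = T_{PAP^{-1},\,QBQ^{-1}}$ with $S(X) = PXQ^{-1}$ is verified by direct computation; the Schur decomposition lets you assume both matrices upper-triangular; and the displayed formula $T_{A',B'}(E_{ij}) = (\alpha_i - \beta_j)E_{ij} + \sum_{k<i}A'_{ki}E_{kj} - \sum_{l>j}B'_{jl}E_{il}$ is right. Your proposed ordering (decreasing row index, then increasing column index) does make both correction sums land strictly after $E_{ij}$, and your comment acknowledging that this is a genuine consistency condition that a careless ordering would violate is exactly the kind of care the argument needs. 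This is the standard triangularization proof of the Sylvester spectrum result. A slightly slicker alternative worth knowing: vectorizing gives $T_{A,B} \cong I \otimes A - B\tr \otimes I$, and since $I \otimes A$ and $B\tr \otimes I$ commute, they are simultaneously triangularizable, whence the spectrum of the difference is the set of differences of spectra. That route avoids the basis-ordering bookkeeping at the cost of invoking commutativity of Kronecker factors; your version is more elementary and just as valid.
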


\noindent
It follows from the integrability condition \eqref{eqn:integrability_condition} and \eqref{eq:expand_P_i(z)} that
\eq{
    \label{r_eqn:IC}
    \lrsbrk{
       P_{1,-1}(z'), \, P_{i,n}(z')
    }
    =
    n \, P_{i,n}(z')
    \quad , \quad
    i = 2, \ldots m
    \quad , \quad
    n < 0 \, .
}
We can rewrite this in terms of the map $T_{A,B}$ as
\eq{
    T_{P_{1, -1}, \, P_{1, -1}}(P_{i, n}) = n \, P_{i, n}(z') \, ,
}
meaning that $P_{i,n}$ is an eigenvector
of $T_{P_{1,-1}, \, P_{1,-1}}$ with eigenvalue $n$.
On the other hand, \lemref{lem:eigenvalue} shows that any eigenvalue of $T_{P_{1,-1}, \, P_{1,-1}}$ must be of the form $\lambda_a - \lambda_b$, where $\lambda_{a,b}$ are eigenvalues of $P_{1,-1}$.
If $P_{1,-1}$ is non-resonant, we cannot have that $\lambda_a - \lambda_b = n$.
Consequently, $P_{2,n}(z') = 0$ for $n < 0$.

Summarizing the previous two sections, we have a
\begin{theorem}\label{thm:2}
The algorithm for transforming a Pfaffian system into normal form w.r.t.~$z_1$ is as follows:
\begin{enumerate}
    \item Apply the Moser reduction scheme from \thmref{thm:Moser} over the field $\CC(z')$ to ensure that $P_1$ has a simple pole at $z_1 = 0$.
    \item Apply gauge transformations of the form \eqref{gauge_transformation_non_resonance} until $P_{1,-1}$ becomes non-resonant.
\end{enumerate}
\end{theorem}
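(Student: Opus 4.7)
The plan is to verify the two steps of the algorithm independently, and then to check that the second step does not destroy the property established by the first.

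For Step 1, I would invoke Theorem~B.1 directly. Treating $z_2,\ldots,z_m$ as parameters, the Pfaffian system $\pd{1}\MI = P_1(z_1,z')\MI$ is a system of ODEs in $z_1$ with coefficients in $\CC(z')(z_1)$. Since the original Pfaffian system arises from a regular holonomic $\cD$-module \brk{by the standing assumption}, the induced ODE in $z_1$ is regular singular at $z_1=0$, so Theorem~B.1 applies verbatim, and Remark~B.1 ensures that the gauge transformations can be chosen with entries in $\CC(z')$, i.e.\ as matrices whose $z'$-dependence is rational. After finitely many such transformations, $P_1$ has only a simple pole along $z_1=0$, giving the expansion \eqref{eq:expand_P_i(z)}.

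For Step 2, I would first check that the spectrum of $P_{1,-1}(z')$ is independent of $z'$. This follows from the integrability condition \eqref{r_eqn:IC} at order $n=0$, which yields $\pd{i}P_{1,-1} = [P_{i,0},P_{1,-1}]$ for $i\geq 2$; by standard arguments this implies that $P_{1,-1}(z')$ is conjugate to $P_{1,-1}(\zvalue)$ along paths, so the characteristic polynomial is constant in $z'$. With the spectrum fixed, I would then analyse the effect of the gauge transformation matrix $G$ built from the Jordan block $J_1$ of $P_{1,-1}(\zvalue)$. A block computation shows that $G[P_1]$ again has a simple pole along $z_1=0$, and the residue matrix at $z'=\zvalue$ takes the form \eqref{gauge_transformation_non_resonance}, in which the eigenvalues of $J_1$ are each shifted by $+1$ while all other eigenvalues are untouched. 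Repeating this procedure \brk{choosing, at each stage, a Jordan block whose eigenvalue is minimal among those differing from another eigenvalue by a positive integer} strictly reduces the sum of integer differences among eigenvalues of $P_{1,-1}$, so the process terminates after finitely many steps with a non-resonant residue matrix having $0$ as the unique integer eigenvalue.

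Finally, I would invoke \lemref{lem:eigenvalue} together with \eqref{r_eqn:IC} to conclude that for $i\geq 2$ and $n<0$, the matrices $P_{i,n}$ are eigenvectors of $T_{P_{1,-1},P_{1,-1}}$ with eigenvalue $n$. Since the spectrum of this map equals the difference set of $\spec{P_{1,-1}}$, non-resonance rules out the eigenvalue $n\in\ZZ_{<0}$, forcing $P_{i,n}=0$ for all such $n$. Hence $P_2,\ldots,P_m$ are holomorphic at $z_1=0$, and the output of the two-step algorithm satisfies all three defining conditions of normal form. The main obstacle I anticipate is making the termination argument for Step 2 fully rigorous: one has to choose the order of Jordan blocks carefully so that a genuine well-ordering of configurations decreases, since a naive choice could in principle shift an eigenvalue into resonance with another that had already been fixed. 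A clean way to do this is to always act on the block with the smallest integer eigenvalue among those participating in a resonance, using the fact that $\spec{P_{1,-1}}$ is a finite multiset of complex numbers and the integer-difference graph has bounded diameter.
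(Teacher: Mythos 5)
Your proposal takes essentially the same route as the paper: Step~1 via \thmref{thm:Moser} and \remref{rem:Moser}, Step~2 via repeated gauge transformations of the form \eqref{gauge_transformation_non_resonance} applied to Jordan blocks, and the final verification that $P_2,\dots,P_m$ are holomorphic at $z_1=0$ via \lemref{lem:eigenvalue} together with \eqref{r_eqn:IC}. The main place where you go beyond the paper's exposition is in explicitly supplying a termination argument for Step~2 (the paper only says to repeat ``sufficiently many times''), and in reproving the constancy of $\spec{P_{1,-1}(z')}$ directly from \eqref{r_eqn:IC} at order $n=0$ rather than citing \cite{haraoka2020linear}.

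One refinement of your termination argument is still needed. Acting on a \emph{single} Jordan block carrying the minimal eigenvalue of a resonant $\ZZ$-coset need not decrease your proposed invariant: if two distinct Jordan blocks share that minimal eigenvalue, shifting only one of them creates a new unit resonance with the other, and the sum of positive integer differences can stay constant or grow. The fix is to include \emph{all} Jordan blocks carrying the minimal eigenvalue of the chosen $\ZZ$-coset in the $z_1\mId$ part of the gauge matrix in \eqref{gauge_transformation_non_resonance}; then the maximal difference within that coset strictly decreases, giving termination. (Also a small phrasing issue: the eigenvalues participating in a resonance need not themselves be integers — only their pairwise differences are — so one should speak of the minimal eigenvalue in a $\ZZ$-coset rather than ``the smallest integer eigenvalue''.)
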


\noindent This theorem can be sequentially applied to several variables, in case one seeks to restrict more than one variable.

\subsection{Jordan normal form without field extensions}\label{subsec:jordan_normal_form}
We have already noted that Moser reduction can be performed using only linear algebra without field extensions.
Step $1$ of \thmref{thm:2} is therefore efficient to implement.
Step $2$, on the other hand, apparently requires one to solve polynomials equations, as this is the standard way of computing Jordan normal forms.
Here we argue that algebraic extensions can \emph{also} be avoided in step $2$, wherefore \thmref{thm:2} does not require them at all.
Fix a subfield $K\subset\C$.
The purpose of this subsection is to prove the following lemma.

\begin{lemma}
There exists an invertible matrix $Q\in K^{r\times r}$ such that $Q^{-1}PQ$ is block diagonalized as
\eq{
Q^{-1}PQ
=
\lrsbrk{
        \begin{array}{c|c|c|c}
        B_1 & \mZero & \dots & \mZero \\
        \hline
        \mZero & B_2 & & \vdots\\
        \hline
        \vdots & & \ddots & \\
        \hline
        \mZero & \dots & & B_k
        \end{array}
    }
}
    where each $B_i$ is non-resonant.
    We write \emph{$\spec{B_i}$} for the set of eigenvalues of the block $B_i$.
    Then, for any pair of indices $i,j$, either
    \emph{$(\spec{B_i}+\ZZ) \cap \spec{B_j} = \varnothing$}
    or there exists a unique integer $m$ such that
    \emph{$\spec{B_i} + m = \spec{B_j}$}.
\end{lemma}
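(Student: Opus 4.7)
The plan is to use the primary decomposition of $P$ with respect to the factorization of its minimal polynomial over $K$, and then to exploit a Galois-theoretic argument to verify the claimed non-resonance and shift property.

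First, let $\mu_P(x) \in K[x]$ be the minimal polynomial of $P$. Factor it into distinct monic irreducibles over $K$: $\mu_P(x) = \prod_{i=1}^{k} p_i(x)^{m_i}$. Since the $p_i^{m_i}$ are pairwise coprime in $K[x]$, Bezout's identity yields polynomials $q_i \in K[x]$ with $\sum_i q_i(x)\prod_{j \neq i} p_j(x)^{m_j} = 1$, so the projectors $\pi_i := q_i(P)\prod_{j\neq i}p_j(P)^{m_j}$ lie in $K[P]$ and provide a $P$-invariant direct sum decomposition $K^r = \bigoplus_i V_i$ with $V_i = \ker p_i(P)^{m_i}$. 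Choosing $K$-bases for each $V_i$ and assembling them into the columns of $Q \in K^{r\times r}$ gives $Q^{-1}PQ = \fun{diag}{B_1, \dots, B_k}$ with $B_i := P|_{V_i}$. Importantly, the entire construction requires only linear algebra over $K$ (including the factorization of a univariate polynomial over $K$, which is algorithmic when $K = \mathbb{Q}$ or a number field).

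The second step identifies the spectra. By construction $\spec{B_i}$ equals the set of roots of $p_i$ in $\bar K$. The main input needed here is the following Galois observation, which I expect to be the only non-routine part: if $p \in K[x]$ is irreducible and $\alpha, \alpha'$ are distinct roots of $p$ in $\bar K$ with $\alpha' - \alpha = n \in \ZZ$, then there exists $\sigma \in \fun{Gal}{\bar K/K}$ with $\sigma(\alpha) = \alpha'$. Iterating $\sigma$ produces the infinite sequence of conjugates $\alpha + kn$ for $k \in \ZZ_{\geq 0}$, contradicting $\fun{deg}p < \infty$ unless $n = 0$. Hence $B_i$ is non-resonant.

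The third step compares distinct blocks. Suppose $(\spec{B_i} + \ZZ) \cap \spec{B_j} \neq \varnothing$, so that $\alpha + m = \beta$ for some $\alpha \in \spec{B_i}$, $\beta \in \spec{B_j}$, $m \in \ZZ$. Applying any $\sigma \in \fun{Gal}{\bar K/K}$ preserves integer shifts, so $\sigma(\alpha) + m = \sigma(\beta)$. Since the Galois group acts transitively on the roots of each irreducible $p_i, p_j$, this yields the inclusions $\spec{B_i} + m \subseteq \spec{B_j}$ and $\spec{B_j} - m \subseteq \spec{B_i}$, hence the equality $\spec{B_i} + m = \spec{B_j}$. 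Uniqueness of $m$ follows because $\spec{B_i} + (m - m') = \spec{B_i}$ for any other valid $m'$ would, by the same iteration argument as above, force $\spec{B_i}$ to be infinite. This completes the verification of both clauses of the lemma.
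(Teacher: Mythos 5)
Your proof is correct, but it takes a genuinely different route from the paper's. For the block decomposition, the paper invokes the full structure theorem for finitely generated modules over a PID, computing the Smith normal form of $t\mId_r - P$, passing through the invariant factors $d_i(t)$, and then applying the Chinese Remainder Theorem to split into cyclic pieces $K[t]/d_{ij}(t)^{m_{ij}}$; this yields an explicit matrix $Q$ from the $L(t)^{-1}$ columns and gives the finest possible block structure, which is useful for the Moser-reduction application in \appref{subsec:jordan_normal_form}. You instead use the primary decomposition $K^r = \bigoplus_i \ker p_i(P)^{m_i}$ obtained directly from Bezout projectors in $K[P]$ — coarser (each of your $V_i$ may bundle several cyclic blocks) but equally valid for the statement of the lemma, and arguably more economical. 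For the spectral properties, the paper proves non-resonance of an irreducible polynomial's root set via an elementary comparison of the coefficient of $\alpha^{N-1}$ in $f(\alpha+m) - f(\alpha)$, and handles the inter-block shift by appealing (somewhat implicitly, framed as an algorithmic GCD computation) to uniqueness of the minimal polynomial, i.e. $f_1(t-m)$ and $f_2(t)$ are both monic, irreducible, and share a root. You replace both steps by Galois-theoretic arguments: transitivity of $\fun{Gal}{\bar K/K}$ on roots of an irreducible polynomial plus the fact that $\sigma$ fixes $\ZZ$ gives the orbit $\{\alpha + kn\}_{k \geq 0}$, whose finiteness forces $n = 0$; and the same transitivity propagates a single integer-shift relation $\alpha + m = \beta$ to the equality $\spec{B_i} + m = \spec{B_j}$. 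The Galois route is cleaner and makes the mechanism transparent (it is crucially charactersistic-zero, hence separable, which is automatic here since $K \subseteq \CC$); the paper's route is more self-contained and avoids any appeal to field theory beyond minimal polynomials, which fits the explicitly algorithmic tone of that appendix. Your uniqueness argument for $m$ (finiteness of $\spec{B_i}$ under $\spec{B_i} + (m-m') = \spec{B_i}$) is a sound cardinality argument independent of Galois theory. All steps check out.
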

Once this lemma is proved, we can conclude that the gauge transformations involving Jordan normal forms from step $2$ of \thmref{thm:2} can be algorithmically chosen from $GL_r(K[z_1^\pm])$, which is free of field extensions.
Indeed, each Jordan block $J_i$ of $P$ is replaced by the block $B_i$.
Let us begin with a simple lemma.
\begin{lemma}
Let $f(t)\in K[t]$ be an irreducible polynomial.
If $\alpha$ is a root of $f(t)=0$, no integral shift of $\alpha$ is a root.
\end{lemma}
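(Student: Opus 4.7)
The plan is to argue by contradiction, exploiting the minimality property of an irreducible polynomial together with a degree count.

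Suppose that $f(\alpha) = 0$ and also $f(\alpha + n) = 0$ for some nonzero integer $n$. Since $f \in K[t]$ is irreducible and vanishes on $\alpha$, it is (up to a scalar) the minimal polynomial of $\alpha$ over $K$; we may take $f$ to be monic without loss of generality. Consider the shifted polynomial $g(t) := f(t+n) \in K[t]$. By construction $g(\alpha) = f(\alpha+n) = 0$, so $g$ is annihilated by $\alpha$. By the defining property of the minimal polynomial, $f(t)$ must divide $g(t)$ in $K[t]$. But $\deg g = \deg f = d$, and both are monic of degree $d$, hence $g(t) = f(t)$ as polynomials, i.e.\ $f(t+n) = f(t)$ identically.

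The identity $f(t+n) = f(t)$ forces a contradiction from a simple coefficient comparison. Write $f(t) = t^d + a_{d-1} t^{d-1} + \cdots + a_0$. Expanding $f(t+n)$ and matching the coefficient of $t^{d-1}$ gives $a_{d-1} + d\,n = a_{d-1}$, so $d\,n = 0$. Since $d \geq 1$ and $n \neq 0$, this is impossible. (Alternatively, iterating $f(t) = f(t+n)$ would produce infinitely many roots $\alpha, \alpha+n, \alpha+2n, \ldots$ of a polynomial of finite degree.)

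Hence no nontrivial integral shift of a root of an irreducible $f \in K[t]$ can again be a root. The only step requiring any care is the justification that $f$ is, up to a scalar in $K$, the minimal polynomial of $\alpha$ over $K$; this is the standard fact that an irreducible polynomial vanishing on $\alpha$ generates the annihilator ideal of $\alpha$ in $K[t]$, and it is the only place where irreducibility (as opposed to, say, squarefreeness) is used.
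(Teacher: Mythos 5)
Your proof is correct and essentially the same as the paper's: both rest on the fact that irreducibility of $f$ makes it (a scalar multiple of) the minimal polynomial of $\alpha$, and both exploit that the coefficient of $t^{d-1}$ acquires a nonzero shift $d\,n$. The paper phrases the contradiction directly by exhibiting the nonzero lower-degree polynomial $g(t)=f(t+n)-f(t)$ annihilating $\alpha$, whereas you first upgrade to the cleaner polynomial identity $f(t+n)=f(t)$ via divisibility and degree, then compare coefficients; this is a minor reorganization rather than a different route.
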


\begin{proof}
Set $f(t)=a_0+\cdots+a_{N-1}t^{N-1}+t^N$ and fix an integer $m\neq 0$ such that $f(\alpha+m)=0$.
We prove that the existence of $m$ leads to a contradiction.
A swift computation shows that
\eq{
    0 =
    f(\alpha + m) =
    f(\alpha) + (\text{terms of order } \leq N-1 \text{ in } \alpha) :=
    f(\alpha) + g(\alpha) =
    g(\alpha) \, ,
}
where in the last step we used that $\alpha$ is a root of $f$.
The coefficient of $\alpha^{N-1}$ in $g(\alpha)$ is $Nm$, which is clearly non-zero.
This means that $g(\alpha)=0$, i.e.~$\alpha$ is also a root of $g(t)$.
But this is a contradiction, because $f(t)$ is the minimal polynomial\footnote{
    The minimal polynomial of $\alpha$ over $K$ is the polynomial with the
    smallest degree, whose roots contain $\alpha$.
}.
\end{proof}

Suppose we are given a pair of irreducible polynomials $f_1,f_2\in K[t]$.
If they share a root $\alpha$ modulo $\ZZ$, that means that $f_1(t+s) = f_2(t)$ has an integral solution as a system of equations in $s$.
This is seen by expanding both sides and collecting powers of $t$, leading to a polynomial system for $s$ of the form $g_0(s) = \ldots = g_{N-1}(s) = 0$.
The GCD $g(s)$ is used whether we have an integral solution or not.

Let us recall a well-known fact from linear algebra (see e.g. \cite[Chapter 12]{Dummit-Foote-PID-Jordan}).
Given an $r \times r$ matrix $P$ with entries in $K$, we can equip $K^r$ with the structure of a $K[t]$-module by letting $t$ act on $K^r$ via left multiplication by $P$.
Consider an exact sequence
\begin{equation}
    K[t]^r
    \quad \overset{t\mId_r-P}{\longrightarrow} \quad
    K[t]^r
    \quad \longrightarrow \quad
    K^r
    \quad \longrightarrow \quad
    0 \, .
\end{equation}
The last morphism sends $f(t)\boldsymbol{e}_i$ to $f(P) \cdot \boldsymbol{e}_i$, where $f(t) \in K[t]$ and $\boldsymbol{e}_i$ is the standard unit vector.
The fundamental theorem of principal ideal domains shows that $t\mId_r-P$ is transformed into a canonical form
\eq{
    L(t)(t\mId_r-P)R(t)
    =
    \diag{1, \, \ldots, \, 1, \, d_1(t), \, \ldots, \, d_p(t)}
}
for some invertible polynomial matrices $L(t),R(t)\in GL(r,K[t])$ with $d_1(t)|\cdots|d_p(t)$.
According to the Chinese remainder theorem, we thus have an isomorphism (which can be made algorithmically explicit)
\eq{
    K^r
    \overset{\varphi}{\leftarrow} \
    \bigoplus_{i=1}^p K[t] / d_i(t)
    \ \simeq \
    \bigoplus_{i=1}^p \bigoplus_{j=1}^{q_i} K[t] / d_{ij}(t)^{m_{ij}}
}
where $d_i(t)=\prod_{j=1}^{q_i}d_{ij}(t)^{m_{ij}}$ is the irreducible decomposition.
The morphism $\varphi$ sends any element $(f_i(t))_{i=1}^p\in \bigoplus_{i=1}^p K[t] / d_i(t)$ to $ \sum_{i=1}^pf(t)v_i(t)|_{t\mapsto P}$ where $v_i(t)$ is the $(r-p+i)$-th column vector of $L(t)^{-1}$ and the subscript $t\mapsto P$ stands for the operation of replacing an expression $t^av$ $(v\in K^r)$ by $P^av$.
Each subspace $K[t]/d_{ij}(t)^{m_{ij}}$ is invariant under multiplication by $t$, i.e.~left multiplication by $P$.
This means that we can find a basis of $K^r$ such that the left multiplication by $P$ is block-diagonalized into a direct sum of multiplication by $t$ on $K[t] / d_{ij}(t)^{m_{ij}}$.
It is not difficult to check that the $d_{ij}(t)$ share a common root modulo $\ZZ$, as explained in the previous paragraph.
Finally, any basis of each block $K[t] / d_{ij}(t)^{m_{ij}}$ is sent to a set of column vectors $B_{ij}$.
The matrix $Q$ is obtained by aligning these  $B_{ij}$ to form a square matrix.

\begin{remark}\rm
    In the examples of Pfaffian systems for Feynman integrals discussed in this paper, any eigenvalue of the residue matrix is an integral linear combination of $\varepsilon$ and $\varepsilon\delta$.
    Since the Euler integral \eqref{f_Gamma(z)} admits a meromorphic continuation as in  \cite[Theorem 2.5]{berkesch2014euler}, we expect this to be true in general.
\end{remark}

\section{Improvements on restriction algorithms}
\label{appendix:improve_all}

Here we remark on subtle improvements to steps in the restriction algorithms presented in the main body of this text.

\subsection{\texorpdfstring{$\cD$}{}-module-level restriction} \label{appendix:improve}

The $b$-function played a central role in the $\cD$-module restriction protocol of \secref{sec:NTsec1}.
However, it is well-known that the $b$-function is computationally expensive to obtain.
When we can find a logarithmic connection for $\CD'/\cI$ via another method
(recall the definition of $\cD'$ below \remref{rem:monomial-basis}), such as the Macaulay matrix method, we can utilize the following proposition to compute the $b$-function more easily:
\begin{proposition}
Let $S=\sbrk{s_1=1, s_2, \ldots, s_r}\tr$ be a set of standard monomials for $\cI$
such that $\ord s_i = 0$.
We denote by ${\overline \cI}=\CR'\cI \cap \CD'$ the Weyl closure of $\cI$.
Assume that the Pfaffian system with respect to $S$ has the form
\begin{eqnarray*}
 \pd{x} S &=& \left(\frac{P_{-1}}{x} + P_0 + x P_1 + \ldots \right) \cdot S \quad \mod \ \CR'\cI \\
 \pd{y} S &=& \left( Q_0 + x Q_1 + \ldots \right) \cdot S \quad \mod \ \CR'\cI \\
\end{eqnarray*}
Then the eigenvalues of $P_{-1}$ contain the roots of the $b$-function
of ${\overline \cI}$
for the weight $w=(1,0)$.
\end{proposition}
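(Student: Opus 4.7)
The plan is to show that the characteristic polynomial $p(u) := \det(uI - P_{-1})$ of the residue matrix is divisible by $b(u)$ in $\CC[u]$; the roots of $b(u)$ are then a subset of the eigenvalues of $P_{-1}$. Note first that the integrability condition forces $P_{-1}(y)$ to have constant spectrum (the argument is the one used in \appref{subsec:Moser_Reduction}), so $p(u)\in\CC[u]$.

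\textbf{Step 1 (iterating the Pfaffian via Cayley--Hamilton).} Set $A(x,y) := xP_x(x,y)$, a matrix of rational functions that is regular at $x=0$ with $A(0,y)=P_{-1}(y)$. The assumption translates into $\theta_x\,[s_i]=\sum_j A_{ij}(x,y)\,[s_j]$ in $\CR'/\CR'\overline{\cI}$. A straightforward induction, using $\theta_x(f)\big|_{x=0}=0$ for any $f\in\CC(x,y)$ regular at $x=0$, gives
\[
    \theta_x^n\,[s_i]=\sum_j A^{(n)}_{ij}(x,y)\,[s_j]\quad\text{with}\quad A^{(n)}(0,y)=P_{-1}(y)^n.
\]
Writing $p(u)=\sum_k c_k u^k$, Cayley--Hamilton gives $\sum_k c_k A^{(k)}(0,y)=p(P_{-1})=0$, hence $\sum_k c_k A^{(k)}(x,y)=x\,V(x,y)$ for some matrix $V$ regular at $x=0$. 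Taking $i=1$ (so that $s_1=1$),
\[
    p(\theta_x)\ -\ x\sum_{j=1}^r V_{1j}(x,y)\,s_j\ \in\ \CR'\overline{\cI}.
\]

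\textbf{Step 2 (passage to $\overline{\cI}$ and initial form).} Choose $q(x,y)\in\CC[x,y]$ with $q(0,y)\not\equiv 0$ that clears the denominators of all $V_{1j}$ (possible as each $V_{1j}$ is regular at $x=0$). By the definition of the Weyl closure, $\CR'\overline{\cI}\cap\CD'=\overline{\cI}$, so
\[
    \ell\ :=\ q\cdot p(\theta_x)\ -\ x\sum_{j=1}^r (q V_{1j})\,s_j\ \in\ \overline{\cI}.
\]
For $w=(1,0)$ we have $\ord_{(-w,w)}(x)=-1$, $\ord_{(-w,w)}(s_j)=0$ by hypothesis, and $\ord_{(-w,w)}\le 0$ on any polynomial in $(x,y)$. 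Hence the second summand has $(-w,w)$-order $\le -1$ while the first has order $0$, and
\[
    \mathrm{in}_{(-w,w)}(\ell)\ =\ q(0,y)\cdot p(\theta_x)\ \in\ \mathrm{in}_{(-w,w)}\overline{\cI}.
\]

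\textbf{Step 3 (extracting $b(u)\mid p(u)$).} The crux will be to pass from $q(0,y)\,p(\theta_x)\in\mathrm{in}_{(-w,w)}\overline{\cI}$ to $p(\theta_x)\in\mathrm{in}_{(-w,w)}\overline{\cI}$, which would immediately give $p(\theta_x)\in \CC[\theta_x]\cap\mathrm{in}_{(-w,w)}\overline{\cI}=(b(\theta_x))$, i.e.\ $b(u)\mid p(u)$. Since $\theta_x$ commutes with $\partial_y$, left-multiplication yields
\[
    \partial_y^m\cdot\brk{q(0,y)\,p(\theta_x)}\ =\ \sum_{k=0}^{m}\binom{m}{k}q^{(m-k)}(0,y)\,p(\theta_x)\,\partial_y^k\ \in\ \mathrm{in}_{(-w,w)}\overline{\cI}
\]
for every $m=0,1,\ldots, d:=\deg_y q(0,y)$. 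These $d+1$ relations form a lower-triangular system in the unknowns $p(\theta_x)\,\partial_y^k$ with diagonal entries $q(0,y)$; the top entry at $m=d$ has leading constant $q^{(d)}(0,y)\in\CC^\times$. A standard triangular elimination (exploiting that the initial ideal is a left $\CD'$-ideal, so we may add any $\CD'$-multiples of the lower $E_m$'s) isolates a nonzero $\CC$-multiple of $p(\theta_x)$ inside $\mathrm{in}_{(-w,w)}\overline{\cI}$. This gives $b(u)\mid p(u)$ in $\CC[u]$, completing the proof. The technical heart is the triangular elimination of $y$-polynomial coefficients while preserving the left-ideal condition, and this is where I expect the most careful bookkeeping to be required.
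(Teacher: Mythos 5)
Your Steps 1--5 are a correct and more careful rendition of the paper's Cayley--Hamilton argument, but Step 6 contains a genuine gap, and the gap is precisely the point you flag as ``the technical heart.'' You need to pass from $q(0,y)\,p(\theta_x)\in\mathrm{in}_{(-w,w)}\overline{\cI}$ to $p(\theta_x)\in\mathrm{in}_{(-w,w)}\overline{\cI}$, and the sketched ``triangular elimination'' cannot do this. Since $\mathrm{in}_{(-w,w)}\overline{\cI}$ is only a \emph{left} ideal of $\CD'$, the allowed operations are left $\CD'$-combinations of the elements $E_m:=\pd{y}^m\bigbrk{q(0,y)\,p(\theta_x)}=\sum_{k=0}^m\binom{m}{k}\,q^{(m-k)}(0,y)\,p(\theta_x)\,\pd{y}^k$. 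Suppose you try to write $c\,p(\theta_x)=\sum_m a_m E_m$ with $c\in\CC^\times$. If $a_m\in\CC[y,\theta_x]$, comparing the coefficients of $p(\theta_x)\pd{y}^k$ for $k=d,d-1,\ldots,1$ forces $a_d=\cdots=a_1=0$, and then the $k=0$ coefficient demands $a_0\,q(0,y)=c$, i.e.\ you must invert $q(0,y)$ inside $\CD'$, which is impossible once $\deg_y q(0,y)\geq 1$. Allowing $a_m$ to involve $\pd{y}$ does not rescue the argument: already for $d=1$ you would need to subtract $q(0,y)\,p(\theta_x)\pd{y}$ from $E_1$, and this operator equals $E_0\cdot\pd{y}$, which lies only in the \emph{right} ideal generated by $E_0$, not the left one. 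So there is no purely formal elimination that isolates a $\CC^\times$-multiple of $p(\theta_x)$.

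The paper's proof avoids the $q$-factor altogether: it asserts the congruence $\theta_x S \equiv P_{-1}\cdot S$ componentwise modulo $F_{-1}+\overline{\cI}$, iterates by Cayley--Hamilton inside $\CD'$, and concludes $\varphi(\theta_x)\in F_{-1}+\overline{\cI}$ directly, from which $\varphi(\theta_x)\in\CC[\theta_x]\cap\mathrm{in}_{(-w,w)}\overline{\cI}=(b(\theta_x))$ follows by reading off initial forms. Your route of first working in $\CR'/\CR'\overline{\cI}$ and then clearing $y$-denominators is what produces the extra $q(0,y)$, and the removal of that prefactor is exactly the step you have not established. To salvage your proof you would need either to show that the correction matrix $V(x,y)$ of your Step 2 has no $y$-poles on the relevant stratum (so $q$ can be taken constant in $y$), or to prove the saturation statement $q(0,y)\,p(\theta_x)\in\mathrm{in}_{(-w,w)}\overline{\cI}\Rightarrow p(\theta_x)\in\mathrm{in}_{(-w,w)}\overline{\cI}$ as a separate lemma; neither is a bookkeeping exercise.
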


\noindent
{\it Proof}\/.
We note that $x \pd{x} S = P_{-1} \cdot S$ modulo $F_{-1}+{\overline \cI}$.
Here $F_{-1}$ is the filtration defined in (\ref{eq:KM-filtration}).
Let $\varphi(s)$ be the characteristic polynomial of $P_{-1}$.
By the Cayley-Hamilton theorem, we have $\varphi(P_{-1})=0$.
Hence, we have $\varphi(x\pd{x}) S = 0$ modulo $F_{-1}+{\overline \cI}$.
In particular, we have $\varphi(x\pd{x})=0$ modulo $F_{-1}+{\overline \cI}$
because $s_1=1$.
Therefore, the $b$-function is a factor of $\varphi(s)$.
\qed

\bigbreak

If we have Pfaffian matrices for a logarithmic connection
with the singular locus $x=0$,
we can accelerate \algref{alg:alg1} with help from the residue matrix $P_{-1}$ as follows.
Let $S$ be a column vector consisting of standard monomials
of a Gr\"obner basis
of $\cI$ in $\CR'=\CC(x,y)\langle \pd{x}, \pd{y} \rangle$.
Suppose that
\begin{equation} \label{eq:pfx}
x \pd{x} S - (P_{-1} + P_0 x + \ldots) \cdot S \equiv 0 \quad \mathrm{mod} \ {\overline \cI}
\end{equation}
where $x P = P_{-1}+P_0 x + \ldots$ is the Pfaffian matrix in the $x$ direction multiplied by $x$.
Taking the limit $x=0$ in \eqref{eq:pfx}, we have
$P_{-1} \cdot S \equiv 0 \quad \mathrm{mod}\ {\overline \cI}+x\CD'$,
which gives a relation among standard monomials modulo ${\overline \cI}+x\CD'$.

We can obtain higher-order relations by applying $\pd{x}^j$
to \eqref{eq:pfx} and taking the limit $x=0$.
For example, applying $\pd{x}$ in conjunction with the relation $\pd{x}S \equiv P \cdot S$,
\begin{equation}
    \pd{x}S + x \pd{x}^2 S \equiv
     \big(P_{-1} + O(x)\big) \cdot S +
     \big(P_{-1}+P_0 x + O(x^2)\big) \cdot x^{-1} \big(P_{-1} + P_0 x + O(x^2)\big) \cdot S \, .
\end{equation}
When $P_{-1} \cdot S \equiv 0$, the limit $x=0$ yields
\begin{equation}
\pd{x}S \equiv
 P_{-1} \cdot S + (P_{-1} \cdot P_0 + P_{0} \cdot P_{-1}) \cdot S \equiv 0 \quad \mathrm{mod}\ {\overline \cI} + x \CD' \, .
\end{equation}

Let $\PP$ be the set of the relations coming from $P_{-1} \cdot S \equiv 0$ and higher-order obtained via the Pfaffian equation \eqref{eq:pfx}.
\begin{algorithm}[{\rm Rational restriction to $x=0$ using the residue matrix $P_{-1}$}]\rm \ \label{alg:alg2} \\
    \underline{Input}: Generators $\{f_1, \ldots, f_\mu\}$ of a holonomic ideal $\cI$ in $\CD'$.
    The relations $\PP$.
    Holonomic rank $r$ of the restriction to $x=0$ of $\cI$.
    A non-negative integer $\gamma$ such that $\gamma \geq \max(s_0+1,s_1+1)$. \\
    \underline{Output}: Generators of a left submodule $\cJ \cap \CD^{\gamma}$ of $\CD^{\gamma}$ such that
    $\CR^{\gamma}/\CR(\cJ\cap \CD^{\gamma}) = \CR \otimes_\CD \CD'/(\cI+x \CD')$.
    \begin{algorithmic}[1]
        \State $w=(1,0)$
        \State $k = \gamma-1$
        \Repeat
        \State  $\cJ
            = \CD \cdot \Bigbrc{
                v_k\, \bigbrk{\, :\pd{x}^j f_i:{\big|_{x=0}}}
                \Bigm|
                \ord_{(-w,w)} (\pd{x}^j f_i) \leq m, 1 \leq i \leq N, j \in \NN_0
        }$ \\
        \hspace{0.9cm} $\cup \ \bigbrc{ v_k(p) \bigm| p \in \PP, \ord_{(-w,w)} (p) \leq k }$
        \State $k:=k+1$
        \Until{$\mathrm{rank}\bigbrk{\CD^{\gamma}/\cJ \cap \CD^{\gamma}} = r$}
        \State \Return $\cJ$
    \end{algorithmic}
\end{algorithm}

\begin{theorem}
\algref{alg:alg2} stops and gives the correct answer.
\end{theorem}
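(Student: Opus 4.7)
The plan is to reduce the statement to \thmref{th:th3}, which establishes correctness and termination for \algref{alg:alg1}, by observing that $\PP$ contributes only \emph{additional} generators to the submodule $\cJ$, all of which are valid elements of the Weyl closure plus $x \CD'$.

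The first step is to verify that every $p \in \PP$ belongs to $\overline{\cI} + x \CD'$, where $\overline{\cI} = (\CR' \cI) \cap \CD'$ is the Weyl closure of $\cI$. By construction, the elements of $\PP$ are obtained by applying powers of $\pd{x}$ to the Pfaffian identity
\eq{
    x \pd{x} S - (P_{-1} + P_0 x + \ldots) \cdot S \equiv 0 \quad \mathrm{mod}\ \overline{\cI}
    \nonumber
}
and evaluating at $x = 0$, using the Pfaffian relation $\pd{x} S \equiv P \cdot S$ inductively to eliminate derivatives of $S$. Since each step preserves congruence modulo $\overline{\cI}$ and the evaluation at $x=0$ amounts to reducing modulo $x \CD'$, the resulting relation lies in $\overline{\cI} + x \CD'$.

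Next I would establish correctness via \propref{prop:prop1}, which gives the isomorphism
\eq{
    \CR \otimes_{\CD} \CD'/(\cI + x \CD')
    \simeq
    \CR \otimes_{\CD} \CD'/(\overline{\cI} + x \CD')\>.
    \nonumber
}
Thus the rational restriction is unchanged when $\cI$ is replaced by $\overline{\cI}$. Every generator adjoined to $\cJ$ in step 4 of \algref{alg:alg2}, both those coming from the $f_i$ and those coming from $\PP$, lies in $\overline{\cI} + x \CD'$. Consequently one has an inclusion $\CR (\cJ \cap \CD^{\gamma}) \subseteq \CR \otimes_{\CD} \CD'/(\overline{\cI} + x \CD')$ in $\CR^\gamma$, so the holonomic rank of $\CD^\gamma/(\cJ \cap \CD^\gamma)$ is bounded below by $r$. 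Once the rank equals $r$, this inclusion must be an equality, so the output is the correct rational restriction.

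For termination, at every fixed $k$ the submodule $\cJ$ produced by \algref{alg:alg2} \emph{contains} the submodule produced by \algref{alg:alg1}, since the $f_i$-derived generators are identical and $\PP$ only adds further relations. By \thmref{th:th3}, \algref{alg:alg1} terminates at some finite $k$; hence the rank condition in the \textbf{Until} clause of \algref{alg:alg2} is satisfied no later than that value of $k$. The main obstacle, and the step requiring the most care, is the first one: one must bookkeep the denominators in $x$ that appear when $\pd{x}^j$ is applied to \eqref{eq:pfx} and then show that, after multiplying by a suitable power of $x$ and substituting $x=0$, the resulting operator lands in the \emph{integral} Weyl closure $\overline{\cI}$ and not merely in $\CR' \cI$. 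This is the technical heart of the proof and the place where the non-resonance assumption on $P_{-1}$ (inherited from the logarithmic connection) is implicitly used.
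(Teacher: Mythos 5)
Your proposal takes essentially the same route as the paper's own proof: you establish that the elements of $\PP$ lie in $\overline{\cI}+x\CD'$, invoke \propref{prop:prop1} to pass from $\cI$ to the Weyl closure $\overline{\cI}$, and reduce termination to \thmref{th:th3} via the observation that \algref{alg:alg2} adjoins a superset of the generators used by \algref{alg:alg1} at each $k$. The only stylistic wrinkle is the line asserting an inclusion of the submodule $\CR(\cJ\cap\CD^\gamma)$ ``in'' the quotient $\CR\otimes_\CD \CD'/(\overline{\cI}+x\CD')$; what you mean — that $\CR(\cJ\cap\CD^\gamma)$ sits inside the true denominator submodule, so the quotient has $\CC(y)$-dimension $\geq r$, with equality forcing the output to be correct — is right, but the sentence as written compares a submodule to a quotient module. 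As for the concern you flag at the end: you are right that the membership $\PP\subset\overline{\cI}+x\CD'$ is where the real bookkeeping lives, and the paper is equally terse about it; however, the mechanism is not non-resonance of $P_{-1}$ but rather the fact that clearing denominators in each row of the logarithmic Pfaffian system already lands you in $\overline{\cI}$, after which the derived relations in $\PP$ are obtained by reducing modulo the previously established members of $\PP$ together with $x\CD'$. Non-resonance is not invoked in the paper's proof.
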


\noindent
{\it Proof}\/.
Rows of the Pfaffian system belong to the Weyl closure ${\overline \cI}=(\CR' \cI)\cap \CD'$.
Then the elements of $\PP$ belong to ${\overline \cI} + x \CD'$,
and this gives an expression for $\CR \otimes_\CD (\CD'/(\cI+x\CD')$
by \propref{prop:prop1}.
Combining this fact with the proof of \thmref{th:th3},
we reach the desired conclusion.
\qed
\bigbreak

\begin{example}\rm \label{ex:triangle-1m}
To illustrate the difference between \algref{alg:alg1} and \algref{alg:alg2}, we consider the Feynman integral for the one-loop triangle with one internal mass (for comparison, see the example from \secref{sec:gkz_1L_1M_triangle}).
The Lee-Pomeransky polynomial takes the form
\eq{
    g(z;x) &=
    z_1 x_1 + z_2 x_2 + z_3 x_3 + z_4 x_1^2+z_5 x_1 x_2 + z_6 x_1 x_3 \\
    z_1&=z_2=z_3=z_4=z_5=1
    \quad \text{and} \quad
    z_6 \text{ generic} .
    \label{eq:triangle_LP_improv}
}
The associated GKZ system has the data
\eq{
    A=\lrsbrk{
        \begin{array}{cccccc}
            1&  1&  1&  1&  1&  1 \\
            1& \mzero& \mzero&  2&  1&  1 \\
            \mzero&  1& \mzero& \mzero&  1& \mzero \\
            \mzero& \mzero&  1& \mzero& \mzero&  1 \\
        \end{array}
    }
    \quad , \quad
    \beta = \sbrk{b_1, b_2, b_3, b_4} \, .
}
The GKZ case with generic $z_1,\ldots,z_6$ is of holonomic rank $3$.
The Feynman integral, subject to \eqref{eq:triangle_LP_improv}, has rank $2$.
We proceed to compute this drop in rank via a $\cD$-module restriction.

To begin, choose the simplex $\sigma = \sbrk{1,2,3,4}$.
The remaining variables are $z_5$ and $z_6$, and we seek the restriction $z_5 = 1$.
Shifting $z_5 \to z_5 + 1$, we can alternatively restrict $z_5 = 0$.
The generators of the GKZ system then read (see \cite[Appendix A]{Chestnov:2022alh} for how to rewrite generators given a simplex $\sigma$)
\begin{eqnarray*}
f_1&=&    (  {z}_{5}  - {z}_{6}+ 1)  \underline{{\partial}_5 \partial_6 }+  {b}_{4}  {\partial}_5 -  {b}_{3}  \partial_6 \\[5pt]
f_2&=&      \big[   (  - {z}_{6}+ 1)  {z}_{5}  - {z}_{6}+ 1\big]  \underline{{\partial}_5 \partial_6 }+  {z}_{6}( 1 - {z}_{6})  \partial_6^{ 2} +  (   {b}_{4}  {z}_{5}+ {b}_{4})  {\partial}_5 \\
 & & +  \big[   (  - {b}_{1}+  {b}_{2}+  {b}_{3}+   2  {b}_{4}- 1)  {z}_{6}+   2  {b}_{1}  - {b}_{2}  -  2  {b}_{3}  -  2  {b}_{4}+ 1\big]  \partial_6  \\
 & & -  {b}_{4}  (    {b}_{3}- {b}_{1}+ {b}_{2}+ {b}_{4}) \\[5pt]
f_3&=&      -z_5(  {z}_{5} + 1)  \underline{{\partial}_5^{ 2} }-   {z}_{6}  {z}_{5}  {\partial}_5 \partial_6 +  \big[   (  - {b}_{1}+  {b}_{2}+   2  {b}_{3}+  {b}_{4}- 1)  {z}_{5}+  {b}_{1}- {b}_{4}\big]  {\partial}_5 +   {b}_{3}  {z}_{6}  \partial_6 \\ & &
  -  {b}_{3}  (    {b}_{3}- {b}_{1}+ {b}_{2}+ {b}_{4})
\end{eqnarray*}
where we underlined the leading terms according to the weight $w = (1,0)$.
By the Macaulay matrix method, we find Pfaffian matrices $P_5$ and $P_6$ written in the basis
$S = \sbrk{\pd{5}, \pd{6}, 1}\tr$.
These matrices have denominators
$  {z}_{5}  (  {z}_{5}+ 1)  (   {z}_{6}- {z}_{5}- 1) $
and
$ {z}_{6}  (  {z}_{6}- 1)  (   {z}_{6}- {z}_{5}- 1)$
respectively.
Since the Pfaffian system is logarithmic along $z_5=0$,
we may apply \algref{alg:alg2} with
$x=z_5$ and $y=z_6$.
The residue matrix at $x=0$ is
\eq{
    P_{-1}=\lrsbrk{
        \begin{array}{ccc}
            {b}_{1}- {b}_{4}&    {b}_{3}   {y}&    {b}_{3}  (    {b}_{1}- {b}_{2}- {b}_{3}- {b}_{4}) \\
            \mzero& \mzero& \mzero \\
            \mzero& \mzero& \mzero \\
        \end{array}
    }.
}
In $(x,y)$ variables, the basis reads $S = \sbrk{\pd{y}, \pd{x}, 1}\tr$.
The only non-zero element of $P_{-1} \cdot S$ is thus
\begin{equation} \label{eq:pp0}
 (  {b}_{1}- {b}_{4})  {\partial}_x +   {b}_{3}  {y}  {\partial}_y +     {b}_{3} ( {b}_{1}  -   {b}_{2}  -  {b}_{3} -  {b}_{4}  ).
\end{equation}
Next we execute \algref{alg:alg2} with $k=1, \, \gamma=2$ and the target holonomic rank $r=2$.
The operator \eqref{eq:pp0} is included in $\cJ$.
The POT Gr\"obner basis for $\cJ$ in $\CD^2$ is the set
\begin{eqnarray*}
 &\{&  [0, \   y(1  -  {y})   \partial_{{y}}^{ 2} +  (   (  - {b}_{1}+  {b}_{2}+  {b}_{3}+   2  {b}_{4}- 1)  {y}+   2  {b}_{1}  - {b}_{2}  - {b}_{3}  -  2  {b}_{4}+ 1)  \partial_{{y}} \\
 & & \ \quad +   {b}_{4} ({b}_{1}  -  {b}_{2}  -  {b}_{3}-  {b}_{4}) ]\tr , \\
 & &  [{b}_{1}- {b}_{4}, \ {b}_{3}  {y}  \partial_{{y}}
                         +  {b}_{3}({b}_{1}  -  {b}_{2}  -  {b}_{3} -  {b}_{4} )]\tr \ \}.
\end{eqnarray*}
This is a rank $2$ system when $b_1-b_4 \not= 0$, wherefore the terminating condition for the algorithm is met.

Incidentally, the $b$-function (the indicial polynomial) along $x=0$ is
$  {s} (    {s}- {b}_{1} + {b}_{4} - 1) $\asir{
{\tt import("nk\_restriction.rr"); nk\_restriction.generic\_bfct(Ideal,[z5,z6],[dz5,dz6],[1,0] | params=[b1,b2,b3,b4]); }
gives the $b$-function. }.
When $s=1$, it is equal to $-(b_1-b_4)$, which appears in the second element
of the Gr\"obner basis.
    \hfill$\blacksquare$
\end{example}

\subsection{Pfaffian-level restriction}
Here we collect technical remarks on the computational side of the
restriction method based on equations~\eqref{eq:M-def, eq:gauge-transform}
of~\secref{subsec:Deligne_extension}.
\subsubsection{Details on the \texorpdfstring{$M$}{}-matrix}
\label{sssec:M-matrix}
We start with a matrix-based argument that motivates the bottom-left
$\mZero$-block in the gauge-transformed Pfaffian
matrix~\eqref{eq:gauge-transform}.  Following the row-decomposition
of~\eqref{eq:M-def}, let us focus on the first $r^\prime$ columns of $M^{-1}$:
\vspace{-.8cm}
\eq{
    M^{-1} = \lrsbrk{
        \begin{array}{c|c}
            \> \mtop{V} \>
            &
            \> \mbot{\star} \>
        \end{array}
    }
    = \vcenter{\hbox{
        \includegraphicsbox{figures/full-mat-brace2}
    }}
    \, .
    \label{eq:V-def}
}
The column block $V$ satisfies the two identities
\eq{
    \mId\subsm{r \times r}
    =
    M \cdot M^{-1}
    =
    \lrsbrk{
        \begin{array}{c}
            \mtop{B} \\
            \chline
            \mbot{R}
        \end{array}
    }
    \cdot
    \lrsbrk{
        \begin{array}{c|c}
            \> \mtop{V} \>
            &
            \> \mbot{\star} \>
        \end{array}
    }
    =
            \lrsbrk{
                \begin{array}{c|c}
                    \> \mtop{B} \cdot \mtop{V} \>
                    &
                    \> \star \>
                    \\\hline
                    \> \mbot{R} \cdot \mtop{V} \>
                    &
                    \> \star \>
                \end{array}
            }
    \implies
    \begin{cases}
        B \cdot V = \mId\subsm{r^\prime \times r^\prime}
        \>,
        \\
        R \cdot V = \mZero\subsm{\brk{r - r^\prime} \times r^\prime}
        \>.
    \end{cases}
    \label{eq:V-system}
}

Next, note that since the matrix $R$ forms a basis for the row-span of
$P_{1}\brk{0, z^\prime}$, we can represent it as a product
$R = T \cdot P_{1}\brk{0, z^\prime}\>$, where $T$ is some $\brk{r - r^\prime} \times r$ matrix
\brk{of row operations}. If all non-zero rows of $P_{1}\brk{0, z^\prime}$
are already independent, we may choose $T$ as a binary matrix, for
example
\vspace{-.8cm}
\eq{
    R
    = T \cdot P_{1}\brk{0, z^\prime}
    =
    \lrsbrk{
        \begin{array}{ccccc}
            1 & \mzero & \mzero & \mzero & \mzero
            \\
            \mzero & 1 & \mzero & \mzero & \mzero
            \\
            \mzero & \mzero & \mzero & 1 & \mzero
        \end{array}
    }
    \cdot
    \vcenter{\hbox{
        \includegraphicsbox{figures/square-mat}
    }}
    =
    \vcenter{\hbox{
        \includegraphicsbox{figures/R-mat-brace}
    }}
    \>,
}
but in general $T$ might be more complicated and depend on $z^\prime$.
We plug this decomposition into the LHS of~\eqref{eq:gauge-transform} and focus
on its bottom block
\eq{
    &\bigbrk{
        \pd{i} R
        + R \cdot P_i\brk{0, z^\prime}
    } \cdot M^{-1}
    \nonumber\\
    &\quad=
    \bigbrk{
        \brk{\pd{i} T} \cdot P_{1}\brk{0, z^\prime}
            + T \cdot \sbrk{P_{i}\brk{0, z^\prime},\, P_{1}\brk{0, z^\prime}}
            + T \cdot P_{1}\brk{0, z^\prime} \cdot P_{i}\brk{0, z^\prime}
    } \cdot M^{-1}
    \nonumber\\
    &\quad=
    \bigbrk{\brk{\pd{i} T} + T \cdot P_{i}\brk{0, z^\prime}}
    \cdot
    \bigbrk{P_{1}\brk{0, z^\prime} \cdot M^{-1}}
    \>.
    \label{eq:bottom-block}
}
At this stage we recall that the second condition of~\eqref{eq:V-system}
implies
\eq{
    R \cdot V = \mZero
    \Longleftrightarrow
    T \cdot P_{1}\brk{0, z^\prime} \cdot V = \mZero
    \implies
    P_{1}\brk{0, z^\prime} \cdot V = \mZero
    \>,
}
so that the second factor in~\eqref{eq:bottom-block} indeed has a $\mZero$-block
\eq{
    P_{1}\brk{0, z^\prime} \cdot M^{-1}
    =
    P_{1}\brk{0, z^\prime} \cdot \lrsbrk{
        \begin{array}{c|c}
            \> V \>
            &
            \> \star \>
        \end{array}
    }
    =
    \lrsbrk{
        \begin{array}{c|c}
            \> \mZero \>
            &
            \> \star \>
        \end{array}
    }
    \>,
}
which leads to the $\mZero$-block in the bottom-left corner
of~\eqref{eq:gauge-transform}.

\subsubsection{An algorithmic choice of the \texorpdfstring{$B$}{}-matrix}
\label{sssec:choice-B-matrix}
Using the notation of~\secref{subsec:Deligne_extension}, it is easy to see that
one possible choice of the $r^\prime \times r$ basis matrix $B$ appearing in~\eqref{eq:M-def} is
simply the nullspace of the $r \times r$ residue matrix $P_1\brk{0, z^\prime}$
\begin{align}
    B
    \defas \nullSpace{P_1\brk{0, z^\prime}}
    = \nullSpace{R}
    \> ,
\end{align}
where $R \defas \rowReduce{P_1\brk{0, z^\prime}}$ is the matrix of independent
rows of $P_1\brk{0, z^\prime}$.
Now, recall that the restricted Pfaffian system is
actually an integrable connection on a space given as a quotient by the rows of $R$
\brk{see~\propref{prop:logarithmic_connection_as_D_module}}.
Hence, we are free to add any linear combination of rows from the $R$-matrix to simplify our choice of the
basis matrix $B$ without changing the restricted Pfaffian system.
In practice, we can do such a simplification by row reducing the auxiliary
block matrix
\begin{align}
    \RowReduce{\lrsbrk{
        \begin{array}{c|c}
            \mId_{r^\prime}   & B
            \\
            \chline
            \mZero_{\brk{r - r^\prime} \times r^\prime} & R
        \end{array}
    }}
\end{align}
and then extracting the top right $r^\prime \times r$ block.

\subsubsection{Inversion of the \texorpdfstring{$M$}{}-matrix}
\label{sssec:inv-M-matrix}
To find the restricted Pfaffian $Q_i$ in \eqref{eq:gauge-transform} it is
enough to only partially invert the $M$-matrix~\eqref{eq:M-def}.
Indeed, using the column decomposition~\eqref{eq:V-def}
to find $Q_i$ we only need to determine the $V$-block
from the system~\eqref{eq:V-system} in two steps.
First find \brk{any basis of} the right nullspace of $R$
\eq{
    \widetilde{V} \defas \nullSpace{R} \, ,
}
and then rotate it from the right to fix the normalization
\eq{
    V \defas \widetilde{V} \cdot \brk{B \cdot \widetilde{V}}^{-1}
    \ .
    \label{eq:V-def}
}
This second step will require inversion of a smaller $r^\prime \times r^\prime$
matrix $\brk{B \cdot \widetilde{V}}$, which is easier than inverting
the full $r \times r$ matrix $M$.

With the column block~\eqref{eq:V-def}, the Pfaffian $Q_i$ can be extracted from the full restriction gauge transformation as follows:
\eq{
    \bigbrk{
        \partial_i B
        + B \cdot P_i\brk{0, z^\prime}
    } \cdot V = Q_i
    \ .
}

\section{Proofs} \label{sec:proofs}

\subsection{\texorpdfstring{\thmref{thm:Gauss_Manin}}{}}
\label{app:Gauss_Manin}

Setting $\mathcal{O}(X)\defas \C[z_1,\dots,z_m,x_1^{\pm 1},\dots,x_n^{\pm 1},\frac{1}{g}]$, we define an action of $\cD_Y$ on
$f=f(z,x) \in \mathcal{O}(X)$ by
\begin{align}
        \frac{\partial}{\partial z_i}\bullet f &=
     \frac{\partial f}{\partial z_i} + \beta_0
     \left(
     {1 \over g(z;x)}
     {\partial g(z;x) \over \partial z_i}
     \right)
     f
    \ ,
     \\
    \frac{\partial}{\partial x_i}\bullet f &=
    \frac{\partial f}{\partial x_i} +
    \beta_0
    \left(
     {1 \over g(z;x)}
     {\partial g(z;x) \over \partial x_i}
    \right)f
    - \beta_i \frac{f}{x_i}
    \ .
\end{align}
The symbol
$\mathcal{O}(X)\, g^{\beta_0}x_1^{-\beta_1}\dots x_n^{-\beta_n}$ denotes the left $\cD_Y$-module $\mathcal{O}(X)$ endowed with this action.
By definition, $M_A(\beta;Y)$ is the $0$-th cohomology group of the direct image $\cD_Y$-module $\int_{\pi}\mathcal{O}(X)\, g^{\beta_0}x_1^{-\beta_1}\dots x_n^{-\beta_n}$.
The first claim is an immediate consequence of the preservation of regular holonomicity \cite[Theorem 6.1.5]{Hotta-Tanisaki-Takeuchi-2008}.

Let us prove the second claim.
Since $M_A(\beta;Y)$ is holonomic, it is an integrable connection on a generic open subset of $Y$.
We fix a generic $z\in Y$.
Then the restriction of $M_A(\beta;Y)$ to $z$ is the $n$-th twisted cohomology group on $\pi^{-1}(z)$ with its twist given by the integrand of \eqref{f_Gamma(z)}.
In view of \cite[Theorem 2.6]{Agostini:2022cgv}, its dimension is given by $|\chi(\pi^{-1}(z))|$.

\subsection{\texorpdfstring{\thmref{thm:Schur}}{}}
\label{app:Schur}

    Let $\varphi\in{\rm End}_{\cR}(\cM)$.
    Writing $\overline{\cK}$ for the algebraic closure of $\cK$, the action of $\cR$ on $\cK$ extends to that on $\overline{\cK}$.
    We can regard $\varphi$ as an element of ${\rm End}_{\overline{\cK}\otimes_{\cK}\cR}(\overline{\cM})$ where we set $\overline{\cM}:=\overline{\cK}\otimes_{\cK}\cM$.
    We first prove that any eigenvalue of $\varphi$ is in $\C$.
    Let us take an eigenvector $v\in \overline{\cM}$ of $\varphi$.
    Since $\overline{\cM}$ is finite-dimensional over $\overline{\cK}$, there exists a minimal $k$ such that ${\rm Span}_{\overline{\cK}}\{ v,\partial_1v,\ldots,\partial_1^{k}v\}={\rm Span}_{\overline{\cK}}\{ v,\partial_1v,\ldots\}$.
    It follows that there exist $a_0,a_1,\dots,a_k\in\overline{\cK}$ such that $\partial_1^{k}v=\sum_{j=0}^ka_j\partial_1^jv$.
    It is straightforward to see that
    \begin{equation}\label{eqn:compare1}
        \varphi(\partial_1^kv)=\sum_{j=0}^k\binom{k}{j}(\partial_1^{k-j}\alpha)\partial_1^jv=\sum_{j=0}^{k-1}\binom{k}{j}(\partial_1^{k-j}\alpha)\partial_1^jv+\sum_{j=0}^{k-1}a_j\alpha(\partial_1^jv)
    \end{equation}
    holds true.
    On the other hand, the fact that $\varphi$ commutes with the action of $\partial_1$ implies the following identity:
    \begin{equation}\label{eqn:compare2}
        \varphi(\partial_1^kv)=\sum_{j=0}^{k-1}a_j\varphi(\partial_1^jv)=\sum_{j=0}^{k-1}a_j\sum_{\ell=0}^j\binom{j}{\ell}(\partial_1^{j-\ell}\alpha)(\partial_1^\ell v).
    \end{equation}
    Comparing the coefficients of $\partial_1^{k-1}v$ in \eqref{eqn:compare1} and \eqref{eqn:compare2}, we obtain that $\partial_1\alpha=0$.
    Similarly, we can prove that $\partial_i\alpha=0$ for all $i=1,\dots,n$.
    Since such a function $\alpha$ must be a constant function, it must belong to $\C$.
    Now, suppose that $\dim_{\C}{\rm End}_{\cR}(\cM)\geq 2$ and take a morphism $\varphi\in {\rm End}_{\cR}(\cM)$ which is linearly independent from ${\rm id}_{\cM}$ over $\C$.
    For any eigenvalue $\alpha\in\C$ of $\varphi$, $\alpha \cdot {\rm id}_{\cM}-\varphi$ has a non-trivial kernel, which is a non-trivial $\cR$-submodule of $\cM$. This is a contradiction.

\subsection{\texorpdfstring{\thmref{thm:semi_simple}}{}}
\label{app:semi_simple}
We set $\cM:=\cR\otimes_{\cD_Y}M_A(\beta;Y)$.
Let $Y_0$ be a non-empty Zariski open subset of $Y$.
Any $\cD_{Y_0}$-submodule $N$ of $M_A(\beta;Y)|_{Y_0}$ gives rise to an $\cR$-submodule $\cM$.
Conversely, any $\cR$-submodule $\cN$ of $\cM$ is of the form $\cN=\cR\otimes_{\cD_{Y_0}}N$ for some non-empty Zariski open subset $Y_0$ of $Y$ and a $\cD_{Y_0}$-submodule $N$ of $M_A(\beta;Y)|_{Y_0}$.
Shrinking $Y_0$ appropriately if necessary, we may assume that both $M_A(\beta;Y)|_{Y_0}$ and $N$ are free as $\cO_{Y_0}$-modules, i.e., they are integrable connections.
Now, the sheaf $\cL$ of flat sections of $M_A(\beta;Y)|_{Y_0}$ admits an invariant Hermitian form.
This can be shown by adopting the argument of \cite[p. 559-560]{goto2020homology} since our exponent $\beta$ is real and generic.
Then, the sheaf $\cL_{N}$ of flat sections of $N$ admits an orthogonal complement $\cL_{N^\prime}$ which is the sheaf of flat sections of a $\cD_{Y_0}$-submodule $N^\prime$ of $M_A(\beta;Y)|_{Y_0}$ in view of \cite[Th\'eor\`eme 5.9]{Deligne}.
The direct sum decomposition $M_A(\beta;Y)|_{Y_0}=N\oplus N^\prime$ gives rise to the corresponding decomposition $\cM=\cN \oplus\cN^\prime$, which proves that $\cM$ is a semi-simple $\cR$-module.

\subsection{\texorpdfstring{\thmref{thm:Schur2}}{}}
\label{app:Schur2}
It is enough to show that $\dim_{\CC}{\rm End}_{\cR}(\cM)>1$ if $\cM$ is not simple.
By the assumption that $\cM$ is semi-simple, there exist a pair of $\cR$-submodules $\cN,\cN^\prime$ such that $\dim_{\cK}\cN,\dim_{\cK}\cN^\prime\geq 1$ and $\cM=\cN\oplus\cN^\prime$.
The identity morphism ${\rm id}_{\cN}:\cN\to\cN$ extends uniquely to an $\cR$-endomorphism $\varphi:\cM\to\cM$ by the requirement $\varphi(\cN^\prime)=0$.
In the same way, the identity morphism ${\rm id}_{\cN^\prime}:\cN^\prime\to\cN^\prime$ extends to an $\cR$-endomorphism $\varphi^\prime:\cM\to\cM$.
It is obvious from the construction that $\varphi$ and $\varphi^\prime$ are linearly independent, which shows the desired inequality $\dim_{\CC}{\rm End}_{\cR}(\cM)>1$.

\subsection{\texorpdfstring{\propref{prop:logarithmic_restriction}}{}}
\label{app:logarithmic_restriction}
    In order to simplify the discussion, we only prove the case when $m=1\>$,
    i.e. when $Y^\prime$ is a single point and $Y \simeq \Affine^1\>$. The
    case of $m \geq 2$ can be proved in the same way.

    We identify $\cD_{Y^\prime}$ with the field of complex numbers $\C$. It is
    easy to see that the following morphism is well-defined:
    \eq{
    \varphi:
        \cM^\prime=\C^r/\cN^\prime \ni \sbrk{v}
        \mapsto
        \sbrk{v} \in \cD_Y^r/(z_1\cD_Y^r+\cN)=\iota^*\cM
        \>,
    }
    where the submodule $\cN$ is defined
    in~\propref{prop:logarithmic_connection_as_D_module}.
    Let us take a constant column vector $v \in \C^r$ and consider an
    element of the form $v \, \partial_1^n\in \cD_Y^r\>$ for integer $n \ge 1$.
    A straightforward computation shows the identity
    $
        v \, \pd{1}^n \equiv
        \bigbrk{\mId_r \, n - P_1(0)\tr}^{-1}
        \cdot \sum_{k=1}^n \binom{n}{k} \bigbrk{\pd{1}^k P_1\tr|_{z_1 = 0}} \cdot v \,
        \pd{1}^{n-k}
    $
    in $\iota^*\cM$, which proves that $\varphi$ is surjective.
    Finally, we prove that $\varphi$ is injective. Let us assume that
    $\varphi([v])=\sbrk{0}$.
    Since the constant vector $v$ does not contain $\pd{1}\>$, there exist an element
    $p \in \cD_Y^r$ and a column vector $w \in \C^r$ such that
    $v = z_1 p + (\mId_r \, z_1 \pd{1} - P_1(z_1)\tr) \cdot w\>$.
    Substituting $z_1 = 0$, we obtain $v=-P_1(0)\tr \cdot w\>$, which proves that
    $\sbrk{v} = \sbrk{0}$.
\subsection{\texorpdfstring{\thmref{thm:restriction}}{}}
\label{app:restriction}

We prove the case $m=2$.
The proof can easily be generalized to the case of $m>2$.

Let $\MI(z)=\sum_{k=0}^\infty \MI_k(z_2) \> z_1^k$ be a vector of holomorphic
functions and let $P_1\>$, $P_2$ be the $r\times r$ Pfaffian matrices given by
\eqref{eqn:connection_matrix_log}.  We expand $P_1$ as shown in \eqref{eqn:expansion_of_P_1}
and $P_2$ as
\begin{equation}
    P_2(z)=\sum_{k=0}^\infty P_{2,k}(z_2) \> z_1^k
    \>.
\end{equation}
Then the integrability condition~\eqref{eqn:integrability_condition} is equivalent to
\eq{
    \pd{2}\, P_{1,-1}
    &= \bigsbrk{P_{2,0}, P_{1,-1}}
    \>,
    \nonumber\\
    \pd{2}\, P_{1,k}-(k+1)P_{2,k+1}
    &= \sum_{\substack{i+j=k \\ i\geq 0,\> j\geq -1}}
    \bigsbrk{P_{2,i}, P_{1,j}}
    \>.
\label{eqn:Int}
}
The fact that $\MI(z_1,z_2)$ is a solution of \eqref{eqn:pfaffian_system} is
equivalent to $\{ \MI_k(z_2)\}_k$ being subject to the following relations:
\begin{alignat}{3}
    \bigbrk{
        (k+1)
        - P_{1, -1}
    } \cdot \MI_{k+1}
    &= \sum_{\substack{i+j=k \\ i,\> j \geq 0}}
    P_{1,i} \cdot \MI_j
    \>,
    \quad&&\text{for $k\geq -1\>$,}
    \label{eqn:C1}
    \\
    \pd{2} \MI_k
    &= \sum_{\substack{i+j=k \\ i,\> j \geq 0}}
    P_{2,i} \cdot \MI_j
    \>,
    &&\text{for $k\geq 0\>$.}
    \label{eqn:C2}
\end{alignat}
If $\MI(z_1,z_2)$ is a solution of \eqref{eqn:pfaffian_system}, it is clear that $\MI_0(z_2)$ is a solution to the restricted Pfaffian equation
\eqref{eqn:restriction_equation}.
Conversely, let $\MI_0(z_2)$ be a solution of the restricted Pfaffian equation
\eqref{eqn:restriction_equation}.
We can uniquely extend $\MI_0(z_2)$ to the solution
$\sum_{k=0}^\infty \MI_k(z_2) \, z_1^k$ by requiring that the coefficients
$\MI_{k}\brk{z_2}$ solve the linear system~\eqref{eqn:C1}, that is:
\eq{
    \bigbrk{(k+1)-P_{1,-1}(z_2)} \cdot \MI_{k+1}\brk{z_2}
    =
    \sum_{\substack{i+j=k \\ i,\> j \geq 0}}
    P_{1,i}\brk{z_2} \cdot \MI_j\brk{z_2}
    \>,
    \label{eqn:define_I_k}
}
with $z_2$ kept as a parameter. By assumption, the residue matrix $P_{1,
-1}\brk{z_2}$ is non-resonant, so the only integer eigenvalue it has is
$0\>$, which means that the matrix on the LHS of \eqref{eqn:define_I_k} is
invertible and there is a unique $\MI_{k + 1}\brk{z_2}$ that solves it for each $k \ge -1\>$.
So by construction of \eqref{eqn:define_I_k}, the condition \eqref{eqn:C1}
is satisfied and it is left to show that the other condition~\eqref{eqn:C2} is
satisfied as well. To this end, consider the identity
\begin{equation}
    \bigbrk{\brk{k+1} - P_{1,-1}(z_2)} \cdot \partial_2 \MI_{k+1}(z_2)
    =
    \bigbrk{\brk{k+1} - P_{1,-1}(z_2)} \ \cdot
    \sum_{\substack{i+j=k+1 \\ i,\> j\geq 0}}
    P_{2,i} \cdot \MI_j(z_2)
    \>,
\end{equation}
which holds by induction in $k$ from acting with $\pd{2}$ on \eqref{eqn:define_I_k}
and the integrability condition \eqref{eqn:Int}. Due to
invertibility of the common matrix factor here, this equation implies \eqref{eqn:C2}.

Finally, we prove the convergence of $\MI(z_1,z_2)=\sum_{k=0}^\infty \MI_k(z_2)z_1^n$.
Let $f(t;z_2)=\fun{Det}{t-P_{1,-1}(z_2)}$ be the characteristic polynomial of $P_{1,-1}$.
The leading coefficient of $f(t;z_2)$ in $t$ is $1$, and thus there exists a
constant $C>0$ such that $f(t;z_2)\geq C \, t^r$ for any $t>0$ sufficiently large.
To be more precise, $z_2$ belongs to a small region in which we have a
uniform bound on the coefficients of $f(t;z_2)$.  Therefore, the matrix norm of
the cofactor matrix of $\bigbrk{t-P_{1,-1}(z_2)}$ can be estimated by $C^\prime \, t^{r-1}$
for some constant $C^\prime>0$ and for any $t>0$ sufficiently large.  Taking into
account Clamer's formula for matrix inverse, we obtain \begin{equation}
    \lrnorm{\bigbrk{t-P_{1,-1}\brk{z_2}}^{-1}}
    \leq
    C_1 \, t^{-1}
\end{equation}
for any $t>0$ sufficiently large.
Here $\lrnorm{\bigcdot}$ stands for a matrix norm.
On the other hand, convergence of the expansion~\eqref{eqn:expansion_of_P_1} and Cauchy's
inequality implies that there are positive constants $C_2$ and $s$ such that
\begin{equation}
    \lrnorm{P_{1,k}(z_2)} \leq C_2 \, s^k
    \>,
\end{equation}
for any $k\geq 0$.
We claim that the inequality
\begin{equation}\label{eqn:estimate}
    \lrnorm{\MI_k(z_2)}
    \leq
    R^n \> \lrnorm{\MI_0(z_2)}
\end{equation}
is true with $R=\max(C_1C_2,s)\>$.
Indeed, if the estimate~\eqref{eqn:estimate} is true up to $k$, then
\eq{
    \lrnorm{\MI_{k+1}(z_2)}
    &\leq
    \lrnorm{\bigbrk{\brk{k+1} - P_{1,-1}(z_2)}^{-1}} \>
    \sum_{i+j=k} \lrnorm{P_i(z_2)} \> \lrnorm{\MI_j(z_2)}
    \\
    &\leq
    C_1 \, C_2 \, \frac{1}{k+1}\sum_{i+j=k}s^i \, R^j \> \lrnorm{\MI_0(z_2)}
    \\
    &\leq R^{k+1} \> \lrnorm{\MI_0(z_2)}
    \>.
}
This proves the convergence of $\MI(z_1,z_2)\>$.
Uniqueness of $\MI$ is clear from (\ref{eqn:C1}) and the assumption of the theorem.

\subsection{\texorpdfstring{\thmref{thm:log_restriction}}{}}
\label{app:log_restriction}
We use the same notation as in \appref{app:restriction}.
Suppose we construct a matrix $G(z_1,z_2)\in GL(r;\C\{\!\{ z_1,z_2\}\!\})$ such that the gauge transformation $\MI(z_1,z_2)=G(z_1,z_2) \cdot \Mi(z_1,z_2)$ turns the system \eqref{eqn:pfaffian_system} into
\begin{align}
    \pd{1} \Mi(z_1,z_2)&=\frac{\Gamma}{z_1}\Mi(z_1,z_2)
    \>,
    \nonumber\\
    \pd{2} \Mi(z_1,z_2)&=0
    \>,
    \label{eqn:Pfaffian2}
\end{align}
for some constant matrix $\Gamma\>$. This system can be solved by the
matrix exponential: $\Mi(z_1,z_2)=z_1^\Gamma\>$.

The transformation matrix $G$ is also subject to its own Pfaffian system:
\begin{align}\label{eqn:Pfaffian3}
    \pd{1} G &= T_{P_1,\> \Gamma / z_1}(G)
    \>,
    \nonumber\\
    \pd{2} G &= P_2 \cdot G
    \>.
\end{align}
Here $T_{A,B}:M(r;\C)\to M(r;\C)$ is a linear map defined by $T_{A,B}(G) \defas A \cdot G - G \cdot B\>$.
Note that this is a Pfaffian system of the rank $r^2\>$.

\begin{lemma}
  The system \eqref{eqn:Pfaffian3} is integrable.
\end{lemma}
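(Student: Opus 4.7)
The plan is to verify the integrability condition directly by computing the mixed partials $\partial_1\partial_2 G$ and $\partial_2\partial_1 G$ and checking that they agree as a consequence of the integrability of the original system~\eqref{eqn:integrability_condition} satisfied by the matrices $P_1, P_2$ in~\eqref{eqn:pfaffian_system}.

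First, I would compute $\partial_2 \partial_1 G$ by applying $\partial_2$ to the first equation of~\eqref{eqn:Pfaffian3}. Using Leibniz and the definition of $T_{P_1,\Gamma/z_1}$, this gives
\begin{equation}
\partial_2\partial_1 G = (\partial_2 P_1)\,G + P_1 \cdot \partial_2 G - (\partial_2 G)\cdot \Gamma/z_1,
\nonumber
\end{equation}
after which I substitute $\partial_2 G = P_2 \cdot G$ from the second equation. Next, I would compute $\partial_1\partial_2 G$ similarly, obtaining
\begin{equation}
\partial_1\partial_2 G = (\partial_1 P_2)\,G + P_2\cdot(P_1 G - G\,\Gamma/z_1).
\nonumber
\end{equation}
Note that $\Gamma$ is a constant matrix, so no derivative of $\Gamma/z_1$ appears in the $\partial_2$ computation, while the $\partial_1$ derivative of $\Gamma/z_1$ is absorbed by the symmetric combination $-P_2 G \Gamma/z_1$.

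The difference between the two expressions is $\bigl((\partial_2 P_1 - \partial_1 P_2) - [P_1,P_2]\bigr)\cdot G$, where the terms $-P_2 G\,\Gamma/z_1$ cancel identically and the terms $z_1^{-2}G\Gamma$ never arise because $\Gamma$ is constant. Since the original system $\partial_i I = P_i\cdot I$ is integrable by hypothesis, the matrices $P_1, P_2$ satisfy~\eqref{eqn:integrability_condition}, which is precisely the vanishing of this difference. Hence $\partial_1\partial_2 G = \partial_2\partial_1 G$, as required.

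The computation is routine; the only minor subtlety to flag is bookkeeping of the left/right placements of $G$ relative to $P_i$ and $\Gamma/z_1$, which makes the commutator $[P_1,P_2]$ appear on the left of $G$ while $\Gamma$ only ever appears on the right (so no commutator with $\Gamma$ is generated). I do not anticipate any genuine obstacle: the lemma is essentially a tautology expressing that the integrability of a connection transfers to the Pfaffian system governing its gauge-trivializing frame.
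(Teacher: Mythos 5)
Your argument is correct and coincides in substance with the paper's proof: the paper packages the same computation in the operator formalism $T_{A,B}$, observing that $\pd{1} T_{P_2}=T_{\pd{1} P_2}$, that $\pd{2} T_{P_1,\Gamma/z_1}=T_{\pd{2} P_1}$ since $\Gamma/z_1$ has no $z_2$-dependence, and that $[T_{P_1,\Gamma/z_1},T_{P_2}]=T_{[P_1,P_2]}$ because the $\Gamma/z_1$ terms cancel, thereby reducing to \eqref{eqn:integrability_condition} exactly as you do element-wise. One small sign slip to fix in your write-up: the displayed difference should read $\bigl((\pd{2} P_1-\pd{1} P_2)+[P_1,P_2]\bigr)\cdot G$, which vanishes by \eqref{eqn:integrability_condition}; with the minus sign you wrote it would instead equal $-2[P_1,P_2]\cdot G$.
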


\begin{proof}
    For a matrix $A\in M(r;\C)$, let us denote by $T_A:M(r;\C)\to M(r;\C)$ the
    linear map given by the left multiplication with $A\>$, explicitly
    $T_A\brk{G} \defas A \cdot G\>$, so that $T_A\brk{G} = T_{A,
    \mZero}\brk{G}\>$.
    We want to prove the following integrability condition:
    \begin{equation}
        \pd{1} T_{P_2} - \pd{2} T_{P_1,\> \Gamma / z_1}
        =
        \bigsbrk{T_{P_1,\> \Gamma / z_1},\> T_{P_2}}
        \>.
    \end{equation}
    To show this, we use identities for the three building
    blocks appearing above, namely for the two derivatives
    $\pd{1} T_{P_2} = T_{\pd{1} P_2}\>$ and
    $\pd{2} T_{P_1,\> \Gamma / z_1} = T_{\pd{2} P_1}\>$,
    and for the commutator $\bigsbrk{T_{P_1,\> \Gamma / z_1},\> T_{P_2}} = T_{\sbrk{P_1,\> P_2}}\>$.
    Combining them and taking into account the integrability
    condition \eqref{eqn:pfaffian_system} proves the integrability of the
    system \eqref{eqn:Pfaffian3}.
\end{proof}

The Pfaffian system \eqref{eqn:Pfaffian 3} is logarithmic along $\{z_1=0\}$.
\thmref{thm:restriction} shows that there exists a convergent solution $G$
to \eqref{eqn:Pfaffian 3} with $G(0,0)=\mId_r\>$, since
$\mId_r\in \nullSpace{{\rm Res}\bigbrk{T_{P_1, \Gamma / z_1}}|_{z_2=0}}$
is a valid boundary condition.
Therefore, any solution to the system \eqref{eqn:pfaffian_system} can be
written in the form
\begin{equation}
    \MI(z_1,z_2)=G(z_1,z_2) \cdot z_1^\Gamma \cdot y
    =
    \sum_{n=0}^{r-1} G(z_1,z_2) \cdot \Gamma^n \cdot y \> \log^nz_1
    \label{eq:MI-Gamma}
\end{equation}
for some $y\in\C^r$, which is fixed by the boundary condition.
This proves that the boundary value map $\MI(z_1,z_2)\mapsto \MI_0(0,z_2)$ of \thmref{thm:log_restriction} is injective.
Let us prove that the boundary value map is surjective.
We observe that
\begin{equation}
    P_{1,-1}(z_2) \cdot G(0,z_2)=G(0,z_2) \cdot \Gamma
\end{equation}
is true by \eqref{eqn:Pfaffian 3}.
This identity shows that any holomorphic solution $\MI(z_2)$ of
\eqref{eqn:restriction_equation} can be written as $G(0,z_2) \cdot y\>$, where
$y\in\C^r$ is an element of the generalized $0$-eigenspace of $\Gamma\>$.
So we conclude that $\MI(z_1,z_2)=G(z_1,z_2) \cdot z_1^\Gamma \cdot y$ is the desired solution of
\eqref{eqn:pfaffian_system} with the boundary value $G(0,z_2) \cdot y$.

\begin{remark}\rm
    Finally let us comment on how the Pfaffian system~\eqref{eqn:Pfaffian3}
    can be constructed in practice.
    If we choose to flatten the $G$ matrix into a vector $\vec{G}$ of length
    $r^2$ as
    \eq{
        \vec{G} \defas \fun{Flatten}{G}
    }
    in \soft{Mathematica} notation,
    then the matrix representation of the $T_{A, B}$ map \brk{which we denote by
    the same symbol} becomes\footnote{
        In tensorial notation: $T_{A, B} = A \otimes \mId - \mId \otimes B\tr\>$.
    }
    \eq{
        T_{A, B} =
        \fun{Flatten}{
            \fun{Outer}{\code{Times,} A\code{,} \mId} - \fun{Outer}{\code{Times,}\mId\code{,} B\tr}
            \code{,\brc{\brc{1, 3}, \brc{2, 4}}}
        }
        \>.
    }
    For example, let
    $A = \lrsbrk{\vcenter{\hbox{\includegraphicsbox[scale=.5]{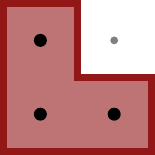}}}}$ and
    $B = \lrsbrk{\vcenter{\hbox{\includegraphicsbox[scale=.5]{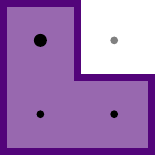}}}}$ be
    some $2 \times 2$ matrices,
    then the $4 \times 4$ matrix $T_{A, B}$ looks like the following difference:
    \eq{
        T_{
            \lrsbrk{\vcenter{\hbox{\includegraphicsbox[scale=.2]{figures/mat1}}}}
            ,
            \lrsbrk{\vcenter{\hbox{\includegraphicsbox[scale=.2]{figures/mat2}}}}
        }
        =
        \lrsbrk{\vcenter{\hbox{\includegraphicsbox[scale=.5]{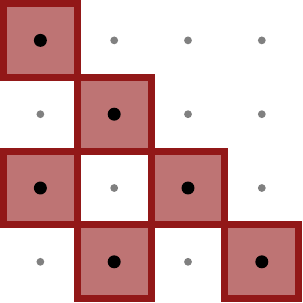}}}}
        -
        \lrsbrk{\vcenter{\hbox{\includegraphicsbox[scale=.5]{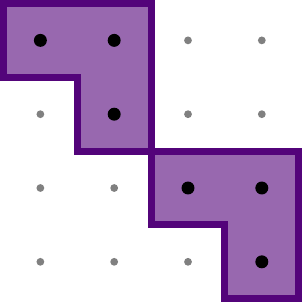}}}}
        \>.
        \label{eq:T-example}
    }
    So we conclude, that the rank $r^2$ Pfaffian system~\eqref{eqn:Pfaffian3} can be
    obtained from the original system~\eqref{eqn:pfaffian_system} by simple
    matrix operations, and, as can be seen from~\eqref{eq:T-example}, these two
    Pfaffian systems share many common properties, such sparsity and
    block-triangularity, which is important for practical computations.
\end{remark}

\subsection{\texorpdfstring{\propref{prop:prop1}}{}}
\label{appendix:proof:prop:prop1}

The map $\CD_Y/\cI \ni \ell \stackrel{\varphi}{\mapsto} \ell \in \CD_Y/(\CR_Y \cI)\cap \CD_Y $
is surjective.
Then the induced map
$\varphi \, : \, \CD_Y/(\cI+ \sum_{j=1}^{\ypcodim} z_j \CD_Y) \longrightarrow
 \CD_Y/((\CR_Y \cI)\cap \CD_Y + \sum_{j=1}^{\ypcodim} z_j \CD_Y)
$
is surjective, and the induced map for solution spaces
$$
\Hom_{\CD_Y}(\CD_Y/(\CR_Y \cI) \cap \CD_Y, \OO(U)) \stackrel{\psi}{\longrightarrow}
\Hom_{\CD_Y}(\CD_Y/\cI,\OO(U))
$$
is injective for a sufficiently small open ball $U$.
When $U$ is a simply connected domain outside of the singular locus of $\CD_Y/\cI$,
holomorphic solutions on $U$ of the two systems
$\CD_Y/\cI$ and $\CD_Y/(\CR_Y \cI) \cap \CD_Y$ agree.
Therefore, the injective map $\psi$ is an isomorphism for any sufficiently
small open ball $U$
by the identity theorem of complex function theory.
Let $U$ be a small open ball with a center on
$Y^\prime=\{z \,|\, z_1= \ldots = z_{\ypcodim}=0\}$
and suppose that the center is outside of the singular locus of
the restriction $\CD_Y/(\cI+\sum_{j=1}^{\ypcodim} z_j \CD_Y)$.
Then we have
$\Hom_{\CD_Y}(\CD_Y/\cI, \OO(U))|_V =
 \Hom_{\CD_{Y^\prime}} (\CD_Y/(\cI+\sum_{j=1}^{\ypcodim} z_j \CD_Y), \OO(U \cap V))
$.
It implies that the holonomic ranks
of
$\CD_Y/(\cI+\sum_{j=1}^{\ypcodim} z_j \CD_Y)$
and
$\CD_Y/((\CR_Y\cI)\cap \CD_Y+\sum_{j=1}^{\ypcodim} z_j \CD_Y)$
agree.
The left $\CR_{Y^\prime}$ morphism induced by $\varphi$ from
$ \CR_{Y^\prime} \otimes_{\CD_{Y^\prime}} \CD_Y/(\cI+ \sum_{j=1}^{\ypcodim} z_j \CD_Y)$
to
$ \CR_{Y^\prime} \otimes_{\CD_{Y^\prime}} \CD_Y/((\CR_Y \cI)\cap \CD_Y + \sum_{j=1}^{\ypcodim} z_j \CD_Y)$
is surjective, and the vector space dimensions of these two agree
over the field $\CC(z_{{\ypcodim}+1}, \ldots, z_{\ydim})$.
Thus, $\varphi$ is one-to-one.

\subsection{\texorpdfstring{\thmref{th:th3}}{}}
\label{appendix:proof:th:th3}
Let $r$ be the rank of the restriction.
Recall that the $(-1,0,1,0)$-order of the monomial
$\ell=x^{\alpha_1} y^{\alpha_2} \pd{x}^{\beta_1} \pd{y}^{\beta_2}$
is $-\alpha_1+\beta_1$.
We denote it by $\ord(\ell)=-\alpha_1+\beta_1$, thereby omitting the suffix $(-w,w)$.

Since the input is holonomic, there exists an operator of the form
\begin{equation}
B=b(\theta_x) + x c(x,y,\theta_x,\pd{y}),
\quad \theta_x=x \pd{x}, \
b(u) \in \CC[s], \ c \in \CD'
\end{equation}
in $\cI$ (see, e.g., \cite[Theorem 5.1.2]{SST}).
Note that $\ord(b(\theta_x))=0$ and
$\ord( x c(x,y,\theta_x,\pd{y})) < 0$.
We have
\begin{eqnarray}
\pd{x}^k B &=& b(\theta_x+k) \pd{x}^k + \pd{x}^k x c(x,y,\theta_x,\pd{y}) \nonumber \\
           &=& b(k) \pd{x}^k + (b(\theta_x+k)-b(k)) \pd{x}^k
               +\pd{x}^k x c(x,y,\theta_x,\pd{y}).  \label{eq:keyfact}
\end{eqnarray}
We have
$(b(\theta_x+k)-b(k))|_{x=0} = 0$,
and
$\ord(\pd{x}^k x c(y, \theta_x,\pd{y}) < k$.
Then the equation (\ref{eq:keyfact}) implies that
$b(k) \pd{x}^k = 0$  modulo $x\CD'+\cI+F_{k-1}$
where $F_{k-1}$ is defined in (\ref{eq:KM-filtration}).
This property, which was firstly utilized for the restriction algorithm in \cite{Oaku-1997}, is a key in proving the termination
of the algorithm
(see also \cite[Lemma 5.1.8]{SST}).

The operators $f_i$, $i=1, \ldots, N$ are generators of the left ideal $\cI$.
Suppose that the operator $B$ is expressed as
$$ B = \sum_{i=1}^N \sum_{j=0}^{m_i} d_{ij}(x,y,\pd{y})\pd{x}^j f_i \, .
$$
Let $u_0$ be the maximal non-negative integral root of $b(u)=0$.
We denote by $e_0, e_1, e_2, \ldots$ the standard basis of $\CD^{k+1}$.
In other words, $e_i$ stands for $\pd{x}^i$, and we use this identification in this proof.
Note that the last element of $v_k(\ell)$ is the component for $e_0$,
the second to last element of $v_k(\ell)$ is the component for $e_1$,
and so on.
When $k \geq \mathrm{max}(u_0+1,\mathrm{max}\, (m_i+\ord (f_i))) =:k_0$,
it follows from (\ref{eq:keyfact}) that the left module
generated by
\begin{equation} \label{eq:generators}
v_{k}(:\pd{x}^j f_i:{|_{x=0}}), \quad \ord (\pd{x}^j f_i) \leq k, \quad i=1, \ldots, N, \quad j \in \NN_0
\end{equation}
in $\CD^{k+1}$
contains an element of the form
\begin{equation}  \label{eq:bb}
 b(k') e_{k'} + \sum_{i=0}^{k'-1} d_i(y,\pd{y}) e_i, \quad k' > u_0, b(k')\not= 0 \, .
\end{equation}
Let $G_{k}$ be a Gr\"obner basis by the POT order
of
\begin{equation}  \label{eq:module-gb}
    \Bigbrc{
        v_{k}\, \bigbrk{\, :\pd{x}^j f_i:{\big|_{x=0}}}
        \Bigm|
        1\leq i \leq N, \ord (\pd{x}^j f_i) \leq k, j \in \NN_0
    }
\end{equation}
in $\CD^{k+1}$.
The element (\ref{eq:bb}) is contained in $G_{k}$
when $k \geq k_0$, which means that
there is no standard monomial of the form $e_{k'}$, $k' > u_0$
in the Gr\"obner basis.
We note that the submodules
$\CD G_{k} \cap \CD^\gamma$ represent an increasing set,
i.e.~we have
$\CD G_{k} \cap \CD^\gamma \subseteq \CD G_{k+1} \cap \CD^\gamma$.
For a sufficiently large $k$, the submodule
$\CD G_{k} \cap \CD^{\gamma}$ contains the denominator module of (\ref{eq:rest-by-gb2})
because the $v_{k+1}(:\partial{x}^\beta g_i:{|_{x=1}})$'s of $g_i$ in (\ref{eq:rest-by-gb2})
are contained in $\CD G_k$.
Note that $e_i$ for $ u_0 < i \leq \gamma$ do not appear in the standard monomials.

Recall the restriction algorithm of $\CD$-modules (see \secref{sec:NTsec1} and \cite[Section 5.2]{SST}).
The algorithm
constructs the restriction by the $(-w,w)$-Gr\"obner basis $G$.
Elements for which the $(-w,w)$-order is more than $u_0$ are not used
to construct the restriction.
Our procedure does not have this trashing procedure, so our module
$\CD G_k$ might contain these trashed elements.
This is why we need to take $\gamma$ larger than $u_1$
(see also \exref{ex:y2-x3-1-2}).

Consider two left $\CR$-submodules $\CR J$ and $\CR J'$ of $\CR^\gamma$
such that $\CR J \subset \CR J'$.
If $\dim_{\CC(y)} \CR^\gamma/(\CR J) = \dim_{\CC(y)} \CR^\gamma/(\CR J')$,
then these two $\CR$ modules are isomorphic.
Thus, we obtain the conclusion.
\qed

As final remark, note that the identity
(\ref{eq:keyfact}) is the key to proving the existence of $\gamma$.

\subsection{\texorpdfstring{\lemref{lemma:basis-lemma}}{}}
\label{appendix:basis-lemma}

Consider the holomorphic solution sheaf
$\text{Hom}_{\cD_{Y^\prime}}(\cN,\OO^{an})$ on $Y^{\prime}$.
The germs are finite-dimensional vector spaces over $\CC$;
let $r$ be the their maximal dimension.
Let $W'$ be the maximal subset of $Y^\prime$ such that
the dimension of the vector space is $r$.
Since the solution sheaf is a perverse sheaf \cite[Section 4.6]{Hotta-Tanisaki-Takeuchi-2008}, the codimension of $W'$
in $Y^\prime$ is $0$.
Let $h_1, \ldots, h_r$ be linearly independent holomorphic solutions on $W'$.
Consider the matrix $H=(s_i \bullet h_j)$
under the assumption $0 \in W'$.
Since $S=\{s_i\}$ is the basis of $\DD_Y/(\cI+\sum_{i=1}^{\ydim} z_i \DD_Y)$,
the determinant $|H|$ is not identically equal to $0$ \cite[Proposition 5.2.14]{SST}.
Let $W$ be the non-zero locus of $|H|$ in $W'$.
We may suppose that $0$ is in $W$ from the beginning.
Let $\pd{Y^\prime}$ be an element of $\cD_{Y^\prime}$.
Note that the restriction $\cN$ is regular holonomic.
Then $H^{-1} \pd{Y^\prime}\bullet H$ is a matrix consisting of regular function on $W$,
and it gives the action of $\pd{Y^\prime}$ to $S$.
\qed

\subsection{\texorpdfstring{\thmref{th:th6}}{}}
\label{appendix:proof-of-th6}

Let $\CC(Y^\prime)$ be the fraction field of
the quotient ring $\CC[Y^\prime]=\CC[z_1, \ldots, z_{\ydim}]/L$.
We note that the following identities hold,
\begin{equation} \label{eq:several-expressions-of-restriction}
 \CD_Y/(\cI + L \CD_Y)
\simeq
\CD_Y/L\CD_Y \otimes_{\CD_Y} \CD_Y/\cI
 \simeq \CC[Y^\prime] \otimes_{\CC[z]} \CD_Y/\cI, \
\end{equation}
and regular functions on $Y^\prime$ can be expressed by elements of $\CC[Y^\prime]$.
This is holonomic $\CD_{Y^\prime}$-module and is isomorphic to $\cN = \iota^* \cD_Y/\cI$.
Since the support of  $\cD_Y/(\cI+L \cD_Y)$
is $Y^\prime=V(L)$, it can also be regarded as the left $\cD_Y$-module on $Y$
supported on $Y^\prime$
by the Kashiwara equivalence (see, e.g., \cite[Section 1.6]{Hotta-Tanisaki-Takeuchi-2008})
and then $\pd{i}$, $i=1, \ldots, \ydim$ act on it.
Since $S$ is the basis of the connection $\cN(W)$
on the maximal dimensional stratum $W$,
$1 \times s_j$'s can be regaded as the basis of
$$
\CC(Y^\prime) \otimes_{\CC[z]} \CC[Y^\prime] \otimes_{\CC[z]} \CD_Y/\cI \simeq \CC(Y^\prime) \otimes_{\CC[z]} \CD_Y/\cI
$$
as a vector space over $\CC(Y^\prime)$.
Then there exists a matrix $P_i(z) \in \CC(Y^\prime)^{r \times r}$
such that
\begin{equation}
    \lrsbrk{
        \begin{array}{c}
            1 \times \pd{i}s_1 \\
            \vdots \\
            1 \times \pd{i}s_r \\
        \end{array}
    }
    = P_i(z) \cdot
    \lrsbrk{
        \begin{array}{c}
            1 \times s_1 \\
            \vdots \\
            1 \times s_r \\
        \end{array}
    }.
\end{equation}
Thus, we obtain the conclusion.
\qed

\subsection{Construction of a stratification by a comprehensive Gr\"obner basis}
\label{appendix:strata-and-comprehensive-gb}
Here we explain how the stratum $W'$, or a somewhat smaller one, can be obtained by
the comprehensive Gr\"obner basis
\cite{weispfenning-1992}.
Comprehensive Gr\"obner bases and an application to
$\DD$-module restrictions
are discussed in \cite{nakayama-takayama-2023}.
A different approach to construct strata is given in \cite{walther-2005}.

Let $\{f_1, \ldots, f_\mu\}$ be a set of generators for a holonomic ideal in $\CD_Y$.
Taking $n$ indeterminates $c_1, \ldots, c_{\ydim}$, we make a change of coordinates
\begin{equation} \label{eq:change-of-coordinates}
 z_i \rightarrow z_i+c_i, \quad i=1, \ldots, \ydim
\end{equation}
in every $f_j$.
We apply the \algref{alg:rest_to_pt} using the input
$ f_1(c,z,\pd{}), \ldots, f_\mu(c,z,\pd{})$.
We make the translation to row echelon form
or compute the Gr\"obner basis with the POT order over the rational function
field $\CC(c_1, \ldots, c_{\ydim})$.
Let $g(c)$ be the least common multiple of all the denominator polynomials needed
to obtain row echelon form.
Consider the stratum $V(L) \cap \{ z \,|\, g(z) \not= 0\}$.
The stratum is maximal dimensional.
Specializing $c$ to a number vector in the stratum
for the parametric echelon form,
we obtain the echelon form to evaluate the rank.
It follows from the assumption on $g$ that the rank is constant
on the stratum and maximal.
This is the first step in constructing the comprehensive
Gr\"obner basis.

\bibliographystyle{nb}
\bibliography{biblio}

\end{document}